\documentclass[11pt]{article}
\usepackage{adjustbox, array, algorithm, amsmath, amsthm, amsfonts, bbm, caption, complexity, enumitem, standalone, subfig, geometry, physics, fullpage, footmisc, mathtools, lmodern, thmtools, thm-restate, tikz, xspace}
\usepackage{hyperref}
\usepackage[false]{anonymous-acm}
\usepackage[noend]{algpseudocode}
\usepackage[backend=biber,minalphanames=3,maxnames=99,url=false,doi=false,isbn=false,style=alphabetic]{biblatex}
\addbibresource{refs.bib}
\hypersetup{colorlinks=true, linkcolor=blue, urlcolor=black, citecolor=black, breaklinks=true}
\usetikzlibrary{matrix,decorations.pathreplacing}

\newtheorem{theorem}{Theorem}
\newtheorem{claim}{Claim}
\newtheorem{example}{Example}
\newtheorem{corollary}{Corollary}
\newtheorem{lemma}{Lemma}
\newtheorem{observation}{Observation}
\theoremstyle{definition}
\newtheorem{definition}{Definition}
\newtheorem{remark}{Remark}

% My Macros
\DeclareFontFamily{U}{mathc}{}
\DeclareFontShape{U}{mathc}{m}{it}%
{<->s*[1.03] mathc10}{}
\DeclareMathAlphabet{\mathcal}{U}{mathc}{m}{it}
\DeclareMathOperator{\Sol}{\mathcal{S\mkern-3mu ol}}

\newcommand{\ttt}[1]{\textnormal{\texttt{#1}}}
\newcommand\plabel[1]{\phantomsection\label{#1}}

\title{Shared Randomness in Locally Checkable Problems:\\The Role of Computational Assumptions 
\blockanon{\thanks{Research supported in part by grants from the Israel Science Foundation (no.\  and 2686/20), by the Simons Foundation Collaboration on the Theory of Algorithmic Fairness and by the Israeli Council for Higher Education (CHE) via the Weizmann Data Science Research Center}}}

\authoranon{
\author{Adar Hadad\thanks{
		Department of Computer Science and Applied Mathematics, Weizmann Institute of Science, Rehovot, Israel.
		Email: {hadad.hadad@weizmann.ac.il}.
	}
	\and
	Moni Naor\thanks{
		Department of Computer Science and Applied Mathematics, Weizmann Institute of Science, Rehovot, Israel.
		Incumbent of the Judith Kleeman Professorial Chair.
		Email: {moni.naor@weizmann.ac.il}.
  }
}
}

\date{}

\begin{document}

\maketitle

\begin{abstract}
Shared randomness is a valuable resource in distributed computing, allowing some form of coordination between processors without explicit communication. But what happens when the shared random string can affect the inputs to the system?

Consider the class of distributed graph problems where the correctness of solutions can be checked locally, known as \textit{Locally Checkable Labelings} ($\mathsf{LCL}$). 
$\mathsf{LCL}$ problems have been extensively studied in the $\mathsf{LOCAL}$ model, where nodes operate in synchronous rounds and have access only to local information. This has led to intriguing insights regarding the power of private randomness. E.g., for certain round complexity classes, derandomization does not incur an overhead (asymptotically).

This work considers a setting where the randomness is public. Recently, an $\mathsf{LCL}$ problem for which shared randomness can reduce the round complexity was discovered by Balliu \textit{et al.} (ICALP 2025) \cite{bgk25}.
This result applies to inputs set obliviously of the shared randomness, which may not always be a plausible assumption. 

We define a model where the inputs can be adversarially chosen, even based on the shared randomness, which we now call \textit{preset public coins}. 
We study $\mathsf{LCL}$ problems in the preset public coins model, under assumptions regarding the computational power of the adversary that selects the input. We show connections to hardness in the class \textsf{TFNP}. Our results are:

\begin{enumerate}

\item Assuming a hard-on-average problem in \textsf{TFNP}, we present an $\mathsf{LCL}$ problem that, in the preset public coins model, demonstrates a gap in the round complexity between polynomial-time and unbounded adversaries.

\item An $\mathsf{LCL}$ problem for which the error probability is significantly higher when facing unbounded adversaries implies a hard-on-average problem in $\mathsf{TFNP/poly}$.

\end{enumerate}
\end{abstract}

\begin{keywords}
distributed graph algorithms; common random string; cryptographic hardness
\end{keywords}

\newpage

\section{Introduction} 
In distributed algorithms, multiple entities collaborate to compute a function of the global inputs. In the \textsf{LOCAL} model, introduced by Linial \cite{lin92}, the processors can directly communicate only with their neighbors, implying that within $r$ steps a node may receive messages from at most $r$ hops away. The main complexity measure is the number of rounds required for successfully performing a computation. In this model, the topology of the network plays a key role in determining the complexity of algorithms. In this way, we capture the effect of \textit{locality} on our ability to evaluate functions in the distributed setting. Although the functions can be defined in a \textit{global} manner, taking into account the network as a whole, each individual processor has a restricted, local view.

We focus on problems for which solutions can be efficiently verified. Namely, given a solution, its correctness can be deterministically verified within $O(1)$ rounds of communication. This is the motivation behind the class \textit{locally checkable labelings} (\textsf{LCL}), introduced by Naor and Stockmeyer~\cite{ns95}. The solution to a problem on a bounded-degree graph is the output labels of the nodes, and we require that by inspecting neighborhoods with a constant radius it is possible to verify that the labels satisfy a list of constraints. The constraints are specified as a finite set of labeled constant-size graphs, and each node's constant-radius neighborhood must be isomorphic to one of these labeled graphs. A \textsf{LOCAL} algorithm that \textit{solves} an \textsf{LCL} problem assigns an output label to each node in such a way that none of the constraints are violated. Key problems studied in this context are maximal matching, maximal independent set, and graph coloring.

The algorithms for generating these output labelings can be randomized. That is, every processor is given access to a (separate) tape of private coins. In this case, we require the labels returned by all nodes to be valid with sufficiently high probability, taken over the coins of the whole network.

Another flavor of randomized communication protocols involves a public random string, visible to all the processors. At first glance (and even a second one), it is unclear whether this extra resource makes the model strictly stronger (see Ghaﬀari \cite[Open Problem 5]{Gha20}).
Balliu, Ghaffari, Kuhn, Modanese, Olivetti, Rabie, Suomela, and Uitto~\cite{bgk25} were the first to show the advantage of shared randomness in solving \textsf{LCL} problems in the \textsf{LOCAL} model. They present an \textsf{LCL} problem that demonstrates an exponential improvement in the round complexity once shared randomness is employed. 

To give a taste of their construction, think of a square grid. For the output labeling to be valid, in at least one of the rows, the nodes should return the same bit that was given as input to the rightmost node. Evidently, this problem is hard if the number of rounds is restricted to $O(\log n)$, since the information cannot travel quickly enough from one end of the grid to the other. But, if the nodes use shared randomness, a uniformly random bit coincides with an input bit with probability $\frac{1}{2}$. Repeating this over all the rows, the failure probability is sufficiently low.

This model assumes that the entity that selects the input is {\em oblivious} of the shared randomness. The independence of the two plays a crucial role in analyzing the success probability of the solution suggested above. We refer to this as the \textit{oblivious} model. 

However, implementing shared randomness in practice presents several difficulties. It necessitates the existence of a trusted party, or else the public string can be manipulated and may not be truly random. As pointed out by Mironov, Naor, and Segev~\cite{mns08}, even under the assumption that such a trusted party exists, it is not always true that the string can be revealed only {\em after} the inputs of the nodes are fixed. It is reasonable to think about the input as being chosen when the public randomness is already known. A malicious user can leverage this knowledge to choose the input adversarially. But even in more benign circumstances, if the same string is used in several successive computations, then the result of one instance may be influenced by the value of the shared randomness and may creep into the values of subsequent computations. 

For example, in the problem presented by \cite{bgk25}, the adversary can use its knowledge of the public coins and select the complement bits. Consequently, the strategy sketched earlier fails.

\subsection{Models}
In this work, we study algorithms specifically designed to remain robust even when the inputs are selected by an adversary who sees the public randomness. We refer to it as the \textit{preset public coins} model. Take note that the graph is fixed before the randomness is made public.\footnote{A potential criticism of this model is that the graph structure and the input labels can be seen as interchangeable, since input labels can be encoded as part of the graph (e.g., by attaching gadgets to the nodes). 
However, we contend that it is reasonable to assume that the network’s structure is determined before the randomness is revealed, as it's often the case that the topology of a cluster is planned upfront.}

In the field of distributed algorithms, it is common to allow processors to be unbounded in terms of their time and space complexities. 
In contrast, we restrict the processors' runtime to be polynomial in $n$ (the size of the graph). 
Furthermore, in the context of the \textit{preset public coins} model, we make a distinction between adversaries based on their computational resources. We consider efficient adversaries (i.e., polynomial-time) and computationally unbounded ones.
The models that we examine are listed below. For formal definitions, see Section~\ref{subsec:distributed}.

\begin{itemize}
\item \textbf{Private Coins.} No public randomness at all. The only source of randomness is the private coins tossed by the nodes.

\item \textbf{Oblivious.} The input provided to the nodes is independent of the public randomness. The success probability is taken over both the private coins and the public randomness.

\item \textbf{Preset Public Coins with Bounded Adversaries.} The public randomness is known in advance. A probabilistic adversary that may run for $\poly(n)$ steps chooses the inputs of the nodes after seeing the public random string. Success probability is taken over the private coins, the preset public randomness and the internal coin tosses of the adversary.

\item \textbf{Preset Public Coins with Unbounded Adversaries.} The public randomness is known in advance. A computationally unbounded adversary (assumed to be deterministic) chooses the inputs of the nodes after seeing the public random string. The success probability is taken over both the private coins and the preset public randomness.
\end{itemize}
The round complexities of solving an \textsf{LCL} problem $\Pi$ with probability $1 - \epsilon$ in each of the models are denoted by $\mathsf{ROUND}^{\mathsf{priv}}_{\epsilon}(\Pi)$, $\mathsf{ROUND}^{\mathsf{O}}_{\epsilon}(\Pi)$, $\mathsf{ROUND}^{\mathsf{B}}_{\epsilon}(\Pi)$ and $\mathsf{ROUND}^{\mathsf{U}}_{\epsilon}(\Pi)$, respectively. 

\paragraph{The Role of Computational Assumptions.} 
The choice to study bounded adversaries is motivated by results in related areas, which show that computational assumptions help when the inputs may depend on the public randomness. 
Cohen and Naor \cite{cn22} introduced the preset public coins model in the context of communication complexity. In this setting, multiple players jointly compute a function of their inputs, with the inputs being selected adversarially after the public randomness is revealed.
They demonstrated that by assuming the existence of \textit{collision-resistant hash functions} (CRH), communication costs can be reduced.
This leads us to ask whether computational assumptions might be beneficial in our domain as well.

We emphasize that this question becomes meaningful only in settings with shared randomness.  
When only private coins are allowed, both bounded and unbounded adversaries can be implemented by fixing an input labeling that minimizes the success probability of the \textsf{LOCAL} algorithm.
However, the presence of shared randomness introduces an element of unpredictability: while an unbounded adversary may adapt to it, a bounded adversary cannot anticipate or prepare for it in advance.

\paragraph{Landscape of \textsf{LCL} Problems.} 
Sub-classes of \textsf{LCL} can be defined with respect to the models discussed above. 
Let $\epsilon: \mathbb{N} \rightarrow [0,1]$ be the error probability, and $r: \mathbb{N} \rightarrow \mathbb{N}$ be the round complexity. $\text{\textsf{LCL}}^{\mathsf{priv}}_{\epsilon(n)}[r(n)]$ is the class of all \textsf{LCL} problems that can be solved on $n$~vertex graphs with probability $1- \epsilon(n)$ within $O(r(n))$ rounds in the private coins model. 
\[
\text{\textsf{LCL}}^{\mathsf{priv}}_{\epsilon(n)}[r(n)] := 
\big\{ \Pi \in \text{\textsf{LCL}} |  \mathsf{ROUND}^{\mathsf{priv}}_{\epsilon(n)}(\Pi) = O(r(n)) \big\}
\]
Analogous sub-classes can be defined for the oblivious model and the preset public coins model: $\text{\textsf{LCL}}^{\mathsf{O}}_{\epsilon(n)}[r(n)]$, $\text{\textsf{LCL}}^{\mathsf{B}}_{\epsilon(n)}[r(n)]$ and $\text{\textsf{LCL}}^{\mathsf{U}}_{\epsilon(n)}[r(n)]$.

Clearly, by ignoring the preset randomness, any algorithm in the private coins model can be viewed as an algorithm in the preset public coins model (with unbounded adversaries) with the same round complexity and success rate.
Moreover, by restricting the adversary, any algorithm that succeeds with probability $1- \epsilon(n)$ within $r(n)$ rounds still performs at least as well against the weaker adversary.

\begin{observation}[Inclusions]
\label{obs:inclusions}
For every error probability $\epsilon: \mathbb{N} \rightarrow [0,1]$ and round complexity $r: \mathbb{N} \rightarrow \mathbb{N}$, the following inclusions hold:
\begin{equation}
\label{eq:inclusions}
\textnormal{\textsf{LCL}}^{\mathsf{priv}}_{\epsilon(n)}[r(n)] \subseteq
\textnormal{\textsf{LCL}}^{\mathsf{U}}_{\epsilon(n)}[r(n)] \subseteq 
\textnormal{\textsf{LCL}}^{\mathsf{B}}_{\epsilon(n)}[r(n)] \subseteq 
\textnormal{\textsf{LCL}}^{\mathsf{O}}_{\epsilon(n)}[r(n)] 
\end{equation}
\end{observation}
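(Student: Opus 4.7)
The plan is to establish each of the three inclusions by a direct simulation argument, essentially formalizing the two observations made in the paragraph preceding the statement.

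For the leftmost inclusion $\text{LCL}^{\mathsf{priv}}_{\epsilon} \subseteq \text{LCL}^{\mathsf{U}}_{\epsilon}$, I would take any private-coins algorithm $A$ solving $\Pi$ within $r(n)$ rounds with error at most $\epsilon(n)$, and reinterpret it as a preset public coins algorithm that simply ignores the public string $\rho$. For every fixed input labeling $x$, $A$ succeeds with probability at least $1-\epsilon(n)$ over the private coins alone; hence for every (deterministic) unbounded adversary's choice $x = \mathcal{A}(\rho)$ the same bound holds pointwise in $\rho$, and averaging over $\rho$ preserves it.

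For the middle inclusion $\text{LCL}^{\mathsf{U}}_{\epsilon} \subseteq \text{LCL}^{\mathsf{B}}_{\epsilon}$, I would argue that any randomized polynomial-time adversary $\mathcal{A}$ is a distribution over deterministic (possibly unbounded) strategies. An algorithm guaranteed to succeed with probability at least $1-\epsilon(n)$ against every deterministic unbounded adversary satisfies the same bound against each strategy in the support of $\mathcal{A}$; by linearity of expectation over the adversary's internal coins, the overall success probability in the bounded model is also at least $1-\epsilon(n)$. For the rightmost inclusion $\text{LCL}^{\mathsf{B}}_{\epsilon} \subseteq \text{LCL}^{\mathsf{O}}_{\epsilon}$, I would note that an oblivious input corresponds to the special case of a bounded adversary whose output is independent of $\rho$ (formally, a constant function of $\rho$, or equivalently a sample from some distribution over inputs that ignores $\rho$). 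Since success against every polynomial-time adversary in particular implies success against these $\rho$-oblivious adversaries, the inclusion follows.

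There is no real obstacle here: the statement is essentially a definitional unpacking, and all three steps are one-line simulations. The only point requiring minor care is keeping track of the probability spaces when moving between the bounded (probabilistic adversary plus private coins plus public string) and unbounded (deterministic adversary plus private coins plus public string) settings, which is handled by the averaging argument in the second inclusion. The round complexity and the set of labels produced by the algorithm are unchanged in each simulation, so the $O(r(n))$ bound is preserved throughout.
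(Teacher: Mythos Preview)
Your proposal is correct and matches the paper's own justification, which is just the informal paragraph immediately preceding the observation (ignoring the public string for the first inclusion, and noting that weakening the adversary can only help for the remaining two). The paper does not give a more detailed proof than that, so your unpacking is already more explicit than what appears there.
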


The gap demonstrated in \cite{bgk25} is a separation between $\text{\textsf{LCL}}^{\mathsf{priv}}_{\epsilon(n)}[r(n)]$ and $\text{\textsf{LCL}}^{\mathsf{O}}_{\epsilon(n)}[r(n)]$, for certain choices of $\epsilon(n)$ and $r(n)$. Below, we point out that the rightmost and leftmost inclusions in Eq.~(\ref{eq:inclusions}) are strict. The proofs are straightforward adaptations of results from \cite{bgk25}, and are outlined in Appendices~\ref{subsec:proof_gap_oblivious} and~\ref{subsec:proof_gap_private_coins}.

\begin{restatable}[Separation between the Oblivious Model and the Preset Public Coins Model with Bounded Adversaries]{lemma}{oblivious}
\label{lemma:gap_oblivious}
There is an \textsf{LCL} problem $\Pi_1$ such that:
\[
\mathsf{ROUND}^{\mathsf{O}}_{1/n}(\Pi_1) = O(\log n) \text{ and } \mathsf{ROUND}^{\mathsf{B}}_{1/n}(\Pi_1) = \Omega(\sqrt{n})
\]
\end{restatable}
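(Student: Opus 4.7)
The plan is to take $\Pi_1$ to be essentially the grid-based \LCL of \cite{bgk24}: the input graph is a $\sqrt{n} \times \sqrt{n}$ grid with an input bit at every node, and a valid output designates at least one row as ``active'' with all of its nodes outputting a common bit equal to the input bit at the row's rightmost node. The nontrivial step of encoding this as an $O(1)$-locally checkable constraint set is done in \cite{bgk24}, and we invoke their gadget as a black box.

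For the oblivious upper bound, the algorithm simply has every node in row $i$ output $R_i$, the $i$-th bit of the shared random string, marking all rows active. Because $R$ is independent of the inputs, each of the $\sqrt{n}$ rows succeeds independently with probability $1/2$, and the probability that none succeeds is $2^{-\sqrt{n}} \le 1/n$ for $n$ sufficiently large. Only $O(\log n)$ rounds are needed, matching \cite{bgk24}.

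The heart of the separation is the $\Omega(\sqrt{n})$ lower bound against polynomial-time adversaries. Let $A$ be any algorithm using $r(n) = o(\sqrt{n})$ rounds and polynomial local computation. The key information-theoretic fact, identical to the one underlying the private-coins lower bound in \cite{bgk24}, is that in $r(n)$ rounds no node in the left half of the grid can see inputs in the right half; hence the joint distribution of the leftmost outputs of each row, conditioned on $R$ and on inputs restricted to their left-side neighborhoods, is independent of the rightmost input column. We exploit this with the following polynomial-time adversary $\mathcal{A}$: given $R$, it fixes all inputs to $0$, simulates $A$ with polynomially many independent draws of the private coins to estimate for each row $i$ the most likely leftmost output bit $b_i(R)$, and then sets the rightmost input of row $i$ to $\neg b_i(R)$. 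This adversary runs in polynomial time since $A$ does.

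The main obstacle is the quantitative step showing that $\mathcal{A}$ drives $A$'s error well above $1/n$. By the independence noted above, when $A$ is actually executed on $\mathcal{A}$'s chosen input, the leftmost node of row $i$ still outputs $b_i(R)$ with essentially the simulated frequency, while its rightmost input is $\neg b_i(R)$ by construction. Hence for a constant fraction of rows, any valid labeling must designate the row inactive; combined with the constraint that at least one row be active, a standard averaging argument over $R$ and the private coins pushes the overall error well above $1/n$, contradicting the assumed success guarantee. Making this quantitative is the straightforward adaptation of the \cite{bgk24} lower bound to the preset public coins model, and is what we defer to Appendix~\ref{subsec:proof_gap_oblivious}.
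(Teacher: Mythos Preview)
Your approach is essentially the same as the paper's: take $\Pi_1$ to be the \cite{bgk24} problem, get the oblivious $O(\log n)$ upper bound by outputting the $i$-th shared random bit in row $i$, and for the $\Omega(\sqrt n)$ lower bound against bounded adversaries, exploit that conditioned on $r_{\text{pub}}$ the outputs of the leftmost and rightmost nodes in a row are independent, so the adversary can flip each rightmost input to the complement of the (conditionally) likely output.

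Two points where your write-up should be tightened. First, your informal description of the \LCL (``designate at least one row as active'') is not the \cite{bgk24} problem: there, \emph{every} row must be consistent, and at least one row's common bit must match the rightmost input. This matters for the quantitative step, because the consistency-of-every-row constraint is what forces the outputs to be nearly deterministic conditioned on $r_{\text{pub}}$. Second, the ``standard averaging argument'' you defer is exactly the key computation the paper carries out: from the assumed success probability $1-1/n$ and the conditional independence of $u$ and $v$, one gets $\Pr_{r_{\text{sec}}}[u=1\mid r_{\text{pub}}]\cdot\Pr_{r_{\text{sec}}}[v=0\mid r_{\text{pub}}]<1/n$ (and symmetrically), hence both conditional probabilities lie in $[0,2/n]\cup[1-2/n,1]$. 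Your simulation-based adversary is then precisely the right way to make the paper's adversary explicitly $\poly(n)$-time, since it only needs to distinguish these two well-separated cases; once that is in hand, a union bound over the $O(\sqrt n)$ rows gives success probability $O(1/\sqrt n)<1-1/n$.
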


\begin{restatable}[Separation between the Private Coins Model and the Preset Public Coins Model with Unbounded Adversaries]{lemma}{privatecoins}
\label{lemma:gap_private}
There is an \textsf{LCL} problem $\Pi_2$ such that:
\[
\mathsf{ROUND}^{\mathsf{U}}_{1/n}(\Pi_2) = O(\log n) \text{ and } \mathsf{ROUND}^{\mathsf{priv}}_{1/n}(\Pi_2) = \Omega(\sqrt{n})
\]
\end{restatable}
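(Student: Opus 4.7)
The plan is to define an LCL problem $\Pi_2$ by adapting the grid-based construction of \cite{bgk24}, so that the algorithm's use of shared randomness does not interact with the input at all. Because the randomness is used in an input-independent way, an unbounded adversary gains no leverage from seeing $R$. Concretely, on a $\sqrt{n}\times\sqrt{n}$ grid, each node $v$ outputs a pair $(y_v, f_v)$ where $f_v$ is a per-row ``selection'' flag and $y_v$ a bit. The local constraints are: $f_v$ is constant along each row; within any selected row ($f_v=1$) the bits $y_v$ are constant; the global ``at least one row is selected'' condition is enforced using the standard cumulative-column LCL gadget of \cite{bgk24}; and each unselected row carries an input-tied consistency constraint (for example, $y_v \oplus b_v = y_u \oplus b_u$ for adjacent $u,v$ in an unselected row), which is locally checkable but rules out any constant labeling as a solution for all adversarial inputs.

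For the upper bound, I would present an algorithm that uses $2\sqrt{n}$ bits of $R$: one bit per row to decide whether the row is selected, and one bit per row to fix the common value of $y_v$ in that row; unselected rows have each node output $y_v=b_v$, satisfying the XOR-consistency constraint with $c=0$. Each node determines its row index in $O(\log n)$ \LOCAL rounds using the grid topology and IDs, and then reads the relevant bits of $R$. Since the choice of which rows are selected, and the common $y_v$ in each selected row, depend only on $R$ and never on $b$, the adversary's input choice cannot affect these constraints. The probability that no row is selected is at most $2^{-\sqrt{n}}\le 1/n$, and all remaining constraints hold deterministically, yielding total success probability $\ge 1-1/n$.

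The lower bound is a standard indistinguishability argument in the spirit of the \cite{bgk24} oblivious-vs-private separation, now transferred to the private-coins model. With $r = o(\sqrt{n})$ rounds, two nodes at opposite ends of a row have disjoint $r$-neighborhoods and disjoint private-coin tapes, so their joint output distribution is a product. Any algorithm must still select at least one row and produce a constant $y_v$ across the $\sqrt{n}$ nodes of that row for every input; by considering two inputs that agree on most of the grid but differ near a single row, one forces indistinguishable local views at a distant node while requiring conflicting global selections. A pigeon-hole/anti-concentration argument then yields failure probability $\Omega(1/n)$ on at least one of the two inputs.

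I expect the main obstacle to be engineering the unselected-row constraint to walk a narrow line: strong enough that no deterministic labeling solves the problem against all adversarial inputs (otherwise the private-coins lower bound collapses), yet weak enough that the randomized algorithm above satisfies it regardless of the adversary's choice. This balance is the crux of the \cite{bgk24} construction and is what makes the adaptation ``straightforward'' only in the sense that the hard design work has already been done there; the precise encoding of ``at least one selected row'' as a constant-radius LCL, inherited from \cite{bgk24}, is what makes everything locally checkable in the first place.
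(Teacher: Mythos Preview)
Your construction has a genuine gap: the problem $\Pi_2$ you describe is solvable \emph{deterministically} in $O(\log n)$ rounds, so the private-coins lower bound collapses. Concretely, once nodes learn their row index (which the column-tree gadget of \cite{bgk24} provides in $O(\log n)$ rounds), they can simply declare row~$0$ selected with $y_v=0$ everywhere, and every other row unselected with $y_v=b_v$. All four of your constraints are satisfied on every input, with no randomness at all. Your own upper-bound algorithm already exhibits this: the treatment of unselected rows ($y_v=b_v$) is input-adaptive but local and trivial, and the treatment of selected rows ($y_v=\text{const}$) does not depend on the input, so nothing prevents a deterministic algorithm from hard-coding the selection. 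The lower-bound sketch speaks of ``conflicting global selections'' being forced by different inputs, but no constraint in your problem ties the selection or the selected-row value to the input, so no such conflict arises.

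This is not a local bug but a structural obstruction. If the algorithm's use of shared randomness is input-independent, then the constraints it satisfies must be satisfiable by \emph{some} fixed assignment independent of input, and a deterministic algorithm can output that assignment. Conversely, if the selected-row constraint ties to the input (as in \cite{bgk24}, where the row output must match the rightmost node's input bit), then an unbounded adversary who sees $r_{\text{pub}}$ can flip exactly those bits. The paper resolves this tension differently: it keeps the \cite{bgk24} constraint unchanged but moves the relevant input bits from the \emph{labeling} into the \emph{graph structure}, by attaching a small $0$-gadget or $1$-gadget to each grid node. In the preset-public-coins model the graph is fixed before $r_{\text{pub}}$ is drawn, so these gadgets are independent of $r_{\text{pub}}$ regardless of the adversary's power; the \cite{bgk24} algorithm then succeeds against an unbounded adversary for exactly the same reason it succeeds in the oblivious model. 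The private-coins lower bound is inherited verbatim from \cite{bgk24}, since encoding bits as constant-size gadgets rather than labels does not change the argument.
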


As for the difference between bounded and unbounded adversaries, it becomes interesting if one is willing to make computational assumptions. That is, to assume the existence of computational tasks that cannot be solved efficiently. Our focus is on hard-on-average problems.
Intuitively, we can use such a problem to pose a challenge to the adversary: 
if it succeeds, it can generate an input labeling that forces any \textsf{LOCAL} algorithm to run for many rounds;
otherwise, the distributed problem is solvable in few rounds.

More formally, a problem is said to be $(T, \mu)$-hard-on-average if there exists a distribution over its instances such that a $T(\lambda)$-time probabilistic algorithm cannot solve the problem on a random instance with probability better than $\mu(\lambda)$, with $\lambda$ being the instance's length. 
Throughout this work we let the solver be non-uniform (that is, have access to advice of length $\poly(\lambda)$).~\footnote{This work focuses exclusively on hardness with respect to non-uniform solvers, so we sometimes treat this aspect as implicit.%See Remark~\ref{remark:non-uniform_adv} for more detail.
}
Moreover, when referring to a \textit{public-coin} $(T, \mu)$-hard-on-average distributional problem, we make the additional assumption that the problem remains hard even when the random bits used for sampling the instance are known.

Whenever we make a claim about distributed graph problems while relying on the existence of $(T, \mu)$-hard-on-average problems, we set $\lambda$ to be a function of $n$, the graph size.
We make a distinction between problems that are subexponentially hard, and ones that are only polynomially hard (see Section~\ref{subsec:complexity} for more details). However, in both cases, we set $\lambda := \lambda(n)$ such that $T(\lambda)$ is superpolynomial in $n$ and $\mu(\lambda)$ is negligible in $n$ (smaller than $1 / p(n)$ for any polynomial $p(n)$).  
We study the connection between \textsf{LCL} problems and the existence of $(T, \mu)$-hard-on-average problems. In particular, we pose the following question:
\begin{center}
\textit{In the preset public coins model, does the existence of public-coin hard-on-average problems help with \textsf{LCL} problems?}
\end{center}
This corresponds to a gap in the round complexity between two environments, one with adversaries that run in $\poly(n)$ time and one with computationally unbounded adversaries. 

Additionally, we ask whether \textsf{LCL} problems in the preset public coins model that satisfy certain properties might imply the existence of intractable problems. To this end, we consider \textsf{LCL} problems in which the success rate changes significantly depending on the computational power of the adversary.
\begin{center}
\textit{
Assume the existence of an \textsf{LCL} problem that, in the preset public coins model, can be solved with a significantly higher success probability when the adversary is computationally bounded.\newline
Does this imply any known computational assumptions?}
\end{center}
Specifically, if the gap is large enough, can it be leveraged to construct a hard-on-average problem?
Note that this is not the exact reverse of the first question, which was concerned with the round complexity, rather than with the success probability. 

More concretely, the (public-coin) hard-on-average problems that we consider are from the complexity class $\TFNP$ (or its non-uniform variant $\TFNP/\poly$), which stands for \textit{total} search problems that can be verified in polynomial time. A search problem is \textit{total} if for every instance of the problem, there exists a solution. Apparently, this class captures the desired properties that we need from hard-on-average problems.

We note that the complexity class $\TFNP$ has been extensively studied, with much of the interest primarily due to its subclasses. A notable example is the subclass $\mathsf{PPAD}$, introduced by Papadimitriou \cite{pap94}, which is known for being complete with respect to finding Nash equilibria (see Daskalakis, Goldberg, and Papadimitriou \cite{dgp09}). 
Moreover, average-case hardness in $\TFNP$ is known to be related to other known computational assumptions.
For example, Hub\'{a}cek, Naor, and Yogev~\cite{hny17} proved that the existence of (public-coin) hard-on-average problems in $\TFNP / \poly$ (or in $\TFNP$, under additional assumptions) is implied by hard-on-average problems in $\NP$. 
Bear in mind that the latter is implied by the existence of \textit{one-way functions} (OWFs), whereas it is unknown whether the converse holds.

\subsection{Contributions}
Our results are twofold. First, we present a problem that exhibits a gap between the best round complexity achievable when the adversary is computationally unbounded, compared to what can be obtained if the adversary is assumed to be efficient. In more detail, we show a lower bound that holds against unbounded adversaries and an upper bound that holds only in the case of efficient adversaries.

\begin{restatable}{theorem}{roundsgap}
\label{thm:rounds_gap}
Assume the existence of a public-coin $(T, \mu)$-hard-on-average problem in $\TFNP$.
Then, there exist an \textsf{LCL} problem $\Pi$ and a constant $c \in (0,1/2]$ such that: 
\[
\mathsf{ROUND}^{\mathsf{U}}_{1/n}(\Pi) = \Omega(n^c) \text{ and } \mathsf{ROUND}^{\mathsf{B}}_{\mu(\lambda) + \mathsf{negl}(n)}(\Pi) = O(\lambda)
\]
Where $n$ refers to the graph size and $\lambda$ is a parameter ($\lambda = \omega(\log n)$ but $\lambda = O(\poly(n))$) taken such that $T(\lambda)$ is superpolynomial in $n$, and $\mathsf{negl}(n)$ is some negligible function.
Moreover, against bounded adversaries, the length of the preset public random string is $O(\lambda)$.
\end{restatable}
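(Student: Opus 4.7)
I will design $\Pi$ as a BGK24-style row-propagation LCL whose long-distance ``match'' constraint is gated, per row, by a locally checkable verifier for a TFNP instance sampled using the shared randomness (under the assumed public-coin $(T,\mu)$-hard distribution). Each row carries a constraint ``$g_i = b_i$'' that bites only when the row's input also contains a valid solution to the TFNP instance: a polynomial-time adversary cannot produce such a solution on any row, so the constraint is inactive and the algorithm can output an arbitrary default; an unbounded adversary, however, can activate it on every row, forcing any algorithm to solve a long BGK24-style propagation problem.

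\textit{Construction and upper bound.} I fix $c \in (0,1/2]$ small enough (depending on whether the assumed hardness is subexponential or only polynomial) that $\lambda \le n^{c/2}$. The graph has $n^{1-c}$ disjoint length-$n^c$ rows, each attached at its rightmost endpoint to an $O(\lambda)\times O(\lambda)$ ``verifier strip'' of auxiliary nodes; the total node count remains $O(n)$. The preset public random string is the $O(\lambda)$-bit sampling string used to draw a TFNP instance $x$ whose verifier I pad to linear time. The adversary sets a bit $b_i$ at the right endpoint of row $i$ and a candidate solution $y_i \in \{0,1\}^\lambda$ on $\lambda$ designated strip nodes. Each node outputs a constant-size label, with constant-radius LCL constraints: (i) the row label is constant along each row; (ii) the strip cells for row $i$ form a valid Cook--Levin tableau of the verifier on $(x,y_i)$; (iii) if the tableau's terminal cell is in an ``accept'' state then the row label on row $i$ equals $b_i$, realized locally by propagating a ``verdict bit'' along the row as part of the label. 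The upper-bound algorithm has each node read its $O(\lambda)$-neighborhood, run the padded verifier internally on $(x,y_i)$, write its tableau cell, and output a fixed default row label. This takes $O(\lambda)$ rounds, and by the $(T,\mu)$-hardness plus a union bound over the $n^{1-c}$ rows the output is valid except with probability $n^{1-c}\cdot\mu(\lambda) = \negl(n)$, within the $\mu(\lambda)+\negl(n)$ budget.

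\textit{Lower bound.} By totality of TFNP the unbounded adversary fixes a valid solution $y$ of $x$ and plants $y_i := y$ in every row, so every tableau accepts and every ``$g_i = b_i$'' constraint is active. For any algorithm with fewer than $n^c/2$ rounds, the middle node of row $i$ sees neither $b_i$ nor any private coins tossed close to that endpoint, so its row label is a function only of $r$ and of inputs and coins in its $(n^c/2)$-ball. Row-consistency then pins the entire row's label down to this restricted view, so conditioned on $r$ the joint labels $(g_1,\ldots,g_{n^{1-c}})$ follow a distribution on $\{0,1\}^{n^{1-c}}$; picking the deterministic $(b_1,\ldots,b_{n^{1-c}})$ to be its least-likely point yields success probability at most $2^{-n^{1-c}} \le 1/n^2$, below the required $1-1/n$. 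Hence $\mathsf{ROUND}^{\mathsf{U}}_{1/n}(\Pi) = \Omega(n^c)$.

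\textit{Main obstacle.} The principal technical step is realizing ``the verifier accepts $(x,y_i)$'' as a constant-radius LCL predicate; this is done by the classical tableau reduction, which becomes local once the verifier is padded to linear time and the tableau is laid out as a two-dimensional grid of diameter $O(\lambda)$. A secondary but nontrivial point is showing that row consistency really does pin the label down from the endpoint-free view, which must accommodate algorithms that try to use private coins correlated along the row; for this I invoke the standard observation that any $r$-round output at a node depends only on inputs and coins within its $r$-ball.
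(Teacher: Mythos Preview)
Your plan has a fundamental gap in how the TFNP instance is tied to $r_{\text{pub}}$. Constraint (ii), ``the strip cells form a valid tableau of the verifier on $(x, y_i)$'', requires the \LCL constraints to know $x = D(r_{\text{pub}})$. But an \LCL problem is a fixed finite object whose constraints depend only on input labels, output labels, and local graph structure; they cannot reference the shared-randomness tape. So either $x$ lives in the \emph{input} labels---in which case even a bounded adversary writes a trivial instance with a known solution, activates every row, and defeats your upper bound---or $x$ is written by the algorithm as \emph{output}---in which case, for the lower bound, the algorithm can write an $x'$ for which the planted $y_i$ is \emph{not} a solution, the tableau rejects, every gate is deactivated, and a default label succeeds in $O(\lambda)$ rounds against the unbounded adversary as well. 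Neither option yields the separation.

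The paper reverses the coupling: the TM computation is encoded entirely in the \emph{input} (Alg.~\ref{alg:hard_tm} takes $(r, w, z)$ and \emph{outputs} $r$, or $\vec{1}$ on verification failure), and the \LCL constraint merely equates the AG's row outputs to this TM output. The nodes then choose to output bits of $r_{\text{pub}}$; to force a mismatch on every row the adversary must make the TM output $\overline{r_{\text{pub}}}$, which pins its choice to $r = \overline{r_{\text{pub}}}$---a uniformly random string---and hence requires solving the hard instance $D(\overline{r_{\text{pub}}})$. The randomness enters through the algorithm's output, never through the constraints. A second, independent omission: you do not address promise-freeness. The paper's $\Pi^{\text{badAG}}/\Pi^{\text{badSATM}}$ pointer-chain machinery and the small-diameter augmentation of the TM grid exist precisely so that errors in either the structure or the TM transcript can be proved in $O(\log n)$ rounds, making the $O(\lambda)$-round upper bound hold on \emph{every} graph, not just the hard family.
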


\begin{corollary}[Polynomial Gap]
\label{cor:poly_hardness}
Assume the existence of a public-coin polynomially hard-on-average problem in $\TFNP$. Set $\lambda$ to be $\lceil n^{\delta} \rceil$ for any arbitrarily small $\delta \in (0,c)$. Then, there is negligible function $\mathsf{negl}(n)$ such that the \textsf{LCL} problem $\Pi$ from Theorem~\ref{thm:rounds_gap} satisfies: 
\[
\mathsf{ROUND}^{\mathsf{B}}_{\mathsf{negl}(n)}(\Pi) = O(n^{\delta})
\]
Moreover, the length of the preset public randomness is also $O(n^{\delta})$.
\end{corollary}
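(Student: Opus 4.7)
The plan is to derive Corollary~\ref{cor:poly_hardness} as a direct instantiation of Theorem~\ref{thm:rounds_gap} with the parameter choice $\lambda(n) := \lceil n^\delta \rceil$, and then to absorb the $\mu(\lambda)$ term appearing in the theorem's error guarantee into a single negligible-in-$n$ function. The construction of the \LCL problem $\Pi$ and the $O(\lambda)$-round bounded-adversary protocol are entirely inherited from the theorem; the corollary is essentially a parameter translation.

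First I would check that $\lambda(n) = \lceil n^\delta \rceil$ meets the admissibility conditions of the theorem. The bounds $\lambda = \omega(\log n)$ and $\lambda = O(\poly(n))$ hold for any constant $\delta \in (0,c) \subseteq (0,1/2]$. The remaining ``$T(\lambda)$ is superpolynomial in $n$'' hypothesis must be reinterpreted under polynomial hardness: polynomial hardness-on-average says that for every polynomial $T$ there exists a negligible-in-$\lambda$ function $\mu$ such that no $T(\lambda)$-time algorithm solves the $\TFNP$ problem with success probability better than $\mu(\lambda)$. Since the preset public coins model only considers $\poly(n)$-time adversaries, and $\lambda = n^\delta$ makes $\poly(n) \subseteq \poly(\lambda)$ for constant $\delta$, the reduction underlying Theorem~\ref{thm:rounds_gap} still transforms any successful $\poly(n)$-time attack on $\Pi$ into a $\poly(\lambda)$-time attack on the $\TFNP$ distribution, contradicting polynomial hardness for a suitable $T$.

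Having verified admissibility, the theorem yields $\mathsf{ROUND}^{\mathsf{B}}_{\mu(\lambda) + \negl(n)}(\Pi) = O(\lambda) = O(n^\delta)$ together with preset public randomness of length $O(\lambda) = O(n^\delta)$. It remains to check that $\mu(\lambda) + \negl(n)$ is negligible in $n$. This is immediate from the fact that $\lambda = n^\delta$ is polynomial in $n$: given any target polynomial $q(n) = n^k$, applying negligibility of $\mu$ in $\lambda$ to the polynomial $p(\lambda) = \lambda^{\lceil k/\delta\rceil + 1}$ yields $\mu(\lceil n^\delta\rceil) < 1/p(\lceil n^\delta\rceil) < 1/n^k$ for all sufficiently large $n$, so $\mu(\lambda)$ is negligible in $n$, and the sum with $\negl(n)$ remains negligible.

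The only non-mechanical step I anticipate is the hypothesis reinterpretation in paragraph two: ensuring that Theorem~\ref{thm:rounds_gap}'s quantitative reliance on ``$T(\lambda)$ superpolynomial in $n$'' can legitimately be replaced, under polynomial hardness, by the observation $\poly(n)\subseteq\poly(\lambda)$ when $\lambda = n^\delta$. If the theorem's proof proceeds via a generic $\poly(n)$-time reduction from the adversary to the $\TFNP$ solver—which is the natural route for a $\TFNP$-based construction—then polynomial hardness suffices without any modification to the underlying argument, and the remaining calculations are routine parameter substitution.
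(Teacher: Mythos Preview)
Your proposal is correct and matches the paper's intended derivation; the paper states Corollary~\ref{cor:poly_hardness} without a separate proof, treating it as an immediate instantiation of Theorem~\ref{thm:rounds_gap}. Your identification and resolution of the one non-mechanical point---that the literal ``$T(\lambda)$ superpolynomial in $n$'' condition cannot hold for polynomial $T$ and $\lambda = n^\delta$, but that the underlying reduction in Lemma~\ref{lemma:ub_bounded_adversary} only needs the $\poly(n)$-time adversary to fall within the hardness regime, which it does since $\poly(n) = \poly(\lambda)$---is exactly the right way to read the corollary against the proof of Claim~\ref{claim:possibility_3}.
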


\begin{corollary}[Subexponential Gap]
\label{cor:subexp_hardness}
Assume the existence of a public-coin subexponentially hard-on-average problem in $\TFNP$. Thus, there are constants $\kappa, \rho > 0$ such that the problem is $(T, \mu)$-hard-on-average for $T(\lambda) = 2^{\lambda^{\kappa}}$ and $\mu(\lambda) = 2^{-\lambda^{\rho}}$. Set $\lambda$ to be $\lceil \log^{\delta} n \rceil$ for any $\delta > \max\{1/\kappa, 1 / \rho\}$. 
Then, there is negligible function $\mathsf{negl}(n)$ such that the \textsf{LCL} problem $\Pi$ from Theorem~\ref{thm:rounds_gap} satisfies: 
\[
\mathsf{ROUND}^{\mathsf{B}}_{\mathsf{negl}(n)}(\Pi) = O(\log^{\delta} n)
\]
Moreover, the length of the preset public randomness is also $O(\log^{\delta} n)$.
\end{corollary}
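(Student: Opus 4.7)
The plan is to instantiate Theorem~\ref{thm:rounds_gap} with the specific choice $\lambda := \lceil \log^{\delta} n \rceil$ and verify that the hypotheses of the theorem are satisfied by this parameter setting. The corollary should then be essentially a mechanical substitution, with the only real work being to confirm that the two quantitative conditions on $\lambda$, namely that $T(\lambda)$ is superpolynomial in $n$ and that $\mu(\lambda)$ is negligible in $n$, both hold because $\delta$ exceeds $\max\{1/\kappa, 1/\rho\}$.

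First, I would verify that $T(\lambda)$ is superpolynomial. Since $\lambda \geq \log^{\delta} n$, we have
\[
T(\lambda) = 2^{\lambda^{\kappa}} \geq 2^{\log^{\delta \kappa} n} = n^{\log^{\delta \kappa - 1} n},
\]
and because $\delta > 1/\kappa$ we get $\delta \kappa - 1 > 0$, so the exponent $\log^{\delta \kappa - 1} n$ is $\omega(1)$, giving $T(\lambda) = n^{\omega(1)}$. Symmetrically, I would check negligibility of $\mu(\lambda)$:
\[
\mu(\lambda) = 2^{-\lambda^{\rho}} \leq 2^{-\log^{\delta \rho} n} = n^{-\log^{\delta \rho - 1} n},
\]
and since $\delta \rho - 1 > 0$ by the choice $\delta > 1/\rho$, this decays faster than any inverse polynomial in $n$, hence $\mu(\lambda) = \negl(n)$. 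Note also that $\lambda = O(\poly(n))$ since $\log^{\delta} n$ grows slower than any positive power of $n$, so the upper-bound condition on $\lambda$ in Theorem~\ref{thm:rounds_gap} is satisfied.

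With these verifications in hand, I would invoke Theorem~\ref{thm:rounds_gap} to conclude that $\mathsf{ROUND}^{\mathsf{B}}_{\mu(\lambda) + \negl(n)}(\Pi) = O(\lambda) = O(\log^{\delta} n)$. Since the sum of two negligible functions is negligible, $\mu(\lambda) + \negl(n)$ can be absorbed into a single $\negl(n)$ term, yielding the claimed bound. The same bound $O(\lambda) = O(\log^{\delta} n)$ carries over to the length of the preset public random string, exactly as stated by the theorem. I do not anticipate any real obstacle here; the only subtlety is aligning the thresholds $1/\kappa$ and $1/\rho$ so that both the hardness and failure-probability conditions kick in simultaneously, which is precisely what the hypothesis $\delta > \max\{1/\kappa, 1/\rho\}$ ensures.
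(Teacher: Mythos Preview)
Your proposal is correct and is exactly how the paper intends the corollary to be derived: it is stated without a separate proof, as an immediate instantiation of Theorem~\ref{thm:rounds_gap} with $\lambda = \lceil \log^\delta n\rceil$ and the routine checks you spell out. The one hypothesis of Theorem~\ref{thm:rounds_gap} you did not explicitly verify is $\lambda = \omega(\log n)$; this holds because for subexponential hardness one may take $\kappa \le 1$ without loss of generality (larger $\kappa$ only makes the assumption stronger), so $\delta > 1/\kappa \ge 1$.
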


The constant $c$ depends on the hard problem we assume to exist. The first part of Theorem~\ref{thm:rounds_gap} says that any \textsf{LOCAL} algorithm that solves $\Pi$ with failure probability at most $1 / n$ when the inputs are chosen by an unbounded adversary, must run for $\Omega(n^c)$ rounds. On the other hand, whenever we consider a more relaxed environment, where the inputs are chosen efficiently, the second part of Theorem~\ref{thm:rounds_gap} (and Corollaries~\ref{cor:poly_hardness} and \ref{cor:subexp_hardness}) show that a negligible error probability can be obtained in $O(n^\delta)$ rounds for every $\delta \in (0,1)$ (assuming polynomial hardness) or in $O(\log^\delta n)$ rounds for every $\delta$ larger than some threshold (assuming subexponential hardness). This establishes an arbitrarily large polynomial gap in the number of rounds in the former case, and a superpolynomial gap in the latter. We stress that not only does the round complexity significantly improve, but the success rate does as well.

We remark that the constant $c$ can be explicitly determined given additional information about the $\TFNP$ problem. See Example~\ref{cor:explicit_lb} below.

\begin{example}[Explicit Lower Bound]
\label{cor:explicit_lb}
Assume the existence of a public-coin $(T, \mu)$-hard-on-average problem in $\TFNP$, where the solutions are of linear length, and both sampling and verification run in linear time.\footnote{A simple problem that exhibits this behavior is \textit{pigeonhole circuit}, introduced by Papadimitriou \cite{pap94}, coupled with the uniform distribution, under the assumption that  \textit{collision-resistance hash functions} (CRH) or \textit{one-way permutations} (OWP) exist.} Then, the \textsf{LCL} problem $\Pi$ from Theorem~\ref{thm:rounds_gap} satisfies: 
\[
\mathsf{ROUND}^{\mathsf{U}}_{1/n}(\Pi) = \Omega(\sqrt{n})
\]
\end{example}

Theorem~\ref{thm:rounds_gap} is essentially a separation between $\text{\textsf{LCL}}^{\mathsf{U}}_{\epsilon(n)}[r(n)]$ and $\text{\textsf{LCL}}^{\mathsf{B}}_{\epsilon(n)}[r(n)]$ for specific values of $\epsilon(n)$ and $r(n)$. Let $c$ be the constant mentioned in Theorem~\ref{thm:rounds_gap}.
Together with Lemmas~\ref{lemma:gap_oblivious},~\ref{lemma:gap_private}, it follows that assuming public-coin polynomial hardness-on-average of a distributional problem in $\TFNP$, for every $\delta_0 \in (0, c)$ we have a separation between the sub-classes in Eq.~(\ref{eq:inclusions}) that correspond to error probability $\epsilon(n) = 1/n$ and round complexity $r(n) = n^{\delta_0}$. In the case of subexponential hardness, this conclusion can be extended to $r(n) = \log^{\delta_1} (n)$ for all $\delta_1 > \max \{ 1/ \kappa, 1/\rho \}$ (where $\kappa$ and $\rho$ are as in Cor.~\ref{cor:subexp_hardness}).
See Fig.~\ref{fig:landscape}.

\begin{figure}[t]
\centering
\begin{tikzpicture}
% Ellipses
\draw[lightgray, line width=0.04cm] (0,-0.5) ellipse [x radius=2.5,y radius=0.75];
\node at (0,-0.5) {$\mathsf{LCL}^{\mathsf{priv}}_{1/n}[r(n)]$};
\draw[lightgray, line width=0.04cm] (0,0) ellipse [x radius=3.5,y radius=1.3];
\node at (0, 0.65) {$\mathsf{LCL}^{\mathsf{U}}_{1/n}[r(n)]$};
\draw[lightgray, line width=0.04cm] (0,0.55) ellipse [x radius=4.5,y radius=2];
\node at (0, 1.85) {$\mathsf{LCL}^{\mathsf{B}}_{1/n}[r(n)]$};
\draw[lightgray, line width=0.04cm] (0,1) ellipse [x radius=5.5,y radius=2.65];
\node at (0, 3) {$\mathsf{LCL}^{\mathsf{O}}_{1/n}[r(n)]$};
% Inequalities
\node at (0, 2.5) {$\neq$ \scriptsize(Lemma \ref{lemma:gap_oblivious})};
\node at (0, 1.3) {$\neq$ \scriptsize(Thm. \ref{thm:rounds_gap})};
\node at (0, 0.22) {$\neq$ \scriptsize(Lemma \ref{lemma:gap_private})};
\end{tikzpicture}
\caption{Landscape of the \textsf{LCL} problems that can be solved in $r(n)$ rounds with error at most $1/n$. Given a (public-coin) polynomial hardness-on-average of a distributional problem in \textsf{TFNP}, $r(n)$ can be any $n^{\delta_0}$, where $\delta_0 \in (0,c)$ (for $c$ mentioned in Theorem~\ref{thm:rounds_gap}). Assuming subexponential hardness, $r(n)$ can take any value from between $\log^{\delta_1}(n)$ (for $\delta_1$ larger than the constants $1 / \kappa$ and $1 / \rho$ specified in Cor.~\ref{cor:subexp_hardness}) and $n^{\delta_0}$.}
\label{fig:landscape}
\end{figure}

\begin{remark}[Randomness Complexity]
\label{remark:randomness_complexity}
Although we do not restrict the length of the public random string (other than requiring it to be polynomial), getting long public random strings is challenging to implement in practice. Hence, it is of central interest to show that the public randomness can be relatively short (e.g., Ghaffari and Kuhn \cite{gk19} studied distributed algorithms that use shared random strings of length no more than polylogarithmic).

By slightly modifying\footnote{At a high level, the nodes in the first $\lceil \log n \rceil$ rows of the grid behave the same, and the nodes in the remaining rows have some fixed output. The rows are consistent, and the probability that in all the rows the output disagrees with the input of the rightmost node is at most $1 / 2^{\Theta(\log n)} = n^{-\Theta(1)}$.}
the construction of \cite{bgk25}, success probability $1 - 1/n$ can be attained using just $O(\log n)$ shared random bits.
More generally, \cite{Gha20} observed that $O(\log n)$ shared random bits suffice for any algorithm in the oblivious model.
For algorithms that rely solely on shared randomness (i.e., no private coins are used) this is also optimal (see \cite[Footnote 5]{gk19}). 

% In Theorem~\ref{thm:rounds_gap}, for our upper bound we don't use private coins either. In the proof of the lower bound, we make no assumptions regarding the \textsf{LOCAL} algorithm being employed. In particular, it is allowed to use private coins.

In our construction, Corollaries~\ref{cor:poly_hardness} and \ref{cor:subexp_hardness} show a trade-off between the hardness assumption we make and the resulting round and randomness complexities. 
Unfortunately, when we make a mild hardness assumption, the length of the shared random string is at least $n^{\Omega(1)}$. But, under a stronger assumption, we are able to show that a string of polylogarithmic length suffices. 
Keep in mind that we also achieve a significantly higher success rate, so the randomness complexities of our construction and \cite{bgk25}'s are incomparable.
Furthermore, we note that $\omega(\log n)$ shared random bits are essential for any non-trivial result in the preset public coins model. When $|r_{\text{pub}}| = O(\log n)$, the description of the $\poly(n)$-time adversary can include a hardcoded input labeling for every one of the $2^{O(\log n)}$ possibilities for $r_{\text{pub}}$. In other words, if the shared randomness is too short, we lose the ``surprise element'' against the (non-uniform) bounded adversary. This implies that our randomness complexity (if we make the subexponential hardness assumption) is close to optimal.
\end{remark}

\begin{remark} [Alternative Model]
\label{remark:alternative_model}
In the preset public coins model, the graph is fixed before the randomness is revealed.
Nonetheless, we mention that the gap demonstrated in Theorem~\ref{thm:rounds_gap} can also be proved under an alternative formulation of the model, where the graph is chosen alongside the input labeling (after $r_{\text{pub}}$ is revealed).
We also note that under this formulation, the preset public coins model with unbounded adversaries is equivalent to the private coins model. Namely, Lemma~\ref{lemma:gap_private} becomes irrelevant in this case.
\end{remark}

\begin{remark}[Non-Uniform Hardness]
\label{remark:non-uniform_adv}
We define hardness with respect to non-uniform solvers.
While this choice is somewhat unconventional, it is motivated by the analogy we draw between the roles of the solver in the distributional problem and the adversary in the \textsf{LCL} problem.
Since the choice of the adversary in our model is allowed to depend on the graph structure, and, in particular, on its size $n$, it is natural to permit different solvers for different input lengths.

A uniform variant of the model can also be considered, wherein a single uniform adversary is required to produce hard inputs for all $n$~vertex graphs, for all large enough $n$'s. 
Under this formulation, the computational assumption in Theorem~\ref{thm:rounds_gap} can be relaxed to hardness against uniform solvers.
\end{remark}

In our second result, we show a necessary condition for \textsf{LCL} problems in which having the input selected by an efficient adversary helps to significantly improve the success probability.

\begin{restatable}{theorem}{probabilitygap}
\label{thm:probability_gap}
Let $\Psi$ be an \textsf{LCL} problem. Assume that there exist a negligible function $\mathsf{negl}(n)$ and a polynomial $p(n)$ such that:
\[
\mathsf{ROUND}^{\mathsf{B}}_{\mathsf{negl}(n)}(\Psi) < \mathsf{ROUND}^{\mathsf{U}}_{1/p(n)}(\Psi)
\]
Where $n$ refers to the graph size. Then, there is a public-coin infinitely-often polynomially hard-on-average problem in $\TFNP / \poly$.
\end{restatable}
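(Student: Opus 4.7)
The plan is to distill from the assumed gap a distributional $\NP$ search problem that is polynomially hard-on-average, and then to invoke the reduction of Hub\'{a}\v{c}ek, Naor, and Yogev~\cite{hny17} from hard-on-average $\NP$ to hard-on-average $\TFNP/\poly$.

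Let $\mathcal{A}$ be an $r$-round algorithm witnessing $r = \mathsf{ROUND}^{\mathsf{B}}_{\negl(n)}(\Psi)$, so $\mathcal{A}$ incurs at most negligible failure against every polynomial-time adversary. By the strict inequality, no $r$-round algorithm achieves failure probability below $1/p(n)$ against unbounded adversaries, so $\mathcal{A}$ itself, against some deterministic $\mathrm{Adv}^{*}$, satisfies $\Pr_{r_{\mathrm{pub}}, r_{\mathrm{priv}}}\bigl[\mathcal{A}(\mathrm{Adv}^{*}(r_{\mathrm{pub}}), r_{\mathrm{pub}}, r_{\mathrm{priv}})~\text{invalid}\bigr] \geq 1/p(n)$. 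Writing $f(x, r_{\mathrm{pub}}) := \Pr_{r_{\mathrm{priv}}}[\mathcal{A}(x, r_{\mathrm{pub}}, r_{\mathrm{priv}})~\text{invalid}]$, Markov yields that at least a $1/(2p(n))$ fraction of $r_{\mathrm{pub}}$ admit some $x$ with $f(x, r_{\mathrm{pub}}) \geq 1/(2p(n))$.

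Next I would define a distributional $\NP$ search problem $\Pi$ with an efficient deterministic verifier. For $k = \poly(n, p(n))$ chosen sufficiently large, an instance is $I = (r_{\mathrm{pub}}, r_1, \ldots, r_k)$ drawn uniformly, and a witness is a labeling $x$ such that $\mathcal{A}(x, r_{\mathrm{pub}}, r_j)$ is invalid for at least $k/(4p(n))$ indices $j$; verification runs $\mathcal{A}$ on each sample and counts failures. For average-case hardness, suppose a polynomial-time $B$ outputs a valid witness on a noticeable $q(n) = 1/\poly(n)$ fraction of instances, and construct the efficient LCL adversary $\widehat{\mathrm{Adv}}(r_{\mathrm{pub}})$ that internally samples fresh $r_1, \ldots, r_k$ and outputs $B(r_{\mathrm{pub}}, r_1, \ldots, r_k)$. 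A Chernoff bound together with a union bound over the $2^{O(n)}$ possible labelings (for $k$ polynomial in $n$ and $p(n)$) shows that whenever $x$ has empirical failure rate at least $1/(4p(n))$ on $k$ independent samples, its true rate $f(x, r_{\mathrm{pub}})$ is at least $1/(8p(n))$ except with $2^{-\Omega(n)}$ probability. Hence, against freshly sampled private coins, $\widehat{\mathrm{Adv}}$ forces $\mathcal{A}$ to fail with probability $\Omega(q(n)/p(n)) = 1/\poly(n)$, contradicting the negligible-failure assumption against bounded adversaries.

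Thus $\Pi$ is a public-coin, polynomially hard-on-average distributional $\NP$ search problem. Invoking~\cite{hny17} converts it into a public-coin, polynomially hard-on-average problem in $\TFNP/\poly$. Because the underlying LCL gap is asymptotic and the lifting may introduce additive slack at individual values of $n$, the conclusion is naturally phrased as \emph{infinitely-often} hardness.

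I anticipate the main obstacle to be the totality lift in the last step: $\Pi$ has only $\Omega(1/p(n))$ instance--solution density and so is not in $\TFNP$ as stated. The machinery of~\cite{hny17} handles exactly this by wrapping instances via a commitment-style construction and using non-uniform advice to supply default witnesses for instances without natural solutions, but verifying carefully that this reduction preserves both our polynomial hardness and the public-coin property is the bulk of the technical work.
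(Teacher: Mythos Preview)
Your proposal is correct and follows essentially the same route as the paper: define a search problem whose instances encode $r_{\mathrm{pub}}$ and whose witnesses are labelings on which $\mathcal{A}$ fails on a noticeable fraction of sampled private-coin strings, prove hardness by turning a solver into a bounded adversary (Chernoff plus union bound over the $2^{O(n)}$ labelings), prove non-triviality via the unbounded adversary, and then lift from $\FNP/\poly$ to $\TFNP/\poly$ via the shift-and-OR construction of~\cite{hny17}. The one cosmetic difference is that you place the private-coin samples $r_1,\ldots,r_k$ inside the instance, whereas the paper fixes a single ``useful'' multiset $R_n$ once and for all and hands it to the verifier as non-uniform advice; both variants land in $\FNP/\poly$ anyway, because the hard graph $G_n$ (which you suppress in your notation $f(x,r_{\mathrm{pub}})$) must also be supplied non-uniformly. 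Two small points you gloss over: (i)~the quantification over graphs---the unbounded lower bound only guarantees, for infinitely many $n$, the existence of a specific hard $G_n$, and this graph family is part of the verifier's advice; (ii)~the ``infinitely-often'' qualifier arises not from slack in the~\cite{hny17} lift but directly from the hypothesis: $\mathsf{ROUND}^{\mathsf{U}}_{1/p(n)}(\Psi) > r(n)$ means only that $\mathcal{A}$ fails for \emph{infinitely many} $n$, not for all large $n$.
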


That is, in the preset public coins model, for every polynomial-time adversary that decides on the inputs, $\Psi$ can be solved by a \textsf{LOCAL} algorithm with probability $1 -\mathsf{negl}(n)$, with strictly fewer rounds than required for solving $\Psi$ with probability $1 - \frac{1}{p(n)}$ when a computationally unbounded adversary selects the input. Such a problem $\Psi$ implies the hardness of a total search problem.

This gap is required to be huge. The error probability against an efficient adversary is negligible, whereas against an unbounded adversary, we should fail with probability at least $1 / p(n)$ for some polynomial $p(n)$. It is natural to ask whether this assumption is not too artificial. Saying that the success rate gets worse if an unbounded adversary is present shouldn't come as a surprise, but here we demand a significant gap, which might seem restrictive, but we beg to differ. If the gap were smaller, one could argue that the powerful adversary doesn't inflict substantial harm, and the algorithm remains practical even in the unbounded setting. 

We remark that the problem promised by Theorem~\ref{thm:probability_gap} is only \textit{polynomially} hard-on-average. It seems that our techniques do not enable us to go beyond polynomial hardness.\footnote{The instances of the search problem that we devise in Section~\ref{sec:probability_gap} take on the form of $(1^n, r_{\text{pub}})$.
The problem's hardness is closely related to the difficulty a bounded adversary faces when selecting the input labeling for the given \textsf{LCL} problem $\Psi$ on an $n$~vertex graph. 
We defined the bounded adversary to be any $\poly(n)$-time algorithm.
Thus, on instance of length $\poly(n)$ the problem is guaranteed to be hard for any $\poly(n)$-time solver, but not beyond that.}

The fact that the resulting search problem is only \textit{infinitely-often} hard-on-average means that every (non-uniform) \textsf{PPT} solver for the problem succeeds with at most negligible probability for infinitely many values of $n$ (the graph size). 
In fact, this can be strengthened to hold for all large enough $n$ if one is willing to assume a stronger condition. 
Specifically, that the lower bound on the round complexity against unbounded adversaries applies even to ``weaker'' \textsf{LOCAL} algorithms that are required to be correct for infinitely many $n$'s (rather than for all large enough $n$, as is typically required of a \textsf{LOCAL} algorithm).

% Lastly, we note that both of our results are expressed using complexity classes of search problems. It is natural to ask whether this is necessary, and if the results can rephrased using analogous classes of decision problems (e.g., $\NP \cap \coNP$). The proof of Theorem~\ref{thm:rounds_gap} can be easily modified to support hard-on-average decision problems (instead of search problems). On the other hand, the proof of Theorem~\ref{thm:probability_gap} relies on the hardness of finding an input labeling. Namely, not on the hardness of determining whether such an input labeling exists. Therefore, it is unclear whether the proof can be easily adapted so that it yields a hard-on-average \textit{decision} problem.

\subsection{Related Work} 

The role of (private) randomness in solving \textsf{LCL} problems has been thoroughly studied. \cite{ns95} showed that randomness does not help if the number of rounds is $O(1)$. 
Chang, Kopelowitz, and Pettie \cite{ckp16} extended this result to $2^{O(\log^* n)}$ round algorithms. Moreover, they demonstrated that there exists an \textsf{LCL} problem with an exponential gap in the round complexity between the randomized and the deterministic settings (see Ghaffari and Su \cite{gs17} for another such problem).
For a wider class of distributed problems, in which solutions can be verified in $\poly(\log n)$ rounds, Rozho\v{n} and Ghaffari \cite{rg20} proved that any $\poly(\log n)$ round \textsf{LOCAL} algorithm can be derandomized, yielding a deterministic algorithm whose round complexity remains $\mathsf{poly}(\log n)$. 
In other words, \textit{private} randomness doesn't help if the goal is to have polylogarithmic round complexity.  
This stands in sharp contrast to the case of \textit{shared} randomness, as the result of~\cite{bgk25} indicates that certain tasks can be performed by $O(\log n)$ round \textsf{LOCAL} algorithms when shared randomness is available, whereas such efficiency is unattainable without it.
This gives a negative answer to an open question raised in \cite{Gha20}.

Ghaffari and Kuhn \cite{gk19} investigated various settings that involve \textsf{LOCAL} algorithms that use randomness. 
%In the context of private randomness, they discussed a setting with randomness with limited independence. Another interesting environment they examined assumes that each node might not have its own source of randomness, but has 1 random bit somewhere in its neighborhood of radius $\poly(\log n)$. 
In particular, they discussed the use of shared randomness and presented a $\poly(\log n)$ round algorithm for \textit{network decomposition} that relies on $\poly(\log n)$ shared random bits (and no private coins).
Note that the subsequent result of \cite{rg20} suggests that these random bits are actually unnecessary.
In any case, we emphasize that our use of shared random bits is motivated by their usefulness in coordinating distant nodes, 
whereas the motivation of \cite{gk19} seems to be that they give rise to a cleaner argument about the total amount of randomness required across the entire network to accomplish certain tasks.

Our computation model is a modified version of \textsf{LOCAL}, where the nodes run in time polynomial in $n$ (the graph size). This prevents the nodes from outperforming the bounded adversary (which is defined to be a $\poly(n)$-time probabilistic algorithm). 
Likewise, Aldema-Tshuva and Oshman \cite{ao24} studied the closely related class of \textit{local decision} (\textsf{LD}) under this particular restriction. 
Moreover, the literature on distributed graph algorithms contains other types of computational restrictions that can be imposed on the processors. For example, Parter and Yogev \cite{py19a} considered nodes that run in polynomial time relative to their inputs.

Our model fixes the shared randomness before the input is selected. Cohen and Naor~\cite{cn22} defined the \textit{preset public coins} model in the context of two-party communication complexity protocols, while also making the distinction between efficient and unbounded adversaries. One of the goals of this work is to extend the discussion about preset public randomness to another setting, namely, distributed graph algorithms.

Both of our theorems are related to average-case hardness in total search problems.
Importantly, hard-on-average problems in $\NP$ unconditionally imply (public-coin) hard-on-average problems in $\TFNP / \poly$ , as proved by Hub\'{a}cek, Naor, and Yogev \cite{hny17}. 
Furthermore, under complexity-theoretic assumptions, it can be strengthened to average-case hardness in the uniform class $\TFNP$.\footnote{In more detail, under the assumption that there exists a function with complexity $\mathsf{DTIME}(2^{O(n)})$, and $\Pi_2$-circuit complexity $2^{\Omega(n)}$, where $\Pi_2$-circuits are oracle circuits with access to any language in $\Pi_2$, the second level of the polynomial hierarchy.}

While OWFs can be used to construct problems in $\NP$ that are hard-on-average (for uniform solvers), in Impagliazzo's Pessiland \cite{imp95} such hard problems exist, yet OWFs do not. 
%Still in the context of uniform solvers,
Pass and Venkitasubramaniam \cite{pv20} showed that if there are hard-on-average problems in $\NP$, either OWFs exist, or there is an (infinitely-often) hard-on-average problem in $\TFNP$. 
In other words, average-case hardness in $\TFNP$ holds unconditionally in Pessiland.
Furthermore, average-case hardness in $\TFNP$ can be based on other assumptions. For instance, \textit{one-way permutations} (OWP) imply average-case hardness in the sub-class $\PPP$, as shown by Papadimitriou \cite{pap94}.

\subsection{Future Directions}
An intriguing open question is whether a result like Theorem~\ref{thm:probability_gap} can be derived from a gap in the round complexity rather than a gap in the failure probability.
Such a result will be closer in spirit to what can be considered the inverse of Theorem~\ref{thm:rounds_gap}.

From a broader perspective, we believe that the main contribution of this work is pointing out connections between the performance of distributed graph algorithms and the capabilities of a centralized algorithm (which represents the globally knowledgeable adversary).
That is, we shed light on how complexity measures from the distributed and the centralized settings relate.
We hope this work inspires future studies that will explore links between complexity measures that are usually associated with different areas of theoretical computer science.

\section{Technical Overview}

\subsection{Outline of the Proof of Theorem~\ref{thm:rounds_gap}}

We begin with a high-level description of the problem described in \cite{bgk25}, followed by an explanation of the modifications we make in order to prove Theorem~\ref{thm:rounds_gap}.

\paragraph{Harnessing (Oblivious) Shared Randomness.}
%\cite{bgk25}'s Key Ideas.
To demonstrate in a gap in round complexity between settings with and without shared randomness, \cite{bgk25} introduced a special family of graphs. 
In these graphs, shared randomness allows distant nodes to coordinate, enabling them to produce a valid output labeling in $O(\log n)$ rounds with high probability. 
Without shared randomness, however, $\Omega(\sqrt{n})$ rounds are required.
On the other hand, graphs outside the family are exempted from solving the hard task.

The graphs in the family are constructed using simpler building blocks, called \textit{tree-like structures} and \textit{grid structures}. Tree-like structures are subgraphs that take on the form of perfect binary trees, with additional edges that connect nodes that reside in the same layer. A grid structure of dimensions $h \times w$ is a two-dimensional lattice of $h$ rows with $w$ nodes in each. Grid structures satisfying $h \geq w$ are called \textit{vertical grid structures}.
A graph in the family consists of a vertical grid structure of dimensions $h \times w$, on top of which we place $w$ tree-like structures such that they are aligned with the grid's columns. 
%That is, the nodes in every column correspond to the leaves of a tree-like structure. 
We refer to this construction as \textit{augmented grid structure} (or in short, AG). We define it formally in Section~\ref{subsubsec:augmented_grid}.

Nodes in graphs from the specified family must produce outputs where: (1) all nodes in any given row output the same bit, (2) in at least one row, the output must be equal to the input given to the rightmost node. 
To see why this can be described using the \textsf{LCL} framework, observe that requirement~(1) is easily locally verifiable (with each grid node inspecting the outputs of its neighbors), and that requirement~(2) can be checked with the help from the column trees. 
Roughly speaking, the output labeling is deemed valid if the roots output YES, grid nodes output YES either if they don't participate in the right column or if the input equals the output, and inner tree nodes return YES if and only if at least one of their children does so. 
These can be verified locally, and together they imply (2). 
See Fig.~\ref{fig:ag_with_input} for an example.

\begin{figure}[t]
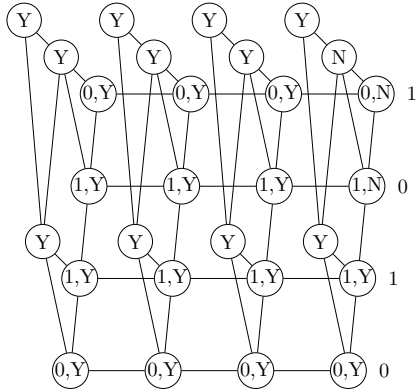

\centering
\resizebox{35ex}{31ex}{
\includestandalone{figures/ag_with_input}
}
\caption{An example of an \textit{augmented grid structure} (AG) of dimensions $4 \times 4$ with a valid (partial) labeling. 
For brevity, the input labels that correspond to the graph structures themselves are omitted, leaving only the input bits given to the right column (written outside the nodes). The outputs are written inside the nodes. Y (resp. N) stands for YES (resp. NO).}
\label{fig:ag_with_input}
\end{figure}

Observe that within $O(\log n)$ rounds in the \textsf{LOCAL} model, the view of the nodes contains the whole tree-like structure they are part of. As such, they are able to infer the index $i \in [h]$ of the row they are located in. If shared randomness $r_{\textnormal{pub}} \in \{ 0,1 \}^*$ is available, the nodes in each row $i$ of the grid can use the $i$-th bit of $r_{\textnormal{pub}}$ as their output. 
Obviously, this algorithm satisfies requirement~(1). Since the party that selects the inputs of the nodes is assumed to be oblivious of the shared randomness, the probability of the output colliding with the input of the rightmost node is $\frac{1}{2}$.
Over $h$ rows, the probability is at least $1 - n^{-c}$ if $h \geq c \log n$ for some $c > 0$. 
If the height of the grid is smaller than that, the use of \textit{vertical} grid ensures that the width is also at most $c \log n$. This allows the nodes to learn the inputs of the entire network within $O(\log n)$ rounds.

On the other hand, if shared randomness is not available, the best the nodes can do is wait for a message from the rightmost node in their row, which takes $\Omega(w)$ rounds. In a grid of width $w = \Omega(\sqrt{n})$, we get the desired lower bound.

We follow the same approach. 
We define a (similar) graph family $\mathcal{G}$ of hard instances and an \textsf{LCL} problem $\Pi$ such that solving $\Pi$ on $\mathcal{G}$ can be done efficiently in one model but not in another, thereby establishing a gap between the two.
Additionally, we show that when the graph doesn't belong to $\mathcal{G}$, the nodes are able to detect that, which helps to ensure that $\Pi$ is promise-free.

\paragraph{Locally Checking Turing Machine Computations.}

In Section~\ref{subsubsec:tm_encode_structure} we define \textit{\textsf{TM} encoding structures} to be grid structures that have an input labeling that represents an execution of a Turing machine. 
In more detail, the structure is defined with respect to a specific (one-tape) Turing machine and an input, and its main component will be a grid structure large enough for writing the whole history of the tape as part of the nodes' labels. 
The horizontal axis (from right to left) is associated with time, namely, each column refers to a specific step in the execution, and the vertical axis corresponds to the tape of the machine, such that each row is related to an index in the tape. This is illustrated in Fig.~\ref{fig:tm_structure_labels} (see Fig.~\ref{fig:tm_encoding} for a concrete example).

The idea of embedding transcripts of Turing machine computations into labelings of \textsf{LCL} problems builds on prior work. 
Notably, \cite{ns95} demonstrated that simulating a Turing machine can be formulated as an \textsf{LCL} problem. Similar techniques were used by Balliu, Brandt, Olivetti, and Suomela \cite{bbos20}, as well as Chang \cite{cha23}. 
In all these works, however, the nodes were required to generate the transcript of the computation themselves.
Our approach takes a different direction.
We assume that the input labels already contain the full transcript of the computation. The role of the nodes is reduced to simply \textit{verifying} the correctness of the computation.
A similar idea appears in the work of Balliu, Brandt, Chang, Olivetti, Rabie, and Suomela \cite{bbco19}, albeit with a different underlying graph structure--a path instead of a grid.
This verification can be performed locally, leveraging the fact that Turing machines are local in their nature (as the head of the machine can move only to adjacent cells). 
If the graph has an appropriate structure and the computation encoded in the nodes' labels is correct, the graph can qualify as a \textit{\textsf{TM} encoding structure}.

\begin{figure}[t]
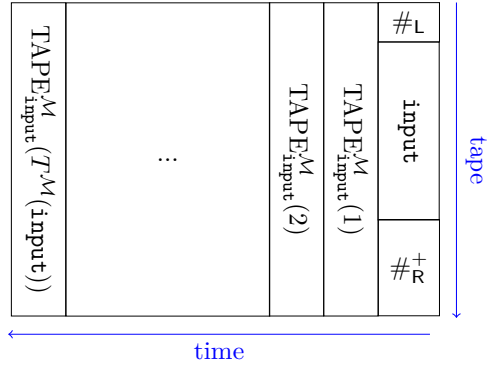

\centering
\resizebox{41ex}{30ex}{
\includestandalone{figures/tm_sketch}
}
\caption{An input labeling associated with a \textit{\textsf{TM} encoding structure}, for a Turing machine $\mathcal{M}$ that on input $\texttt{input}$ runs for $T^{\mathcal{M}}(\texttt{input})$ steps. Assume that the input is enclosed by a single $\#_{\mathsf{L}}$ on the left, and sufficiently many $\#_{\mathsf{R}}$'s on the right (in order to fill a tape large enough to support $\mathcal{M}$'s space requirements). $\textnormal{TAPE}^{\mathcal{M}}_{\texttt{input}}(j)$ stands for the contents of the tape at step $j$.}
\label{fig:tm_structure_labels}
\end{figure}

Thus far, we haven't mentioned what Turing machine we aim to implement as part of our graphs. 
In Theorem~\ref{thm:rounds_gap} we assume the existence of a public-coin hard-on-average problem in $\TFNP$. Let $\mathcal{S}$ be the relation that defines the search problem, and $\mathcal{D}$ be the distribution over the instances. $\mathcal{D}$ is associated with an efficient sampler $D$ that receives as input a uniformly random string and outputs an instance. 
Moreover, since $\mathcal{S} \in \TFNP$, there is an efficient verifier $V$ that on input $(x, w)$ checks whether the pair belongs to $\mathcal{S}$. 
On input $(r, w)$, our Turing machine $\mathcal{M}$ samples an instance by invoking the sampler $D$ on $r$. Let $x$ be the resulting instance. It then uses $V$ to verify whether $(x, w) \in \mathcal{S}$. If this is the case, the output is $r$. Otherwise, it returns a default string $\Vec{1}$.

The leftmost column of the grid of the \textit{\textsf{TM} encoding structure} contains the contents of the tape when the machine reaches a final state. That is, it stores the output. To set the labels of this column to some string $r$, the adversary must find a solution to the instance $x = D(r)$.
The hardness of $(\mathcal{S}, \mathcal{D})$ implies that this task should be computationally challenging when $x \sim \mathcal{D}$ (equivalently, when $r$ is sampled uniformly at random).

In practice, we use a somewhat more intricate graph structure, which we refer to as \textit{small-diameter augmented \textsf{TM} encoding structure} (SATM), defined in Section~\ref{subsec:satm}. It resembles an AG with an additional tree-like structure attached on the top of it. The technical details are omitted from this simplified exposition.

\paragraph{Hard Instances.} 

The graphs in the family $\mathcal{G}$ of hard instances that we define are not exactly the same as those in \cite{bgk25}. 
We extend their \textit{augmented grid structure} (AG) with a \textit{small-diameter augmented \textsf{TM} encoding structure} (SATM), a subgraph dedicated to simulating the execution of a Turing machine. 
This gadget is glued to the AG, such that its rightmost column is connected to nodes with labels that stand for the output of the Turing machine. 
We give a formal definition of the graph family $\mathcal{G}$ in Def.~\ref{def:graph_family_g}.
Fig.~\ref{fig:simplified_hard_instance} depicts such a graph along with a partial input labeling of the SATM, corresponding to an unspecified Turing machine computation. 
For a version that highlights the relevant structural components, see Fig.~\ref{fig:graph_family_g}.

\begin{figure}[t]
\centering
%\resizebox{41ex}{30ex}{
\includestandalone{figures/full_with_tm_inputs}
%}
\caption[]{An example illustrating the structure of graphs in $\mathcal{G}$. Namely, a SATM connected to an AG. The input labeling refers to the transcript of some unspecified Turing machine.\footnotemark~Each node is labeled $(t,h,s)$, where $t$ is a symbol in the tape, $h \in \{\textsf{Head}, \bot\}$ indicates head presence, and $s$ contains the state if $h = \textsf{Head}$, otherwise $s = \bot$. For brevity, the figure shows only $t$, with $h$ implied by the node color.}\label{fig:simplified_hard_instance}
\end{figure}
\footnotetext{Note that to for a graph to truly belong to $\mathcal{G}$, the computation must be consistent with the transition function of the Turing machine $\mathcal{M}$ defined earlier, which depends on the distributional problem $(\mathcal{S}, \mathcal{D})$.}

In the original problem, the inputs of the nodes in the right column play a significant role. Now, the string obtained from concatenating the inputs of the nodes in this column corresponds to the output of the Turing machine. 
Over the \textit{augmented grid structure}, our \textsf{LCL} problem $\Pi$ is defined mostly the same as \cite{bgk25}'s problem, with the difference that it now depends on the output of the Turing machine. 

Above, we noted that setting the output of the machine to a string $r$ is likely to be a hard computational task whenever $r$ is sampled uniformly at random.
A key observation is that the nodes have a strategy that forces $r$ to be distributed this way.
This is achieved by adopting the same strategy as in \cite{bgk25}. That is, using $r_{\textnormal{pub}}$ for the output. 
In order to win, the adversary must set the output of the Turing machine to the complement $\overline{r_{\textnormal{pub}}}$, which is also uniformly distributed. This is the main idea behind our upper bound in Theorem~\ref{thm:rounds_gap} (see Section~\ref{subsec:ub}).

Truthfully, as we aim to minimize the number of shared random bits, the nodes' strategy will be to use $r_{\textnormal{pub}}$'s bits only in a limited number of rows, while the remaining rows will use some fixed value. We will show that this is enough for achieving a sufficiently high success probability.

Unlike efficient adversaries, a computationally unbounded adversary should be able to solve the distributional problem. 
Without modifications, it could have been that the sampled instance $x$ has no solutions at all. We rule this out by requiring the problem to be \textit{total}. This is necessary for proving the lower bound in Section~\ref{subsec:lb}. 
The lower bound we prove is $\Omega(n^c)$ for some constant $c$, which depends on the problem $\mathcal{S}$.
Note that this holds regardless of the length of the preset public random string $r_{\textnormal{pub}}$ we have access to.

\paragraph{Keeping $\Pi$ Promise-Free.}
To complete the proof of the upper bound, it is crucial to ensure that $\Pi$ is easy on graphs outside $\mathcal{G}$.
%We show that for any such graph, a valid output labeling can be generated in a round-efficient manner. 
Since membership in $\mathcal{G}$ is characterized by local constraints, any such graph $G$ must contain at least one node whose neighborhood serves as evidence for the fact that $G \notin \mathcal{G}$. 
Adopting a beautiful technique demonstrated in \cite{bgk25}, we allow the output labels to encode chains of pointers leading to violations of the local constraints. 
That is, nodes in erroneous components may output pointers that, when followed, eventually reach a witness of the violation.
This can be viewed as a locally verifiable proof for the existence of an error.

The remaining nodes--those not participating in the pointer chains--must lie in subgraphs free of local errors. 
The constraints are designed so that such a subgraph is always an \textit{augmented graph structure} (AG) or a \textit{small-diameter augmented \textsf{TM} structure} (SATM). 

In other words, nodes that do not belong to an AG or to a SATM can produce a proof of incorrectness, while nodes within such structures are unable to forge one.
Importantly, a key feature of this output labeling is that it can be obtained in $O(\log n)$ communication rounds. We distinguish between the following cases:

\begin{itemize}

\item \textit{Erroneous components}. The proof of incorrectness serves as a valid output labeling for our problem, so no further work is needed and the nodes may terminate. 

\item \textit{Valid AG and SATM structures, properly connected.} Namely, the graph is a member of the hard instances family $\mathcal{G}$. This scenario was addressed earlier.

\item \textit{AG without a SATM.}
The nodes of the AG still follow the strategy outlined earlier, but in this case $\Pi$'s constraints don't require the outputs to be related to the inputs in the right column. Clearly, no additional communication is needed.

\item \textit{SATM without an AG.}
The problem is defined such that no further steps are required. 

\end{itemize}

To summarize, for every $G \notin \mathcal{G}$, a valid output labeling for the problem $\Pi$ can be generated within $O(\log n)$ rounds, completing the proof.

\subsection{Outline of the Proof of Theorem~\ref{thm:probability_gap}}

In Theorem~\ref{thm:probability_gap}, we are given an \textsf{LCL} problem $\Psi$ that can be solved within $r(n)$ rounds with negligible error probability $\mathsf{negl}(n)$, assuming that the adversary selecting the inputs is efficient, for some function $r(n)$ and sufficiently large graph sizes $n$. 
However, against unbounded adversaries, no \textsf{LOCAL} algorithm running for the same number of rounds $r(n)$ can ensure that for all large enough values of $n$ the failure probability is at most inverse polynomial $1 / p(n)$.
We turn this problem $\Psi$ into a public-coin infinitely-often (polynomially) hard-on-average distributional search problem. That is, for any non-uniform probabilistic polynomial-time (\textsf{PPT}) algorithm, the success probability is at most negligible, for infinitely many instance lengths.

\paragraph{Intuition.}

Let $\mathcal{A}$ be the \textsf{LOCAL} algorithm that performs well against bounded adversaries. 
As for unbounded adversaries, there is an infinite sequence of graph sizes $\mathcal{H} = \{ n_i \}_{i=1}^{\infty}$, such that for every $n_i$ there is a ``hard'' $n_i$~vertex graph $G_{n_i}$ on which $\mathcal{A}$ fails with probability $1 / p(n_i)$. 
Denote by $\mathcal{G} = \{ G_n \}_{n \in \mathbb{N}}$ a family of these graphs (for $n \notin \mathcal{H}$, use arbitrary $n$~vertex graphs).

The search problem $\mathcal{S}$ will depend on the algorithm $\mathcal{A}$, the sequence $\mathcal{H}$ and the graph family $\mathcal{G}$. The instances will be pairs of a graph size and a preset public random string $r_{\textnormal{pub}}$. 
Conceptually, the problem seeks to capture the difficulty faced by the adversary that chooses the input labelings. The solver takes on the role of the adversary. 

\paragraph{A Flawed Attempt.}

First, consider the following (flawed) attempt to define $\mathcal{S}$. A solution for an instance $(1^n, r_{\textnormal{pub}})$, for $n$ that belongs to $\mathcal{H}$, is an input labeling $x$ that satisfies: algorithm $\mathcal{A}$ fails to generate a valid output labeling when invoked on graph $G_n \in \mathcal{G}$ with input labeling $x$ and preset public coins $r_{\textnormal{pub}}$ with probability that is at least an inverse polynomial. Whenever $n \notin \mathcal{H}$, we permit a trivial solution (say, $x = 0^n$).

Informally, the hardness of the problem arises from the intractability of finding ``bad'' input labelings that make $\mathcal{A}$ fail with non-negligible probability.

The issue with this approach is that it ignores the nodes' private coins. This implies that the verification of solutions should be randomized; however, this needs to be avoided, as we aim to devise a search problem in $\TFNP/ \poly$, which means that solutions must be deterministically verifiable.

\paragraph{Characterizing Multisets.}

Our workaround involves sampling a collection of private coins beforehand, which will be a part of the definition of our search problem $\mathcal{S}$. Then, the fraction of samples on which the algorithm $\mathcal{A}$ fails will approximate the actual error probability. We formulate this idea through the framework of \textit{characterizing multisets}.
This technique has been widely used, e.g., in the works of Babai and Kimmel~\cite{bk97} (who introduced it) and \cite{cn22}. 
It can be viewed as a method for describing the strategy of a randomized player while using a relatively small multiset of its outputs.

In our context, we want to succinctly characterize the way $\mathcal{A}$ behaves when the nodes may toss private coins, unknown beforehand.
To this end, we prepare a polynomial-sized multiset $R_n = \{ r_{\textnormal{sec}}^{(i)} \}_{i}^{l(n)}$ of strings that stand for the private coins of the nodes. 
On a graph $G_n \in \mathcal{G}$, input labeling $x$ and preset public randomness $r_{\textnormal{pub}}$, the \textit{characterizing multiset} consists of the output labels of $G_n$'s nodes if $\mathcal{A}$ was invoked in this setup, with $r_{\textnormal{sec}}^{(i)}$ being the private coins of the nodes, for all $i \in [l(n)]$.

By ``characterizing'', we mean that if the nodes produce an invalid output labeling with some probability, then we want the fraction of invalid output labelings in the characterizing multiset to approximate that probability. In general, this framework allows us to estimate the probability that an event occurs with respect to the actual distribution over the private coins $r_{\textnormal{sec}}$, while relying solely on a publicly known multiset $R_n$ for our computations. 

\paragraph{Corrected Problem Definition.}

Let $\mathcal{R} = \{ R_n \}_{n \in \mathbb{N}}$ be the collection of multisets of private coins chosen above. 
The search problem $\mathcal{S}$ depends on $\mathcal{A}$, $\mathcal{G}$, $\mathcal{H}$ and $\mathcal{R}$. 
Now, to verify whether $((1^n, r_{\textnormal{pub}}), x) \in \mathcal{S}$ (for $n \in \mathcal{H}$), we compute the \textit{characterizing multiset} and count how many of the resulting output labelings are invalid. If it is greater than some threshold, we accept.
For $n \notin \mathcal{H}$, we accept if the solution is $0^n$.

\paragraph{Hardness.}

Our goal is to leverage the fact that a bounded adversary is unlikely to generate a ``bad'' input, in order to argue that our search problem is hard-on-average.
Note that in the context of \textsf{LCL} problems the failure probability is defined over the choice of the private coins $r_{\textnormal{sec}}$, the preset public coins $r_{\textnormal{pub}}$ and the adversary's coins. 
In contrast, for a solver of the distributional problem, success probability is defined only over $r_{\textnormal{pub}}$ (and the solver's coins).
To address this inconsistency, the main technical part of the proof is a transformation from solvers of the distributional search problem to adversaries for the \textsf{LCL} problem $\Psi$. 
Specifically, if the solver finds an input labeling $x$ on which inverse polynomial fraction of the multiset is invalid (where the success probability is taken over the choice of the instance $r_{\textnormal{pub}}$ and the solver's coin tosses), then there is an adversary for $\Psi$ that succeeds with inverse polynomial probability (over the choice of $r_{\textnormal{pub}}$, $r_{\textnormal{sec}}$ and the adversary's coins). 
Moreover, we show that this adversary can be implemented in polynomial time. 
Since we assume that no efficient adversary can cause the \textsf{LOCAL} algorithm $\mathcal{A}$ to fail with probability higher than $\mathsf{negl}(n)$, we get a contradiction. 
This is the core idea behind showing why our search problem $\mathcal{S}$, coupled with the uniform distribution, yields a hard-on-average problem. 
As a by-product, we get that the problem is \textit{public-coin} hard-on-average.

\paragraph{Totality.}

The resulting search problem is \textit{non-trivial}, meaning that for all $n \in \mathbb{N}$, there is a non-negligible probability that a solution exists for the uniformly sampled instance $(1^n, r_{\text{pub}})$. 
For $n \in \mathcal{H}$, our choice of $\mathcal{H}$ implies that with at least such probability, an input labeling on which $\mathcal{A}$ fails is guaranteed to exist. As for $n \notin \mathcal{H}$, a solution exists simply due to the definition of $\mathcal{S}$.

Be aware that this does not yet mean that it is a \textit{total} search problem, i.e., a problem in which every instance has a solution, and an additional step (that we omit from this exposition) is required.

\section{Preliminaries}

\subsection{Distributed Algorithms}
\label{subsec:distributed}

\paragraph{Graph Notation.} 
An undirected graph $G$ is a tuple $(V, E)$, where $V$ is the set of nodes and $E$ is the set of edges. Typically, the size of $V$ is denoted by $n$. Two nodes $u$ and $v$ are connected in $G$ if and only if $\{u,v\} \in E$. The \textit{diameter} of $G$ is the maximal shortest path between two nodes in the graph. The \textit{radius} of a node $v$ in a connected graph $G$ is the distance from $v$ to the node farthest away from it. For $t \in \mathbb{N}$, the \textit{$t$-neighborhood} of a node $v\in V$ is the set of nodes reachable from $v$ through paths of length at most $t$. The subgraph induced by the nodes in the $t$-neighborhood of $v$ is denoted by $N(v,t)$. A \textit{half-edge} is a pair $(v,e)$ for $v \in V$, $e \in E$, where $v \in e$ (that is, edge $e$ is connected to node $v$). 

\paragraph{\textsf{LOCAL} Model.} 
A network of processors is modeled as an undirected graph $G$ where the nodes can communicate only with their neighbors in synchronous rounds. In each round, messages of unlimited length are exchanged between pairs of nodes connected via an edge. After receiving messages from the neighbors, a node performs a local computation (based on its own input and the messages it received up to this point), at the end of which it either halts and returns its output, or determines the messages to be sent to its neighbors in the next round. Notice that within $t$ rounds, a message can reach a destination at most $t$ edges away. Therefore, in a $t$ round \textsf{LOCAL} algorithm, the nodes can collect information about their $t$-neighborhoods. Such an algorithm can be viewed as a mapping from $t$-neighborhoods to output labels. The nodes are assigned with $O(\log n)$ bits long unique identifiers. No failures take place during the execution. The graph size $n$ (or some polynomial upper bound on it) is assumed to be provided as input to all nodes (although we may not always explicitly indicate this in our notation). Each node may toss (private) coins. That is, its strategy can be randomized.

It is common to assume that the nodes are computationally unbounded. We deviate from the original model defined by Linial \cite{lin92} (see also Peleg \cite{pel00}), and restrict the processors to be probabilistic $\poly(n)$-time machines. As a result, the length of the messages and the randomness complexity of the nodes is at most $\poly(n)$. Whenever we say \textsf{LOCAL}, we refer to this special variant.

For concreteness, a \textsf{LOCAL} algorithm $\mathcal{A}$ can be modeled as separate Turing machines running in each node, equipped with 4 tapes: an input tape (that contains the input given to the node), a private randomness tape (with the private coins of the node), a communication tape (where neighboring nodes write their messages after each round) and a work tape. In settings where public randomness is available, we assume that the nodes have a fifth (read-only) tape with that randomness written on it.

\paragraph{\textsf{LCL}.} 
One line of research of distributed graph problems considers distributed tasks that can be formulated using the framework of \textit{locally checkable labelings} (\textsf{LCL}). The motivation is to have solutions that the network can verify ``quickly''. By ``quickly'', we mean $O(1)$ rounds of communication.

\begin{definition}
\label{def:lcl}
A \textit{locally checkable labeling} (\textsf{LCL}) $\Pi$ is a tuple $(\mathcal{V}_{\textnormal{input}}, \mathcal{E}_{\textnormal{input}}, \mathcal{V}_{\textnormal{output}}, \mathcal{E}_{\textnormal{output}}, \mathcal{C}, t)$, $t \in \mathbb{N}$.

\begin{itemize}[nosep]

\item $\mathcal{V}_{\textnormal{input}}, \mathcal{E}_{\textnormal{input}}$ are finite sets of input labels for the nodes and the half-edges, respectively.
Similarly, $\mathcal{V}_{\textnormal{output}}, \mathcal{E}_{\textnormal{output}}$ are finite sets of output labels.

\item $\mathcal{C}$ is the finite set of pairs $(H, s)$, where $H$ is a labeled\footnote{Every node $v$ in $H$ has one label in $\mathcal{V}_{\textnormal{input}}$ and one in $\mathcal{V}_{\textnormal{output}}$. Every pair $(v, e)$ of a node $v$ and an edge $e$ from $H$, where $v \in e$, has one label in $\mathcal{E}_{\textnormal{input}}$ and one in $\mathcal{E}_{\textnormal{output}}$.} graph $H$ and $s$ is a node in $H$ with radius $t$.
\end{itemize}

An \textit{instance} of a problem $\Pi$ is a graph $G=(V,E)$ with maximal degree $O(1)$, together with an input labeling $\mathcal{I} = (\mathcal{I}_V, \mathcal{I}_E)$, which is a pair of assignments $\mathcal{I}_V: V \rightarrow \mathcal{V}_{\textnormal{input}}$, $\mathcal{I}_E: V \times E \rightarrow \mathcal{E}_{\textnormal{input}}$.

A \textit{solution} is an output labeling $\mathcal{O} = (\mathcal{O}_V, \mathcal{O}_E)$, where $\mathcal{O}_V: V \rightarrow \mathcal{V}_{\textnormal{output}}$ and $\mathcal{O}_E: V \times E \rightarrow \mathcal{E}_{\textnormal{output}}$, such that for every $v \in V$ the subgraph induced by its $t$-neighborhood $N(v,t)$ (as an $(\mathcal{I}, \mathcal{O})$-labeled graph) satisfies $(N(v,t), v) \in \mathcal{C}$.

A (deterministic) algorithm \textit{solves} the problem $\Pi$ if for every graph $G$, assignment of unique identifiers and input labeling $\mathcal{I}$, the outputs associated with the nodes and the half-edges $\mathcal{O}$ is a valid \textit{solution}.
\end{definition}

When there is no interest in having a labeling for both the nodes and the half-edges, an \textsf{LCL} problem can be defined while specifying only one of them, and ignoring the other.

\paragraph{Path Tracing Function.} 
For a sequence of edges $e_1, ..., e_k$ where $e_i = \{ v_i, v_{i+1} \}$, i.e. a path of length $k$, and $\{L_i\}_{i =1}^k$ where $L_i$ is the label of the half-edge $(v_i, e_i)$ for any $i$, the \textit{path tracing function} $f$ satisfies:
\[
f(v_1, L_1, ..., L_k) = \begin{cases}
v_{k+1} &\quad \text{if the path $(v_1, ..., v_{k+1})$ exists and unique} \\
\bot &\quad \text{otherwise}
\end{cases}
\]

\paragraph{Notation.}
For every $v \in V$, we use $x_v \in \mathcal{V}_{\text{input}}$ to refer to its input label. Similarly, for every half-edge $(v,e) \in V \times E$ (such that $v \in e$) the input label is $x_{(v, e)} \in \mathcal{E}_{\text{input}}$. To simplify the notation, we refer to the concatenation of them all, $x:= x_{v_1} \circ ... \circ x_{v_n} \circ x_{h_1} \circ ... \circ x_{h_m}$, where $V = \{v_1, ..., v_n\}$, and $\{ h_1, ..., h_m\}$ is the set of half-edges. We do the same with the output labeling. $y = y_{v_1} \circ ... \circ y_{v_n} \circ y_{h_1} \circ ... \circ y_{h_m}$, where $y_{v_i} \in \mathcal{V}_{\textnormal{output}}$ and $y_{h_j} \in \mathcal{E}_{\textnormal{output}}$.

\paragraph{Private Coins Randomness.} 
In a randomized \textsf{LOCAL} algorithm $\mathcal{A}$, every node $v \in V$ has access to independent random bits $r_{\textnormal{sec}, v}$. Oftentimes, we denote by $r_{\textnormal{sec}}$ the concentration of all the nodes' private coins (i.e.\ $r_{\textnormal{sec}} := r_{\textnormal{sec}, v_1} \circ ... \circ r_{\textnormal{sec}, v_n}$). We usually take the number of nodes $n$ as a parameter. Then, we denote by $b^{\mathcal{A}}_{\textnormal{sec}}(n)$ the upper bound on the number of random bits that can be used by a single node.\footnote{In most of the \textsf{LOCAL} model literature, there is typically no bound on the growth rate of $|r_{\textnormal{sec}, v}| \leq b^{\mathcal{A}}_{\textnormal{sec}}(n)$. However, in our setting, the nodes run in $\poly(n)$ time, implying that $b^{\mathcal{A}}_{\textnormal{sec}}(n) \leq \poly(n)$.}

\begin{definition}[Private Coins Randomized Algorithm]
\label{def:private_coin}
Let $\Pi$ be an \textsf{LCL} problem.

When the nodes are provided with private randomness $r_{\textnormal{sec}}$ of length $b^{\mathcal{A}}_{\textnormal{sec}}(n)$, a \textsf{LOCAL} algorithm $\mathcal{A}$ \textit{solves $\Pi$ with probability $1-\epsilon(n)$ in $r(n)$ rounds} if for all large enough $n\in \mathbb{N}$, for every $n$~vertex graph $G$ and input labeling $x$, the output labeling generated by $\mathcal{A}$ within $r(n)$ rounds is valid with probability $1 - \epsilon(n)$ (taken over the private coins). That is,
\begin{equation*}
\label{eq:private_coins}
\forall G, x: \quad
\Pr_{r_{\textnormal{sec}}}[\mathcal{A}(G, x, r_{\textnormal{sec}})\text{ is valid}] \geq 1 - \epsilon(n)
\end{equation*}
In this case, it holds that $\mathsf{ROUND}^{\mathsf{priv}}_{\epsilon(n)}(\Pi) \leq r(n)$.
\end{definition}

\paragraph{The Oblivious Model.}
In a randomized \textsf{LOCAL} algorithm $\mathcal{A}$ in the \textit{oblivious mode}, we allow all of the above (in particular, private coins are available), and on top of that, all the nodes have access to a $b^{\mathcal{A}}_{\textnormal{pub}}(n)$ bits long random string $r_{\textnormal{pub}}$.\footnote{As in the case of the private coins, $|b^{\mathcal{A}}_{\textnormal{pub}}(n)| \leq \poly(n)$. We usually aim to minimize $|b^{\mathcal{A}}_{\textnormal{pub}}(n)|$. See Remark~\ref{remark:randomness_complexity} for more details.}
The public string is guaranteed to be independent of the private coins, and, for simplicity, we assume that it is uniformly distributed. Furthermore, the adversary that selects the inputs is assumed to be oblivious to the public string. We put no restrictions on the computational power of the adversary in this case.

\begin{definition}[Algorithm in the Oblivious Model]
\label{def:oblivious}
Let $\Pi$ be an \textsf{LCL} problem. 

A \textsf{LOCAL} algorithm $\mathcal{A}$ \textit{solves $\Pi$ with probability $1-\epsilon(n)$ in $r(n)$ rounds in the oblivious model} if for all large enough $n\in \mathbb{N}$, for every $n$~vertex graph $G$ and input labeling $x$, the output labeling generated by $\mathcal{A}$ within $r(n)$ rounds is valid with probability $1 - \epsilon(n)$ (taken over both the public and the private coins). That is,
\begin{equation*}
\label{eq:oblivious}
\forall G, x: \quad
\Pr_{r_{\textnormal{pub}}, r_{\textnormal{sec}}}[\mathcal{A}(G, x, r_{\textnormal{pub}}, r_{\textnormal{sec}})\text{ is valid}] 
\geq 1 - \epsilon(n)
\end{equation*}
In this case, it holds that $\mathsf{ROUND}^{\mathsf{O}}_{\epsilon(n)}(\Pi) \leq r(n)$.
\end{definition}

\paragraph{Preset Public Coins Model.} 
Unlike the oblivious model, now the shared randomness and the input are no longer independent. As the name suggests, the public coins are determined in advance, allowing the adversary that selects the inputs to make decisions based on $r_{\textnormal{pub}}$. This can be described as a game between the adversary and the nodes in a graph $G$ that runs a \textsf{LOCAL} algorithm $\mathcal{A}$, where the adversary may depend on $G$ and $\mathcal{A}$.
\begin{enumerate}

\item A public string $r_{\textnormal{pub}}$ is sampled according to the uniform distribution.

\item 
After seeing $r_{\textnormal{pub}}$ (and possibly tossing its own coins), the adversary selects an input labeling $x$ ($O(1)$ bits per node).

\item The nodes toss their private coins $r_{\textnormal{sec}}$.

\item On input $x$, preset public randomness $r_{\textnormal{pub}}$ and private randomness $r_{\textnormal{sec}}$, the nodes in the network $G$ execute the distributed algorithm $\mathcal{A}$. In the end, they return an output labeling $y$ ($O(1)$ bits per node).
\end{enumerate}

Note that the preset public coins are still assumed to be independent of the private coins tossed by the nodes. 

Pay attention to the change in the variables associated with the universal quantifier. In the oblivious model, the success probability (over the choice of $r_{\textnormal{pub}}$ and $r_{\textnormal{sec}}$) is required to be sufficiently high \textit{for all} input labelings. On the other hand, in the preset public coins model the success probability is required to be high only for input labelings that can be generated by the adversary.

We model the adversary as an algorithm $\mathcal{B}_{\mathcal{A}, G}$, that may depend on the graph size $n$, the graph itself $G$ and the description of the \textsf{LOCAL} algorithm $\mathcal{A}$ (to be executed by the nodes).~\footnote{We remark that this adversarial model is quite powerful, because there could be a scenario where no efficient adversary succeeds on a large portion of the $n$~vertex graphs, yet for every $n$~vertex graph there exists an adversary that causes $\mathcal{A}$ to fail. We argue that this is the appropriate notion for our setting, as the topology of the network is allowed to be fixed well in advance, and our focus is on the computational effort required for coming up with an input labeling only after $r_{\text{pub}}$ is revealed.} 
% To keep the notation concise, we stick with writing $\mathcal{B}$ (and avoid mentioning $n$, $G$ and $\mathcal{A}$).
$\mathcal{B}_{\mathcal{A}, G}$ gets the preset public randomness $r_{\textnormal{pub}}$ as an input. Efficiency is measured with respect to the graph size $n$. Thus, the input is augmented with $1^n$.
% Be aware that the three can be represented in $\poly(n)$ bits. 

Our goal is to make a distinction between the case where $\mathcal{B}_{\mathcal{A}, G}$ is efficient and the case where it has unlimited computational resources. 
Keep in mind that whenever $\mathcal{B}_{\mathcal{A}, G}$ is said to be efficient, it can also be assumed to be randomized. In the unbounded case, $\mathcal{B}_{\mathcal{A}, G}$ is deterministic.

\begin{definition}[Algorithm in the Preset Public Coins Model with Bounded Adversaries]
\label{def:bounded_adv}
Let $\Pi$ be an \textsf{LCL} problem. 

A \textsf{LOCAL} algorithm $\mathcal{A}$ \textit{solves $\Pi$ with probability $1-\epsilon(n)$ in $r(n)$ rounds in the preset public coins model with bounded adversaries} if for all large enough $n\in \mathbb{N}$, for every $n$~vertex graph $G$ and any probabilistic polynomial-time (\textsf{PPT}) adversary $\mathcal{B}_{\mathcal{A}, G}$ that selects the input after seeing the preset public coins $r_{\textnormal{pub}}$, the output labeling generated by $\mathcal{A}$ within $r(n)$ rounds is valid with probability $1 - \epsilon(n)$ (taken over the preset public coins, the private coins and the internal coin tosses of the adversary). That is,
\begin{equation*}
\label{eq:bounded}
\forall G , \text{ } \mathsf{PPT} \text{ } \mathcal{B}_{\mathcal{A}, G}: \quad
\Pr_{\substack{\text{$\mathcal{B}_{\mathcal{A}, G}$'s coins}\\r_{\textnormal{pub}}, r_{\textnormal{sec}}}}[\mathcal{A}(G, \mathcal{B}_{\mathcal{A}, G}(1^n, r_{\textnormal{pub}}), r_{\textnormal{pub}}, r_{\textnormal{sec}})\text{ is valid}] \geq 1 - \epsilon(n)
\end{equation*}
In this case, it holds that $\mathsf{ROUND}^{\mathsf{B}}_{\epsilon(n)}(\Pi) \leq r(n)$.
\end{definition}

\begin{definition}[Algorithm in the Preset Public Coins Model with Unbounded Adversaries]
\label{def:unbounded_adv}
Let $\Pi$ be an \textsf{LCL} problem. 

A \textsf{LOCAL} algorithm $\mathcal{A}$ \textit{solves $\Pi$ with probability $1-\epsilon(n)$ in $r(n)$ rounds in the preset public coins model with unbounded adversaries} if for all large enough $n\in \mathbb{N}$, for every $n$~vertex graph $G$ and any computationally unbounded adversary $\mathcal{B}_{\mathcal{A}, G}$ that selects the input after seeing the preset public coins $r_{\textnormal{pub}}$, the output labeling generated by $\mathcal{A}$ within $r(n)$ rounds is valid with probability $1 - \epsilon(n)$ (taken over the preset public and the private coins). That is,
\begin{equation*}
\label{eq:unbounded}
\forall G ,  \mathcal{B}_{\mathcal{A}, G}: \quad
\Pr_{\substack{r_{\textnormal{pub}}, r_{\textnormal{sec}}}}[
\mathcal{A}(G, \mathcal{B}_{\mathcal{A}, G}(1^n, r_{\textnormal{pub}}), r_{\textnormal{pub}}, r_{\textnormal{sec}})\text{ is valid}
] \geq 1 - \epsilon(n)
\end{equation*}
In this case, it holds that $\mathsf{ROUND}^{\mathsf{U}}_{\epsilon(n)}(\Pi) \leq r(n)$.
\end{definition}

\subsection{Computational Complexity}
\label{subsec:complexity}

\subsubsection*{Search Problems.}

\begin{definition}[$\NP$ Search Problems]
\label{def:fnp}
$\mathcal{S} \subseteq \{ 0,1 \}^* \times \{ 0,1 \}^*$ a binary relation is in $\FNP$ if:
\begin{itemize}
\item \textit{Polynomially bounded.} There is a polynomial $p(\cdot)$ such that for every $(x, w) \in \mathcal{S}$, $|w| \leq p(|x|)$.
\item \textit{Efficiently Verifiable.} There is a deterministic polynomial-time Turing machine $\mathcal{A}$ satisfying $\mathcal{A}(x,w) = \mathbbm{1}\{ (x,w) \in \mathcal{S} \}$ for every $(x,w) \in \{ 0,1 \}^* \times \{ 0,1 \}^*$.
\end{itemize}
\end{definition}
We use the notation $\mathcal{S}_x$ to refer to the set $\{ w | (x,w) \in \mathcal{S} \}$ (which can be empty).

In $\FNP/\poly$ (the non-uniform variant of the class $\FNP$), the deterministic polynomial-time Turing machine that verifies the solutions is given a polynomial length advice. Alternatively, it is replaced with a family of Boolean circuits of polynomial size. \newline

If any instance of a problem has a solution, we say that the search problem is \textit{total}. Megiddo and Papadimitriou \cite{mp89} defined a subclass of $\FNP$ that includes only total search problems.

\begin{definition}[Total $\NP$ Search Problems]
\label{def:tfnp}
$\mathcal{S} \subseteq \{ 0,1 \}^* \times \{ 0,1 \}^*$ a binary relation is in $\TFNP$ if:
\begin{itemize}
\item $\mathcal{S} \in \FNP$ (as in Def.~\ref{def:fnp}).
\item $\mathcal{S}$ is \textit{total}, i.e. for every $x \in \{ 0,1 \}^*$ there is a $w \in \{ 0,1 \}^*$ such that $(x,w) \in \mathcal{S}$.
\end{itemize}
\end{definition}
As before, we can define a non-uniform variant of $\TFNP$, where the solutions are required to be verifiable by a family of polynomial-size Boolean circuits. It is denoted by $\TFNP / \poly$.

\subsubsection*{Average-Case Complexity.}

\begin{definition}[Ensemble of Distributions]
\label{def:ensemble}
An ensemble of distributions $\mathcal{D} = \{ \mathcal{D}_{\lambda} \}_{\lambda \in \mathbb{N}}$ is an infinite set, where $\mathcal{D}_{\lambda}$ is a probability distribution over $\{ 0,1 \}^{\lambda}$. 
\end{definition}

When sampling from a distribution $\mathcal{D}_{\lambda}$, we denote it using an arrow $x \leftarrow \mathcal{D}_{\lambda}$. When it is the uniform distribution over a domain $S$, we use $x \stackrel{\$}{\leftarrow} S$.

\begin{definition}[Samplable Ensemble] 
\label{def:samplable_ensemble}
Ensemble $\mathcal{D}$ is \textit{(polynomial-time) samplable} if there is a polynomial-time Turing machine $D$ that on uniformly random coins $r \in \{ 0,1 \}^{\lambda}$, the random variable $D(r)$ is identically distributed as $\mathcal{D}_{\lambda}$. That is, $\Pr_{r \stackrel{\$}{\leftarrow} \{ 0,1 \}^{\lambda}}[D(r) = x] = \Pr_{x \leftarrow \mathcal{D}_{\lambda}}[x]$.
\end{definition}

Take note that we assume that the sampler $D$ is \textit{length-preserving}. Meaning, on $r \in \{ 0,1 \}^{\lambda}$ it returns an instance $x \in \textnormal{Supp}(\mathcal{D}_{\lambda}) \subseteq \{ 0,1 \}^{\lambda}$.\footnote{Otherwise, the output of $D$ can be padded in the following manner. Let $p(\cdot)$ be a polynomial such that $|D(r)| \leq p(|r|)$ for all $r$ (it exists because $D$ runs in polynomial-time). Then, define $D'(r \circ \tilde{r}) = D(r) \circ 10^{p(|r|) - |D(r)|}$ where $|r \circ \tilde{r}| = p(|r|) + 1$.}

\begin{definition}[Distributional Search Problem]
\label{def:dist_search}
A \textit{distributional search problem} is a pair $(\mathcal{S}, \mathcal{D})$, where $\mathcal{S} \subseteq \{ 0,1 \}^* \times \{ 0,1 \}^*$ and $\mathcal{D}$ is a probability ensemble.
\end{definition}

\begin{definition}[Distributional Problem with a Samplable Distribution]
\label{def:samplable_distributional_problem}
$(\mathcal{S}, \mathcal{D})$ is a \textit{distributional problem in class $\mathcal{C}$ with a samplable distribution} if:
\begin{itemize}
\item The problem $\mathcal{S}$ is in $\mathcal{C}$.
\item The probability ensemble $\mathcal{D}$ is samplable.
\end{itemize}
\end{definition}

In this work, we usually take $\mathcal{C}$ to be a class of search problems, e.g., $\FNP$, $\TFNP$, or their non-uniform variants.
Also, since we focus on (polynomial-time) samplable distributions, whenever we say \textit{distributional problems} we refer to \textit{distributional problems with samplable distributions}.\footnote{Note that this is inconsistent with some of the traditional definitions in the average-case complexity literature (e.g., Levin \cite{lev86}), where distributional problems refer to \textit{computable} distributions, rather than to \textit{samplable} ones. We believe that the latter is a better fit for our setting.} \newline

Intuitively, by saying that a search problem is \textit{hard-on-average}, we mean that any solver with limited resources succeeds only with a small probability in the following 2-player game (described, for example, in a note by Gurevich \cite{gur89}): A \textsf{PPT} challenger samples an instance $x$ of the problem. The solver wins either if he finds a solution for $x$, or if there was no solution for $x$ to begin with. This idea is formalized below.

\begin{definition}[Hard-on-Average Distributional Search Problem]
\label{def:hard_on_average}
Let $(\mathcal{S}, \mathcal{D})$ be a distributional search problem.
$(\mathcal{S}, \mathcal{D})$ is \textit{$(T(\lambda), \mu(\lambda))$-hard-on-average} if for every probabilistic algorithm $\mathcal{A}$ that runs for $T(\lambda)$ steps it holds that for all large enough $\lambda$:
\begin{equation}
\label{eq:hardness}
\Pr_{\substack{\text{$\mathcal{A}$'s coins}\\x \leftarrow \mathcal{D}_{\lambda}}} 
\left[ \mathcal{S}_x\neq \emptyset \Rightarrow (x, \mathcal{A}(x)) \in \mathcal{S} \right] < \mu(\lambda)
\end{equation}
\end{definition}

Specifically, we focus our attention on two flavors of hardness:

\begin{itemize}
\item \textit{Polynomial hardness.} 
A distributional problem $(\mathcal{S}, \mathcal{D})$ is \textit{polynomially hard-on-average} if for every polynomial $T(\lambda)$ there is a negligible function $\mu(\lambda)$ (smaller than $1 / p(\lambda)$ for every polynomial $p(\lambda)$) such that it is $(T, \mu)$-hard-on-average.

\item \textit{Subexponential hardness.} 
A distributional problem $(\mathcal{S}, \mathcal{D})$ is \textit{subexponentially hard-on-average} if there are constants $\kappa, \rho > 0$ such that the problem is $(T, \mu)$-hard-on-average for $T(\lambda) = 2^{\lambda^{\kappa}}$ and $\mu(\lambda) = 2^{-\lambda^{\rho}}$.

\end{itemize}

\noindent \textit{Infinitely-often.} A distributional search problem is said to be \textit{infinitely-often} hard-on-average if Eq.~(\ref{eq:hardness}) holds for infinitely many values of $\lambda \in \mathbb{N}$ (rather than for all sufficiently large values). \newline

\noindent \textit{Non-uniformity.} If the inequality Eq.~(\ref{eq:hardness}) holds even when the algorithm $\mathcal{A}$ is given access to a polynomial-length advice, the problem is said to be hard-on-average \textit{against non-uniform solvers}.
Since all the results in this work consider non-uniform solvers, we henceforth avoid stating this explicitly.
\newline

\noindent \textit{Public coins.} If the inequality Eq.~(\ref{eq:hardness}) holds even when the algorithm $\mathcal{A}$ is given access to the random coins $r$ used for sampling $x = D(r)$, we say that the distributional problem is \textit{public-coin} hard-on-average. \newline

For the proof of Theorem~\ref{thm:probability_gap}, it would be convenient to define two additional properties (that are closely related to the definitions above). Both of them are relevant when the problems in the class $\mathcal{C}$ are not total, e.g., $\FNP$, $\FNP/\poly$. The motivation for the first one is to rule out distributions that give too much weight to instances with no solution. The second can be viewed as a relaxed version of the definition of a (polynomially) hard distributional search problem.

\begin{definition}[Non-Triviality]
\label{def:non_trivial}
Let $(\mathcal{S}, \mathcal{D})$ be a distributional search problem. $(\mathcal{S}, \mathcal{D})$ is \textit{non-trivial} if the probability of sampling an instance with a solution is non-negligible. Meaning, there is a polynomial $p(\lambda)$ such that for all $\lambda \in \mathbb{N}$:
\[
\Pr_{x \leftarrow \mathcal{D}_{\lambda}} [\mathcal{S}_x \neq \emptyset] > \frac{1}{p(\lambda)}
\]
\end{definition}

\begin{definition}[Potentially-Vacuous (Polynomial) Hardness-on-Average]
\label{def:vacuous_hardness}
Let $(\mathcal{S}, \mathcal{D})$ be a distributional search problem. $(\mathcal{S}, \mathcal{D})$ is \textit{potentially-vacuously (polynomially) hard-on-average} if there is a negligible function $\mathsf{negl}(\lambda)$ such that for all \textsf{PPT} $\mathcal{A}$ it holds that for all large enough $\lambda$:
\begin{equation}
\label{eq:vacuous_hardness}
\Pr_{\substack{\text{$\mathcal{A}$'s coins}\\x \leftarrow \mathcal{D}_{\lambda}}} 
\left[ (x, \mathcal{A}(x)) \in \mathcal{S} \right] <  \mathsf{negl}(\lambda)
\end{equation}
\end{definition}

Since we will use the notion of \textit{potentially-vacuous (polynomial) hardness-on-average} only in the context of polynomial hardness, whenever we use the term \textit{potentially-vacuous hardness-on-average} we refer to polynomial hardness.

The last definition is weaker compared to standard (polynomial) hardness-on-average. The standard definition implies that the probability of sampling an instance without a solution is at most negligible, whereas the definition of \textit{potentially-vacuous hardness-on-average} imposes no such restriction. In particular, it contains vacuously hard problems, where many of the instances have no solutions.
To illustrate, in the 2-player game described earlier, the solver no longer receives credit whenever the challenger samples an instance that cannot be solved. Naturally, in the case of a \textit{total} problem, this coincides with the standard definition.

The terms \textit{``infinitely-often''}, \textit{``against non-uniform solvers''} and \textit{``public-coin''} (introduced with Def.~\ref{def:hard_on_average} in mind) can be analogously defined in the context of potentially vacuous hardness-on-average.
\newline

Intentionally, \textit{non-triviality} is defined as a rather weak requirement. Lemma~\ref{lemma:transformation_total} shows that it can be strengthened to \textit{totality}.

\begin{restatable}{lemma}{transformation}
\label{lemma:transformation_total}
Assume the existence of a non-trivial, public-coin infinitely-often potentially-vacuously (polynomially) hard-on-average distributional problem in $\FNP/\poly$.
Then, there exists a public-coin infinitely-often (polynomially) hard-on-average distributional problem in $\TFNP/\poly$.
\end{restatable}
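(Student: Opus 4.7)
The plan is to lift the non-trivial, potentially-vacuously hard $\FNP/\poly$ problem to a hard-on-average problem in $\TFNP/\poly$ by combining parallel repetition with a non-uniform combinatorial escape clause, in the spirit of Hub\'a\v{c}ek, Naor, and Yogev~\cite{hny17}. Concretely, let $p(\lambda)$ be the polynomial from non-triviality, so $\Pr_r[\mathcal{S}_{D(r)} \neq \emptyset] > 1/p(\lambda)$ where $D$ samples $\mathcal{D}$. Set $k = k(\lambda) = \Theta(\lambda \cdot p(\lambda))$, and let the new sampler $D^T$ take random coins $R = (r_1, \ldots, r_k) \in \{0,1\}^{k\lambda}$ and output $R$ itself (so the new problem stays public-coin, with the instance coinciding with the coins). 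A Chernoff bound yields $\Pr_R[\forall i \in [k] : \mathcal{S}_{D(r_i)} = \emptyset] \leq (1-1/p(\lambda))^k = 2^{-\Omega(\lambda)}$, so a ``first-type'' witness $(i, w)$ with $(D(r_i), w) \in \mathcal{S}$ exists for all but exponentially few tuples $R$.

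To bring the problem into $\TFNP/\poly$ (i.e., total over every $R \in \{0,1\}^*$, not merely in the sampler's support), I would augment the verifier to also accept, as a ``second-type'' witness, any collision $(a, b)$ with $a \neq b$ and $h_\lambda(a) = h_\lambda(b)$ of a compressing function $h_\lambda : \{0,1\}^{\ell+1} \to \{0,1\}^{\ell}$ baked into the non-uniform advice. The pigeonhole principle guarantees existence of such collisions for every $R$, forcing literal totality. Crucially, $h_\lambda$ is not assumed to be a generic collision-resistant hash; instead, following the HNY embedding, it is constructed out of $D$ and the $\FNP/\poly$ verifier so that producing a collision on a random instance is no easier than solving $(\mathcal{S}, \mathcal{D})$.

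Hardness follows by reduction. Suppose, for contradiction, that a non-uniform \PPT solver $\mathcal{A}^T$ succeeds on the new problem with inverse-polynomial probability on uniform $R$ for all sufficiently large $\lambda$. Given $r \leftarrow U_\lambda$, I plant $r$ at a uniformly chosen index $i^\star \in [k]$ inside an otherwise freshly sampled tuple and invoke $\mathcal{A}^T(R)$: a first-type witness that happens to index $i^\star$ immediately yields a witness for $x = D(r)$ (losing only a $1/k = 1/\poly(\lambda)$ factor under averaging over $i^\star$), while a second-type collision is converted to progress on $(\mathcal{S}, \mathcal{D})$ via the HNY translation guaranteed by the construction of $h_\lambda$. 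Either outcome contradicts the potentially-vacuous hardness of the original problem; the ``infinitely often'' quantifier propagates through the reduction, and since the new sampler's coins are exactly its instance, the resulting hardness is public-coin.

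The main obstacle is achieving literal totality unconditionally: parallel repetition alone only drives the failure probability to $2^{-\Omega(\lambda)}$, which keeps the problem in $\FNP/\poly$ rather than $\TFNP/\poly$, and polynomial-size advice cannot enumerate the exponentially many ``bad'' tuples explicitly. The non-uniform combinatorial escape is precisely what bridges this gap: pigeonhole guarantees a second-type witness for every $R$, while the tailored construction of $h_\lambda$ keeps its average-case collision-hardness tethered to that of $(\mathcal{S}, \mathcal{D})$, so that the escape does not create a shortcut for an efficient adversary.
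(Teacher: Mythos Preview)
Your approach diverges from the paper's and, as written, has a genuine gap at the totality step.

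The paper (following \cite{hny17}) achieves totality via \emph{shifts}, not via an escape clause. It fixes, as non-uniform advice, strings $s_1,\ldots,s_k \in \{0,1\}^\lambda$ with $k = \lambda \cdot p(\lambda)$, and defines the new relation on a \emph{single} seed $r$ by accepting any $y$ with $(D(r \oplus s_i), y) \in \mathcal{S}$ for some $i$. A probabilistic argument over uniform $s_1,\ldots,s_k$ shows that for each fixed $r$ the probability that all $k$ shifted instances are unsolvable is at most $(1-1/p(\lambda))^k \le e^{-\lambda}$; a union bound over the $2^\lambda$ possible $r$'s then yields a choice of shifts making the problem total on \emph{every} $r$. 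The instance space stays $\{0,1\}^\lambda$, the advice is $k\lambda = \poly(\lambda)$ bits, and hardness is preserved by the obvious planting reduction (pick $j$ uniformly and feed $r' = r \oplus s_j$ to the new solver), losing only a factor of $k$.

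Your parallel-repetition construction, by contrast, blows up the instance space to $\{0,1\}^{k\lambda}$, so the analogous union bound no longer closes ($2^{k\lambda}\cdot e^{-\Omega(\lambda)} \gg 1$). To patch totality you introduce a compressing $h_\lambda$ whose collisions serve as a second witness type, and you assert that $h_\lambda$ can be built ``following the HNY embedding'' so that collisions are as hard as $(\mathcal{S},\mathcal{D})$. This is the unjustified step: \cite{hny17} contains no such construction; its mechanism is precisely the shift trick above. Absent an explicit $h_\lambda$ with a reduction from $(\mathcal{S},\mathcal{D})$ to its collisions, nothing prevents collisions from being easy, in which case a \PPT solver trivially outputs a second-type witness on every instance and your hardness argument recovers nothing about the original problem. (Obtaining collision resistance from average-case $\NP$-type hardness alone is not known and is widely believed to require qualitatively stronger assumptions.) The ``HNY translation'' you invoke for second-type witnesses is thus doing all the work while remaining unspecified. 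The clean fix is to drop the escape clause entirely and use the shift-based totality argument, which is exactly what the paper does.
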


Hub\'{a}cek, Naor and Yogev~\cite{hny17} proved a transformation from public-coin polynomially hard-on-average distributional decision problems in $\NP$ to public-coin polynomially hard-on-average distributional problems in $\TFNP/\poly$. 
We observe that the same proof works even if we begin with a \textit{non-trivial}, \textit{public-coin infinitely-often potentially-vacuously hard-on-average} search problem in $\FNP/ \poly$, and even if the hardness that we consider is with respect to non-uniform solvers.
For completeness, we present a full proof in Appendix~\ref{subsec:proof_transformation_total}. In short, the transformation takes polynomially many ``shifts'' of the original problem and considers their disjunction. It can be shown that for certain choices of ``shifts'' the problem always has a solution, thereby making the problem \textit{total}.

\section{The \texorpdfstring{\textsf{LCL}}{LCL} Problem \texorpdfstring{$\Pi$}{Pi}}
\label{sec:lcl_problem}

In Theorem~\ref{thm:rounds_gap}, computational assumptions are shown to imply faster \textsf{LOCAL} algorithms for solving a specific \textsf{LCL} problem $\Pi$. 
Before proving the theorem (in Section~\ref{sec:assumptions_imply_gap}), this section provides a detailed description of the problem $\Pi$.

\paragraph{Roadmap.}
Section~\ref{subsec:low_blocks} reviews several simple graph structures that can be locally checked (given suitable input labelings). 
Section~\ref{subsec:high_blocks} combines them into more complex structures. 
Many of these structures have been previously defined in earlier works (e.g., \cite{bcm21, bgk25}). 
Section~\ref{subsec:problem_def} defines a graph family $\mathcal{G}$ of ``hard instances'', which have a crucial role in the definition of $\Pi$. We then formally define $\Pi$ and discuss its properties.
Fig.~\ref{fig:flow_chart} illustrates how the various components interact with each other.

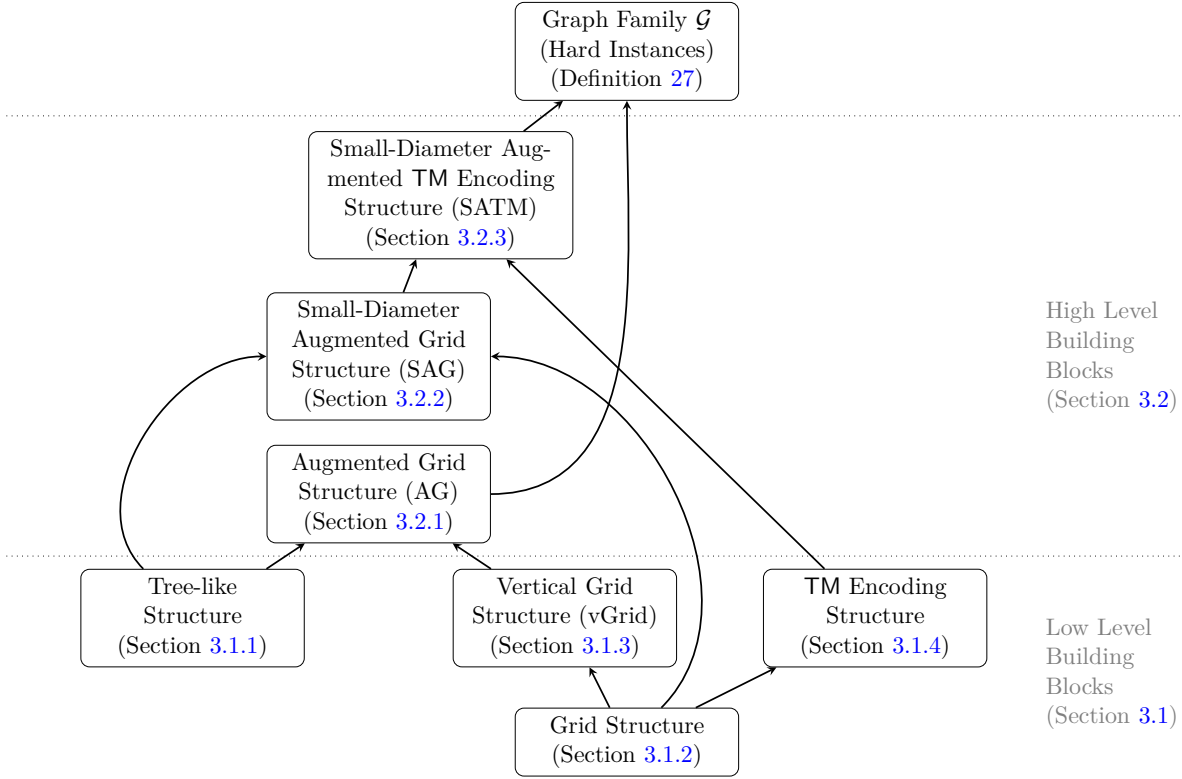
\begin{figure}[t]
\centering
\tikzstyle{rounded_rect} = [rectangle, rounded corners, minimum width=3.6cm, minimum height=1cm, text centered, draw=black, text width=3cm]
\begin{adjustbox}{max width=\textwidth}
\begin{tikzpicture}[node distance=2cm]
% Nodes
\node (tree) [rounded_rect] {Tree-like Structure \\ (Section \ref{subsubsec:tree_like})};
\node (grid) [rounded_rect, xshift=7cm, yshift=-2cm] {Grid Structure (Section \ref{subsubsec:grid})};
\node (vgrid) [rounded_rect, above of=grid, xshift=-1cm] {Vertical Grid Structure (vGrid) (Section \ref{subsubsec:vgrid})};
\node (ag) [rounded_rect, above of=tree, xshift=3cm] {Augmented Grid Structure (AG) (Section \ref{subsubsec:augmented_grid})};
\node (sag) [rounded_rect, above of=ag, yshift=0.22cm] {Small-Diameter Augmented Grid Structure (SAG) (Section \ref{subsubsec:sag})};
\node (tm) [rounded_rect, above of=grid, xshift=4cm] {\textsf{TM} Encoding Structure (Section \ref{subsubsec:tm_encode_structure})};
\node (satm) [rounded_rect, above of=sag, xshift=1cm, yshift=0.6cm, text width=4cm] {Small-Diameter Augmented \textsf{TM} Encoding Structure (SATM) \\ (Section \ref{subsec:satm})};
\node (hard) [rounded_rect, above of=satm, xshift=3cm, yshift=0.32cm] {Graph Family $\mathcal{G}$ (Hard Instances) (Definition \ref{def:graph_family_g})};
\node[text width=2.5cm, color=gray] at (15,-0.9) {Low Level Building Blocks \\ (Section \ref{subsec:low_blocks})};
\node[text width=2.5cm, color=gray] at (15,4.2) {High Level Building Blocks \\ (Section \ref{subsec:high_blocks})};
% Edges
\draw [thick,->,>=stealth] (tree) to (ag);
\draw [thick,->,>=stealth] (vgrid) to (ag);
\draw [thick,->,>=stealth] (grid) to (vgrid);
\draw [thick,->,>=stealth] (tree) to [out=135, in=180] (sag);
\draw [thick,->,>=stealth] (grid) to [in=0] (sag);
\draw [thick,->,>=stealth] (grid) to (tm);
\draw [thick,->,>=stealth] (sag) to (satm);
\draw [thick,->,>=stealth] (tm) to (satm);
\draw [thick,->,>=stealth] (satm) to (hard);
\draw [thick,->,>=stealth] (ag) to [out=0, in=270] (hard);
\draw [dotted] (-3,1) to (17,1);
\draw [dotted] (-3,8.1) to (17,8.1);
\end{tikzpicture}
\end{adjustbox}
\caption{A flow chart demonstrating how the graph structures presented in Section~\ref{subsec:low_blocks} (``low level'') relate to the structures presented in Section~\ref{subsec:high_blocks} (``high level''). The latter are then used for defining a family of hard instances $\mathcal{G}$ (Def.~\ref{def:graph_family_g}).}
\label{fig:flow_chart}
\end{figure}

\subsection{Low Level Building Blocks}
\label{subsec:low_blocks}

\subsubsection{Tree-Like Structures}
\label{subsubsec:tree_like}
\textit{Tree-like structures} are perfect binary trees, with paths connecting all the nodes located in the same layer. 

\begin{definition}[\text{\cite[Def. 6.3]{bcm21}}]
\label{def:tree_like}
A graph $G = (V,E)$ is a \textit{tree-like structure} if there's an assignment of coordinates $(d_v, i_v)$ to each $v \in V$ such that $d_v$ stands for the depth in the tree and $i_v$ for the position in the $d_v$-th layer. Given two nodes $(d_v, i_v)$ and $(d_u, i_u)$, $d_v \leq d_u$, $i_v \leq i_u$, there's an edge between them iff $(d_v, i_v) = (d_u - 1, \lfloor \frac{d_u}{2} \rfloor)$ or $(d_v,i_v) = (d_u, i_v - 1)$.
\end{definition}

\paragraph{Local Verification.}
Balliu, Censor-Hillel, Maus, Olivetti, and Suomela \cite{bcm21} defined a finite set of labels for half-edges $\mathcal{E}^{\textnormal{tree}}$ and a corresponding set of constraints $\mathcal{C}^{\textnormal{tree}}$, such that a graph equipped with such labeling satisfies $\mathcal{C}^{\textnormal{tree}}$ if and only if the graph is a \textit{tree-like structure}. See Fig.~\ref{fig:tree} for an illustration of the way the labels $\mathcal{E}^{\textnormal{tree}}$ are used for locally verifying a tree-like structure.

\begin{itemize}

\item Half-edge labels: $\mathcal{E}^{\textnormal{tree}} = \{ \mathsf{L}, \mathsf{R}, \P, \mathsf{Ch}_{\mathsf{L}}, \mathsf{Ch}_{\mathsf{R}} \}$ (standing for ``left'', ``right'', ``parent'', ``left child'' and ``right child'').
\item Constraints $\mathcal{C}^{\textnormal{tree}}$. See \cite[Sec. 6.2]{bcm21}.

\end{itemize}

\begin{figure}[t]
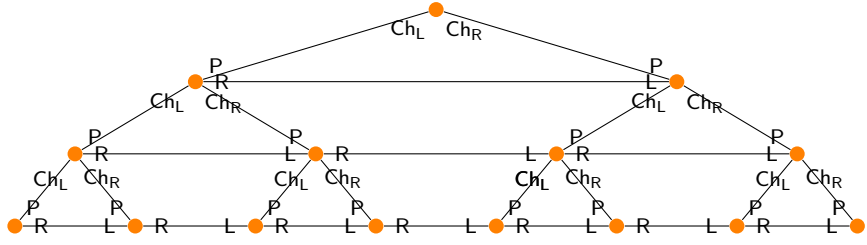

\centering
\includestandalone{figures/tree}
\caption{An example of a \textit{tree-like structure} of depth $4$ with a valid labeling.}
\label{fig:tree}
\end{figure}

\begin{lemma}[\text{\cite[Lemma 6.6]{bcm21}}]
\label{lemma:tree_labels_sufficient}
If a graph equipped with $\mathcal{E}^{\textnormal{tree}}$ labels satisfies $\mathcal{C}^{\textnormal{tree}}$, then it is a tree-like structure. 
\end{lemma}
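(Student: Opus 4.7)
The plan is to reconstruct the coordinate assignment $(d_v, i_v)$ guaranteed by Def.~\ref{def:tree_like} directly from the local labels. First I would isolate two kinds of edges: the \emph{vertical} edges, carrying one endpoint labeled $\mathsf{P}$ and one labeled $\mathsf{Ch}_{\mathsf{L}}$ or $\mathsf{Ch}_{\mathsf{R}}$, and the \emph{horizontal} edges, carrying one $\mathsf{L}$ and one $\mathsf{R}$ endpoint. The local constraints in $\mathcal{C}^{\textnormal{tree}}$ are designed to enforce (a)~label agreement across each edge (a $\mathsf{Ch}_{\mathsf{L}}$ always faces a $\mathsf{P}$, an $\mathsf{L}$ always faces an $\mathsf{R}$), (b)~degree restrictions at each node (at most one $\mathsf{P}$, one $\mathsf{Ch}_{\mathsf{L}}$, one $\mathsf{Ch}_{\mathsf{R}}$, one $\mathsf{L}$, and one $\mathsf{R}$ half-edge), and (c)~cross-consistency between the vertical and horizontal sub-structures. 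I would extract these conditions from the definition of $\mathcal{C}^{\textnormal{tree}}$ and treat them as the starting point.

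Next I would show that the vertical edges define a rooted binary tree. By (b), following $\mathsf{P}$-labeled half-edges from any node gives an acyclic path; since the graph is finite and $\mathcal{C}^{\textnormal{tree}}$ forbids local cycles in the parent direction, every such path terminates at a unique node with no $\mathsf{P}$-labeled half-edge, which I declare the root $\rho$. Set $d_\rho := 0$, and recursively $d_v := d_u + 1$ whenever $u$ is the $\mathsf{P}$-neighbor of $v$. By (a) and (b), every non-leaf has exactly one $\mathsf{Ch}_{\mathsf{L}}$- and one $\mathsf{Ch}_{\mathsf{R}}$-neighbor, so the vertical structure at depth~$d$ is a layer of nodes whose $\mathsf{P}$-pointers land consistently in layer $d-1$.

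I would then handle the horizontal layers. Fix a depth $d$ and consider the subgraph of nodes at depth~$d$ together with the horizontal edges among them. Using (b), each such node has at most one $\mathsf{L}$- and one $\mathsf{R}$-neighbor, so this subgraph is a disjoint union of simple paths and cycles; the constraints in $\mathcal{C}^{\textnormal{tree}}$ that tie horizontal to vertical labels (namely, the $\mathsf{L}$-neighbor of the $\mathsf{Ch}_{\mathsf{R}}$ of a node $v$ is the $\mathsf{Ch}_{\mathsf{L}}$ of $v$, and symmetrically the $\mathsf{R}$-neighbor of $\mathsf{Ch}_{\mathsf{L}}(v)$ equals $\mathsf{Ch}_{\mathsf{R}}(v)$, plus the analogous rules across siblings) forbid cycles and force a single path. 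Declare $i_v := 0$ for the unique node in the layer with no $\mathsf{L}$-half-edge, and $i_v := i_u + 1$ for its $\mathsf{R}$-successor $u$, proceeding along the path.

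The crux, and the step I would spend most of the effort on, is proving by induction on $d$ that these two independently defined coordinate systems agree with Def.~\ref{def:tree_like}, i.e.\ that the two children of the node at $(d, i)$ are exactly the nodes at $(d+1, 2i)$ and $(d+1, 2i+1)$, that they are horizontally adjacent to each other, and that consecutive nodes at layer $d$ have children forming consecutive pairs at layer $d+1$. The base case $d=0$ is immediate because the root has positions $i_\rho = 0$. For the inductive step, the cross-consistency constraints in $\mathcal{C}^{\textnormal{tree}}$ enforce precisely the local gluing needed: the right child of the node at $(d,i)$ is horizontally adjacent via $\mathsf{L}$/$\mathsf{R}$ to the left child of the node at $(d, i+1)$, which is the mechanism by which the arithmetic $2i+1 \mapsto 2(i+1)$ is realized. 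Combining this with the induction hypothesis and the uniqueness established in (b) yields the required coordinate assignment, and hence shows that the graph is a tree-like structure in the sense of Def.~\ref{def:tree_like}.
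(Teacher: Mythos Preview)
The paper does not give its own proof of this lemma: it is quoted verbatim as \cite[Lemma~6.6]{bcm21} and used as a black box, so there is nothing in the present paper to compare your argument against. Your outline is a reasonable reconstruction of how such a proof is expected to go---separating the $\{\mathsf{P},\mathsf{Ch}_{\mathsf{L}},\mathsf{Ch}_{\mathsf{R}}\}$ edges from the $\{\mathsf{L},\mathsf{R}\}$ edges, defining depth by distance to the unique $\mathsf{P}$-free node, defining the in-layer index along the horizontal path, and then using the cross-consistency constraints to verify the arithmetic $(d,i)\mapsto\{(d{+}1,2i),(d{+}1,2i{+}1)\}$ by induction on $d$.

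Two points you would want to tighten if you were to write this out in full. First, the acyclicity of the $\mathsf{P}$-direction and the claim that each horizontal layer is a single path (rather than several) both hinge on specific clauses of $\mathcal{C}^{\textnormal{tree}}$ that you invoke only by name; in a self-contained proof you would need to quote those clauses from \cite[Sec.~6.2]{bcm21} and check them explicitly. Second, your sketch does not separately argue that all leaves lie at the same depth (i.e., that the tree is \emph{perfect}); this does follow from the horizontal--vertical gluing you describe, since a leaf with a horizontal neighbor that is not a leaf would violate the sibling-adjacency constraint, but it deserves an explicit sentence.
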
 

\begin{lemma}[\text{\cite[Sec. 6.2]{bcm21}}]
\label{lemma:tree_labels_exist}
If a graph is a tree-like structure, then there is a labeling $\mathcal{E}^{\textnormal{tree}}$ such that $\mathcal{C}^{\textnormal{tree}}$ holds.
\end{lemma}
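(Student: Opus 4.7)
\medskip

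\noindent\textbf{Proof proposal.} The plan is to give an explicit constructive labeling based on the coordinates granted by Definition~\ref{def:tree_like}, and then to verify the constraints locally by case analysis on the type of node. Fix a tree-like structure $G = (V,E)$ with coordinates $(d_v,i_v)$ for each $v \in V$ as guaranteed by Definition~\ref{def:tree_like}.

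First I would classify every edge of $G$ according to its coordinates: by the definition, each edge $\{u,v\}$ (taking $d_u \le d_v$) is either a \emph{vertical} edge, with $d_v = d_u + 1$ and $i_u = \lfloor i_v / 2 \rfloor$, or a \emph{horizontal} edge, with $d_v = d_u$ and $i_v = i_u + 1$. Then I would define the half-edge labeling as follows. For each vertical edge $\{u,v\}$, set the half-edge $(u,\{u,v\})$ to $\mathsf{Ch}_{\mathsf{L}}$ if $i_v$ is even and to $\mathsf{Ch}_{\mathsf{R}}$ otherwise, and set $(v,\{u,v\})$ to $\mathsf{P}$. For each horizontal edge $\{u,v\}$ with $i_u < i_v$, set $(u,\{u,v\})$ to $\mathsf{R}$ and $(v,\{u,v\})$ to $\mathsf{L}$. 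This assignment is well-defined because the classification of each edge is unique and the coordinates are unique per node.

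Next I would verify that this labeling satisfies $\mathcal{C}^{\textnormal{tree}}$ at every node by a case analysis on the position of $v = (d_v,i_v)$ within the tree-like structure: (i) the root $(0,0)$, (ii) a non-root, non-leaf node that is in the interior, the leftmost, or the rightmost position of its layer, and (iii) a leaf node in each of the same three layer-positions. In every case, the neighborhood prescribed by Definition~\ref{def:tree_like} matches exactly what the labels $\mathsf{P},\mathsf{Ch}_{\mathsf{L}},\mathsf{Ch}_{\mathsf{R}},\mathsf{L},\mathsf{R}$ describe around $v$: a single $\mathsf{P}$-half-edge leading up to the parent at $(d_v-1,\lfloor i_v/2 \rfloor)$ (absent at the root), at most one $\mathsf{L}$- and one $\mathsf{R}$-half-edge leading to the siblings at $(d_v,i_v \pm 1)$ (absent at the boundaries), and either both or neither of the $\mathsf{Ch}_{\mathsf{L}}$/$\mathsf{Ch}_{\mathsf{R}}$-half-edges to the children at $(d_v+1,2 i_v)$ and $(d_v+1,2 i_v + 1)$. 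The key consistency observation is that the parities line up: the child indexed $2 i_v$ is indeed labeled $\mathsf{Ch}_{\mathsf{L}}$ and the child indexed $2 i_v + 1$ is labeled $\mathsf{Ch}_{\mathsf{R}}$, while the horizontal orientation $\mathsf{L}/\mathsf{R}$ is opposite on the two sides of each horizontal edge, so the sibling paths are consistently oriented throughout each layer.

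The main obstacle is not conceptual but bookkeeping: one must check that the explicit constraints in $\mathcal{C}^{\textnormal{tree}}$ from \cite[Sec.\ 6.2]{bcm21} are all met at the boundary cases (root, leftmost and rightmost nodes of every layer, and leaves), since these are precisely the nodes whose radius-$t$ neighborhoods differ from the generic ``internal'' neighborhood. Because $\mathcal{C}^{\textnormal{tree}}$ is designed so that Lemma~\ref{lemma:tree_labels_sufficient} holds, the natural constraint at each such boundary case is exactly the one satisfied by the above labeling. Once every position type is checked, the lemma follows.
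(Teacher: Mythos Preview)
The paper does not give its own proof of this lemma: it is stated with a citation to \cite[Sec.~6.2]{bcm21} and no argument is provided. Your constructive approach---assigning the half-edge labels directly from the coordinates $(d_v,i_v)$ and then checking the constraints by case analysis on root/interior/leaf and leftmost/interior/rightmost positions---is exactly the natural way to establish the result and is correct.
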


\paragraph{Proving Violations of \boldmath$\mathcal{C}^{\textnormal{tree}}$.} 
\cite{bgk25} defined an \textsf{LCL} problem $\Pi^{\textnormal{badTree}}$, with the purpose of not only verifying whether the input labeling satisfies the constraints mentioned above, but also to provide a ``proof'', accessible by all nodes, of local violations of the constraints (if there are any). In more detail, the input labeling includes labels from $\mathcal{E}^{\textnormal{tree}}$. If all the constraints in $\mathcal{C}^{\textnormal{tree}}$ are satisfied, then the only valid output labeling is $\bot$ (in all nodes). If $\mathcal{C}^{\textnormal{tree}}$ doesn't hold, there is a valid output labeling other than having $\bot$ in every node. This output labeling provides a ``proof'' for the graph not being tree-like. This is done in the form of pointers that lead to the violation. In a valid output labeling of $\Pi^{\textnormal{badTree}}$ for this scenario, every node has at least one neighbor with a non-$\bot$ label. This neighbor's label is a pointer to the next step in the chain leading to some violation of the tree constraints. As a result, given the output labeling of $\Pi^{\textnormal{badTree}}$, it is enough for a node to consider its local neighborhood in order to infer that the graph it belongs to is not tree-like. A desirable property of $\Pi^{\textnormal{badTree}}$ is having a round-efficient \textsf{LOCAL} algorithm for solving it.

\begin{definition}[\text{\cite[Sec. 4.2]{bgk25}}]
\label{def:bad_tree}    
$\Pi^{\textnormal{badTree}}$ is an \textsf{LCL} problem $(\mathcal{V}^{\textnormal{badTree}}_{\textnormal{input}}, \mathcal{E}^{\textnormal{badTree}}_{\textnormal{input}}, \mathcal{V}^{\textnormal{badTree}}_{\textnormal{output}}, \mathcal{C}^{\textnormal{badTree}})$.

\begin{itemize}

\item $\mathcal{V}^{\textnormal{badTree}}_{\textnormal{input}} = \{ \mathsf{marked}, \mathsf{unmarked} \}$ 

\item $\mathcal{E}^{\textnormal{badTree}}_{\textnormal{input}} = \mathcal{E}^{\textnormal{tree}}$

\item $\mathcal{V}^{\textnormal{badTree}}_{\textnormal{output}} = \{ \textsf{Err}, \bot \} \cup \{ (\mathsf{pointer}, p) | p \in \{ \mathsf{L}, \mathsf{R}, \P, \mathsf{Ch}_{\mathsf{R}} \} \}$

\item Constraints $\mathcal{C}^{\textnormal{badTree}}$: see \cite[Sec. 4.2]{bgk25}.
\end{itemize}
\end{definition}

\begin{lemma}[\text{\cite[Lemma 4.4]{bgk25}}]
\label{lemma:bad_tree_valid}
If a graph equipped with $(\mathcal{V}^{\textnormal{badTree}}_{\textnormal{input}}, \mathcal{E}^{\textnormal{badTree}}_{\textnormal{input}})$ labels that satisfy $\mathcal{C}^{\textnormal{badTree}}$ and all of the nodes were given input label \textsf{unmarked}, then the only valid solution for $\Pi^{\textnormal{badTree}}$ is setting all the nodes' output labels to $\bot$.
\end{lemma}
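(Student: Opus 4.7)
The plan is to prove the lemma in two stages: first rule out any node outputting $\Err$, then rule out any node outputting a non-$\bot$ pointer label, leaving the all-$\bot$ assignment as the only valid solution (and separately verifying it is valid).

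First, I would invoke Lemma~\ref{lemma:tree_labels_sufficient} to conclude that the input half-edge labels from $\mathcal{E}^{\textnormal{tree}}$ satisfying $\mathcal{C}^{\textnormal{tree}}$ force the underlying graph to be a genuine tree-like structure. Together with the hypothesis that every node carries input label \textsf{unmarked}, there is no constant-radius neighborhood exhibiting either a structural violation of the tree encoding or a marked node. The output label $\Err$ is, by design of $\mathcal{C}^{\textnormal{badTree}}$, allowed only at nodes whose neighborhood witnesses such a local failure; hence no node can output $\Err$.

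Next, I would handle the pointer labels. The constraints force any node that outputs $(\mathsf{pointer}, p)$ to have a specific neighbor (reached by following direction $p$ under the input half-edge labeling $\mathcal{E}^{\textnormal{tree}}$) whose output is itself either $\Err$ or another pointer. This means the set of non-$\bot$ nodes forms an out-degree-$1$ functional subgraph. Since we have already ruled out $\Err$, every non-$\bot$ node has a non-$\bot$ successor, so in the finite tree-like graph the successor relation must contain a cycle. I would then argue that the admissible pointer directions in $\mathcal{V}^{\textnormal{badTree}}_{\textnormal{output}}$ are chosen so that this is impossible: a potential function on the tree coordinates $(d_v, i_v)$ (or more directly, the asymmetry evident from the allowed directions being $\{\mathsf{L}, \mathsf{R}, \P, \mathsf{Ch}_{\mathsf{R}}\}$, with $\mathsf{Ch}_{\mathsf{L}}$ deliberately excluded as an output pointer) ensures that consecutive pointer hops strictly decrease some well-founded measure and therefore cannot close into a cycle. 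This contradiction shows that every node must output $\bot$.

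Finally, I would check that the all-$\bot$ labeling is itself valid: the local constraints at a $\bot$-labeled node are satisfied whenever the node is \textsf{unmarked}, its input half-edge labels are locally consistent with a tree-like structure, and its neighbors are also $\bot$, all of which hold by hypothesis. The main obstacle is the cycle-exclusion step: it depends on the precise formulation of $\mathcal{C}^{\textnormal{badTree}}$ in \cite{bgk24}, and one must verify that the required monotonicity of pointer chains is indeed enforced by the constraints as stated there, rather than merely suggested by the choice of output alphabet.
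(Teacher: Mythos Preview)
The paper does not supply its own proof of this lemma; it is quoted verbatim as \cite[Lemma~4.4]{bgk24}, so there is nothing in the present paper to compare against directly. That said, your two-stage strategy---first exclude $\Err$, then exclude pointer labels by an acyclicity argument---is exactly the natural line and matches how such ``bad-structure'' lemmas are proved in this literature.

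There is, however, a real gap in your acyclicity step. Neither a potential on the coordinates $(d_v,i_v)$ nor the mere exclusion of $\mathsf{Ch}_{\mathsf{L}}$ from the output alphabet is enough to rule out cycles: with directions $\{\mathsf{L},\mathsf{R},\P,\mathsf{Ch}_{\mathsf{R}}\}$ and no further restriction, one could alternate $\mathsf{L},\mathsf{R},\mathsf{L},\mathsf{R},\ldots$ within a layer, or interleave $\P$ and $\mathsf{Ch}_{\mathsf{R}}$ across layers, and no coordinate-based potential is monotone along such walks. What actually does the work is the \emph{consistency} requirement on consecutive pointers that $\mathcal{C}^{\textnormal{badTree}}$ imposes (the paper records it in the footnote attached to Lemma~\ref{lemma:bad_tree_invalid}): once a chain uses $\mathsf{L}$ it must keep using $\mathsf{L}$; once it uses $\mathsf{R}$ it must keep using $\mathsf{R}$; a $\P$ step may be followed only by $\P,\mathsf{L},\mathsf{R}$; and a $\mathsf{Ch}_{\mathsf{R}}$ step only by $\mathsf{Ch}_{\mathsf{R}},\mathsf{L},\mathsf{R}$. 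Under these rules every chain decomposes into at most one vertical segment ($\P^*$ or $\mathsf{Ch}_{\mathsf{R}}^*$) followed by at most one horizontal segment ($\mathsf{L}^*$ or $\mathsf{R}^*$), and in a genuine tree-like structure each such segment is strictly monotone in one coordinate and hence terminates at a boundary node. At that boundary the required outgoing half-edge does not exist, so the pointer label there would itself violate $\mathcal{C}^{\textnormal{badTree}}$, completing the contradiction. You correctly flagged that the argument hinges on the precise constraints; the missing ingredient is precisely this consistency rule, not the alphabet restriction.
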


\begin{lemma}[\text{\cite[Lemma 4.5]{bgk25}}]
\label{lemma:bad_tree_invalid}
If a graph equipped with $(\mathcal{V}^{\textnormal{badTree}}_{\textnormal{input}}, \mathcal{E}^{\textnormal{badTree}}_{\textnormal{input}})$ labels either has a \textsf{marked} node or it doesn't satisfy $\mathcal{C}^{\textnormal{badTree}}$, then there's a valid $\mathcal{V}^{\textnormal{badTree}}_{\textnormal{output}}$ output labeling with consistent pointer chains\footnote{Consistency means that for a node labeled with $(\mathsf{pointer}, p)$, where $v := f(u,p) \neq \bot$ is the next node on the chain, if $v$'s label is $(\mathsf{pointer}, p')$ (and not $\mathsf{Err}$) one of the following must hold: (1)~$p \in \{ \mathsf{L}, \mathsf{R} \} \Rightarrow p' = p$ (2)~$p = \mathsf{P} \Rightarrow p' \in \{ \mathsf{P} ,\mathsf{L}, \mathsf{R}  \}$ (3)~$p = \mathsf{Ch}_{\mathsf{R}} \Rightarrow p' \in \{ \mathsf{Ch}_{\mathsf{R}}, \mathsf{L}, \mathsf{R}  \}$.}
leading to the marked nodes.\footnote{By ``marked'', we mean nodes that either violate $\mathcal{C}^{\textnormal{tree}}$, or have an input label \textsf{marked}. These are the nodes that in a valid $\mathcal{V}^{\textnormal{badTree}}_{\textnormal{output}}$ output labeling can output \textsf{Err}.}
The output labeling can be computed by an $O(\log n)$ round \textsf{LOCAL} algorithm.
\end{lemma}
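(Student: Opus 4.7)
The plan is to have every erroneous node output $\Err$, every other node in a faulty connected component emit a pointer along a canonical path toward a designated erroneous node, and every node whose surroundings look entirely clean output $\bot$. A node is flagged as erroneous if it is $\mathsf{marked}$ or its $O(1)$-neighborhood violates a constraint in $\mathcal{C}^{\textnormal{tree}}$; this check takes $O(1)$ rounds. Each node then runs $O(\log n)$ additional rounds of BFS, collecting its $O(\log n)$-neighborhood together with the identifiers of any erroneous nodes it discovers.

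Two cases then arise. If $u$'s view contains an erroneous node, it picks a canonical target $v$ (say, the erroneous node of smallest identifier at minimum distance) and a canonical path $u \to v$ whose successive half-edge labels conform to the consistency rules in the footnote: first $\P$-edges upward to a chosen ancestor, then $\mathsf{Ch}_{\mathsf{R}}$-edges downward, and finally a monotone run of $\mathsf{L}$ or $\mathsf{R}$ edges across a layer to $v$. The label $u$ emits is the half-edge of the first edge on this path. Because every node that lies on this path has the same $O(\log n)$-view of $v$ and the canonicalization rule is deterministic, all of them derive the \emph{same} target and \emph{same} canonical path, so the chain produced by the path tracing function $f$ is consistent and terminates at $\Err$. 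If instead $u$'s view contains no erroneous node, then $\mathcal{C}^{\textnormal{tree}}$ holds throughout the view, so by Lemma~\ref{lemma:tree_labels_sufficient} that view is a piece of a tree-like structure; since any tree-like structure on at most $n$ nodes has diameter $O(\log n)$, $u$'s entire connected component is inside its view, forms a complete unblemished tree-like structure with no $\mathsf{marked}$ node, and by Lemma~\ref{lemma:bad_tree_valid} setting all of its nodes to $\bot$ is locally valid.

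The main obstacle is that the footnote's consistency rules are asymmetric: once the chain uses an $\mathsf{L}$ or $\mathsf{R}$ edge it is locked to that direction, and the absence of a $\mathsf{Ch}_{\mathsf{L}}$ pointer prevents descending to left children. This forces the canonical path to take the specific up-then-right-down-then-horizontal three-phase shape above, and requires arguing that in any tree-like region of the graph such a path to the canonical target exists, has length $O(\log n)$, and is recovered identically by every node along it. Existence and length follow from the $O(\log n)$-diameter topology of a tree-like structure (so we can always rise to a common ancestor of $u$ and $v$, descend through right children toward $v$'s layer, and finish horizontally), while identical recovery follows from the fact that every node on the path sees the same $O(\log n)$-radius view of $v$ and applies the same deterministic tie-breaking. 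The overall round complexity is dominated by the single BFS, yielding the claimed $O(\log n)$ bound.
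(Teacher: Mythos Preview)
This lemma is quoted from \cite{bgk24} and the present paper does not supply a proof, so I evaluate your argument on its own merits.

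Your canonical path violates the very consistency rules it is meant to satisfy. You propose a three-phase chain ``first $\P$-edges upward to a chosen ancestor, then $\mathsf{Ch}_{\mathsf{R}}$-edges downward, and finally a monotone run of $\mathsf{L}$ or $\mathsf{R}$''. But rule~(2) in the footnote says that if a node outputs $(\mathsf{pointer},\P)$, the node it points to must output one of $\P,\mathsf{L},\mathsf{R}$ --- never $\mathsf{Ch}_{\mathsf{R}}$. Hence the up--then--down transition is illegal, and no chain of your shape is consistent. The only legal chain shapes are $\P^{*}(\mathsf{L}^{*}\mid\mathsf{R}^{*})$ and $\mathsf{Ch}_{\mathsf{R}}^{*}(\mathsf{L}^{*}\mid\mathsf{R}^{*})$: each node must commit to going \emph{only up} or \emph{only down} before turning sideways. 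The repair is to have the whole component agree on a single erroneous target $v$ and assign pointers by layer: nodes strictly above $v$'s layer output $\mathsf{Ch}_{\mathsf{R}}$, nodes strictly below output $\P$, and nodes in $v$'s layer output $\mathsf{L}$ or $\mathsf{R}$ toward $v$ (with $v$ outputting $\Err$). One checks directly against rules (1)--(3) that every transition is then allowed.

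A related problem is your tie-breaking rule ``smallest identifier at minimum distance from $u$'': it is node-dependent, so a node $w$ on $u$'s chosen path may itself be closer to a \emph{different} erroneous node and emit a pointer inconsistent with what $u$'s path requires of it. For the chains to mesh you need a globally agreed target, which in turn requires every node's $O(\log n)$-view to contain it. Your $\bot$-case justification for this --- apply Lemma~\ref{lemma:tree_labels_sufficient} to the view, conclude the view is a tree-like structure of diameter $O(\log n)$, hence the whole component is visible --- is not rigorous as stated: Lemma~\ref{lemma:tree_labels_sufficient} applies to an entire graph satisfying $\mathcal{C}^{\textnormal{tree}}$, not to a ball with boundary, and the diameter bound must be argued directly (for instance, by observing that a violation-free $\P$-path of length exceeding $\log_2 n$ already forces more than $n$ nodes in the locally valid subtree hanging below it).
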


\subsubsection{Grid Structures}
\label{subsubsec:grid}

\textit{Grid structures} of size $h \times w$ are two-dimensional grids, consisting of $h$ rows of length $w$. The endpoints of a column aren't connected to each other, and the same applies to the rows. To be put differently, the structure doesn't wrap around itself. The indices of the rows grow when going downwards, and the indices of the columns grow as we get closer to the left end.

\begin{definition}[\text{\cite[Def. 6.2]{bcm21}}]
\label{def:grid}
A graph $G = (V,E)$ is a \textit{grid structure} (of dimensions $h \times w$) if there's an assignment of coordinates $(i_v, j_v)$ to each $v \in V$ such that $i_v \in [h]$ stands for its vertical position, and $j_v \in [w]$ for its horizontal position. Two nodes $(i_v, j_v)$ and $(i_u, j_u)$ ($i_v \leq i_u$, $j_v \leq j_u$) have an edge iff $(i_v, j_v) = (i_u-1, j_u)$ or $(i_v, j_v) = (i_u, j_u - 1)$.
\end{definition}

\paragraph{Local Verification.}
$\mathcal{E}^{\textnormal{grid}}$ is a finite set of labels for the half-edges, and $\mathcal{C}^{\textnormal{grid}}$ is a corresponding set of constraints. Whenever the graph of interest is a valid grid structure, there's an assignment of labels such that the constraints are all satisfied. The opposite direction is not true, and we'll need additional conditions to hold as well. 
See Fig.~\ref{fig:grid} for an example of a valid $\mathcal{E}^{\textnormal{grid}}$ labeling of a grid structure.

\begin{itemize}

\item Half-edge labels: $\mathcal{E}^{\textnormal{grid}} = \{\mathsf{U}, \mathsf{D},  \mathsf{L}, \mathsf{R} \}$ (standing for ``up'', ``down'', ``left'' and ``right'').
\item Constraints $\mathcal{C}^{\textnormal{grid}}$. See \cite[Sec. 6.2]{bcm21}.

\end{itemize}

\begin{figure}[t]
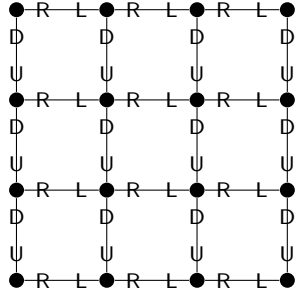

\centering
\includestandalone{figures/grid}
\caption{An example of a \textit{grid structure} of dimensions $4 \times 4$ with a valid labeling.}
\label{fig:grid}
\end{figure}

\begin{lemma}[\text{\cite[Lemma 6.5]{bcm21}}]
\label{lemma:grid_labels_sufficient}
Let $G$ be a $\mathcal{E}^{\textnormal{grid}}$ labeled graph that satisfies:
\begin{enumerate}[nosep]
\item Its labels satisfy $\mathcal{C}^{\textnormal{grid}}$.
\item \label{item:def-no_d_or_no_u} There is at least a single node that either has no incident $\mathsf{D} $ half-edges or $\mathsf{U} $ half-edges.
\item \label{item:def-no_l_or_no_r} There is at least a single node that either has no incident $\mathsf{L}$ half-edges or $\mathsf{R} $ half-edges.
\end{enumerate}
Then, the graph $G$ is a grid structure. 
\end{lemma}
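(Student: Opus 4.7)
The goal is to exhibit a coordinate assignment $v\mapsto(i_v,j_v)$ and verify that the adjacencies of Def.~\ref{def:grid} are realized. While $\mathcal{C}^{\textnormal{grid}}$ is not spelled out in the excerpt, for the labels in $\mathcal{E}^{\textnormal{grid}}$ to enforce grid structure they must include at least: (i) each edge has opposite $\mathsf{U}/\mathsf{D}$ or $\mathsf{L}/\mathsf{R}$ endpoints; (ii) each node has at most one incident half-edge of each of the four labels; and (iii) the directions commute locally, i.e.\ $\mathsf{up}(\mathsf{left}(v)) = \mathsf{left}(\mathsf{up}(v))$ whenever both sides are defined, and the three analogous identities. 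I would begin by recording these as consequences of $\mathcal{C}^{\textnormal{grid}}$, so that the partial maps $\mathsf{up}, \mathsf{down}, \mathsf{left}, \mathsf{right} : V \to V \cup \{\bot\}$ are well defined with $\mathsf{up}$ and $\mathsf{down}$ (resp.\ $\mathsf{left}$ and $\mathsf{right}$) mutually inverse on their domains.

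Next, I would pick a canonical corner. By condition~(\ref{item:def-no_d_or_no_u}), after possibly swapping the labels $\mathsf{U}\leftrightarrow\mathsf{D}$ globally -- an operation under which $\mathcal{C}^{\textnormal{grid}}$ is symmetric -- some node has no incident $\mathsf{U}$ half-edge; similarly for $\mathsf{L}$ via condition~(\ref{item:def-no_l_or_no_r}). Define $i_v$ to be $1$ plus the length of the maximal iterated $\mathsf{up}$-chain from $v$, and $j_v$ analogously using $\mathsf{left}$. Axiom~(ii) combined with finiteness of $V$ rules out infinite chains, since a cycle in $\mathsf{up}$ would force some node to carry two distinct incident $\mathsf{D}$ half-edges, and the two corner witnesses guarantee that some nodes actually achieve $i_v=1$ and $j_v=1$.

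The main task is to show that $v\mapsto(i_v,j_v)$ realizes a grid. The adjacency direction is immediate from axiom~(iii): any $\mathsf{U}$ or $\mathsf{L}$ step changes exactly one coordinate by exactly $1$, so every edge corresponds to a grid-neighbor pair and no unintended edges can appear. The delicate part is \emph{injectivity}: one must rule out that the graph is a disjoint union of several grids, or that it wraps around in torus-like fashion. The torus case is already excluded by termination of the $\mathsf{up}$- and $\mathsf{left}$-chains established above; the disjoint-union case is handled by arguing per connected component and observing that the two boundary witnesses propagate coordinates consistently within their own component, while each remaining component is itself forced (by the same local axioms applied internally) to admit a grid labeling. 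I would prove injectivity by induction on $i_v+j_v$: if $u\neq v$ share coordinates $(i,j)$, iterate $\mathsf{up}$ and $\mathsf{left}$ from each, using axiom~(iii) to reorder the steps, and reduce to the case $i=j=1$, where both candidates must be corner witnesses of the same component and hence coincide by~(ii). Surjectivity onto some rectangle $[h]\times[w]$ follows from the symmetric argument via $\mathsf{down}$ and $\mathsf{right}$. The main obstacle is precisely this injectivity/surjectivity step -- turning purely local commutation data into a global, non-wrapping rectangular embedding -- which is where the two ``corner'' hypotheses (\ref{item:def-no_d_or_no_u}) and (\ref{item:def-no_l_or_no_r}) are essential.
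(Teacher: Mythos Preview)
The paper does not prove this lemma itself; it is quoted from \cite{bcm21} (their Lemma~6.5), where $\mathcal{C}^{\textnormal{grid}}$ is actually written out. So there is no in-paper argument to compare against. Your overall plan --- extract the local rules, define coordinates by maximal $\mathsf{up}$- and $\mathsf{left}$-chain length, then verify Def.~\ref{def:grid} --- is the natural one and presumably matches the original.

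There is, however, a real gap in your termination step. The claim ``a cycle in $\mathsf{up}$ would force some node to carry two distinct incident $\mathsf{D}$ half-edges'' is false: on a pure $\mathsf{U}/\mathsf{D}$ cycle each node carries exactly one $\mathsf{U}$ and one $\mathsf{D}$ half-edge, so axiom~(ii) is respected yet the $\mathsf{up}$-chain from any node on it never terminates. Your argument only excludes $\rho$-shaped trajectories (a tail feeding into a cycle), not the cycle itself, so $i_v$ is not yet well defined everywhere. The repair uses precisely the ingredients you left idle. First, axiom~(iii) must be taken in its strong square-closing form --- if $\mathsf{up}(v)$ and $\mathsf{left}(v)$ both exist then $\mathsf{up}(\mathsf{left}(v))$ and $\mathsf{left}(\mathsf{up}(v))$ both exist and coincide; the weak ``equal when both happen to be defined'' reading you wrote admits non-grid counterexamples. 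From the strong form one deduces that ``no $\mathsf{U}$ half-edge'' propagates along $\mathsf{L}/\mathsf{R}$ neighbours, so condition~(\ref{item:def-no_d_or_no_u}) gives an entire top row, and symmetrically~(\ref{item:def-no_l_or_no_r}) gives a left column; together with connectedness and strong commutation this forces every column to be a path, which is what makes $i_v$ finite. Relatedly, your per-component fallback (``each remaining component is itself forced \dots\ to admit a grid labeling'') is incorrect as stated: a torus component satisfies your (i)--(iii) but is not a grid, so the global hypotheses~(\ref{item:def-no_d_or_no_u})--(\ref{item:def-no_l_or_no_r}) are doing essential work and cannot be dropped componentwise.
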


\begin{lemma}[\text{\cite[Sec. 6.2]{bcm21}}]
\label{lemma:grid_labels_exist}
For every grid structure there is a $\mathcal{E}^{\textnormal{grid}}$ labeling such that the constraints $\mathcal{C}^{\textnormal{grid}}$ are satisfied.
\end{lemma}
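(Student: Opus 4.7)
The statement is a constructive existence claim, and the natural approach is to write down the canonical labeling induced by the coordinate structure guaranteed by Definition~\ref{def:grid}, and then verify that it satisfies every constraint in $\mathcal{C}^{\textnormal{grid}}$. Concretely, for a grid structure $G$ with coordinate assignment $v \mapsto (i_v, j_v)$, I would label each half-edge $(v, e)$, where $e = \{v, u\}$, according to the relative position of $u$: assign $\mathsf{D}$ if $i_u = i_v + 1$, $\mathsf{U}$ if $i_u = i_v - 1$, $\mathsf{L}$ if $j_u = j_v + 1$, and $\mathsf{R}$ if $j_u = j_v - 1$. Definition~\ref{def:grid} ensures that exactly one of these cases applies to every edge incident to $v$, so the labeling is well-defined.

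Next I would verify the constraints of $\mathcal{C}^{\textnormal{grid}}$ one by one. Since we are referred to \cite[Sec.~6.2]{bcm21} for the precise list, the argument must cover the standard local conditions one expects for a grid: (i)~each node has at most one incident half-edge with each label in $\{\mathsf{U}, \mathsf{D}, \mathsf{L}, \mathsf{R}\}$; (ii)~opposite labels match across an edge, i.e.\ every $\mathsf{D}$ half-edge at $v$ is paired with a $\mathsf{U}$ half-edge at the other endpoint, and similarly $\mathsf{L}$ with $\mathsf{R}$; and (iii)~local ``square'' consistency, namely, whenever $v$ has both an $\mathsf{U}$-neighbor $u_1$ and an $\mathsf{L}$-neighbor $u_2$, there exists a common neighbor $w$ of $u_1$ and $u_2$ reached by $u_1$'s $\mathsf{L}$ half-edge and $u_2$'s $\mathsf{U}$ half-edge (and analogously for the other three corners). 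Items (i) and (ii) are immediate from Definition~\ref{def:grid}: the uniqueness of coordinates and the edge rule $(i_v, j_v) \leftrightarrow (i_v \pm 1, j_v), (i_v, j_v \pm 1)$ force the labeling to be a function of the coordinate difference.

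The main obstacle is (iii), the commuting-square condition, which is the only place where global coordinate information is genuinely used rather than just adjacency. Here I would argue that if $v$ has coordinates $(i, j)$ with an $\mathsf{U}$-neighbor $u_1 = (i-1, j)$ and an $\mathsf{L}$-neighbor $u_2 = (i, j+1)$, then by Definition~\ref{def:grid} applied to the candidate coordinates $(i-1, j+1)$, either this node exists in $V$ (in which case it is adjacent to both $u_1$ and $u_2$ with the required labels) or the grid dimensions preclude it (in which case the edge forcing the constraint is absent and there is nothing to check). A case analysis over the four possible corner orientations, combined with the fact that coordinates are unique so the purported fourth vertex of the square is unambiguous, closes the argument.

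Finally, I would remark that the construction is completely local in the coordinate assignment: the label of a half-edge depends only on the relative coordinates of its two endpoints. This makes checking any additional syntactic condition imposed by $\mathcal{C}^{\textnormal{grid}}$ (should the reference include extra bookkeeping constraints) routine, since every such condition can be pulled back through the coordinate map to an obviously true statement about $\mathbb{Z}^2$.
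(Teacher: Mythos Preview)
Your proposal is correct and follows the natural constructive approach: read off the half-edge labels from the coordinate assignment guaranteed by Definition~\ref{def:grid}, then verify the local constraints. Note that the paper does not actually prove this lemma---it is imported directly from \cite[Sec.~6.2]{bcm21}---so there is no in-paper proof to compare against; your plan is exactly the standard argument one would expect the cited reference to contain.
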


\subsubsection{Vertical Grid Structures}
\label{subsubsec:vgrid}

\begin{definition}[\text{\cite[Def. 5.4]{bgk25}}]
A \textit{vertical grid structure} is a \textit{grid structure} of dimensions $h \times w$ that satisfies $h \geq w$.
\end{definition}

\paragraph{Local Verification.}
Vertical grid structures are associated with the following labels and constraints.
\begin{itemize}
\item Node labels: $\mathcal{V}^{\textnormal{vGrid}} = \{ 0, 1 \}$
\item Half-edge labels: $\mathcal{E}^{\textnormal{grid}}$ (no additional half-edge labels other than the grid structure's).
\item Constraints $\mathcal{C}^{\textnormal{vGrid}}$. Includes $\mathcal{C}^{\textnormal{grid}}$, with few additional constraints. See \cite[Sec. 5.2]{bgk25}.
\end{itemize}

\begin{lemma}[\text{\cite[Lemma 5.5]{bgk25}}]
If a graph is equipped with $(\mathcal{V}^{\textnormal{vGrid}}, \mathcal{E}^{\textnormal{grid}})$ labels that satisfy $\mathcal{C}^{\textnormal{vGrid}}$ along with conditions~\ref{item:def-no_d_or_no_u} and~\ref{item:def-no_l_or_no_r} of Lemma~\ref{lemma:grid_labels_sufficient}, and contains at least one node labeled $1$, then it is a vertical grid structure.
\end{lemma}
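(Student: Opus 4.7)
The plan is to leverage Lemma~\ref{lemma:grid_labels_sufficient}, which already does most of the structural work, and then use the extra ingredients in $\mathcal{C}^{\textnormal{vGrid}}$ (the node labels and the existence of a $1$-labeled node) to derive the verticality condition $h \geq w$. Since $\mathcal{C}^{\textnormal{grid}} \subseteq \mathcal{C}^{\textnormal{vGrid}}$ by construction, and the two ``boundary existence'' conditions from Lemma~\ref{lemma:grid_labels_sufficient} are explicitly assumed in the hypothesis, I would apply that lemma directly to conclude that $G$ is a grid structure with some dimensions $h \times w$, equipped with coordinates $(i_v, j_v)$.

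Next, the only purpose of introducing $\mathcal{V}^{\textnormal{vGrid}} = \{0,1\}$ and the extra constraints in $\mathcal{C}^{\textnormal{vGrid}}$ beyond $\mathcal{C}^{\textnormal{grid}}$ must be to encode the comparison $h \geq w$ locally. The natural design — and the one I would expect \cite{bgk24} to have used — is to force the $1$-labeled nodes to form a monotone ``staircase'' path $P$ stretching from the top-right corner $(1,w)$ to the bottom-left corner $(h,1)$, where each step of $P$ follows an $\mathsf{L}$ half-edge, a $\mathsf{D}$ half-edge, or a diagonal $(\mathsf{L},\mathsf{D})$ combination, and where horizontal progress is throttled relative to vertical progress. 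Starting from the guaranteed $1$-labeled node $v_0$, I would iteratively invoke the extra propagation constraints to extend the $1$-chain both ``up-right'' and ``down-left,'' and argue that the propagation cannot terminate in the interior: a $1$-labeled node with all four incident half-edges present must transmit the label to a specific neighbor, so the only way for the chain to end is by hitting a boundary where the required half-edge is absent. The boundary rules then force the two endpoints to be $(1,w)$ and $(h,1)$ precisely.

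Finally, given such a staircase $P$ connecting $(1,w)$ to $(h,1)$, I would finish by counting: the total horizontal displacement along $P$ is $w-1$ and the total vertical displacement is $h-1$, and since each step of the staircase contributes at most one unit of horizontal motion but at least as much vertical motion as horizontal motion (by the staircase discipline encoded in the constraints), one obtains $w - 1 \leq h - 1$, i.e., $h \geq w$.

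The main obstacle I anticipate is the middle step: without the explicit local constraints $\mathcal{C}^{\textnormal{vGrid}}$ in front of me, the delicate part is ruling out premature termination of the $1$-chain in the interior of the grid and pinning both endpoints to the intended corners. Once the chain is correctly characterized as a monotone boundary-to-boundary staircase, the inequality $h \geq w$ is immediate from a straightforward displacement count along $P$.
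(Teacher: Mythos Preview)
The paper does not give its own proof of this lemma; it is stated with a bare citation to \cite[Lemma~5.5]{bgk24}, and the additional constraints $\mathcal{C}^{\textnormal{vGrid}}$ are likewise only referenced (``See \cite[Sec.~5.2]{bgk24}'') rather than spelled out. So there is no in-paper proof to compare against.

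That said, your high-level plan is exactly the expected one: first invoke Lemma~\ref{lemma:grid_labels_sufficient} (which applies because $\mathcal{C}^{\textnormal{grid}} \subseteq \mathcal{C}^{\textnormal{vGrid}}$ and the two boundary conditions are assumed) to get a grid of some dimensions $h \times w$, and then use the $\{0,1\}$ node labels together with the extra constraints to force $h \geq w$ via a diagonal/staircase path of $1$-labeled nodes whose horizontal displacement is dominated by its vertical displacement. This is the standard local encoding of an inequality between grid dimensions, and it is what \cite{bgk24} does. You are right that the only real content is in the middle step, and you are also right to flag that without the explicit list of constraints in $\mathcal{C}^{\textnormal{vGrid}}$ you cannot pin down the precise propagation rules or the exact corners the chain must hit; those details live in the cited paper, not here.
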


\begin{lemma}[\text{\cite[Lemma 5.6]{bgk25}}]
For every vertical grid structure there's an assignment of $(\mathcal{V}^{\textnormal{vGrid}}, \mathcal{E}^{\textnormal{grid}})$ labels such that $\mathcal{C}^{\textnormal{vGrid}}$ is satisfied, and at least one node is labeled $1$.
\end{lemma}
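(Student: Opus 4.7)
The plan is to reduce the task to two nearly independent pieces: first obtain the half-edge labeling on $G$, then superimpose a short ``diagonal'' witness of $1$'s that certifies the inequality $h \geq w$. Concretely, since $G$ is a grid structure of dimensions $h \times w$ with $h \geq w$, I would invoke Lemma~\ref{lemma:grid_labels_exist} to produce an $\mathcal{E}^{\textnormal{grid}}$ labeling that satisfies every constraint in $\mathcal{C}^{\textnormal{grid}} \subseteq \mathcal{C}^{\textnormal{vGrid}}$. This handles the half-edges and all the grid-related conditions at once.

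Next I would assign the node labels in $\mathcal{V}^{\textnormal{vGrid}} = \{0,1\}$. The additional conditions in $\mathcal{C}^{\textnormal{vGrid}}$ (from \cite[Sec.~5.2]{bgk24}) are designed exactly to require a locally checkable witness that each column is no longer than each row's vertical ``room'', so the natural choice is to place a single diagonal chain of $1$-labels along the nodes $(1,1), (2,2), \ldots, (w,w)$ (using coordinates in which $i$ grows downward and $j$ grows leftward), and label every remaining node $0$. Because $h \geq w$, the chain fits entirely inside $G$, begins at the top-right corner (which has no incident $\mathsf{U}$ or $\mathsf{R}$ half-edges) and ends at a node on the left boundary (which has no incident $\mathsf{L}$ half-edge), so the ``start'' and ``end'' conditions on a $1$-chain are witnessed locally via the $\mathcal{E}^{\textnormal{grid}}$ labels. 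Interior chain nodes see the chain entering from their $\mathsf{U}\mathsf{R}$-neighbor and exiting through their $\mathsf{D}\mathsf{L}$-neighbor, exactly as the vGrid propagation constraints demand, and the $0$-labeled nodes trigger only the trivial case of those same constraints.

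\paragraph{Verification and main obstacle.}
The bulk of the proof is then a case analysis: for each of the local checks in $\mathcal{C}^{\textnormal{vGrid}} \setminus \mathcal{C}^{\textnormal{grid}}$, confirm that it is satisfied at every node, splitting into (a)~nodes labeled $0$, (b)~interior chain nodes with label $1$, and (c)~the two endpoints of the diagonal. Because the half-edge labels already distinguish interior from boundary nodes and the four cardinal directions unambiguously, each case reduces to inspecting a constant-size neighborhood and matching it against the allowed local views. Since by construction at least $w \geq 1$ nodes carry the label $1$, the last requirement of the lemma is met as well.

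\paragraph{Expected difficulty.}
The real obstacle is not conceptual but notational: one has to unpack the precise list of constraints in $\mathcal{C}^{\textnormal{vGrid}}$ from \cite[Sec.~5.2]{bgk24} and check that the diagonal witness respects each of them. Because the constraints are engineered in tandem with exactly this type of witness, no clever construction is needed; the work lies in aligning the construction with the definitions and ensuring that the inequality $h \geq w$ is used precisely once, namely when arguing that the diagonal reaches the left boundary before running off the bottom of the grid.
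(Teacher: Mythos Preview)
The paper does not give its own proof of this lemma; it is stated purely as a citation of \cite[Lemma~5.6]{bgk24}, so there is no in-paper argument to compare against. Your plan---lift the $\mathcal{E}^{\textnormal{grid}}$ labeling from Lemma~\ref{lemma:grid_labels_exist} and lay a diagonal chain of $1$'s from the top-right corner to the left boundary as a locally checkable witness of $h\geq w$---is exactly the intended construction from \cite[Sec.~5.2]{bgk24}, and your case split into $0$-nodes, interior $1$-nodes, and the two diagonal endpoints is the right way to discharge the extra vGrid constraints.
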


\subsubsection{\texorpdfstring{\textsf{TM}}{TM} Encoding Structure}
\label{subsubsec:tm_encode_structure}

Several works in the past (e.g., \cite{ns95, bbos20, cha23}) defined \textsf{LCL} problems that represent executions of Turing machines. We follow a similar path, but a key difference compared to these studies lies in the fact that they let the nodes perform the computation themselves, whereas our implementation considers the verification of the computation. 
Balliu, Brandt, Chang, Olivetti, Rabie, and Suomela \cite{bbco19} also defined an \textsf{LCL} problem centered on verification of computations, but used a path as the underlying graph structure, which inherently limited the length of the Turing machine's tape. 
In contrast, our use of grids provides much greater flexibility. 
Another difference compared to previous works is that we take into account executions over arbitrary input strings, while prior work focused on computations that begin with a blank tape.

\begin{remark}[Uniformity in Theorem~\ref{thm:rounds_gap}]
To verify solutions of the hard $\TFNP$ problem of interest, we rely on a Turing machine, a uniform model of computation.
This is where the uniformity in Theorem~\ref{thm:rounds_gap} originates from. 
We don't know if a non-uniform approach can work here. 
\end{remark}

\begin{definition}[Turing Machine]
\label{def:tm}
A single-tape, deterministic \textit{Turing machine} (\textsf{TM}) is a tuple $(Q, \Sigma, F, q_0, \delta)$.

\begin{itemize}

\item $Q$ is a finite set of states.

\item $\Sigma$ is a finite alphabet of all the symbols that may appear on the tape. Among others, contains $B = \{ \#_{\mathsf{L}}, \#_{\mathsf{R}} \}$ special blank symbols. One $\#_{\mathsf{L}}$ symbol appears at the left end of the tape. At least one $\#_{\mathsf{R}}$ (but potentially more) is written at the right end of the tape. In this work we are mostly concerned with $\Sigma = B \cup \{ 0 ,1  \}$.

\item $F \subseteq Q$ set of final states.

\item $q_0 \in Q$ the initial state.

\item $\delta: (Q \backslash F) \times \Sigma \rightarrow Q \times \Sigma \times \text{D}$ transition function. $\text{D} : = \{ \mathsf{L}, \mathsf{R}, \mathsf{S} \}$  (standing for ``left'', ``right'' and ``stay'').
Given a non-final state and the symbol in the head position, $\delta$ returns the next state, the symbol to be written before changing position, and the direction. 

\end{itemize}
\end{definition}

\paragraph{Notation.}
Let $N$ be the size of the input, and $\mathcal{M} = (Q, \Sigma, F, q_0, \delta)$ be the polynomial-time \textit{Turing machine} (\textsf{TM}) of interest.
Let $\ttt{input} \in ( \Sigma \backslash \{ \#_{\mathsf{L}}, \#_{\mathsf{R}}\})^N$ be an input string.
$S^{\mathcal{M}}(\ttt{input})$ bounds the length of tape. That is, the space complexity of $\mathcal{M}$ on $\ttt{input}$.
The runtime of $\mathcal{M}$ is exactly $T^{\mathcal{M}}(\ttt{input})$.
At time $j \in [T^{\mathcal{M}}(\ttt{input})]$, the tape stores the string $\textnormal{TAPE}^{\mathcal{M}}_{\ttt{input}}(j)$, the state is $\textnormal{STATE}^{\mathcal{M}}_{\ttt{input}}(j) \in Q$ and the head is located at position $\textnormal{HEAD}^{\mathcal{M}}_{\ttt{input}}(j)$.

\paragraph{Conventions.}
Suppose that the input and the output are aligned to the left end of the tape. Fill the rest of the cells with blank symbols $\#_{\mathsf{R}}$. 
To illustrate, at both the beginning and the end of the execution the tape takes on the form $\#_{\mathsf{L}} \circ \ttt{str} \circ \#_{\mathsf{R}}^+$, where $\#_{\mathsf{R}}^+$ stands for a string of unspecified (non-zero) length of $\#_{\mathsf{R}}$'s, and $\ttt{str} \in (\Sigma \backslash \{ \#_{\mathsf{L}}, \#_{\mathsf{R}} \})^*$ is a string of non-blank symbols, also of unspecified length (which may be empty).
Another useful property that we are going to assume is that the machine stops at a final state only when the head is located at the leftmost cell of the tape (the one with the value $\#_{\mathsf{L}}$). 
At the end of the computation the head of the \textsf{TM} is repeatedly moved to the left until it reaches the beginning of the tape, and only then we enter a state in $F$. 

\paragraph{Graph Structure.}
The \textit{\textsf{TM} encoding structure} is, first of all a grid structure, as defined in Section~\ref{subsubsec:grid}. Additionally, there is an assignment of labels to the nodes such that they hold information about the computation executed by the Turing machine. Every node is given the value of a cell in the tape at some point in time, and in every column there is a node whose input label indicates that the head is stationed there. This node's label also contains the state of the machine at that moment of the execution. 
We refer to the horizontal axis of the grid as the time dimension, and to the vertical axis as the tape. 
Due to our conventions, the input and the output (which are aligned to the left end of the tape) are written in the upper part of the vertical axis of the grid.
Observe that this labeling gives rise to the verification of the computation while communicating only with nodes that are $O(1)$-hops away.

\begin{definition}[\textsf{TM} Encoding Structure]
\label{def:tm_encoding}
Let $\mathcal{M} = (Q, \Sigma, F, q_0, \delta)$ (that obeys the conventions discussed above), $N \in \mathbb{N}$ and $\ttt{input} \in (\Sigma \backslash \{ \#_{\mathsf{L}}, \#_{\mathsf{R}} \})^N$.
A graph $G$ is \textit{\textsf{TM} encoding structure with respect to $\mathcal{M}$ and $\ttt{input}$} if $G$ is a grid structure of dimensions $h \times w$ for $h \geq S^{\mathcal{M}}(\ttt{input})$ and $w \geq T^{\mathcal{M}}(\ttt{input})$.
\end{definition}

A detail worth paying attention to is the fact that the width $w$ (resp. height $h$) is allowed to be larger than the time (resp. space) complexity of the machine $\mathcal{M}$. 
The labels for the extra columns must match those written in the $T^{\mathcal{M}}(\ttt{input})$-th column, meaning there are no additional computation steps; we just copy the output of the Turing machine as is until we reach the leftmost column.
The nodes at the excess bottom rows are labeled with $\#_{\mathsf{R}}$. 
In both cases, it's straightforward to see that the requirements can be implemented as locally checkable constraints.

At first glance, Def.~\ref{def:tm_encoding} seems redundant as it merely refers to a grid structure with restrictions over its dimensions. The intention is to make it clear that when referring to such grids, there exists an assignment of labels that corresponds to the computation of $\mathcal{M}$ on the input string specified by the labels of the nodes located in the right column.

\paragraph{Local Verification.}
Let $\mathcal{M} = (Q, \Sigma, F, q_0, \delta)$.
We use the following labels and constraints (inspired by \cite{bbos20}). The labels and the constraints depend on the Turing machine in interest $\mathcal{M}$. They are denoted by $\mathcal{V}^{\mathcal{M}}$, $\mathcal{C}^{\mathcal{M}}$, respectively.
Together with the node labels $\mathcal{V}^{\mathcal{M}}$, the half-edges are equipped with the grid structure's half-edge labels $\mathcal{E}^{\text{grid}}$. Then, $\mathcal{C}^{\mathcal{M}}$ are meant to be checked along with $\mathcal{C}^{\text{grid}}$. 

It's important to note \textit{\textsf{TM} encoding structures} with respect to $\mathcal{M}$ and $\ttt{input}$ are not locally checkable. Every such structure has a valid labeling (i.e., respects $\mathcal{C}^{\mathcal{M}}$ and $\mathcal{C}^{\text{grid}}$). The opposite is not necessarily true; graphs equipped with valid labelings may not be \textit{\textsf{TM} encoding structures} with respect to $\mathcal{M}$ and $\ttt{input}$. 
One challenge arises from machines that enter infinite loops. As a simple example, consider a \textsf{TM} $\mathcal{M}$ that doesn't edit the contents of the tape and never moves the head. Suppose that after $\omega(1)$ steps it returns to the initial state $q_0$. For such $\mathcal{M}$'s, the history of the tape can be written over a cylinder graph. The local constraints verifying the computation are satisfied, and the nodes cannot infer that they belong to a cylinder (rather than a grid). We tackle this issue by introducing additional conditions, similar to what we did with the \textit{grid structures}.

The other difficulty has to do with our inability to locally verify that the input string mentioned in a given labeling is truly $\ttt{input}$. Because of that, given a correctly labeled graph (plus the additional conditions) we say that it is a \textit{\textsf{TM} encoding structure} with respect to $\mathcal{M}$ and \textit{some} string $\ttt{input}$, where $\ttt{input}$ is directly determined by the labels of the right column. 

See Fig.~\ref{fig:tm_encoding} for an illustration of the labeling for a $6 \times 10$ grid and a simple Turing machine $\mathcal{M}$ (that obeys our conventions) which flips the bits of any given input string. E.g., on $\#_{\mathsf{L}} 001 \#_{\mathsf{R}}^+$, the output is $\#_{\mathsf{L}} 110 \#_{\mathsf{R}}^+$.

\begin{figure}[t]
\begin{minipage}{\textwidth}
\begin{minipage}[h]{0.74\textwidth}
\centering
\resizebox{\textwidth}{!}{
\includestandalone{figures/tm}
}
\captionof{figure}{$\mathcal{V}^{\mathcal{M}}$ labeling associated with a \textsf{TM} $\mathcal{M}$ that flips the bits of the input. 
The tape is initially set to $\#_{\mathsf{L}} 001 \#_{\mathsf{R}}^+$. 
$q_3 \in F$. 
\textsf{H}~is an abbreviation for \textsf{Head}.}
\label{fig:tm_encoding}
\end{minipage}
\begin{minipage}[h]{0.25\textwidth}
\centering
{\fontsize{9}{12}\selectfont
\begin{tabular}{|c c | c c c|} 
\hline
$Q$ & $\Sigma$ & $Q$ & $\Sigma$ & D \\ 
\hline
$q_0$ & $\#_{\mathsf{L}}$ & $q_1$ & $\#_{\mathsf{L}}$ & $\mathsf{R}$ \\
$q_1$ & $0$ & $q_1$ & $1$ & $\mathsf{R}$ \\
$q_1$ & $1$ & $q_1$ & $0$ & $\mathsf{R}$ \\
$q_1$ & $\#_{\mathsf{R}}$ & $q_2$ & $\#_{\mathsf{R}}$ & $\mathsf{L}$ \\
$q_2$ & $0$ & $q_2$ & $0$ & $\mathsf{L}$ \\
$q_2$ & $1$ & $q_2$ & $1$ & $\mathsf{L}$ \\
$q_2$ & $\#_{\mathsf{L}}$ & $q_3$ & $\#_{\mathsf{L}}$ & $\mathsf{S}$ \\[0.5ex] 
\hline
\end{tabular}
}
\captionof{table}{Partial transition function of the \textsf{TM} $\mathcal{M}$ depicted in Fig.~\ref{fig:tm_encoding}.}
\end{minipage}
\end{minipage}
\end{figure}

\begin{itemize}

\item Node labels: $\mathcal{V}^{\mathcal{M}} := 
\left\{ (t, h, s) \big| t \in \Sigma, h \in \{ \mathsf{Head}, \bot \}, s \in Q \cup \{ \bot \} \right\}$

$t$ is a symbol on the tape. $h$ indicates whether this is the current head position. $s$ is a state, if this is the current head position and $\bot$ otherwise. Recall, the set of states $Q$ and the alphabet $\Sigma$ are finite, so they can be part of a constant-size label.

\item Half-edge labels: $\mathcal{E}^{\textnormal{grid}}$

\item Constraints $\mathcal{C}^{\mathcal{M}}$: let $v \in V$, labeled with $(t,h,s)$.

\begin{enumerate}[label=\textbf{L\arabic*}, nosep]

\item \label{item:con-first} If $v$ doesn't have a $\mathsf{R} $ half-edge:

\begin{itemize}[nosep]

\item If $f(v, \mathsf{U}) = \bot$, then $(t,h,s) = (\#_{\mathsf{L}}, \mathsf{Head}, q_0)$. Otherwise, $h = \bot \land s = \bot$

\end{itemize}

\item \label{item:con-non_final}
If $s \in Q \backslash F$, then $u = f(v, \mathsf{L} )$ exists and labeled $(t^{\mathsf{L}}, h^{\mathsf{L}}, s^{\mathsf{L}})$.

\begin{enumerate}[nosep]

\item \label{item:con-transition} 
If $h = \mathsf{Head}$, then $\delta(s, t) = (\tilde{s}, t^{\mathsf{L}}, \texttt{dir})$ for $\texttt{dir} \in \text{D}$. 
Additionally, \textbf{exactly} one of the following holds.

\begin{itemize}[nosep]

\item $h^{\mathsf{L}} = \mathsf{Head}$. In this case: $\texttt{dir} = \mathsf{S}$, $s^{\mathsf{L}} = \tilde{s}$.

\item $f(u, \mathsf{L}, \mathsf{U} )$ exists and labeled $(t^{\mathsf{LU} }, \mathsf{Head}, s^{\mathsf{LU} })$. In this case: $\texttt{dir} = \mathsf{\mathsf{L}}, s^{\mathsf{LU}} = \tilde{s}$.

\item $f(v, \mathsf{L}, \mathsf{D} )$ exists and labeled $(t^{\mathsf{LD} }, \mathsf{Head}, s^{\mathsf{LD} })$. In this case: $\texttt{dir} = \mathsf{\mathsf{R}}, s^{\mathsf{LD}} = \tilde{s}$.

\end{itemize}

\item \label{item:con-nothing} 
If $h \neq \mathsf{Head}$, all of the following must hold.

\begin{itemize}[nosep]
\item \label{item:con-unchanged} $t = t^{\mathsf{L}}$.
\item \label{item:con-neighboring_head} If $h^{\mathsf{L}} = \mathsf{Head}$, then either $f(u, \mathsf{U} )$ or $f(u, \mathsf{D} )$ exists and labeled with $\mathsf{Head}$.
\end{itemize}

\end{enumerate}

\item \label{item:con-state} $h = \mathsf{Head}$ iff $s \neq \bot$.

\item \label{item:con-accepting_state} 
If $s \in F$:
\begin{itemize}[nosep]
\item $f(v, \mathsf{U}) = \bot$
\item $(t,h,s) = (\#_{\mathsf{L}}, \mathsf{Head}, s)$
\item If $u = f(v, \mathsf{L})$ for $u \neq \bot$, then $u$'s label is $(t,h,s)$.
\end{itemize}

\item \label{item:con-borders} $t = \#_{\mathsf{L}}$ iff $v$ doesn't have a $\mathsf{U} $ half-edge. If $t = \#_{\mathsf{R}}$, either $v$ doesn't have a $\mathsf{D} $ half-edge, or $f(v, \mathsf{D})$ is labeled with $(t^{\mathsf{D}}, h^{\mathsf{D}}, s^{\mathsf{D}})$ for $t^{\mathsf{D}} = \#_{\mathsf{R}}$.
\end{enumerate}

\end{itemize}
  
\begin{lemma}
\label{lemma:tm_labels_sufficient}
Let $\mathcal{M} = (Q, \Sigma, F, q_0, \delta)$.
A graph $G$ and a $(\mathcal{V}^{\mathcal{M}}, \mathcal{E}^{\textnormal{grid}})$ labeling satisfy $\mathcal{C}^{\mathcal{M}}$, $\mathcal{C}^{\textnormal{grid}}$ and conditions~\ref{item:def-no_d_or_no_u}, \ref{item:def-no_l_or_no_r} of 
Lemma~\ref{lemma:grid_labels_sufficient}.

Then, $G$ is a \textit{\textsf{TM} encoding structure} with respect to $\mathcal{M}$ and $\ttt{input}$, for \textit{some} string $\ttt{input} \in (\Sigma \backslash \{ \#_{\mathsf{L}}, \#_{\mathsf{R}} \})^*$.
\end{lemma}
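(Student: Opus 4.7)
The plan is to bootstrap the grid-structure lemma, read the input off the rightmost column, and then prove by induction on time that the labels really do encode a run of $\mathcal{M}$, which will simultaneously force the two dimension bounds that Def.~\ref{def:tm_encoding} demands. For Step~1, I invoke Lemma~\ref{lemma:grid_labels_sufficient}: $\mathcal{C}^{\textnormal{grid}}$ together with conditions~\ref{item:def-no_d_or_no_u} and~\ref{item:def-no_l_or_no_r} give that $G$ is a grid of some dimensions $h \times w$, and every node acquires canonical coordinates $(i,j)$, $i \in [h]$, $j \in [w]$; following the orientation used in Section~\ref{subsec:low_blocks}, I take $i$ to be the tape position (increasing downward) and $j$ to be the time step (increasing right-to-left), so the top-right corner is $(1,1)$.

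For Step~2, I read a candidate input off column $j = 1$. Constraint~\ref{item:con-first} pins the label of $(1,1)$ to be $(\#_{\mathsf{L}}, \mathsf{Head}, q_0)$ and makes every other node of column $1$ head-less, and constraint~\ref{item:con-borders} forces $\#_{\mathsf{L}}$ to occur exactly at the top and $\#_{\mathsf{R}}$, wherever it first appears, to persist all the way down. So column $1$ reads $\#_{\mathsf{L}}\,\sigma_1\cdots\sigma_k\,\#_{\mathsf{R}}\cdots\#_{\mathsf{R}}$ with each $\sigma_i \in \Sigma \setminus \{\#_{\mathsf{L}},\#_{\mathsf{R}}\}$, and I set $\ttt{input} := \sigma_1\cdots\sigma_k$.

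Step~3 is the heart of the argument: a strong induction on $j$ with the invariant that column $j$ encodes the configuration of $\mathcal{M}$ on $\ttt{input}$ after $j - 1$ steps, i.e.\ the $i$-th row stores the tape symbol at position $i$, exactly one row carries the $\mathsf{Head}$ label together with the current state, and every other row has $h=\bot$, $s=\bot$. The base case is Step~2. For the inductive step at the head node $v$ of column $j$ with state $s$: if $s \notin F$, constraint~\ref{item:con-transition} forces $u = f(v,\mathsf{L})$ to exist, to carry the correctly rewritten symbol, and exactly one of its three mutually-exclusive sub-cases to hold, matching precisely $\texttt{dir} \in \{\mathsf{S},\mathsf{L},\mathsf{R}\}$; constraint~\ref{item:con-nothing} propagates every other tape symbol unchanged into column $j+1$ and, via its ``neighbouring head'' clause, forbids spurious heads from appearing there, while~\ref{item:con-state} keeps head and state labels in sync. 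If $s \in F$, constraint~\ref{item:con-accepting_state} copies the halted configuration leftward through the remaining columns. This establishes the semantic part.

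The subtle step, and where the proof really earns its keep, is turning the induction into the two dimension bounds. Whenever $\mathcal{M}$ would move its head to a tape cell outside the grid, the required diagonal neighbour $f(u,\mathsf{L},\mathsf{U})$ or $f(v,\mathsf{L},\mathsf{D})$ in~\ref{item:con-transition} is $\bot$, violating the constraint; therefore the grid has at least $S^{\mathcal{M}}(\ttt{input})$ rows. Similarly, as long as the current state is non-final,~\ref{item:con-transition} demands $f(v,\mathsf{L}) \neq \bot$, so $\mathcal{M}$ must reach a final state within the available $w$ columns, and combined with the convention that $\mathcal{M}$ halts only with the head on the leftmost cell this forces $w \geq T^{\mathcal{M}}(\ttt{input})$. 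Together with Step~1 this is precisely Def.~\ref{def:tm_encoding}, finishing the proof.
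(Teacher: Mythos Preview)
Your proof is correct and follows essentially the same approach as the paper's: establish the grid via Lemma~\ref{lemma:grid_labels_sufficient}, read the input from the rightmost column using~\ref{item:con-borders}, argue that the constraints force the labeling to be a faithful simulation of $\mathcal{M}$ (with exactly one head per column), and extract the dimension bounds from the fact that~\ref{item:con-non_final} would otherwise be violated. The only difference is presentational---you phrase the simulation-correctness argument as an explicit induction on the column index, whereas the paper argues the same points more informally.
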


\begin{proof}
$\mathcal{C}^{\text{grid}}$ is known to hold, and so are conditions~\ref{item:def-no_d_or_no_u}, \ref{item:def-no_l_or_no_r}. From Lemma~\ref{lemma:grid_labels_sufficient}, $G$ is ought to be a \textit{grid structure}. 
The string obtained from concatenating the labels in the right column takes on the form $\#_{\mathsf{L}} \circ \ttt{input} \circ \#_{\mathsf{R}}^+$, or else constraint~\ref{item:con-borders} is violated.
What remains is to confirm that the dimensions align with Def.~\ref{def:tm_encoding} when the input string is $\ttt{input}$. 
The main step in the proof is to justify that the labeling accurately follows the transition function $\delta$ of $\mathcal{M}$. 

% first column
Constraint~\ref{item:con-first} ensures that the computation begins with the initial state $q_0$, and with the head located at the left end of the tape (i.e., the topmost row of the grid).

% transition steps.
From \ref{item:con-transition}, local computation steps necessarily reflect the rules dictated by $\delta$. Moreover, constraint~\ref{item:con-nothing} verifies that whenever the head of the machine is not stationed in some cell, this cell's contents remain unchanged. 
Together, \ref{item:con-transition} and \ref{item:con-nothing} guarantee that any local computation step of $\mathcal{M}$ is implemented correctly.
Unfortunately, it still doesn't mean that the computation as a whole is right. Several heads of the machine may exist and operate simultaneously.

% a single head
From constraint~\ref{item:con-first}, there is a single $\mathsf{Head}$ in the first column. Constraint~\ref{item:con-transition} justifies why a $\mathsf{Head}$ node may lead to exactly one $\mathsf{Head}$ in the following column, whereas constraint~\ref{item:con-neighboring_head} explains why the label $\mathsf{Head}$ cannot show up unless there was a $\mathsf{Head}$ labeled node in the previous column.
Combined, these constraints suggest that each column has exactly one \textsf{Head} labeled node. This allows us to conclude that the labels constitute a correct simulation of $\mathcal{M}$ on $\ttt{input}$.

% final state.
The simulation necessarily reaches a final state. Otherwise, when reaching the left column of the grid $\delta$ would indicate that an additional step is required, which would violate \ref{item:con-non_final}.

Now that we are convinced that the simulation adheres to the transition function $\delta$, the argument for the dimensions being appropriate is straightforward. If the width is less than $T^{\mathcal{M}}(\ttt{input})$ or the height is less than $S^{\mathcal{M}}(\ttt{input})$, it would either result in an error in one of the steps recorded by the node labels (i.e., labels that contradict $\delta$), or it would indicate that at some point the rule implied by constraint~\ref{item:con-non_final} refers to a non-existing node and therefore fails.
\end{proof}

\begin{remark}[Constraints \ref{item:con-state}-~\ref{item:con-borders}]
Assuming that the input string follows the conventions, constraints \ref{item:con-first}-~\ref{item:con-non_final} suffice for proving Lemma~\ref{lemma:tm_labels_sufficient}. Constraint~\ref{item:con-state} helps with the formatting of the labels, but doesn't have an actual effect on the correctness of the simulation of $\mathcal{M}$ (since \ref{item:con-non_final} anyway ignores the state $s$ whenever $h = \bot$). \ref{item:con-accepting_state} has two primary roles. First, it verifies that $\mathcal{M}$ follows our convention of allowing the machine to enter a final state only when the head is positioned at the left end of the tape. Second, it ensures (together with \ref{item:con-nothing}) that if there are any additional columns after $\mathcal{M}$ has reached a final state, they must be identical (this gives us freedom in setting the width of the grid). Lastly, constraint~\ref{item:con-borders} ensures that the ends of the tape are consistent with our conventions. Additionally, it allows us to increase the height arbitrarily.
\end{remark}

\begin{lemma}
\label{lemma:tm_labels_exist}
Let $\mathcal{M} = (Q, \Sigma, F, q_0, \delta)$, $\ttt{input} \in (\Sigma \backslash \{ \#_{\mathsf{L}}, \#_{\mathsf{R}} \})^*$ and graph $G$ be a \textit{\textsf{TM} encoding structure with respect to $\mathcal{M}$ and $\ttt{input}$}.

Then, there is a $(\mathcal{V}^{\mathcal{M}}, \mathcal{E}^{\textnormal{grid}})$ labeling such that $\mathcal{C}^{\mathcal{M}}$ and $\mathcal{C}^{\textnormal{grid}}$ hold, where the input labels of the right column of the grid correspond to $\ttt{input}$.
\end{lemma}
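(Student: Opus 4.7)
The plan is to construct the labeling explicitly by recording the transcript of $\mathcal{M}$ on $\ttt{input}$ column-by-column, and then check each constraint mechanically. Let $T^\ast := T^{\mathcal{M}}(\ttt{input})$ and $S^\ast := S^{\mathcal{M}}(\ttt{input})$, and recall that the grid has dimensions $h \times w$ with $h \geq S^\ast$ and $w \geq T^\ast$. First, invoke Lemma~\ref{lemma:grid_labels_exist} to obtain a half-edge labeling in $\mathcal{E}^{\textnormal{grid}}$ satisfying $\mathcal{C}^{\textnormal{grid}}$; this consistently assigns each node $v$ a pair of coordinates $(i_v, j_v) \in [h]\times [w]$, where rows are numbered top-to-bottom and columns right-to-left as in Def.~\ref{def:grid}.

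Next, define the node labels in $\mathcal{V}^{\mathcal{M}}$ column by column. For column $j \in \{1, \ldots, T^\ast+1\}$ (that is, the right column and the $T^\ast$ columns immediately to its left), associate $j$ with the computation step $j-1$ of $\mathcal{M}$ on $\ttt{input}$. Assign to the node at coordinates $(i,j)$ the label $(t_{i,j}, h_{i,j}, s_{i,j})$, where $t_{i,j}$ is the $i$-th symbol of $\textnormal{TAPE}^{\mathcal{M}}_{\ttt{input}}(j-1)$ (using $\#_{\mathsf{L}}$ at $i=1$ and $\#_{\mathsf{R}}$ in the padded positions beyond $S^\ast$), and $(h_{i,j}, s_{i,j}) = (\mathsf{Head}, \textnormal{STATE}^{\mathcal{M}}_{\ttt{input}}(j-1))$ if $i = \textnormal{HEAD}^{\mathcal{M}}_{\ttt{input}}(j-1)$ and $(\bot, \bot)$ otherwise. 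For the excess columns $j \in \{T^\ast+2, \ldots, w\}$ (if any), simply replicate the labels from column $T^\ast+1$, which by the stated conventions encodes the final configuration with the head at the top cell and a state in $F$. By construction, the labels in the rightmost column spell out $\#_{\mathsf{L}} \circ \ttt{input} \circ \#_{\mathsf{R}}^+$ as required.

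The remaining task is to verify each item of $\mathcal{C}^{\mathcal{M}}$. Constraint~\ref{item:con-first} holds because the top cell of column $1$ is $(\#_{\mathsf{L}}, \mathsf{Head}, q_0)$ and all other cells in that column have $(h, s) = (\bot, \bot)$. Constraint~\ref{item:con-non_final} follows from the fact that our column-to-column transition literally applies the transition function $\delta$: the choice between the three cases of~\ref{item:con-transition} corresponds exactly to the three possible directions $\mathsf{S}, \mathsf{L}, \mathsf{R}$, and cells not under the head are copied unchanged as required by~\ref{item:con-nothing}, with~\ref{item:con-neighboring_head} enforced because a newly-placed head in column $j{+}1$ must be vertically adjacent to the head position in column $j$. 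Constraint~\ref{item:con-state} is immediate from the construction. Constraint~\ref{item:con-accepting_state} holds since the convention places the head at the top cell when entering a final state, and the replication policy for the left-padding columns propagates the final-state label leftward. Finally, constraint~\ref{item:con-borders} is satisfied because $t_{i,j} = \#_{\mathsf{L}}$ exactly at $i=1$ and the bottom padding region (which exists by $h \geq S^\ast$) is uniformly labeled with $\#_{\mathsf{R}}$.

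\paragraph{Main Obstacle.}
This lemma is essentially the converse of Lemma~\ref{lemma:tm_labels_sufficient} and its proof is largely a routine verification, so there is no deep obstacle. The only mildly subtle point is bookkeeping at the boundary: handling the extra rows below row $S^\ast$ (which must be filled with $\#_{\mathsf{R}}$ so as not to trigger~\ref{item:con-borders}) and the extra columns to the left of column $T^\ast+1$ (which must replicate the final configuration verbatim in order to satisfy~\ref{item:con-accepting_state} and~\ref{item:con-nothing} simultaneously). Both are accommodated by the convention that $\mathcal{M}$ enters a final state only with the head at the leftmost tape cell, ensuring the replicated column is internally consistent with all constraints.
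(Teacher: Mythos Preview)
Your proposal is correct and takes essentially the same approach as the paper: the paper gives only a two-line sketch (invoke Lemma~\ref{lemma:grid_labels_exist} for the half-edge labels, then write down the execution history of $\mathcal{M}$ on $\ttt{input}$ as the node labels), whereas you spell out the construction column by column and verify each constraint explicitly. Your handling of the excess rows and columns matches exactly what the surrounding text of the paper prescribes.
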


\begin{proof}[Proof (sketch)]
Since $G$ is a grid, there is a valid $\mathcal{E}^{\textnormal{grid}}$ labeling due to Lemma~\ref{lemma:grid_labels_exist}. 
As for the $\mathcal{V}^{\mathcal{M}}$ labels, one can simply execute $\mathcal{M}$ on the given $\ttt{input}$ and write down the history of the tape as the node labels. This should be done with attention to the way the $\mathcal{V}^{\mathcal{M}}$ labels are formatted (e.g., write them as tuples $(t,h,s)$). Clearly, constraints~\ref{item:con-first}-~\ref{item:con-borders} hold.
\end{proof}

\subsection{High Level Building Blocks}
\label{subsec:high_blocks}

\subsubsection{Augmented Grid Structures}
\label{subsubsec:augmented_grid}

To recall, the problem of \cite{bgk25} is defined so that for graphs outside a certain family, there is a valid output labeling computable in a short time. This is to say, the ``real'' problem (i.e., having a grid where the outputs of the rows are consistent and equal to the inputs of the rightmost nodes) is designed especially for the graphs of that graph family. By \textit{augmented grid structures} (or AG), we refer to that family. Essentially, these are vertical grid structures, with tree-like structures glued to their columns. 

\begin{definition}[\text{\cite[Def. 6.1]{bgk25}}]
\label{def:ag}
A graph is an \textit{augmented grid structure} (AG) iff it can be obtained by:

\begin{itemize}[nosep]
\item Let $l \in \mathbb{N}$ and set $h$ to $2^l$. $w$ is some integer in $\{ 1,...,h\}$.

\item $G$ is a vertical grid structure of dimensions $h \times w$.

\item $T_0, ..., T_{w-1}$ are $w$ tree-like structures of height $l$.

\item The leaves of $T_j$ coincide with the $j$-th column of the grid $G$. That is, a node with coordinates $(l-1,i)$, $i \in [h]$ in the tree $T_j$, $j \in [w]$, is identified with the grid node whose coordinates are $(i, j)$ (i.e. $i$-th row, $j$-th column).

\end{itemize}
\end{definition}

\begin{figure}[t]
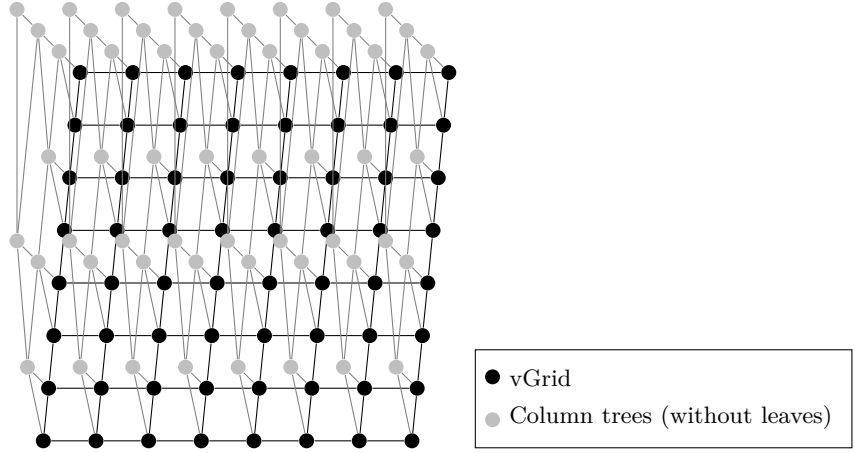

\centering
\includestandalone{figures/ag}
\caption{An example of an \textit{augmented grid} (AG) structure, where the dimensions of the (vertical) grid are $8 \times 8$.}
\label{fig:augmented_grid}
\end{figure}

\paragraph{Proving Violations.} 
Unlike previous constructions, we don't explicitly state a set of constraints for locally verifying AG structures, but in short, one should ensure that $\mathcal{C}^{\textnormal{tree}}$ and $\mathcal{C}^{\textnormal{vGrid}}$ are satisfied with respect to the column trees and the bottom grid. 

Our objective is to define an \textsf{LCL} problem $\Pi^{\textnormal{badAG}}$ where valid output labelings satisfy the following conditions. First, every connected subgraph induced by the nodes with the output label $\bot$ forms an AG equipped with a suitable input labeling. 
On the other hand, the output labeling in subgraphs induced by the nodes with non-$\bot$ outputs consists of pointer chains that lead to the violations. 

To define $\Pi^{\textnormal{badAG}}$, we make use of the low-level building blocks presented earlier. Constraints $\mathcal{C}^{\textnormal{tree}}$ and $\mathcal{C}^{\textnormal{vGrid}}$ (together with the labels associated with them) help to identify local violations of the structure. 
$\mathcal{C}^{\textnormal{badTree}}$ determines how to construct ``proofs'' of these violations, that is, pointer chains that point to the violations.

\begin{definition}[\text{\cite[Sec. 6.1]{bgk25}}]
\label{def:bad_ag}    
$\Pi^{\textnormal{badAG}}$ is an \textsf{LCL} problem $(\mathcal{V}^{\textnormal{badAG}}_{\textnormal{input}}, \mathcal{E}^{\textnormal{badAG}}_{\textnormal{input}} , \mathcal{V}^{\textnormal{badAG}}_{\textnormal{output}}, \mathcal{C}^{\textnormal{badAG}})$.
\begin{itemize}

\item $\mathcal{V}^{\textnormal{badAG}}_{\textnormal{input}} = \{ \mathsf{treeNode} \} \cup \{ ( \mathsf{treeNode}, \mathsf{gridNode} , l) | l \in \mathcal{V}^{\textnormal{vGrid}} \}$

\item $\mathcal{E}^{\textnormal{badAG}}_{\textnormal{input}} = \{ (\mathsf{treeEdge}, l) | l\in \mathcal{E}^{\textnormal{tree}} \} \cup \{ (\mathsf{gridEdge}, l) | l \in \{ \mathsf{L}, \mathsf{R}  \} \} \cup \{ ((\mathsf{gridEdge}, \mathsf{D} ), (\mathsf{treeEdge}, \mathsf{L})) \}$ \\ $\cup \{ ((\mathsf{gridEdge}, \mathsf{U} ), (\mathsf{treeEdge, \mathsf{R}})) \} $

\item $\mathcal{V}^{\textnormal{badAG}}_{\textnormal{output}} = \{ \bot, \textsf{Err}, \mathsf{TreeErr}, \mathsf{GridErr}, \mathsf{VertErr} \} \cup \{ (\mathsf{ColErr}, l) | l \in \mathcal{V}^{\textnormal{badTree}} \}$

\item Constraints $\mathcal{C}^{\textnormal{badAG}}$: see $\mathcal{C}^{\mathsf{badGraph}}$ in \cite[Sec. 6.1]{bgk25}.

\end{itemize}

\end{definition}

\begin{lemma}[\text{\cite[Lemma 6.6]{bgk25}}]
\label{lemma:bad_ag_invalid}
If a graph is equipped with $(\mathcal{V}^{\textnormal{badAG}}_{\textnormal{input}}, \mathcal{E}^{\textnormal{badAG}}_{\textnormal{input}})$ input labeling, there is an $O(\log n)$ round \textsf{LOCAL} algorithm for $\Pi^{\textnormal{badAG}}$ that returns a valid $(\mathcal{V}^{\textnormal{badAG}}_{\textnormal{output}}, \mathcal{E}^{\textnormal{badAG}}_{\textnormal{output}})$ output labeling, such that in the subgraph induced by the nodes whose output label is $\bot$, every connected component is an augmented grid structure.
\end{lemma}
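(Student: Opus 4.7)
The plan is to mirror the algorithm behind $\Pi^{\textnormal{badTree}}$ at the augmented-grid level, delegating the bulk of the work to the tree-based subroutine already available. First, every node examines a constant-radius neighborhood and verifies the local constraints composing $\mathcal{C}^{\textnormal{badAG}}$: the tree constraints $\mathcal{C}^{\textnormal{tree}}$ on its incident tree half-edges, the vertical-grid constraints $\mathcal{C}^{\textnormal{vGrid}}$ on its incident grid half-edges, and the gluing constraints ensuring that the bottom row of each column tree aligns with the top of the corresponding grid column (as encoded by the $\mathcal{V}^{\textnormal{badAG}}_{\textnormal{input}}$ and $\mathcal{E}^{\textnormal{badAG}}_{\textnormal{input}}$ labels). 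A node flags itself as \emph{locally violating} if any of these checks fails.

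Next, every column tree runs the $\Pi^{\textnormal{badTree}}$ subroutine in parallel, where the \textsf{marked} input to $\Pi^{\textnormal{badTree}}$ consists of tree nodes that are either locally violating themselves or whose grid-side leaf is locally violating. By Lemma~\ref{lemma:bad_tree_invalid}, each such invocation produces, within $O(\log n)$ rounds, a valid labeling of the tree with consistent pointer chains leading to the marked nodes. Nodes that remain outside any such chain output $\bot$; the others output the corresponding pointer or error label from $\mathcal{V}^{\textnormal{badAG}}_{\textnormal{output}}$, using $\mathsf{TreeErr}$, $\mathsf{GridErr}$, $\mathsf{VertErr}$ or $(\mathsf{ColErr}, \cdot)$ to distinguish the type of violation being witnessed, as prescribed by $\mathcal{C}^{\textnormal{badAG}}$.

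For correctness, consider any connected component $C$ in the subgraph induced by the $\bot$-labeled nodes. Since no node in $C$ is flagged, the local constraints $\mathcal{C}^{\textnormal{tree}}$, $\mathcal{C}^{\textnormal{vGrid}}$ and the gluing constraints hold throughout $C$. Applying Lemma~\ref{lemma:tree_labels_sufficient} to each column tree inside $C$ shows it is a tree-like structure, and applying Lemma~\ref{lemma:grid_labels_sufficient} together with the existence of a $1$-labeled grid node (enforced by $\mathcal{C}^{\textnormal{vGrid}}$) shows the grid portion is a vertical grid structure of dimensions $h\times w$ with $h\geq w$. Combined with the gluing constraints, $C$ matches Def.~\ref{def:ag} and is therefore an augmented grid structure.

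For the round complexity, the local checks take $O(1)$ rounds and each invocation of $\Pi^{\textnormal{badTree}}$ terminates in $O(\log n)$ rounds by Lemma~\ref{lemma:bad_tree_invalid}. The step I expect to be the main obstacle is arguing that every error can be advertised within the $O(\log n)$ budget: the grid alone has width $w$ that can be as large as $\Theta(\sqrt{n})$, so horizontal propagation of pointer chains would break the bound. The resolution, already built into the design of $\mathcal{C}^{\textnormal{badAG}}$, is to route grid, vertical and column violations \emph{upward} through the column tree, whose depth is $O(\log n)$, thereby reusing the efficient chain-construction that $\Pi^{\textnormal{badTree}}$ already provides.
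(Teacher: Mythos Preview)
The paper does not supply its own proof of this lemma; it is imported verbatim from \cite{bgk24} (their Lemma~6.6), with the constraint set $\mathcal{C}^{\textnormal{badAG}}$ also deferred to that reference. Your sketch is a faithful reconstruction of the intended argument: detect local violations in $O(1)$ rounds, feed them as \textsf{marked} nodes into the per-column instances of $\Pi^{\textnormal{badTree}}$, and exploit the $O(\log n)$ tree depth to route all error evidence upward rather than across the grid. The final paragraph correctly identifies the one non-obvious point, namely that horizontal propagation is avoided entirely.

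Two places deserve a little more care. First, when you invoke Lemma~\ref{lemma:grid_labels_sufficient} on the grid portion of a $\bot$-component $C$, you need conditions~\ref{item:def-no_d_or_no_u} and~\ref{item:def-no_l_or_no_r} (existence of boundary nodes lacking $\mathsf{U}/\mathsf{D}$ or $\mathsf{L}/\mathsf{R}$ edges); these are guaranteed by the gluing constraints in $\mathcal{C}^{\textnormal{badAG}}$ together with the tree structure, but you should say so explicitly rather than folding it into ``the gluing constraints hold.'' Second, your correctness paragraph tacitly assumes that a $\bot$-component cannot be a proper fragment of a larger would-be AG (e.g., a few valid columns adjacent to an invalid one). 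This follows because the pointer-chain mechanism of $\Pi^{\textnormal{badTree}}$, once triggered anywhere in a column tree, labels \emph{every} node of that tree with a non-$\bot$ output, and the grid-level constraints in $\mathcal{C}^{\textnormal{badAG}}$ then force adjacent columns to inherit non-$\bot$ labels as well; spelling this out would close the gap.
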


\begin{lemma}[\text{\cite[Lemma 6.5]{bgk25}}]
\label{lemma:bad_ag_valid}
For an augmented grid structure, there's a $(\mathcal{V}^{\textnormal{badAG}}_{\textnormal{input}}, \mathcal{E}^{\textnormal{badAG}}_{\textnormal{input}})$ input labeling such that the only valid $(\mathcal{V}^{\textnormal{badAG}}_{\textnormal{output}}, \mathcal{E}^{\textnormal{badAG}}_{\textnormal{output}})$ output labeling is having $\bot$ in all nodes.
\end{lemma}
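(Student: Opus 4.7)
The plan is to explicitly construct an input labeling on the AG structure by combining the labelings promised by the lemmas for its constituent building blocks, and then argue that in the absence of any local violation, no nontrivial pointer chain can be sustained, forcing every node to output $\bot$.

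Concretely, let $G$ be an AG structure with vertical grid $H$ of dimensions $h\times w$ and column trees $T_0,\dots,T_{w-1}$. First I would invoke Lemma~\ref{lemma:grid_labels_exist} (together with the augmentation in Section~\ref{subsubsec:vgrid}) to obtain a $(\mathcal{V}^{\textnormal{vGrid}},\mathcal{E}^{\textnormal{grid}})$ labeling of $H$ satisfying $\mathcal{C}^{\textnormal{vGrid}}$, and Lemma~\ref{lemma:tree_labels_exist} to obtain an $\mathcal{E}^{\textnormal{tree}}$ labeling of each $T_j$ satisfying $\mathcal{C}^{\textnormal{tree}}$. I then translate these into the composite input alphabet $(\mathcal{V}^{\textnormal{badAG}}_{\textnormal{input}},\mathcal{E}^{\textnormal{badAG}}_{\textnormal{input}})$: grid nodes are tagged $(\mathsf{treeNode},\mathsf{gridNode},l)$ with $l$ their $\mathcal{V}^{\textnormal{vGrid}}$ label, while inner tree nodes are tagged $\mathsf{treeNode}$; grid edges receive $(\mathsf{gridEdge},l)$ labels, tree edges $(\mathsf{treeEdge},l)$ labels, and the vertical interface half-edges between a grid node and its column-tree parent get the designated combined labels $((\mathsf{gridEdge},\mathsf{D}),(\mathsf{treeEdge},\mathsf{L}))$ and $((\mathsf{gridEdge},\mathsf{U}),(\mathsf{treeEdge},\mathsf{R}))$. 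Since the grid's bottom row and the leaves of the trees were identified in Def.~\ref{def:ag}, this assignment is consistent, and it leaves no node \textsf{marked} in the sense of $\Pi^{\textnormal{badTree}}$.

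Next I would verify that every subconstraint of $\mathcal{C}^{\textnormal{badAG}}$ is indeed satisfied only by the all-$\bot$ output labeling. The argument is local: at every tree node the surrounding labels form a legal $\mathcal{C}^{\textnormal{tree}}$ neighborhood, so by Lemma~\ref{lemma:bad_tree_valid} the only valid $\mathcal{V}^{\textnormal{badTree}}$ assignment there is $\bot$, ruling out all $(\mathsf{ColErr},l)$ outputs except $(\mathsf{ColErr},\bot)$ which $\mathcal{C}^{\textnormal{badAG}}$ identifies with $\bot$; similarly, at every grid node the local view satisfies $\mathcal{C}^{\textnormal{vGrid}}$ and both conditions~\ref{item:def-no_d_or_no_u},~\ref{item:def-no_l_or_no_r} of Lemma~\ref{lemma:grid_labels_sufficient} hold globally, ruling out the $\mathsf{GridErr}$/$\mathsf{VertErr}$ outputs; and at every interface half-edge the combined labels match the allowed pairs, ruling out $\mathsf{TreeErr}$ and $\Err$.

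The main subtlety I expect is handling the ``pointer chain'' portion of $\mathcal{C}^{\textnormal{badAG}}$: a priori a node could output a pointer even without a local violation at itself, provided the chain terminates coherently elsewhere. Here I would use the chain-consistency property recorded in Footnote~3 of Def.~\ref{def:bad_tree}: any maximal chain of non-$\bot$ outputs must either loop (forbidden by the definition, since tree/grid pointers are monotone along a single axis) or end at a node that itself is \textsf{marked} or violates $\mathcal{C}^{\textnormal{tree}}/\mathcal{C}^{\textnormal{vGrid}}$. Since our construction produces no such endpoint anywhere in $G$, no chain can exist, and hence every node must output $\bot$. Combining these observations establishes that the only valid output labeling is the all-$\bot$ one, as claimed.
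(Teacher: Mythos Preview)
The paper does not give its own proof of this lemma; it is quoted verbatim from \cite[Lemma~6.5]{bgk24} and left unproven here. So there is nothing in this paper to compare your argument against.

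Your plan is the natural one and matches the spirit of the original \cite{bgk24} argument: assemble the canonical input labeling from the tree and vertical-grid labelings guaranteed by Lemmas~\ref{lemma:tree_labels_exist} and the vertical-grid analogue, check that every local constraint in $\mathcal{C}^{\textnormal{badAG}}$ is satisfied, and then observe that the pointer-chain mechanism cannot fire because every chain must terminate at a genuine local error and none exist.

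Two small inaccuracies worth fixing. First, the combined half-edge labels $((\mathsf{gridEdge},\mathsf{D}),(\mathsf{treeEdge},\mathsf{L}))$ and $((\mathsf{gridEdge},\mathsf{U}),(\mathsf{treeEdge},\mathsf{R}))$ are not for edges ``between a grid node and its column-tree parent'': in Def.~\ref{def:ag} the grid column nodes \emph{are} the tree leaves (they are identified, not joined by an extra edge), so the vertical grid edges within a column double as the leaf-layer $\mathsf{L}/\mathsf{R}$ path edges of the column tree, and it is those edges that carry the combined label. Second, $\Pi^{\textnormal{badAG}}$ has no $\mathsf{marked}/\mathsf{unmarked}$ bit in $\mathcal{V}^{\textnormal{badAG}}_{\textnormal{input}}$ (unlike $\Pi^{\textnormal{badTree}}$), so the remark about leaving no node \textsf{marked} is vacuous here. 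Neither point affects the correctness of your overall strategy.
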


\subsubsection{Small-Diameter Augmented Grid Structures}
\label{subsubsec:sag}

Balliu, Censor-Hillel, Maus, Olivetti, and Suomela \cite{bcm21} designed a variant of the \textit{augmented grid structures} (AG) described in the previous section. Picture an AG where the bottom grid is not limited to being vertical, with a grid structure attached to the side of the trees and an additional tree positioned at the top (see Fig.~\ref{fig:sag}).
The motivation behind this construction is to reduce the diameter of the AG to being logarithmic. Such graphs enjoy an advantage in the ability to transmit information from one end to another. 
Every two nodes can exchange messages within $O(\log n)$ rounds.

\begin{definition}[\text{\cite[Def. 6.4]{bcm21}}]
\label{def:sag}
A graph is a \textit{small-diameter augmented grid structure} (SAG) iff it can be obtained by:

\begin{itemize}[nosep]
\item Let $l, l' \in \mathbb{N}$ and set $h$, $w$ to $2^l$, $2^{l'}$, respectively. 

\item $G_1$ is a grid structure of dimensions $h \times w$.

\item $G_2$ is a grid structure of dimensions $l \times w$.

\item $T_0, ..., T_{w-1}$ are $w$ tree-like structures of height $l$.

\item $T_{\mathsf{top}}$ is a tree-like structure of height $l'$.

\item The leaves of $T_j$ coincide with the $j$-th column of the bottom grid $G_1$: node $(l-1,i)$ in the $T_j$ is identified with the grid node whose coordinates are $(i, j)$.

\item The leftmost nodes of $T_j$ coincide with the $j$-th column of the side grid $G_2$: node $(l - 1 - i, 0)$ in $T_j$ is identified with the grid node whose coordinates are $(i, j)$.

\item The roots of $T_0, ..., T_{w-1}$ coincide with the leaves of $T_{\mathsf{top}}$: the root of $T_i$ is identified with the top tree node whose coordinates are $(l' -1 , i)$.

\end{itemize}
\end{definition}

\begin{figure}[t]
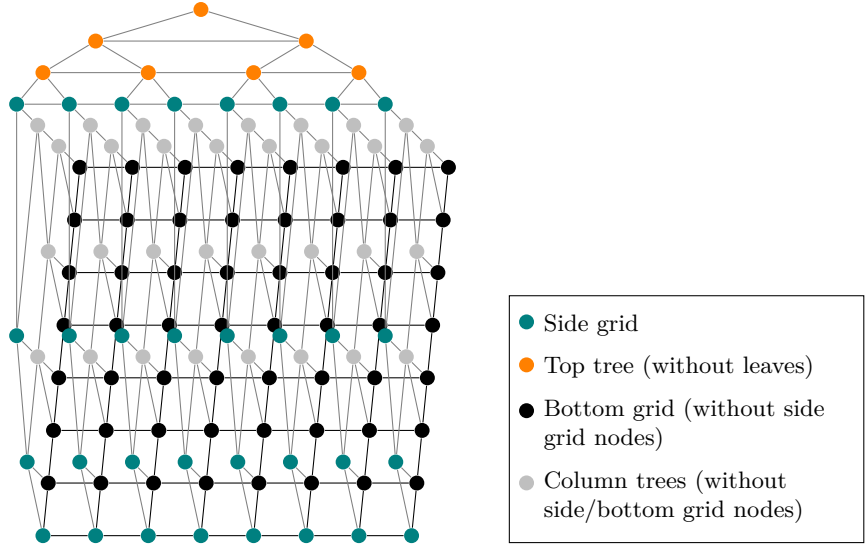

\centering
\includestandalone{figures/sag}
\caption{An example of a \textit{small-diameter augmented grid} (SAG) structure, when the dimensions of the bottom grid are $8 \times 8$.}
\label{fig:sag}
\end{figure}

\paragraph{Local Verification.}
The labels and the constraints associated with SAGs are as follows.

\begin{itemize}

\item Half-edge labels: $\mathcal{E}^{\textnormal{SAG}}$. A label is a subset of $\{ (t, l) | t \in \{ \mathsf{colTree}, \mathsf{topTree} \}, l \in \mathcal{E}^{\text{tree}}  \} \cup \{ (t, l) | t \in \{ \mathsf{bottomGrid}, \mathsf{side Grid} \} , l \in \mathcal{E}^{\text{grid}} \} $

\item Constraints $\mathcal{C}^{\textnormal{SAG}}$: see $\mathcal{C}^{\mathsf{proof}}$ in \cite[Sec. 6.2]{bcm21}.

\end{itemize}

\begin{lemma}[\text{\cite[Lemma 6.7]{bcm21}}]
\label{lemma:sag_labels_sufficient}
If a graph equipped with $\mathcal{E}^{\textnormal{SAG}}$ labels satisfies $\mathcal{C}^{\textnormal{SAG}}$, then it is a \textit{small-diameter augmented grid structure} (SAG).
\end{lemma}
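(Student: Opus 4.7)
The plan is to mirror the strategy used for Lemma~\ref{lemma:tree_labels_sufficient} and Lemma~\ref{lemma:grid_labels_sufficient}: the labels in $\mathcal{E}^{\textnormal{SAG}}$ already tag each half-edge with the substructure it is meant to belong to (colTree, topTree, bottomGrid, or sideGrid), so we first peel the graph into those substructures, verify each one using the low-level lemmas from Section~\ref{subsec:low_blocks}, and finally check that the boundary identifications of Definition~\ref{def:sag} are enforced.

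First I would partition the edges according to their tags. Let $E_{T_j}$ (for $j \in [w]$), $E_{T_{\mathsf{top}}}$, $E_{G_1}$, $E_{G_2}$ be the edge sets obtained by restricting to labels of the corresponding type. The constraints $\mathcal{C}^{\textnormal{SAG}}$ enforce that the restriction of the SAG labeling to each of these subgraphs is a well-formed $\mathcal{E}^{\textnormal{tree}}$ or $\mathcal{E}^{\textnormal{grid}}$ labeling (e.g., the tree constraints $\mathcal{C}^{\textnormal{tree}}$ hold on the column-tree edges, and $\mathcal{C}^{\textnormal{grid}}$ holds on each grid). Invoking Lemma~\ref{lemma:tree_labels_sufficient} on each $T_j$ and on $T_{\mathsf{top}}$ shows they are tree-like structures, yielding coordinates $(d,i)$ within each tree and in particular the heights $l$ and $l'$. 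For the grids $G_1, G_2$ we need to supply the two extra side conditions of Lemma~\ref{lemma:grid_labels_sufficient}: these are provided precisely by the interface with the trees, since a leaf of a column tree has no $\mathsf{Ch}_{\mathsf{L}},\mathsf{Ch}_{\mathsf{R}}$ half-edges, and the constraints $\mathcal{C}^{\textnormal{SAG}}$ force such a leaf to be simultaneously a boundary node of $G_1$ (missing a $\mathsf{D}$ half-edge) and of $G_2$ (missing an $\mathsf{L}$ or $\mathsf{R}$ half-edge). With those conditions in hand, Lemma~\ref{lemma:grid_labels_sufficient} certifies that $G_1$ and $G_2$ are grid structures of well-defined dimensions.

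Next I would verify the three gluing conditions of Definition~\ref{def:sag}. For each shared node, the SAG constraints require its labels (as viewed in the tree and as viewed in the grid) to be mutually consistent, which forces the coordinate identifications: a node with coordinates $(l-1,i)$ in $T_j$ must have coordinates $(i,j)$ in $G_1$, a node $(l-1-i,0)$ in $T_j$ must have coordinates $(i,j)$ in $G_2$, and a root of $T_j$ must have coordinates $(l'-1,j)$ in $T_{\mathsf{top}}$. These identifications also propagate the common width $w$ across $G_1, G_2, T_{\mathsf{top}}$ and the common height $h=2^l$ between $G_1$ and the column trees, so the parameter constraints $h=2^l$, $w=2^{l'}$ in Definition~\ref{def:sag} are forced. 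From this, the graph is uniquely (up to the specified identifications) reconstructed as an SAG.

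The main obstacle I anticipate is the second step: showing that the side conditions of Lemma~\ref{lemma:grid_labels_sufficient} truly follow from $\mathcal{C}^{\textnormal{SAG}}$ without implicitly assuming what we want to prove. The SAG constraint set must be strong enough to rule out pathological wrap-around grids (cylinders or tori), and the argument has to leverage the tree structure at the boundary, where we do know from Lemma~\ref{lemma:tree_labels_sufficient} that leaves and roots exist and behave as expected. Everything else reduces to book-keeping of coordinates, following the template of \cite[Lemma 6.7]{bcm21}.
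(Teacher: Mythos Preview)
The paper does not give a proof of this lemma: it is imported directly from \cite[Lemma~6.7]{bcm21} and stated without argument, so there is nothing in the present paper to compare your proposal against. Your outline is a reasonable reconstruction of the expected argument (peel off the tagged substructures, apply Lemmas~\ref{lemma:tree_labels_sufficient} and~\ref{lemma:grid_labels_sufficient}, then check the interface constraints), and indeed this is the shape of the proof in \cite{bcm21}; but since the paper under review simply cites the result, your write-up goes beyond what the paper itself contains.
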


\begin{lemma}[\text{\cite[Lemma 6.8]{bcm21}}]
\label{lemma:sag_labels_exist}
Any \textit{small-diameter augmented grid structure} (SAG) has a $\mathcal{E}^{\textnormal{SAG}}$ labeling that satisfies $\mathcal{C}^{\textnormal{SAG}}$.
\end{lemma}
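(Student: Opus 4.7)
}
The plan is to construct the $\mathcal{E}^{\textnormal{SAG}}$ labeling in a modular fashion, taking advantage of the fact that a SAG is literally a union of simpler structures whose individual labelings we already know how to produce. Given a SAG $G$, decompose it into its constituent subgraphs as in Def.~\ref{def:sag}: the bottom grid $G_1$ of dimensions $h \times w$, the side grid $G_2$ of dimensions $l \times w$, the $w$ column trees $T_0, \dots, T_{w-1}$ of height $l$, and the top tree $T_{\textsf{top}}$ of height $l'$. By Lemma~\ref{lemma:grid_labels_exist}, each of $G_1$ and $G_2$ admits an $\mathcal{E}^{\textnormal{grid}}$ labeling satisfying $\mathcal{C}^{\textnormal{grid}}$, and by Lemma~\ref{lemma:tree_labels_exist}, each of $T_0, \dots, T_{w-1}, T_{\textsf{top}}$ admits an $\mathcal{E}^{\textnormal{tree}}$ labeling satisfying $\mathcal{C}^{\textnormal{tree}}$. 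Fix these inner labelings.

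Next, I would define the SAG labeling on each half-edge $(v, e)$ as the set of tagged labels inherited from every constituent subgraph in which $(v, e)$ appears. Concretely, if $(v, e)$ lies in a column tree $T_j$ with inner label $l \in \mathcal{E}^{\textnormal{tree}}$, include $(\mathsf{colTree}, l)$; if it lies in $T_{\textsf{top}}$, include $(\mathsf{topTree}, l)$; if it lies in $G_1$ with inner label $l \in \mathcal{E}^{\textnormal{grid}}$, include $(\mathsf{bottomGrid}, l)$; and if it lies in $G_2$, include $(\mathsf{sideGrid}, l)$. Because the gluing in Def.~\ref{def:sag} identifies certain nodes across components (leaves of $T_j$ with the $j$-th column of $G_1$, leftmost nodes of $T_j$ with the $j$-th column of $G_2$, roots of the $T_j$'s with leaves of $T_{\textsf{top}}$), a given half-edge may receive multiple tagged labels—exactly what the format of $\mathcal{E}^{\textnormal{SAG}}$ (subsets of tagged labels) is designed to accommodate.

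To verify that this labeling satisfies $\mathcal{C}^{\textnormal{SAG}}$, I would invoke the fact (implicit in \cite{bcm21}) that $\mathcal{C}^{\textnormal{SAG}}$ is built by bundling together the individual constraints $\mathcal{C}^{\textnormal{tree}}$ and $\mathcal{C}^{\textnormal{grid}}$ restricted to the corresponding tag, plus gluing constraints that enforce the identifications listed in Def.~\ref{def:sag}. The per-tag constraints hold automatically by our choice of inner labelings. For the gluing constraints, one checks component by component: at a leaf of $T_j$ the tree's child-edges are absent, leaving room for the grid's horizontal $\mathsf{L}/\mathsf{R}$ edges of $G_1$ to attach; at the leftmost column of each $T_j$ the missing sibling edge is replaced by the corresponding $G_2$ edge; and at the roots of the $T_j$'s, the missing parent edge is supplied by $T_{\textsf{top}}$. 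Each such check is a routine compatibility verification based purely on the coordinates prescribed by Def.~\ref{def:sag}.

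The main obstacle, such as it is, will be bookkeeping: making sure the tagged labels align correctly at every glued interface and that no spurious edge exists that would leave a constraint unsatisfied. Since everything is constructive and the identifications are explicit, the proof reduces to a case analysis over the four types of subcomponents and the three types of interfaces. I expect the argument to go through cleanly by citing Lemmas~\ref{lemma:tree_labels_exist} and~\ref{lemma:grid_labels_exist} and then observing that the SAG constraints decompose into their per-component and per-interface parts.
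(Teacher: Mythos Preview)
The paper does not provide its own proof of this lemma; it is imported verbatim from \cite[Lemma 6.8]{bcm21} and stated without argument. Your proposal is the natural constructive approach---decompose the SAG into its constituent trees and grids, invoke Lemmas~\ref{lemma:tree_labels_exist} and~\ref{lemma:grid_labels_exist} for the inner labelings, and tag each half-edge with the roles it plays---and this is exactly what the cited reference does, so there is nothing to compare.
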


\paragraph{Proving Violations of \boldmath$\mathcal{C}^{\textnormal{SAG}}$.}

Another attractive property of this construction is the existence of a round-efficient as well as computationally efficient algorithm that generates proofs for local errors on invalid graphs and input labelings. 
Similarly to the \textsf{LCL} problems $\Pi^{\textnormal{badTree}}$ and $\Pi^{\textnormal{badAG}}$, which served a comparable role in the context of proving violations in tree-like structures and AGs, we denote by $\Pi^{\textnormal{badSAG}}$ the \textsf{LCL} problem whose valid output labelings are either $\bot$ or proofs of violations of $\mathcal{C}^{\textnormal{SAG}}$.

\begin{definition}[\text{\cite[Sec. 6.3]{bcm21}}]
\label{def:bad_sag}    

$\Pi^{\textnormal{badSAG}}$ is an \textsf{LCL} problem $(\mathcal{V}^{\textnormal{badSAG}}_{\textnormal{input}}, \mathcal{E}^{\textnormal{badSAG}}_{\textnormal{input}} , \mathcal{V}^{\textnormal{badSAG}}_{\textnormal{output}}, \mathcal{E}^{\textnormal{badSAG}}_{\textnormal{output}}, \mathcal{C}^{\textnormal{badSAG}})$.

\begin{itemize}
\item $\mathcal{V}^{\textnormal{badSAG}}_{\textnormal{input}} = \{ \mathsf{marked}, \mathsf{unmarked} \}$

\item $\mathcal{E}^{\textnormal{badSAG}}_{\textnormal{input}} = \mathcal{E}^{\textnormal{SAG}}$

\item $\mathcal{V}^{\textnormal{badSAG}}_{\textnormal{output}} = \{ \mathsf{pointer}, \textsf{Err} , \bot\} $

\item $\mathcal{E}^{\textnormal{badSAG}}_{\textnormal{output}} = \{ \bot \} \cup \{ (\mathsf{pointer}, c, p ,t) | c \in [3], p \in \{ \mathsf{L}, \mathsf{R}, \P, \mathsf{Ch}_{\mathsf{R}} \} ,t \in \{ \mathsf{colTree}, \mathsf{topTree} \} \}$

\item Constraints $\mathcal{C}^{\textnormal{badSAG}}$. See \cite[Sec. 6.3]{bcm21}.

\end{itemize}
\end{definition}

\begin{lemma}[\text{\cite[Lemma 6.9]{bcm21}}]
\label{lemma:bad_sag_valid}
If a graph is a SAG, then there is an $(\mathcal{V}^{\textnormal{badSAG}}_{\textnormal{input}}, \mathcal{E}^{\textnormal{badSAG}}_{\textnormal{input}})$ input labeling such that the only valid $(\mathcal{V}^{\textnormal{badSAG}}_{\textnormal{output}}, \mathcal{E}^{\textnormal{badSAG}}_{\textnormal{output}})$ output labeling is having $\bot$ in every node and half-edge.
\end{lemma}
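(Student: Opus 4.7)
The plan is to construct the desired input labeling directly from the SAG structure, and then argue that any non-$\bot$ output would require a ``target'' for pointer chains to terminate at, which cannot exist. This mirrors the structure of Lemma~\ref{lemma:bad_tree_valid} for tree-like structures, lifted to the richer setting of SAGs, and is consistent with the analogous arguments in~\cite{bcm21}.

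First, I would invoke Lemma~\ref{lemma:sag_labels_exist} on the given SAG $G$ to obtain a half-edge labeling $L_{\mathsf{SAG}} \in \mathcal{E}^{\textnormal{SAG}}$ for which the constraints $\mathcal{C}^{\textnormal{SAG}}$ are satisfied at every node. I would then define the input labeling for $\Pi^{\textnormal{badSAG}}$ by assigning every half-edge its label from $L_{\mathsf{SAG}}$ (recall $\mathcal{E}^{\textnormal{badSAG}}_{\textnormal{input}} = \mathcal{E}^{\textnormal{SAG}}$) and every node the label $\mathsf{unmarked}$. With this input, the assignment that gives every node the output $\bot$ and every half-edge the output $\bot$ is valid, simply because $\mathcal{C}^{\textnormal{SAG}}$ is locally satisfied everywhere and no node is $\mathsf{marked}$, so no pointer chain is ever required to originate.

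The heart of the argument is to show that \emph{no other} output labeling is valid. The constraints $\mathcal{C}^{\textnormal{badSAG}}$ (described in \cite[Sec.~6.3]{bcm21}) enforce two properties analogous to $\mathcal{C}^{\textnormal{badTree}}$: (i) a node may output $\Err$ only if it is either $\mathsf{marked}$ or witnesses a local violation of $\mathcal{C}^{\textnormal{SAG}}$; and (ii) a node outputting $\mathsf{pointer}$ must have exactly one incident half-edge with a $(\mathsf{pointer}, c, p, t)$ output label, and this pointer must be consistent with a neighbor that itself outputs $\mathsf{pointer}$ or $\Err$. Hence every maximal pointer chain must terminate at a node that outputs $\Err$. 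But under our chosen input, no node is $\mathsf{marked}$, and by construction of $L_{\mathsf{SAG}}$ no node witnesses a local $\mathcal{C}^{\textnormal{SAG}}$ violation, so no $\Err$ node can exist. Consequently no node can output $\mathsf{pointer}$ either (its chain would have nowhere to end), forcing every node to output $\bot$. The half-edge constraints then immediately force every half-edge label to be $\bot$ as well, since the $(\mathsf{pointer}, c, p, t)$ half-edge labels are only permitted when they sit along a pointer chain emanating from a $\mathsf{pointer}$ node.

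The main subtlety I anticipate is checking that the chain-termination argument really has no ``loopholes''—in particular, that an adversarial output cannot form an infinite or cyclic pointer chain that never hits $\Err$. This is ruled out by the chain-consistency portion of $\mathcal{C}^{\textnormal{badSAG}}$, which (as in $\mathcal{C}^{\textnormal{badTree}}$, see Footnote to Lemma~\ref{lemma:bad_tree_invalid}) is designed so that following pointers yields a monotone progression toward the target: within a tree-like component the pointer direction is monotone toward the root or toward a fixed layer, and within a grid component it is monotone in one of the grid coordinates. Since $G$ is finite, the monotonicity implies chain termination, and termination can only be at an $\Err$ node. Absent any such node, only the all-$\bot$ labeling survives, completing the proof.
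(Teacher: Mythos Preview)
The paper does not supply its own proof of this lemma; it simply cites \cite[Lemma 6.9]{bcm21} and uses the result as a black box (indeed, the later proof of Lemma~\ref{lemma:bad_satm_valid} explicitly defers to ``the proof of Lemma~\ref{lemma:bad_sag_valid}'' without restating it). So there is no in-paper argument to compare against.

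That said, your reconstruction is faithful to the shape of the argument in \cite{bcm21} and to the analogous tree-level Lemma~\ref{lemma:bad_tree_valid}: take the $\mathcal{E}^{\textnormal{SAG}}$ labeling from Lemma~\ref{lemma:sag_labels_exist}, set every node to $\mathsf{unmarked}$, observe that no node can legally output $\Err$, and then use the pointer-chain consistency rules to rule out any $\mathsf{pointer}$ output. Your handling of the ``no cycles'' subtlety via monotonicity of pointer directions is exactly the mechanism the $\mathcal{C}^{\textnormal{badSAG}}$ constraints are engineered to provide. Nothing is missing.
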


\begin{lemma}[\text{\cite[Lemmas 6.10, 6.11]{bcm21}}]
\label{lemma:bad_sag_invalid}
If a graph equipped with $(\mathcal{V}^{\textnormal{badSAG}}_{\textnormal{input}}, \mathcal{E}^{\textnormal{badSAG}}_{\textnormal{input}})$ labels either has a \textsf{marked} node, or it doesn't satisfy $\mathcal{C}^{\textnormal{SAG}}$, then there's a valid $(\mathcal{V}^{\textnormal{badSAG}}_{\textnormal{output}}, \mathcal{E}^{\textnormal{badSAG}}_{\textnormal{output}})$ output labeling with consistent pointer chains leading to the ``marked'' nodes. The output labeling can be computed by an $O(\log n)$ round \textsf{LOCAL} algorithm.
\end{lemma}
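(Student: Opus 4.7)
}
The plan is to exhibit an explicit $O(\log n)$-round \LOCAL algorithm that either labels every node/half-edge $\bot$ (impossible here, since some node is marked or a local constraint fails) or produces pointer chains that each terminate at an $\mathsf{Err}$ output at a marked node. Call a node \emph{bad} if it carries the input label $\mathsf{marked}$, or if one of the constraints in $\mathcal{C}^{\textnormal{SAG}}$ is violated inside its constant-radius neighborhood. Every node can determine in $O(1)$ rounds whether it is bad, and bad nodes will output $\mathsf{Err}$. The remaining work is to assign, to each non-bad node, a pointer chain through its incident half-edges that ends at some bad node, while being locally consistent in the sense of $\mathcal{C}^{\textnormal{badSAG}}$.

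The route design mirrors the three-layer geometry of a SAG, which motivates the three chain indices $c \in [3]$ in $\mathcal{E}^{\textnormal{badSAG}}_{\textnormal{output}}$. Phase $c = 1$ walks upward through a column tree (using the $\mathsf{colTree}$ half-edge labels $\mathsf{P}$); phase $c = 2$ walks along the top tree (using $\mathsf{topTree}$ half-edge labels $\mathsf{L}, \mathsf{R}, \mathsf{P}, \mathsf{Ch}_{\mathsf{R}}$) to reach the column containing the target bad node; phase $c = 3$ descends that column tree (using $\mathsf{colTree}$ labels $\mathsf{Ch}_{\mathsf{R}}, \mathsf{L}, \mathsf{R}$) until the bad node is reached. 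Because a SAG has diameter $O(\log n)$ (column trees of height $l = O(\log h)$ and a top tree of height $l' = O(\log w)$, with $n = \Theta(h w)$), every such chain has length $O(\log n)$, matching the desired round bound.

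The algorithmic implementation uses standard tree-aggregation. Each non-bad node $v$ first sends, during an up-pass in its column tree, a message encoding the set (or a summary) of bad nodes detected in its subtree, together with the identifier of the column. Roots then exchange this information via an up-pass on the top tree, producing at every top-tree node a summary of bad columns in its subtree; a corresponding down-pass distributes to every column root the identity of the nearest (say, leftmost / smallest-ID) bad node to target. A final down-pass inside each column tree tells the leaves which target to use. Given the target, each $v$ computes its outgoing half-edge pointer label in $O(1)$ time from its local structural labels: either one step up its column tree (phase 1), one step toward the target column inside the top tree (phase 2), or one step downward in the target column (phase 3). Writing the pointers on the half-edges lets neighbors check the consistency constraints of $\mathcal{C}^{\textnormal{badSAG}}$ locally.

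The main obstacle is handling nodes whose neighborhoods are corrupted: if a node itself is non-bad but some required $\mathcal{E}^{\textnormal{SAG}}$ label it needs for navigation is inconsistent or missing, the tree-aggregation scheme above could stall or produce an inconsistent target. This is precisely the reason bad nodes are defined via a radius-$O(1)$ inspection that includes local tree/grid-structure checks: any node that would be missing a required neighbor, or that would observe conflicting half-edge labels, is itself declared bad and outputs $\mathsf{Err}$. Consequently, along any non-bad chain the local SAG structure is intact, so each step of the walk is well-defined; and because every maximal chain must eventually leave the all-good region, it terminates at a bad node (which exists by hypothesis). The consistency clauses of $\mathcal{C}^{\textnormal{badSAG}}$, together with the fixed phase order $1 \to 2 \to 3$ and the deterministic tie-breaking on the target (e.g., smallest-ID bad node reachable), then make the produced labeling satisfy all local constraints, completing the proof.
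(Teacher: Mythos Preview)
The paper does not prove this lemma; it is imported from \cite[Lemmas~6.10, 6.11]{bcm21}, so there is no in-paper argument to compare against.

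Your three-phase routing scheme through the column trees and the top tree is the right idea and matches the cited construction. The gap is in the $O(\log n)$ bound. You justify it by appealing to the $O(\log n)$ diameter of a SAG, but the lemma's hypothesis is precisely that the graph may \emph{fail} $\mathcal{C}^{\textnormal{SAG}}$. Your corruption handling ensures each individual step through the non-bad region is well-defined, and the phase ordering $1\!\to\!2\!\to\!3$ together with the consistency rules prevents the chain from doubling back; but neither of these bounds the \emph{length} of a phase, so your tree-aggregation passes need not terminate in $O(\log n)$ rounds on an arbitrary labeled $n$-vertex graph. The claim that ``every maximal chain must eventually leave the all-good region'' is likewise unjustified: nothing you wrote rules out a locally-valid parent-pointer path of length $\omega(\log n)$ through non-bad nodes.

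What fills the gap is a counting argument against $|V|=n$: if from a non-bad node $v$ one can ascend more than roughly $\log n$ parent-steps in a column tree without reaching a root or a bad node, then the two-child constraint at every non-bad ancestor, together with the $\mathsf{L}/\mathsf{R}$ layer-linking rules in $\mathcal{C}^{\textnormal{tree}}$ (which force the resulting descendants to be pairwise distinct), produces more than $n$ distinct vertices inside $v$'s $O(\log n)$-ball, which is impossible. The same reasoning handles the top-tree phase. Hence every non-bad node sees a bad node within $O(\log n)$ hops, and a direct $O(\log n)$-hop exploration (rather than aggregation over a tree of unknown depth) suffices to compute each node's pointer. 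Once this step is supplied, your three-phase routing goes through.
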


\subsubsection{Small-Diameter Augmented \texorpdfstring{\textsf{TM}}{TM} Encoding Structures}
\label{subsec:satm}

In the \textit{\textsf{TM} encoding structures} introduced in~\ref{subsubsec:tm_encode_structure}, the only special requirement from the grid was to be of large enough dimensions. This was sufficient when all we had in mind was local verification. That is, every node should be aware of errors taking place in its constant radius neighborhood. Our actual objective demands more than that. 
Similarly to the way we verify tree-like structures, AGs and SAGs, we want to define an \textsf{LCL} problem that on a valid \textsf{TM} encoding structure there will exist an input labeling for which the only acceptable output labeling is $\bot$, whereas invalid structures can output proofs for the violations they have. 
Proofs of this type imply that nodes become aware of errors far away from them. 
Unfortunately, constructing such proofs in a small number of rounds is impossible in grid structures (and therefore, also in \textsf{TM} encoding structures), as the information cannot travel so fast. 

To tackle this issue, we encode the computation of the \textsf{TM} in the labels of the bottom grid of a SAG (see Section~\ref{subsubsec:sag}). 
In addition to $C^{\textnormal{SAG}}$, we also require $\mathcal{C}^{\mathcal{M}}$ to hold (with respect to some \textsf{TM} $\mathcal{M}$ that we will choose). 
Violations of $\mathcal{C}^{\mathcal{M}}$ are handled using techniques similar to those employed for addressing local structural errors in the SAG.
Namely, the output labeling consists of pointer chains that lead to the errors, which may now be related to issues in the simulation of the \textsf{TM}. 
This will be called \textit{small-diameter augmented \textsf{TM} encoding structures} (SATM). 

Another advantage of this structure (compared to the standard \textit{\textsf{TM} encoding structures}) is that we get conditions \ref{item:def-no_d_or_no_u}, \ref{item:def-no_l_or_no_r} (of Lemma~\ref{lemma:grid_labels_sufficient}) for free.

\begin{definition}
\label{def:satm}
Let $\mathcal{M} = (Q, \Sigma, F, q_0, \delta)$ and $\ttt{input} \in (\Sigma \backslash \{ \#_{\mathsf{L}}, \#_{\mathsf{R}} \})^*$.
A graph is a \textit{small-diameter augmented \textsf{TM} encoding structures} (SATM) with respect to $\mathcal{M}$ and $\ttt{input}$ if and only if it is a \textit{small-diameter augmented grid} (SAG), and the bottom grid structure forms a \textit{\textsf{TM} encoding structure} with respect to $\mathcal{M}$ and $\ttt{input}$.
\end{definition}

\paragraph{Local Verification.}
Since SATM is a mix of two previously discussed graph structures, it makes sense to define the labels and the constraints in the following way. Note that the labels and the constraints depend on $\mathcal{M}$. For brevity, we omit $\mathcal{M}$ from the notation, as we typically focus on a single Turing machine at a time.

\begin{itemize}

\item Node labels: $\mathcal{V}^{\textnormal{SATM}} : = \{ \Pi_2 \} \cup \{ (\mathsf{tmNode}, l) | l \in \mathcal{V}^{\mathcal{M}} \}$

The $(\mathsf{tmNode}, ?)$ labels are to be assigned to the SAG's bottom grid. The rest are labeled with~$\Pi_2$.

\item Half-edge labels: $\mathcal{E}^{\textnormal{SATM}} : = \mathcal{E}^{\textnormal{SAG}}$

\item Constraints $\mathcal{C}^{\textnormal{SATM}}$:

\begin{itemize}[nosep]
\item With respect to the half-edge labels, $\mathcal{C}^{\textnormal{SAG}}$ must be satisfied.

\item In the subgraph induced by edges with $\mathcal{E}^{\text{SAG}}$ half-edge labels $(\mathsf{bottomGrid}, ?)$, constraints $\mathcal{C}^{\mathcal{M}}$ must be satisfied.

\end{itemize}
\end{itemize}

\begin{lemma}
\label{lemma:satm_labels_sufficient}
Let $\mathcal{M} = (Q, \Sigma, F, q_0, \delta)$.
If a graph equipped with $(\mathcal{V}^{\textnormal{SATM}}, \mathcal{E}^{\textnormal{SATM}})$ labels (for $\mathcal{M}$) satisfies $\mathcal{C}^{\textnormal{SATM}}$, then it is a \textit{small-diameter augmented \textsf{TM} encoding structure} (SATM) with respect to $\mathcal{M}$ and $\ttt{input}$, for \textit{some} string $\ttt{input} \in (\Sigma \backslash \{ \#_{\mathsf{L}}, \#_{\mathsf{R}} \})^*$.
\end{lemma}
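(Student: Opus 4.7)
The plan is to deduce the two ``halves'' of the SATM structure separately and then glue them together via Def.~\ref{def:satm}. First I would apply Lemma~\ref{lemma:sag_labels_sufficient} to the $\mathcal{E}^{\textnormal{SATM}} = \mathcal{E}^{\textnormal{SAG}}$ labeling: since $\mathcal{C}^{\textnormal{SATM}}$ requires $\mathcal{C}^{\textnormal{SAG}}$ to hold, the underlying graph is a SAG. In particular, this identifies a distinguished bottom grid $G_1$ of dimensions $h\times w$, column trees $T_0,\dots,T_{w-1}$, a side grid $G_2$, and a top tree $T_{\mathsf{top}}$, all glued together as in Def.~\ref{def:sag}.

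Next, I would zoom in on the subgraph $G_1$ induced by the half-edges labeled $(\mathsf{bottomGrid},?)$. By the second clause of $\mathcal{C}^{\textnormal{SATM}}$, these half-edges, together with the $\mathcal{V}^{\textnormal{SATM}}$ node labels restricted to $G_1$ (which, by the formatting constraint in $\mathcal{V}^{\textnormal{SATM}}$, are exactly $(\mathsf{tmNode},l)$ for $l\in\mathcal{V}^{\mathcal{M}}$), satisfy $\mathcal{C}^{\mathcal{M}}$ and $\mathcal{C}^{\textnormal{grid}}$. I can therefore try to apply Lemma~\ref{lemma:tm_labels_sufficient}, whose remaining hypotheses are conditions~\ref{item:def-no_d_or_no_u} and~\ref{item:def-no_l_or_no_r} of Lemma~\ref{lemma:grid_labels_sufficient}.

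The step I expect to be the main (though mild) obstacle is verifying these two conditions from the SAG geometry alone. The point is that in any SAG the bottom grid $G_1$ has honest boundary nodes: its bottom row consists of nodes with no $(\mathsf{bottomGrid},\mathsf{D})$ half-edge (giving condition~\ref{item:def-no_d_or_no_u}), and its rightmost column consists of nodes with no $(\mathsf{bottomGrid},\mathsf{R})$ half-edge (giving condition~\ref{item:def-no_l_or_no_r}); none of the SAG's additional attachments add bottom-grid half-edges to these nodes, because the trees attach via $(\mathsf{colTree},?)$ labels and the side grid via $(\mathsf{sideGrid},?)$ labels. This is a direct consequence of Lemma~\ref{lemma:sag_labels_sufficient} together with the clean separation that $\mathcal{E}^{\textnormal{SAG}}$ enforces between the four gadget types.

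Having established both conditions, Lemma~\ref{lemma:tm_labels_sufficient} yields that $G_1$ is a \TM encoding structure with respect to $\mathcal{M}$ and some string $\ttt{input}\in(\Sigma\setminus\{\#_{\mathsf{L}},\#_{\mathsf{R}}\})^*$, where $\ttt{input}$ is read off from the labels of the rightmost column of $G_1$ (stripping the enclosing $\#_{\mathsf{L}}$ and $\#_{\mathsf{R}}$'s, as in the proof of Lemma~\ref{lemma:tm_labels_sufficient}). Combining this with the already established SAG structure and invoking Def.~\ref{def:satm} finishes the proof.
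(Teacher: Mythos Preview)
Your proposal is correct and follows essentially the same approach as the paper: apply Lemma~\ref{lemma:sag_labels_sufficient} to conclude the graph is a SAG, then apply Lemma~\ref{lemma:tm_labels_sufficient} to the bottom grid to conclude it is a \TM encoding structure for some $\ttt{input}$. The paper's proof is terser and does not spell out the verification of conditions~\ref{item:def-no_d_or_no_u} and~\ref{item:def-no_l_or_no_r}; it simply remarks elsewhere that these come ``for free'' from the SAG structure, which is exactly the point you make explicit.
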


\begin{proof}
Satisfying $\mathcal{C}^{\textnormal{SATM}}$ means that $\mathcal{C}^{\textnormal{SAG}}$ and $\mathcal{C}^{\mathcal{M}}$ are satisfied. By Lemma~\ref{lemma:sag_labels_sufficient}, the graph is necessarily a SAG. Moreover, by Lemma~\ref{lemma:tm_labels_sufficient}, the bottom grid is a \textit{\textsf{TM} encoding structure} with respect to our $\mathcal{M}$ and some input string appearing in the labels of the right column of the grid.
\end{proof}

\begin{lemma}
\label{lemma:satm_labels_exist}
Let $\mathcal{M} = (Q, \Sigma, F, q_0, \delta)$ and $\ttt{input} \in (\Sigma \backslash \{ \#_{\mathsf{L}}, \#_{\mathsf{R}} \})^*$.
Any \textit{small-diameter augmented \textsf{TM} encoding structure} (SATM) with respect to $\mathcal{M}$ and $\ttt{input}$ has a $(\mathcal{V}^{\textnormal{SATM}}, \mathcal{E}^{\textnormal{SATM}})$ labeling that satisfies $\mathcal{C}^{\textnormal{SATM}}$, where the labels of the right column correspond to input $\ttt{input}$.
\end{lemma}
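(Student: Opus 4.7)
The plan is to paste together the labelings provided by Lemmas~\ref{lemma:sag_labels_exist} and~\ref{lemma:tm_labels_exist}, relying on the fact that by Def.~\ref{def:satm} a SATM with respect to $\mathcal{M}$ and $\ttt{input}$ is simultaneously a SAG and a graph whose bottom grid is a \TM encoding structure with respect to $\mathcal{M}$ and $\ttt{input}$.

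First, I would apply Lemma~\ref{lemma:sag_labels_exist} to obtain an $\mathcal{E}^{\textnormal{SAG}}$ half-edge labeling satisfying $\mathcal{C}^{\textnormal{SAG}}$, and take this to be the $\mathcal{E}^{\textnormal{SATM}}$ labeling (recall $\mathcal{E}^{\textnormal{SATM}} := \mathcal{E}^{\textnormal{SAG}}$). The half-edges internal to the bottom grid then carry labels of the form $(\mathsf{bottomGrid}, l)$ with $l \in \mathcal{E}^{\textnormal{grid}}$, and stripping the $\mathsf{bottomGrid}$ tag yields a $\mathcal{E}^{\textnormal{grid}}$ labeling of the bottom grid that satisfies $\mathcal{C}^{\textnormal{grid}}$ (this is built into $\mathcal{C}^{\textnormal{SAG}}$). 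In particular, this induces unambiguous row/column coordinates on the bottom grid.

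Next, I would assign the node labels. For every bottom-grid node $v$ with coordinates $(i_v, j_v)$, compute the $\mathcal{V}^{\mathcal{M}}$ label $l_v = (t_v, h_v, s_v)$ prescribed by simulating $\mathcal{M}$ on $\ttt{input}$, exactly as in the proof sketch of Lemma~\ref{lemma:tm_labels_exist}, using the padding conventions spelled out after Def.~\ref{def:tm_encoding} to handle excess columns (copy the terminating configuration, which has reached a final state per our conventions, as sanctioned by constraint~\ref{item:con-accepting_state}) and excess rows (fill with $(\#_{\mathsf{R}}, \bot, \bot)$, per constraint~\ref{item:con-borders}). Label $v$ with $(\mathsf{tmNode}, l_v) \in \mathcal{V}^{\textnormal{SATM}}$. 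For every other node (i.e., in the side grid, in the column trees above the bottom grid, or in the top tree), assign the default label $\Pi_2$.

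Finally, I would verify $\mathcal{C}^{\textnormal{SATM}}$. The constraints $\mathcal{C}^{\textnormal{SAG}}$ hold on the half-edge labels by construction. The restriction of the resulting labeling to the subgraph induced by edges tagged $(\mathsf{bottomGrid}, ?)$ is, after stripping the tags, exactly a $(\mathcal{V}^{\mathcal{M}}, \mathcal{E}^{\textnormal{grid}})$ labeling of the form produced in the proof of Lemma~\ref{lemma:tm_labels_exist}, hence satisfies $\mathcal{C}^{\mathcal{M}}$; and the right column corresponds to $\ttt{input}$ by construction. The only subtle point---and the main thing that needs a careful check---is that the specific $\mathcal{E}^{\textnormal{grid}}$ labeling induced by the SAG construction of Lemma~\ref{lemma:sag_labels_exist} is consistent with the orientation convention used in Def.~\ref{def:tm_encoding} (``top row'' / ``right column'' must mean the same thing on both sides). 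This follows from $\mathcal{C}^{\textnormal{SAG}}$, which pins down the orientation of the bottom grid by aligning $\mathsf{L}/\mathsf{R}$ with the column trees and $\mathsf{U}/\mathsf{D}$ with the side grid; at worst, one reindexes the simulation of $\mathcal{M}$ to match.
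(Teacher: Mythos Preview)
Your proposal is correct and follows essentially the same approach as the paper: invoke Lemma~\ref{lemma:sag_labels_exist} to get the half-edge labeling satisfying $\mathcal{C}^{\textnormal{SAG}}$, and Lemma~\ref{lemma:tm_labels_exist} to get the $\mathcal{V}^{\mathcal{M}}$ node labels for the bottom grid (wrapped as $(\mathsf{tmNode},\cdot)$, with $\Pi_2$ elsewhere), concluding that $\mathcal{C}^{\textnormal{SATM}}$ holds. Your additional remarks about orientation consistency and padding are more detailed than the paper's brief proof but entirely in line with it.
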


\begin{proof}
Lemma~\ref{lemma:sag_labels_exist} provides a $\mathcal{E}^{\textnormal{SAG}}$ labeling for the half-edges such that $\mathcal{C}^{\textnormal{SAG}}$ holds. As for the node labels, Lemma~\ref{lemma:tm_labels_exist} can be applied with the purpose of generating $\mathcal{V}^{\mathcal{M}}$ node labels (which can be easily converted to $\mathcal{V}^{\textnormal{SATM}}$ labels) that stand for the execution of the given \textsf{TM} $\mathcal{M}$ on $\ttt{input}$. Together, all the constraints in $\mathcal{C}^{\textnormal{SATM}}$ are satisfied.
\end{proof}

\paragraph{Proving Violations of \boldmath$\mathcal{C}^{\textnormal{SATM}}$.}

For an appropriate graph structure, Lemma~\ref{lemma:satm_labels_exist} ensures the existence of a labeling that fully satisfies the set of local constraints $\mathcal{C}^{\text{SATM}}$. 
As in Lemma~\ref{lemma:bad_sag_valid}, this enables us to define an \textsf{LCL} problem $\Pi^{\text{badSATM}}$ in which the only valid output labeling in this case assigns $\bot$ to every node.

On the other hand, it follows from Lemma~\ref{lemma:satm_labels_sufficient} that if the graph is unsuitable, or if the labeling standing for a simulation of a \textsf{TM} is erroneous, then some constraint in $\mathcal{C}^{\text{SATM}}$ must be violated. 
In other words, errors in the \textsf{TM} simulation can be detected locally and can therefore be treated in the same way as any other local error in the graph structure.
Consequently, the argument used in the proof of Lemma~\ref{lemma:bad_sag_invalid} carries over without major modifications, yielding an efficient algorithm for generating a locally checkable proof of such errors. The locally checkable proofs also serve as valid output labelings for the problem $\Pi^{\text{badSATM}}$, which we now proceed to define.

\begin{definition}
\label{def:bad_satm}
$\Pi^{\textnormal{badSATM}}$ is an \textsf{LCL} problem $(\mathcal{V}^{\textnormal{badSATM}}_{\textnormal{input}}, \mathcal{E}^{\textnormal{badSATM}}_{\textnormal{input}} , \mathcal{V}^{\textnormal{badSATM}}_{\textnormal{output}}, \mathcal{E}^{\textnormal{badSATM}}_{\textnormal{output}}, \mathcal{C}^{\textnormal{badSATM}})$.

\begin{itemize}
\item $\mathcal{V}^{\textnormal{badSATM}}_{\textnormal{input}} = \mathcal{V}^{\textnormal{SATM}} \times \{ \mathsf{marked}, \mathsf{unmarked} \}$

\item $\mathcal{E}^{\textnormal{badSATM}}_{\textnormal{input}} = \mathcal{E}^{\textnormal{SATM}}$

\item $\mathcal{V}^{\textnormal{badSATM}}_{\textnormal{output}} = \mathcal{V}^{\textnormal{badSAG}}_{\textnormal{output}} $

\item $\mathcal{E}^{\textnormal{badSATM}}_{\textnormal{output}} = \mathcal{E}^{\textnormal{badSAG}}_{\textnormal{output}}$

\item Constraints $\mathcal{C}^{\textnormal{badSATM}}$. For $u \in V:$

\begin{enumerate}[nosep]

\item \label{item:def-local_satm_errors} \textsf{Err} is a valid output only if $u$ is \textsf{marked} or if it doesn't satisfy $\mathcal{C}^{\textnormal{SATM}}$ (rather than just $\mathcal{C}^{\textnormal{SAG}}$).

\end{enumerate}
All the constraints (except for the first one) from $\mathcal{C}^{\textnormal{badSAG}}$ in Def.~\ref{def:bad_sag} remain unchanged.

\end{itemize}

\end{definition}

\begin{lemma}
\label{lemma:bad_satm_valid}
Let $\mathcal{M} = (Q, \Sigma, F, q_0, \delta)$ and $\ttt{input} \in (\Sigma \backslash \{ \#_{\mathsf{L}}, \#_{\mathsf{R}} \})^*$.
$G$ is a SATM with respect to $\mathcal{M}$ and $\ttt{input}$.

Then, $G$ has a $(\mathcal{V}^{\textnormal{badSATM}}_{\textnormal{input}}, \mathcal{E}^{\textnormal{badSATM}}_{\textnormal{input}})$ input labeling such that the labels of the right column correspond to $\ttt{input}$, and the only valid $(\mathcal{V}^{\textnormal{badSATM}}_{\textnormal{output}}, \mathcal{E}^{\textnormal{badSATM}}_{\textnormal{output}})$ output labeling is having $\bot$ in every node and half-edge.
\end{lemma}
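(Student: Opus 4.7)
The plan is to reduce the statement to the analogous result for SAGs (Lemma~\ref{lemma:bad_sag_valid}) by explicitly constructing the input labeling and showing that every non-$\bot$ output must propagate to a witness that cannot exist.

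First, I would invoke Lemma~\ref{lemma:satm_labels_exist} to obtain a $(\mathcal{V}^{\textnormal{SATM}}, \mathcal{E}^{\textnormal{SATM}})$ labeling of $G$ satisfying $\mathcal{C}^{\textnormal{SATM}}$, with the right column of the bottom grid encoding exactly $\ttt{input}$ (as prescribed by that lemma). I would then lift this to a $(\mathcal{V}^{\textnormal{badSATM}}_{\textnormal{input}}, \mathcal{E}^{\textnormal{badSATM}}_{\textnormal{input}})$ labeling by pairing each node's $\mathcal{V}^{\textnormal{SATM}}$ label with the tag \textsf{unmarked}, leaving the half-edge labels unchanged. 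By construction the right column still carries $\ttt{input}$, and the underlying graph with this labeling satisfies both $\mathcal{C}^{\textnormal{SAG}}$ and $\mathcal{C}^{\mathcal{M}}$, i.e.\ the full set $\mathcal{C}^{\textnormal{SATM}}$.

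Next I would rule out non-$\bot$ outputs. Constraint~\ref{item:def-local_satm_errors} of $\mathcal{C}^{\textnormal{badSATM}}$ only permits an $\Err$ label at a node that is \textsf{marked} or violates some constraint of $\mathcal{C}^{\textnormal{SATM}}$; by the previous paragraph, neither holds at any node of $G$, so no node may output $\Err$. For the remaining (pointer-type) outputs, I would appeal to the unchanged portion of $\mathcal{C}^{\textnormal{badSATM}}$, inherited verbatim from $\mathcal{C}^{\textnormal{badSAG}}$: these constraints force any pointer to be followed by a consistent pointer or by $\Err$, so a node labeled with a pointer must initiate a finite directed chain terminating at an $\Err$-labeled node. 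Since no $\Err$ label is admissible anywhere in $G$, no such chain can exist, and hence no pointer-type output can appear either. The only remaining admissible label at every node is $\bot$, and $\bot$-everywhere is consistent with all constraints (by the same argument as in Lemma~\ref{lemma:bad_sag_valid}).

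The only subtle point, and the step I would pay most attention to, is verifying that the pointer-chain argument of Lemma~\ref{lemma:bad_sag_valid} truly transports to $\Pi^{\textnormal{badSATM}}$. The graph underlying a SATM is still a SAG, and the set of pointer-consistency rules in $\mathcal{C}^{\textnormal{badSATM}}$ is identical to that of $\mathcal{C}^{\textnormal{badSAG}}$ except that the definition of an ``erroneous'' node has been \emph{enlarged} (from violating $\mathcal{C}^{\textnormal{SAG}}$ to violating $\mathcal{C}^{\textnormal{SATM}}$). Enlarging the set of potential targets for pointer chains can only add valid labelings, not remove them, so the ``$\bot$-everywhere is valid'' direction is immediate; and on the other hand, precisely because our constructed labeling avoids \emph{all} of these enlarged error conditions, the same termination argument forbids any non-$\bot$ label. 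This observation closes the proof without any new combinatorial work beyond what is already encapsulated in Lemmas~\ref{lemma:satm_labels_exist} and~\ref{lemma:bad_sag_valid}.
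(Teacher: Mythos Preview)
Your proposal is correct and follows essentially the same approach as the paper: invoke Lemma~\ref{lemma:satm_labels_exist} to obtain a $\mathcal{C}^{\textnormal{SATM}}$-satisfying labeling with all nodes \textsf{unmarked}, observe that constraint~\ref{item:def-local_satm_errors} then forbids any $\Err$ output, and reduce to Lemma~\ref{lemma:bad_sag_valid}. The paper's proof is terser, simply stating that ``the proof of Lemma~\ref{lemma:bad_sag_valid} can be used'' once $\Err$ is ruled out, whereas you helpfully unpack why that reduction is sound (the pointer-chain constraints are inherited verbatim and the error set is only enlarged).
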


\begin{proof}
Lemma~\ref{lemma:satm_labels_exist} suggests that there is a $(\mathcal{V}^{\textnormal{SATM}}, \mathcal{E}^{\textnormal{SATM}})$ input labeling such that $\mathcal{C}^{\textnormal{SATM}}$ is satisfied, and the labels of the right column correspond to the string $\ttt{input}$. From constraint~\ref{item:def-local_satm_errors} of $\mathcal{C}^{\textnormal{badSATM}}$, none of the nodes output \textsf{Err}. That is, there are no marked nodes. From this point, the proof of Lemma~\ref{lemma:bad_sag_valid} can be used. It follows that the only valid output labeling is $\bot$.
\end{proof}

\begin{lemma}
\label{lemma:bad_satm_invalid}
Let $\mathcal{M} = (Q, \Sigma, F, q_0, \delta)$.
Let $G$ be a graph equipped with $(\mathcal{V}^{\textnormal{badSATM}}_{\textnormal{input}}, \mathcal{E}^{\textnormal{badSATM}}_{\textnormal{input}})$ labels (w.r.t $\mathcal{M}$) such that it either has a \textsf{marked} node, or it doesn't satisfy $\mathcal{C}^{\textnormal{SATM}}$.

Then, $G$ has a valid $(\mathcal{V}^{\textnormal{badSATM}}_{\textnormal{output}}, \mathcal{E}^{\textnormal{badSATM}}_{\textnormal{output}})$ output labeling with consistent pointer chains leading to the ``marked'' nodes. The output labeling can be computed by an $O(\log n)$ round \textsf{LOCAL} algorithm.
\end{lemma}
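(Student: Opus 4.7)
My plan is to reduce this directly to Lemma \ref{lemma:bad_sag_invalid} by augmenting the set of \textsf{marked} nodes with those that locally witness a violation of $\mathcal{C}^{\mathcal{M}}$. The key observation is that $\mathcal{C}^{\textnormal{SATM}}$ decomposes as $\mathcal{C}^{\textnormal{SAG}} \cup \mathcal{C}^{\mathcal{M}}$, and while the former captures structural defects in the SAG (handled already by Lemma \ref{lemma:bad_sag_invalid}), the latter is itself a constraint on constant-radius neighborhoods and can therefore be folded into the marking.

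First I would have every node $u$ perform a preprocessing step: inspect its constant-radius neighborhood for violations of $\mathcal{C}^{\mathcal{M}}$ (using the $(\mathsf{tmNode}, \cdot)$ component of its $\mathcal{V}^{\textnormal{SATM}}$ input label together with the $\mathcal{E}^{\textnormal{SAG}}$ half-edge labels restricted to $(\mathsf{bottomGrid}, \cdot)$). Since $\mathcal{C}^{\mathcal{M}}$ consists of finitely many conditions \ref{item:con-first}--\ref{item:con-borders}, each depending only on $O(1)$-hop neighbors, this check completes in $O(1)$ rounds in the \LOCAL model and within $\poly(n)$ local computation time. Call $u$ \emph{virtually marked} if it is already \textsf{marked} in the input labeling or if it detected a $\mathcal{C}^{\mathcal{M}}$-violation in its neighborhood. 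By hypothesis, the set of virtually marked nodes is non-empty (since either there is a \textsf{marked} node, or some constraint in $\mathcal{C}^{\textnormal{SAG}} \cup \mathcal{C}^{\mathcal{M}}$ is violated, and in the latter case either $\mathcal{C}^{\textnormal{SAG}}$ fails --- which is what Lemma \ref{lemma:bad_sag_invalid} is designed to handle --- or some node detects a $\mathcal{C}^{\mathcal{M}}$-violation).

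Next, I would invoke (a straightforward adaptation of) the $O(\log n)$ round algorithm from Lemma \ref{lemma:bad_sag_invalid}, treating each virtually marked node as if it had input label \textsf{marked}. This yields $(\mathcal{V}^{\textnormal{badSAG}}_{\textnormal{output}}, \mathcal{E}^{\textnormal{badSAG}}_{\textnormal{output}}) = (\mathcal{V}^{\textnormal{badSATM}}_{\textnormal{output}}, \mathcal{E}^{\textnormal{badSATM}}_{\textnormal{output}})$ output labels: every node either outputs $\bot$ or participates in a consistent pointer chain terminating at a virtually marked node, which then outputs $\Err$. The logarithmic diameter of the SAG (exploited by Lemma \ref{lemma:bad_sag_invalid}) ensures that the total round complexity remains $O(\log n)$, and since each node only needs to simulate that algorithm on its $O(\log n)$-neighborhood, the per-node computation remains $\poly(n)$.

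The main thing to verify is that the resulting labeling satisfies $\mathcal{C}^{\textnormal{badSATM}}$ rather than merely $\mathcal{C}^{\textnormal{badSAG}}$. All constraints other than item \ref{item:def-local_satm_errors} of Def. \ref{def:bad_satm} are inherited unchanged from $\mathcal{C}^{\textnormal{badSAG}}$ and so hold by Lemma \ref{lemma:bad_sag_invalid}. For item \ref{item:def-local_satm_errors}, a node outputs $\Err$ only if it is virtually marked, which by construction means it is either \textsf{marked} or it witnesses a $\mathcal{C}^{\mathcal{M}}$-violation --- in both cases it fails to satisfy $\mathcal{C}^{\textnormal{SATM}}$ or is marked, as required. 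The only real obstacle I anticipate is a bookkeeping subtlety: the pointer-chain consistency condition from Lemma \ref{lemma:bad_sag_invalid} depends on the pointer direction labels from $\mathcal{E}^{\textnormal{SAG}}$, so I must confirm that augmenting the marking does not break consistency at the endpoints --- but since the chains end at $\Err$-labeled nodes regardless of \emph{why} they are marked, the consistency condition is about the path, not its terminus, and carries over verbatim.
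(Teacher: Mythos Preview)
Your proposal is correct and follows essentially the same approach as the paper: augment the set of marked nodes with those that locally detect a $\mathcal{C}^{\mathcal{M}}$-violation, then invoke the algorithm of Lemma~\ref{lemma:bad_sag_invalid} with this enlarged marking. The paper's proof is terser but relies on exactly the same reduction, noting (in a footnote) that the only change from the SAG algorithm is that nodes violating $\mathcal{C}^{\mathcal{M}}$ are now also treated as marked.
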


\begin{proof}
To satisfy constraint~\ref{item:def-local_satm_errors} in $\mathcal{C}^{\text{badSATM}}$, nodes that detect local errors must output \textsf{Err}. Thus, they count as marked nodes. 
Lemma~\ref{lemma:bad_sag_invalid}'s proof gives rise to an $O(\log n)$ round \textsf{LOCAL} algorithm that generates consistent pointer chains leading to the marked nodes.\footnote{The algorithm from Lemma~\ref{lemma:bad_sag_invalid} (originally, \cite[Lemma 6.11]{bcm21}) initially marks the nodes that fail to meet the local constraints $\mathcal{C}^{\text{SAG}}$. The difference is that we now also mark the nodes that violate the constraints $\mathcal{C}^{\mathcal{M}}$.}
\end{proof}

\subsection{Problem Definition}
\label{subsec:problem_def}

\paragraph{Hard Instances.}

The \textsf{LCL} problem $\Pi$ is defined to be hard on a graph family $\mathcal{G}$ of hard instances that we will soon define formally. Roughly speaking, these graphs consist of two components attached to each other. One is an AG and the other is a SATM (as in Fig.~\ref{fig:graph_family_g}), where the choice of the \textsf{TM} being simulated and the dimensions of the AG are aimed at making the problem hard. 

The SATM component is defined with respect to the Turing machine $\mathcal{M}$ described below. 
Remember that we assumed the existence of a public-coin $(T, \mu)$-hard-on-average problem in $\TFNP$ $(\mathcal{S}, \mathcal{D})$. 
As $\mathcal{D}$ is given to be samplable, there is an efficient sampler $D$ for $\mathcal{D}$. The sampler is assumed to be \textit{length-preserving}. I.e., $x = D(r)$ is also $|r|$ bits long. 
The length of the instance (and the random string used for sampling it) indicates how computationally hard is the distributional problem $(\mathcal{S}, \mathcal{D})$: no $T(|r|)$-time solver should be capable of finding a solution for $x = D(r)$ with probability greater than $\mu(|r|)$ when given $r$ that was sampled uniformly at random.
Additionally, there is a \textsf{TM} $V$ that on input $(x, w)$, where $|w| \leq \poly(|x|)$, verifies if it belongs to the relation $\mathcal{S}$. $V$ is deterministic and polynomial-time (in $|x|$).
Let polynomial $p(\cdot)$ be the length of the solutions to instances of $\mathcal{S}$.~\footnote{If instances of the same length have differing solution lengths, this can be resolved by padding.}
Our machine's input is of the form $(r,w,z)$, where $r$ is a string that can be used for sampling an instance, $w$ is an alleged solution for it, and $z = \Vec{0}$ is a (possibly empty) vector of zeros.
The output will be of length $|r| + |z|$. 
On input $(r, w, z)$, the \textsf{TM} $\mathcal{M}$ follows the steps in Alg.~\ref{alg:hard_tm} below.

\begin{algorithm}[H]
\caption{Description of \textsf{TM} $\mathcal{M}$}\label{alg:hard_tm}
\begin{algorithmic}[1]
\Statex \textbf{Input:} $(r, w, z)$ \Comment{$|w| = p(|r|)$}
\Statex \textbf{Output:} $|r| + |z|$ bits long string
\State Use $D$ to sample an instance $x = D(r)$. 
\State Use $V$ to verify the solution: $\ttt{ans} = V(x ,w) \in \{\text{YES}, \text{NO} \}$.
\State \textbf{return} $r \circ 1^{|z|}$ \textbf{if} $\ttt{ans} = \text{YES} \land z = 0^{|z|}$ \textbf{else} $1^{|r| + |z|}$
\end{algorithmic}
\end{algorithm}

The graphs are structured such that the rightmost column of the AG is connected to the leftmost column of the bottom grid of the SATM. The problem $\Pi$ will require the labels of the nodes in the two columns to contain the same values. Thus, in order to set the input to the rightmost column of the AG to some string $r \circ 1^{|z|}$, the adversary must be able to find a solution to the instance $x = D(r)$. For a uniformly random $r$, this $x$ is distributed according to the distribution $\mathcal{D}_{|r|}$. 
Assuming that $(\mathcal{S}, \mathcal{D})$ is public-coin $(T,\mu)$-hard-on-average (for non-uniform solvers), the task of finding a solution is infeasible for any $T(|r|)$ that is superpolynomial in $n$, even when the coins $r$ are known.

\begin{definition}[Graph Family $\mathcal{G}$]
\label{def:graph_family_g}
$(\mathcal{S}, \mathcal{D})$ is the distributional problem considered in the definition of $\Pi$. Suppose that the samplable distribution $\mathcal{D}$ has a polynomial-time sampler $D$, and $\mathcal{S}$ has a polynomial-time verification procedure $V$. Let $\mathcal{M}$ be the \textsf{TM} implementing Alg.~\ref{alg:hard_tm} for these $D$ and~$V$.
An $n$~vertex graph $G$ is in $\mathcal{G}$ iff it can be obtained by:
\begin{itemize}[nosep]

\item A SATM with respect to $\mathcal{M}$ and some string $\ttt{input} \in \{ 0,1 \}^*$. Let the bottom grid of the SATM be of dimensions $H \times W$

\item An AG with a grid of dimensions $h \times w$, where $h \leq H$.

\item The right column of the AG is connected to the leftmost column of the bottom grid of the SATM. I.e., the AG node $(j, 0)$ has an edge to the SATM node $(j, W -1)$, for every $j \in [h]$. 

\end{itemize}
\end{definition}

\begin{figure}[t]
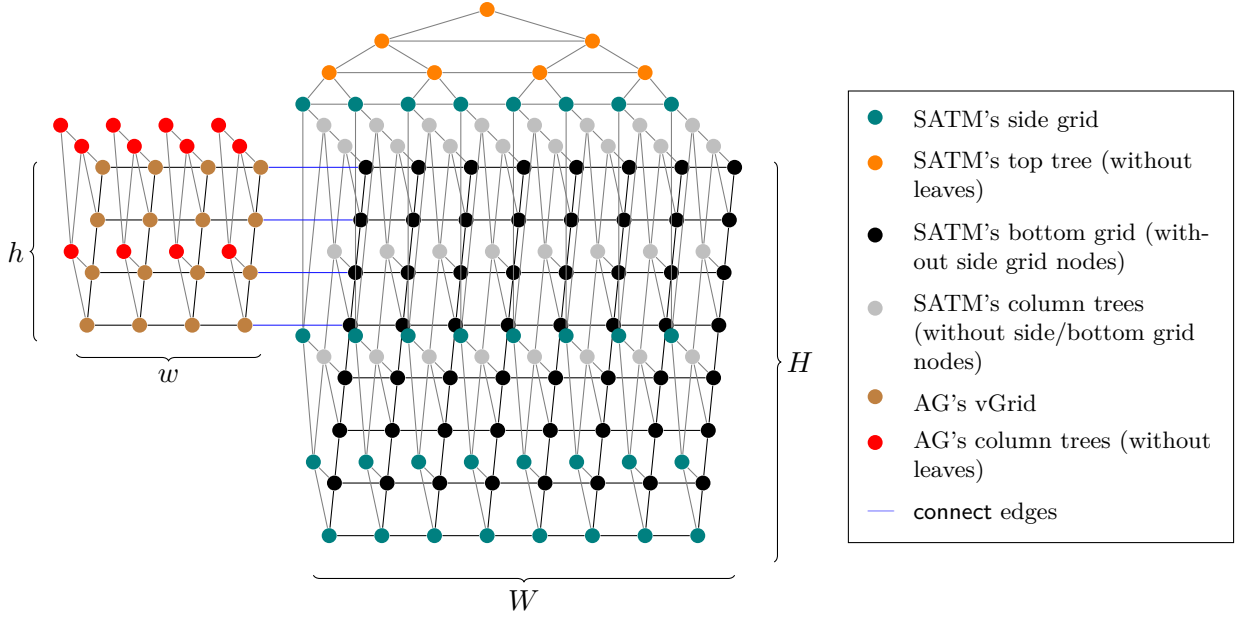

\centering
\includestandalone{figures/full}
\caption{An example of a graph in the family $\mathcal{G}$ (see Def.~\ref{def:graph_family_g}), with an AG of dimensions $h \times w = 4 \times 4$ (on the left) and a SATM with dimensions $H \times W = 8 \times 8$ (on the right).}
\label{fig:graph_family_g}
\end{figure}

\paragraph{\textsf{LCL} Problem.}

$\Pi$ is a combination of the
problems $\Pi^{\textnormal{badAG}}$ and $\Pi^{\textnormal{badSATM}}$ (introduced earlier) with a new problem $\Pi^{\text{hard}}$, which is supposed to be round consuming for the hard instances. On graphs that consist of two components, one in which the nodes output $\bot$ when solving $\Pi^{\textnormal{badAG}}$, and a second component where the nodes output $\bot$ upon solving $\Pi^{\textnormal{badSATM}}$, the problem $\Pi$ forces the nodes to solve $\Pi^{\text{hard}}$. On the other hand, for other output labelings for $\Pi^{\textnormal{badAG}}$ and $\Pi^{\textnormal{badSATM}}$ the nodes should be exempted from solving $\Pi^{\text{hard}}$. In this way, we ensure that $\Pi$ is promise-free. 

Conceptually, $\Pi^{\text{hard}}$ is non-trivial when solved over graphs in $\mathcal{G}$. 
There, the AG nodes in every row are required to output the same bit. 
Additionally, there must be at least one row where the output bit equals the input of the rightmost node. 
We don't formally define $\Pi^{\text{hard}}$, and it will be implicit in the definition of the full problem $\Pi$. 

\begin{definition}[The Problem $\Pi$]
\label{def:pi}
$\Pi$ is an \textsf{LCL} problem $(\mathcal{V}^{\Pi}_{\textnormal{input}}, \mathcal{E}^{\Pi}_{\textnormal{input}} , \mathcal{V}^{\Pi}_{\textnormal{output}}, \mathcal{E}^{\Pi}_{\textnormal{output}}, \mathcal{C}^{\Pi})$.

\noindent Assuming the existence of a public-coin $(T, \mu)$-hard-on-average problem $(\mathcal{S}, \mathcal{D})$ in $\TFNP$, where $V$ is a deterministic polynomial-time verifier and $D$ is a deterministic polynomial-time sampler, let $\mathcal{M}$ be the \textsf{TM} executing the steps in Alg.~\ref{alg:hard_tm}.

\begin{itemize}
\item $\mathcal{V}^{\Pi}_{\textnormal{input}} := 
\{ (\mathsf{agNode}, l) | l \in \{ 0,1 \} \times \mathcal{V}^{\textnormal{badAG}}_{\textnormal{input}} \} \cup \{ (\mathsf{satmNode, l}) | l \in \mathcal{V}^{\textnormal{badSATM}}_{\textnormal{input}} \}$

\item $\mathcal{E}^{\Pi}_{\textnormal{input}} := 
\{ (\mathsf{agEdge}, l) | l \in \mathcal{E}^{\textnormal{badAG}}_{\textnormal{input}} \} \cup \{ (\mathsf{satmEdge, l}) | l \in \mathcal{E}^{\textnormal{badSATM}}_{\textnormal{input}} \} \cup \{ \mathsf{connect} \}$

Half-edge input labels are either labels used in the previously defined \textsf{LCL} problems $\Pi^{\textnormal{badAG}}$ and $\Pi^{\textnormal{badSATM}}$, or a new kind of labels, denoted by $\mathsf{connect}$.

\item $\mathcal{V}^{\Pi}_{\textnormal{output}} := \mathcal{V}^{\textnormal{badAG}}_{\textnormal{output}} \cup \mathcal{V}^{\textnormal{badSATM}}_{\textnormal{output}} \cup \{  \text{YES}, \text{NO}\} \cup (\{ 0,1 \}\times\{ \text{YES}, \text{NO} \} )$

Output labels associated with $\Pi^{\textnormal{badAG}}$ or $\Pi^{\textnormal{badSATM}}$ are both valid output labels for $\Pi$. Furthermore, for the \textsf{agNode} labeled nodes, we allow $\{0,1 \} \times \{  \text{YES}, \text{NO} \}$ output labels (when they are grid nodes) and also $\{  \text{YES}, \text{NO} \}$ (if they are non-leaves tree nodes).

\item $\mathcal{E}^{\Pi}_{\textnormal{output}} = \mathcal{E}^{\textnormal{badSATM}}_{\textnormal{output}}$

\item Constraints $\mathcal{C}^{\Pi}$:\footnote{Constraint~\ref{item:con-original_constraints} is identical to \cite[Sec. 7]{bgk25}.}

\begin{enumerate}[label=\textbf{C\arabic*}, nosep]

\item \label{item:con-bad_satm} In the subgraph induced by the nodes labeled with \textsf{satmNode}, where we map $(\mathsf{satmEdge}, l)$ half-edge labels to $l$ and $(\mathsf{satmNode}, l)$ node labels to $l$: If we map all the labels not in $\mathcal{V}^{\textnormal{badSATM}}_{\textnormal{output}}$ to $\bot$, $\mathcal{C}^{\textnormal{badSATM}}$ is satisfied, where the relevant \textsf{TM} is $\mathcal{M}$.

\item \label{item:con-original_constraints} In the subgraph induced by the nodes labeled with \textsf{agNode}, where we map $(\mathsf{agEdge}, l)$ half-edge labels to $l$ and $(\mathsf{agNode}, l)$ node labels to $l$:

\begin{enumerate}[nosep]
\item \label{item:con-bad_ag} If we map all the labels not in $\mathcal{V}^{\textnormal{badAG}}_{\textnormal{output}}$ to $\bot$, $\mathcal{C}^{\textnormal{badAG}}$ is satisfied.

\item \label{item:con-grid_nodes} If a node is labeled $(\mathsf{treeNode}, \mathsf{gridNode}, l)$, and its output is not in $\mathcal{V}^{\textnormal{badAG}}_{\textnormal{output}}$, then it must be in $\{ 0,1 \} \times \{ \text{YES}, \text{NO} \}$.

\item \label{item:con-consistent_rows} If a node's output is $(b, x) \in \{ 0,1 \} \times \{ \text{YES}, \text{NO} \}$, let $v = f(u, (\mathsf{gridEdge}, \mathsf{R}))$. If $v$ exists, its output is either in $\mathcal{V}^{\textnormal{badAG}}_{\textnormal{output}}$, or $(b, x')$ (for some $x'$).

\item \label{item:con-tree_nodes} If a node is labeled with $\mathsf{treeNode}$ and the output is not in $\mathcal{V}^{\textnormal{badAG}}_{\textnormal{output}}$, then the output must be in $\{ \text{YES}, \text{NO} \}$.

\item \label{item:con-tree_propagation} If a node $u$ has an output from $\{ \text{YES}, \text{NO} \}$, let $v = f(u, (\mathsf{treeEdge}, \mathsf{Ch}_{\mathsf{L}}))$ and $z = f(u, (\mathsf{treeEdge}, \mathsf{Ch}_R))$. 
The outputs of $v$ and $z$ are in $\{ 0,1 \} \times \{ \text{YES}, \text{NO} \}$ or in $\{\text{YES}, \text{NO} \}$. $u$'s output is $\text{YES}$ iff 
$v$'s (or $z$'s) output contains $\text{YES}$.

\item \label{item:con-root} If a node has a $\{ \text{YES}, \text{NO} \}$ output but no $(\mathsf{treeEdge}, \P)$ edge, then it must be a $\text{YES}$.

\end{enumerate}

\item \label{item:con-rightmost_node} 
For an $(\mathsf{agNode}, (b_{\text{in}}, l))$ labeled node $u$ with output $(b_{\textnormal{out}}, x) \in \{ 0,1 \} \times \{\text{YES}, \text{NO}\}$ that doesn't have an $(\mathsf{agEdge}, (\mathsf{gridEdge}, \mathsf{R} ))$ edge, let $v = f(u, (\mathsf{connect}))$. 

\begin{enumerate}
\item \label{item:con-automatic_yes_output} 

If one of the following holds, $x$ in $u$'s output must be $\text{YES}$.

\begin{itemize}[nosep]
\item $v$ is not defined.

\item $v$'s input label is not of the form $(\mathsf{satmNode}, (\mathsf{tmNode}, (t,h,s)))$.

\item $u$ has no $(\mathsf{agEdge}, (\mathsf{gridEdge}, \mathsf{D}))$ but $v$'s label $(\mathsf{satmNode}, (\mathsf{tmNode}, (t,h,s)))$ has a $t$ value other than $\#_{\mathsf{R}}$.

\item $u$ has no $(\mathsf{agEdge}, (\mathsf{gridEdge}, \mathsf{U}))$ edge but $v$ has a $(\mathsf{satmEdge}, (\mathsf{gridEdge}, \mathsf{U}))$ edge.

\item $v$'s $\mathcal{V}^{\textnormal{badSATM}}_{\textnormal{output}}$ output label is not $\bot$.

\item $v$ has a half-edge with an input label $(\mathsf{satmEdge}, (\mathsf{bottomGrid}, \mathsf{L}))$.

\end{itemize}

\item \label{item:con-top_bottom}
Otherwise, if $u$ doesn't have an $(\mathsf{agEdge}, (\mathsf{gridEdge}, \mathsf{U} ))$ (or an $(\mathsf{agEdge}, (\mathsf{gridEdge}, \mathsf{D} ))$) edge: $(b_{\text{out}}, x) = (1, \text{NO})$

\item \label{item:con-input_ag_neq_input_tm} 
Otherwise, let $v$'s input label be $(\mathsf{satmNode}, (\mathsf{tmNode}, (t,h,s)))$. Then, if $t \neq b_{\text{in}}$: $x = \text{YES}$

\item \label{item:con-output_01} Otherwise, $x = \text{YES} \longleftrightarrow b_{\text{in}} = b_{\text{out}}$

\end{enumerate}

\end{enumerate}

\end{itemize}

\end{definition}

\paragraph{Digest.}
$\Pi$ is almost identical to the problem defined by \cite{bgk25}. The main difference is in the AG's rightmost column. Originally, the inputs $b_{\text{in}}$ given to the nodes in that column could be chosen freely. The newly introduced constraints \ref{item:con-bad_satm} and \ref{item:con-rightmost_node} force these inputs to correspond to the outcomes of computations of $\mathcal{M}$. We elaborate on this below.

\paragraph{Consistency between the AG Input Labels and \boldmath$\mathcal{M}$'s Output.}
Constraint~\ref{item:con-automatic_yes_output} checks if all the nodes in the AG's right column are properly connected to the last column of the SATM, which corresponds to the output tape of the \textsf{TM} simulation. That is, every node has exactly one SATM neighbor attached to it through a $\mathsf{connect}$ labeled half-edge. This neighbor is located in the left column (as it doesn't have a $\mathsf{bottomGrid}$ half-edge labeled $\mathsf{L}$) of a correctly labeled SATM (we know that the labeling is correct because the $\mathcal{V}^{\textnormal{badSATM}}_{\textnormal{output}}$ output label of the node is $\bot$). 

Additionally, we check if the node in the top-right corner of the AG is connected to the SATM node representing the beginning of the tape. This is captured by verifying that an $\mathsf{agNode}$ is missing a $\mathsf{U}$ edge if and only if its neighboring $\mathsf{satmNode}$ also lacks a $\mathsf{U}$ edge. Remember that according to $\mathcal{C}^{\mathcal{M}}$'s constraint \ref{item:con-borders}, this node is labeled with the symbol $\#_{\mathsf{L}}$.

In \ref{item:con-automatic_yes_output} we also verify that the node in the bottom-right corner of the AG is connected to a $\#_{\mathsf{R}}$ labeled SATM node.

Then, \ref{item:con-input_ag_neq_input_tm} checks if the input labels of the AG nodes in that column contain the symbols appearing in the output of the simulated \textsf{TM}.\footnote{Except for the top and the bottom rows, where we already know that the $t$ values must be $\#_{\mathsf{L}}$ and $\#_{\mathsf{R}}$, respectively. Due to constraint~\ref{item:con-top_bottom}, we ignore the inputs $b_{\text{in}}$ in these rows.}
Due to our conventions regarding the way the Turing machine operates, the outcome is formatted as $\#_{\mathsf{L}} \circ \ttt{str} \circ \#_{\mathsf{R}}^+$.
Combined, these allow us to deduce that the outcome of the machine $\ttt{str}$ is fully contained in the input labels of the AG's right column. 

If this is not the case, either a node in the AG's right column returns $\text{YES}$, or there is a violation of the local constraints.
In the former scenario, constraint~\ref{item:con-tree_propagation} forces this answer to propagate to the root, and from \ref{item:con-root} it suffices for making the output labeling valid regardless of the $\{\text{YES}, \text{NO}\}$ answers of the other rows. This stands in contrast to the adversary's interest, which is to force the nodes to generate an invalid output labeling. Hence, it has no incentive to allow any of the criteria in \ref{item:con-automatic_yes_output} or \ref{item:con-input_ag_neq_input_tm} to be met. 
This idea motivates the classification discussed in Observation~\ref{obs:possible_input_labelings} below. 

\begin{observation}[Classification of Input Labelings]
\label{obs:possible_input_labelings}
Fix a graph $G$ and a $(\mathcal{V}^{\Pi}_{\textnormal{input}}, \mathcal{E}^{\Pi}_{\textnormal{input}})$ input labeling.
We classify the input labeling according to the existence of a valid $(\mathcal{V}^{\Pi}_{\textnormal{output}}, \mathcal{E}^{\Pi}_{\textnormal{output}})$ output labeling that corresponds to one of the following (non-overlapping) possibilities.

\begin{enumerate}[nosep]

\item \label{labeling:invalid} 
% (Bad graph structure)
There is an output labeling such that no subgraph of \textsf{agNode} labeled nodes that output $\bot$ is properly connected to a subgraph of \textsf{satmNode} labeled nodes that output $\bot$.

\item \label{labeling:easy_yes} 
% (Good graph structure, except for the labeling of the columns)
\ref{labeling:invalid} doesn't hold, and the input labels of the AG's right column are \textit{inconsistent} with the SATM's left column.

\item \label{labeling:input_equals_tm_outcome} 
Both \ref{labeling:invalid} and \ref{labeling:easy_yes} do not hold.~\footnote{That is, there is a properly connected pair of an AG and a SATM, and the labels of each node in the left column of the SATM (except for the first and the last ones) are of the form $(\mathsf{satmNode},(\mathsf{tmNode}, (t, h, s)))$, where $t$ equals to the input $b_{\text{in}}$ of the AG node it is connected to (if exists).}

\end{enumerate}
\end{observation}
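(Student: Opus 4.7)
The plan is to verify the observation's trichotomy in two pieces: (i) the three cases are mutually exclusive, and (ii) they exhaust all input labelings. Mutual exclusion is essentially by construction, since Cases~\ref{labeling:easy_yes} and~\ref{labeling:input_equals_tm_outcome} both explicitly assume that Case~\ref{labeling:invalid} fails, and they are distinguished from one another by a single binary consistency predicate on the input labeling.

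First I would make that consistency predicate precise as a property of the input labeling alone. Using the input labels one can unambiguously identify ``the AG's right column'' (the $\mathsf{agNode}$-labeled vertices lacking an $(\mathsf{agEdge},(\mathsf{gridEdge},\mathsf{R}))$ half-edge) together with the endpoints of each $\mathsf{connect}$ half-edge going into the SATM side. I would then declare an input labeling \emph{consistent} if, for every AG right-column node $u$ with input bit $b_{\text{in}}$, either $u$ is a top- or bottom-row endpoint handled by constraints~\ref{item:con-top_bottom} and~\ref{item:con-automatic_yes_output}, or the $\mathsf{connect}$-partner $v$ of $u$ carries a label $(\mathsf{satmNode},(\mathsf{tmNode},(t,h,s)))$ with $t = b_{\text{in}}$. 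This is precisely the negation of the trigger for constraint~\ref{item:con-input_ag_neq_input_tm}, and whether it holds is determined by the input labeling alone.

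For exhaustiveness I would case-split directly on the input: either some valid output labeling exists in which no $\mathsf{agNode}$ $\bot$-component is properly connected (through a $\mathsf{connect}$ edge) to a $\mathsf{satmNode}$ $\bot$-component---in which case we are in Case~\ref{labeling:invalid}---or no such output exists and the consistency predicate, being a Boolean property of the input, places us in Case~\ref{labeling:easy_yes} or Case~\ref{labeling:input_equals_tm_outcome}. The main step requiring care, and essentially the only substantive content of the argument, is pinning down ``properly connected'' so that the $\bot$-output subgraphs coincide (via Lemmas~\ref{lemma:bad_ag_valid} and~\ref{lemma:bad_satm_valid}) with actual AG and SATM structures attached as in Def.~\ref{def:graph_family_g}; once this is spelled out, the three cases fit together without overlap or gap, and the classification is ready to drive the upper-bound analysis in Section~\ref{subsec:ub}.
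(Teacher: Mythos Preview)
The paper does not supply a proof for this observation at all: it is stated as a classification and is immediate from the way the three cases are phrased (Case~\ref{labeling:easy_yes} presupposes $\neg$\ref{labeling:invalid}, and Case~\ref{labeling:input_equals_tm_outcome} presupposes $\neg$\ref{labeling:invalid} $\wedge$ $\neg$\ref{labeling:easy_yes}), so mutual exclusion and exhaustiveness hold by construction. Your proposal is correct and makes exactly this point; the extra work you do---pinning down the consistency predicate purely in terms of input labels and spelling out ``properly connected'' via Lemmas~\ref{lemma:bad_ag_valid} and~\ref{lemma:bad_satm_valid}---goes somewhat beyond what the paper itself formalizes (it leaves these notions at the level of the surrounding prose and constraint~\ref{item:con-rightmost_node}), but it is the right way to make the observation rigorous and it feeds directly into how Claims~\ref{claim:possibility_1}--\ref{claim:possibility_3} are argued.
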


\paragraph{Implications of Observation~\ref{obs:possible_input_labelings}.}

With this observation in hand, we can better formalize the strategies that the adversary can adopt. We claim that it is in its best interest to select an input labeling captured by possibility~\ref{labeling:input_equals_tm_outcome}. 
To see why this is true, observe that for possibilities \ref{labeling:invalid} and \ref{labeling:easy_yes} we have round-efficient \textsf{LOCAL} algorithms that generate valid output labelings. 

In the case of possibility~\ref{labeling:invalid}, this is a straightforward application of the $O(\log n)$-round algorithms guaranteed by Lemmas~\ref{lemma:bad_ag_invalid} and \ref{lemma:bad_satm_invalid}. In the \textsf{satmNode} labeled nodes, there are no further actions to take. In the \textsf{agNode} ones, if none of them return $\bot$ we are done. Otherwise, the induced subgraph that consists of the \textsf{agNode} labeled nodes that return $\bot$ as their $\mathcal{V}_{\text{output}}^{\text{badAG}}$ output label is guaranteed (by Lemma~\ref{lemma:bad_ag_invalid}) to be a valid AG. In this scenario, we can use some fixed value as the output bit, resulting in identical outputs for all nodes within each row. No additional communication rounds are required. 
Since the SATM is not properly connected, at least one of the criteria listed in constraint~\ref{item:con-automatic_yes_output} is met. 
Constraint~\ref{item:con-automatic_yes_output} implies that when the AG is not connected to a SATM,
there are leaves in the right tree-like structure that automatically return $\text{YES}$. The answer propagates to the root because of constraint~\ref{item:con-tree_propagation}, which leads to a valid output labeling due to \ref{item:con-root}.

Handling possibility~\ref{labeling:easy_yes} is similar. This time, we have the two components (the AG and the SATM) properly connected, but the inputs provided to the right column of the AG are inconsistent with the labels of the SATM. Consequently, constraint~\ref{item:con-input_ag_neq_input_tm} suggests that there exists an AG node that can return the answer $\text{YES}$ without taking into consideration its adversarially chosen input $b_{\text{in}}$. From this point, we proceed as in the previous case.

Since the adversary aims to make any round-efficient algorithm fail, the only plausible possibility left is to select an input labeling that belongs to possibility~\ref{labeling:input_equals_tm_outcome}.

\begin{lemma}[Adapted from \text{\cite[Lemma 7.1]{bgk25}}]
\label{lemma:property_pi}
Given a graph $G \in \mathcal{G}$ and any input labeling that corresponds to possibility~\ref{labeling:input_equals_tm_outcome} in Observation~\ref{obs:possible_input_labelings}, any valid output labeling must satisfy the following:
\begin{enumerate}[nosep]
\item \label{item:def-consistent_rows} In every row in the AG's grid, all the nodes return the same output symbol.

\item \label{item:def-output_equals_input} In at least one of the rows (excluding the top and the bottom rows), the output equals to the input $b_{\text{in}} \in \{ 0,1 \}$ of the rightmost node of the row.

\end{enumerate}
\end{lemma}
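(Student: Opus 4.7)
The plan is to adapt the argument of \cite[Lemma 7.1]{bgk24} to our setting, the only substantive novelty being the need to reason about the interaction between the AG and the SATM via the $\mathsf{connect}$ half-edges. Under possibility~\ref{labeling:input_equals_tm_outcome} the SATM-side clauses of constraint~\ref{item:con-rightmost_node} should collapse to the unmodified \cite{bgk24} behavior, so the main extra work is a careful case check of the bullets in~\ref{item:con-automatic_yes_output} against the geometric layout guaranteed by $G \in \mathcal{G}$.

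For property~\ref{item:def-consistent_rows} I would invoke~\ref{item:con-consistent_rows}: any AG grid node whose output is of the form $(b,x) \in \{0,1\} \times \{\text{YES}, \text{NO}\}$ forces its right neighbor either to share the same bit $b$ or to output a label in $\mathcal{V}^{\textnormal{badAG}}_{\textnormal{output}}$. Lemma~\ref{lemma:bad_ag_valid}, together with the structural validity of the AG guaranteed by $G \in \mathcal{G}$, rules out any non-$\bot$ element of $\mathcal{V}^{\textnormal{badAG}}_{\textnormal{output}}$ at any AG node (it would force a pointer chain which cannot exist in an unmarked valid AG). Hence the only alternative to a shared bit is $\bot$, and propagating this along each row yields the claim.

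For property~\ref{item:def-output_equals_input} I would argue by contradiction. Suppose every middle row's output bit $b_{\text{out}}$ disagrees with the rightmost input $b_{\text{in}}$. Because possibility~\ref{labeling:input_equals_tm_outcome} holds, each rightmost AG node is $\mathsf{connect}$-adjacent to an appropriately labeled SATM node whose $\mathcal{V}^{\textnormal{badSATM}}_{\textnormal{output}}$ output is $\bot$, and the AG's right-column input bits match the tape symbols of the SATM's leftmost column. Consequently, for every middle-row rightmost AG node every bullet of~\ref{item:con-automatic_yes_output} fails, the hypothesis of~\ref{item:con-input_ag_neq_input_tm} fails (since $t = b_{\text{in}}$), and~\ref{item:con-output_01} therefore forces $x = \text{NO}$. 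Combined with~\ref{item:con-top_bottom}, which fixes $(b_{\text{out}}, x) = (1, \text{NO})$ on the top and bottom rows, every leaf of the rightmost column-tree $T_{w-1}$ carries a $\text{NO}$ flag. Recursively applying~\ref{item:con-tree_propagation} then propagates $\text{NO}$ up to the root of $T_{w-1}$, contradicting~\ref{item:con-root}.

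The main obstacle I anticipate is ensuring that the tree-propagation argument actually fires, i.e., that the rightmost leaves of $T_{w-1}$ and the internal tree nodes above them carry labels in $\{0,1\} \times \{\text{YES}, \text{NO}\}$ or $\{\text{YES}, \text{NO}\}$ rather than collapsing into $\mathcal{V}^{\textnormal{badAG}}_{\textnormal{output}}$ and short-circuiting constraints~\ref{item:con-tree_propagation} and~\ref{item:con-root}. Once this is ruled out via Lemma~\ref{lemma:bad_ag_valid} and the compatibility bookkeeping of $\mathcal{C}^{\Pi}$ inherited from \cite{bgk24}, the rest of the proof proceeds as above.
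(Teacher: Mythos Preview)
Your proposal is correct and follows essentially the same approach as the paper's proof: use \ref{item:con-consistent_rows} for row consistency, then combine \ref{item:con-output_01}, \ref{item:con-tree_propagation}, and \ref{item:con-root} to force at least one matching row. Your contradiction framing of property~\ref{item:def-output_equals_input} is just the contrapositive of the paper's direct argument, and you are in fact more explicit than the paper about the case analysis of \ref{item:con-rightmost_node} and about ruling out non-$\bot$ elements of $\mathcal{V}^{\textnormal{badAG}}_{\textnormal{output}}$ via Lemma~\ref{lemma:bad_ag_valid}, which the paper's proof leaves implicit.
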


\begin{proof}
Constraint~\ref{item:con-consistent_rows} in $\mathcal{C}^{\Pi}$ forces the nodes sharing the same row to output the same symbol, which means that property~\ref{item:def-consistent_rows} is satisfied.
The definition of the graph family $\mathcal{G}$ requires that the right column of the AG is properly connected to the last column of the SATM. In input labelings associated with possibility~\ref{labeling:input_equals_tm_outcome}, the values assigned to the AG's right column match those written in the last column of the SATM, indicating that in constraint~\ref{item:con-input_ag_neq_input_tm} we have $b_{\text{in}} = t$. As such, due to constraint~\ref{item:con-output_01}, the AG's right column nodes output $\text{YES}$ if and only if $b_{\text{in}} = b_{\text{out}}$.
From constraint~\ref{item:con-tree_propagation}, an inner node in a tree outputs YES if and only if one of its children does so. Thus, a YES in one of the leaves propagates until it reaches the root. This is the only way that constraint~\ref{item:con-root} gets satisfied. Hence, to satisfy $\mathcal{C}^{\Pi}$, property~\ref{item:def-output_equals_input} must hold as well.
\end{proof}

\section{Gaps Between Bounded and Unbounded Adversaries from Average-Case Hardness in \texorpdfstring{\TFNP}{TFNP}}
\label{sec:assumptions_imply_gap}

In this section we prove that under certain computational assumptions, the \textsf{LCL} problem $\Pi$ defined in Section~\ref{sec:lcl_problem} has an efficient \textsf{LOCAL} algorithm. In contrast, without these assumptions, any algorithm for solving $\Pi$ performs significantly worse. 

\roundsgap*

The proof is divided into two parts. Section~\ref{subsec:lb} is concerned with input labelings determined by computationally unbounded adversaries. We show a lower bound on the round complexity of any \textsf{LOCAL} algorithm that solves $\Pi$ in this setting. 
On the other hand, Section~\ref{subsec:ub} focuses on polynomial-time adversaries. In this setup, we prove an upper bound on the round complexity by providing an explicit algorithm that generates a valid output labeling within a bounded number of rounds. 

\subsection{Lower Bound in the Case of Unbounded Adversaries}
\label{subsec:lb}

\begin{lemma}
\label{lemma:lb_unbounded_adversary}
There is a constant $c \in (0,1/2]$ such that any \textsf{LOCAL} algorithm that for all large enough $n$ solves the \textsf{LCL} problem $\Pi$ over an $n$~vertex graph in the preset public coins model with success probability $1 - \frac{1}{n}$, when the nodes' inputs are chosen by a computationally unbounded adversary, requires $\Omega(n^{c})$ rounds
\end{lemma}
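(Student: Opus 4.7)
The plan is to construct, for each sufficiently large $n$, an explicit hard $n$-vertex graph $G_n \in \mathcal{G}$ together with an unbounded adversary strategy against which every \LOCAL algorithm of round complexity $r(n) < (h-1)/2$ fails with probability exceeding $1/n$. The graph $G_n$ couples an AG whose vertical grid has dimensions $h\times h$ with a SATM dimensioned to accommodate $\mathcal{M}$ (from Alg.~\ref{alg:hard_tm}) on input parameters $|r|=h-2$, $|w|=\poly(h)$, $|z|=0$, so that $\mathcal{M}$'s output $r\circ 1^{|z|}=r$ has length exactly $h-2$, filling every non-boundary row of the AG's right column with an adversary-controlled bit. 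Because $D$, $V$, and the witness-length polynomial are fixed polynomial-time primitives determined by $(\mathcal{S},\mathcal{D})$, the running time of $\mathcal{M}$ is $h^a$ for some constant $a\geq 1$, so the SATM has $\Theta(h^{2a})$ vertices. The total graph size is therefore $n=\Theta(h^2+h^{2a})=\Theta(h^{2a})$, and the lower-bound exponent is $c:=1/(2a)\in(0,1/2]$ with $h=\Theta(n^c)$.

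Fix any \LOCAL algorithm $\mathcal{A}$ and suppose toward contradiction that $r(n) < (h-1)/2$. For each non-boundary row $i\in\{1,\dots,h-2\}$ of the AG grid, let $u_i:=(i,h-1)$ and $v_i:=(i,0)$ denote the leftmost and rightmost grid nodes. Since the $w$ column trees of the AG are internally disjoint (they share no vertex across distinct columns), the graph distance from $u_i$ to $v_i$ is exactly $h-1>2r(n)$, so their $r(n)$-neighborhoods are vertex-disjoint. The output bits $B^{u_i},B^{v_i}$ produced at these nodes are therefore functions of $r_{\text{pub}}$ and of disjoint blocks of private randomness, and hence conditionally independent given $r_{\text{pub}}$. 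Let $\epsilon_i(r_{\text{pub}}) := \Pr_{r_{\text{sec}}}[B^{u_i}\neq B^{v_i}\mid r_{\text{pub}}]$; a short mode-concentration calculation for two conditionally-independent Bernoulli variables --- bounding $p(1-q)+q(1-p)\leq \epsilon_i(r_{\text{pub}})$ --- yields a value $F_i(r_{\text{pub}})\in\{0,1\}$ with $\Pr_{r_{\text{sec}}}[B^{u_i}\neq F_i(r_{\text{pub}})\mid r_{\text{pub}}]\leq 2\epsilon_i(r_{\text{pub}})$. Because $F_i(r_{\text{pub}})$ depends only on $r_{\text{pub}}$ together with the (publicly known) descriptions of $\mathcal{A}$ and $G_n$, the unbounded adversary can compute the whole vector $\vec{F}(r_{\text{pub}})$ by exhaustive simulation.

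The adversary then sets the target right-column input to $\vec{a}:=1-\vec{F}(r_{\text{pub}})$, takes $r:=\vec{a}$ as the \TM input, and --- using \emph{totality} of $\mathcal{S}$ --- finds by brute force a witness $w$ with $(D(r),w)\in\mathcal{S}$. Lemma~\ref{lemma:satm_labels_exist} then supplies a $(\mathcal{V}^{\textnormal{SATM}},\mathcal{E}^{\textnormal{SATM}})$-labeling encoding the accepting computation $\mathcal{M}(r,w,0)=\vec{a}$, so the full input labeling is a legitimate possibility-\ref{labeling:input_equals_tm_outcome} instance in Observation~\ref{obs:possible_input_labelings}, and Lemma~\ref{lemma:property_pi} implies that $\mathcal{A}$ succeeds only if some non-boundary row satisfies $B_i=a_i=1-F_i(r_{\text{pub}})$. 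A union bound gives
\[
\Pr_{r_{\text{sec}}}\!\big[\vec{B}=\vec{F}(r_{\text{pub}})\,\big|\,r_{\text{pub}}\big]\;\geq\;1-2\sum_{i=1}^{h-2}\epsilon_i(r_{\text{pub}}),
\]
and on that event every row fails. Since $\mathcal{A}$'s overall row-inconsistency probability is bounded by its failure probability $1/n$, one has $\mathbb{E}_{r_{\text{pub}}}\!\big[\sum_i\epsilon_i(r_{\text{pub}})\big]\leq h/n$, so by Markov $\sum_i\epsilon_i(r_{\text{pub}})\leq 2h/n$ holds with probability at least $1/2$ over $r_{\text{pub}}$, yielding adversary success probability at least $\tfrac{1}{2}(1-4h/n)$. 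For $c<1$ this exceeds $1/n$ for all sufficiently large $n$, contradicting the $1-1/n$ success guarantee; hence $r(n)\geq(h-1)/2=\Omega(n^c)$. The main delicate points are the mode-concentration step and checking that the fabricated input cannot be cheaply exploited via Lemmas~\ref{lemma:bad_ag_invalid} and~\ref{lemma:bad_satm_invalid}; both are handled by the labeling's membership in possibility~\ref{labeling:input_equals_tm_outcome}, which is guaranteed by totality of $\mathcal{S}$ together with Lemma~\ref{lemma:satm_labels_exist}.
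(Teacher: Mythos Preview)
Your proof is correct and takes essentially the same approach as the paper's: construct a hard instance in $\mathcal{G}$ with a square $h\times h$ AG attached to a SATM sized for $\mathcal{M}$, use totality of $\mathcal{S}$ so the unbounded adversary can realize any desired right-column string $\vec{a}$ via a suitable witness, and set each $a_i$ to the complement of the leftmost node's more-likely output bit conditioned on $r_{\text{pub}}$. The only cosmetic difference is in the final accounting---the paper argues a per-$r_{\text{pub}}$ dichotomy (``both $\le 2/n$ or both $\ge 1-2/n$'') to get success $\le 2h/n$, while you route through $\epsilon_i$ and a Markov step; your Markov step is actually unnecessary, since taking expectations over $r_{\text{pub}}$ directly already gives $\Pr[\text{valid}]\le 2\,\mathbb{E}_{r_{\text{pub}}}\big[\sum_i\epsilon_i\big]\le 2h/n$.
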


\begin{proof}
To establish a lower bound, it suffices to show that for any $o(n^c)$ round \textsf{LOCAL} algorithm, there exist infinitely many values of $n \in \mathbb{N}$ with corresponding $n$~vertex graphs where the success probability is strictly less than $1 - \frac{1}{n}$ whenever the input labeling is selected by an unbounded adversary. 
%In fact, we prove a slightly stronger result. Let $\mathcal{A}$ be an $o(n^c)$ round \textsf{LOCAL} algorithm. We show that for all large enough $n$, there exists an $n$~vertex graph on which the success probability is upper bounded. 
To this end, we use the graph family $\mathcal{G}$ (see Def.~\ref{def:graph_family_g}). 

\paragraph{Choice of a Hard Instance.}
The graph $G \in \mathcal{G}$ that we choose has an AG with a square-shaped bottom grid of dimensions $h \times h$ for some integer $h$.
We are interested in running the Turing machine $\mathcal{M}$ (described in Alg.~\ref{alg:hard_tm}) on inputs $(r,w,z)$ where $|r| + |z| = h - 2$ and $|w| = p(|r|)$.
The latter stands for the length of solutions for the search problem $\mathcal{S}$.
The lengths of $r$ and $z$ are required to sum to $h-2$ because we wish the output of $\mathcal{M}$ (which is in $\{0,1\}^{|r| + |z|}$) to fit into the labels of the right column of AG (while ignoring the first and the last nodes).
Since $\mathcal{M}$ is polynomial-time, there exists a polynomial bound on the number of computation steps $\mathcal{M}$ can perform given an input of this form. The same holds for the space complexity of $\mathcal{M}$. 
We set the width $W$ (resp. height $H$) of the bottom grid of the SATM to the bound on the time (resp. space) complexity of $\mathcal{M}$ on such inputs. 
Since the input length is polynomial in $h$, it follows that both $H$ and $W$ are $h^{\Theta(1)}$. The total size of $G$ turns out to be:
\begin{equation}
\label{eq:num_nodes}
n := n(h) =
\underbrace{O (h^2)}_{ \substack{\text {nodes in}\\\textnormal{the AG}}} + \underbrace{O( H \cdot W)}_{\substack{ \text {nodes in}\\\textnormal{the SATM} }}
=
h^{\Theta(1)}
\end{equation}
Let $c \leq 1/2$ be a constant such that $h = \Theta(n^{c})$.

\paragraph{On the Power of the Adversary.}
Being in the role of the adversary, we design an input labeling that corresponds to option~\ref{labeling:input_equals_tm_outcome} in Observation~\ref{obs:possible_input_labelings}. In short, we equip the AG and the SATM of $G$ with input labels that obey the local constraints of each structure, we properly label the edges connecting the two components, and set the inputs $b_{\text{in}}$ of the AG nodes $(1, 0)$, ..., $(h-2, 0)$ in its right column such that they are identical to the values written in the labels of the nodes in the last column of the SATM (which represents the outcome of the \textsf{TM}). 
Thus, the input of the right column of the AG is determined by the output of the \textsf{TM} when invoked on any input string of suitable length that we choose.

Since we rely on a total problem, from the way the \textsf{TM} $\mathcal{M}$ is defined in Alg.~\ref{alg:hard_tm}, the output can be any $h-2$ bits long string $r$ (this is achievable when setting the input to $(r,w,z)$ where $z$ is an empty string and $w$ is such that $(D(r), w) \in \mathcal{S}$). 
Moreover, since the adversary is unbounded, we assume that for every such string $r \in \{ 0,1 \}^{h-2}$ it can find a polynomial-sized $w$ that satisfies $(D(r),w) \in \mathcal{S}$. 
In other words, the adversary has full control over the outcome of the \textsf{TM}. 

Next, we use the fact that the bottom grid of the SATM is large enough to record the computation of $\mathcal{M}$ over every input $(r,w,z)$ for $|r| + |z| = h - 2$. 
When the number of steps in the computation is smaller than the grid's width $W$, we copy the contents of the tape at the time $\mathcal{M}$ halts as many times as necessary to fill all the columns of the grid. If the space required for the computation is smaller than the grid's height $H$, the bottom rows can be filled with blank symbols $\#_{\mathsf{R}}$. 
Consequently, the adversary has full control over the input of the right column of the AG.

\paragraph{Adversarial Strategy.}
The (randomized) strategy of the nodes is known to the adversary. So are the preset public coins $r_{\textnormal{pub}}$. Suppose that the string obtained from concatenating the output bits of the nodes in the AG's right column is $\ttt{str} \in \{ 0,1 \}^{h - 2}$ with high enough probability (over the choice of the private coins, conditioned on the preset public coins $r_{\textnormal{pub}}$). The adversarial strategy we propose aims to set the input of the AG's right column to be $\overline{\ttt{str}}$, the complement of the string. This idea is formalized below.

From this point, the proof closely follows \cite[Thm. 8.1]{bgk25} with the necessary adjustments. 
Suppose that there exists a \textsf{LOCAL} algorithm $\mathcal{A}$ that solves the problem within $o(n^{c}) = o(h)$ rounds.
We choose $h \in \mathbb{N}$ such that $\mathcal{A}$'s runtime is at most $\frac{h}{3}$ and $n := n(h)$ (a polynomial in $h$, see Eq.~(\ref{eq:num_nodes})) is large enough to guarantee that $\mathcal{A}$ succeeds with probability $1 - \frac{1}{n}$, and additionally satisfies $n > \max\{ 2h + 1, 4 \}$. 
Note that this leads to infinitely many valid choices for $n$.

Since the width of the square-shaped bottom grid is $h$, it follows that, conditioned on the preset public randomness, the outputs of nodes from opposite ends of the grid are independent.
For some row, let $u$ be the leftmost node and $v$ the rightmost one. Knowing that the algorithm succeeds with probability at least $1 - \frac{1}{n}$, it follows that:
\[
\begin{cases}
\Pr_{r_\textnormal{sec}}[\text{$u$ outputs 1} | r_{\textnormal{pub}}] \cdot (1 - \Pr_{r_\textnormal{sec}}[\text{$v$ outputs 1}| r_{\textnormal{pub}}]) < \frac{1}{n} \\
\Pr_{r_\textnormal{sec}}[\text{$v$ outputs 1}| r_{\textnormal{pub}}] \cdot (1 - \Pr_{r_\textnormal{sec}}[\text{$u$ outputs 1}| r_{\textnormal{pub}}]) < \frac{1}{n}
\end{cases}
\]
Where the probabilities are taken over the choice of the private coins. It can be inferred that (for $n \geq 4$) either $\Pr[\text{$u$ outputs 1}| r_{\textnormal{pub}}]$ and $\Pr[\text{$v$ outputs 1}| r_{\textnormal{pub}}]$ are both upper bounded by $\frac{2}{n}$, or that they are both lower bounded by $1 - \frac{2}{n}$. 

After the preset public randomness $r_{\textnormal{pub}}$ is revealed, the unbounded adversary selects an input labeling according to the following rule: $v$'s input is $1$ if and only if $\Pr[\text{$u$ outputs 1}| r_{\textnormal{pub}}] < \frac{2}{n}$. 
Here, we rely on the conclusion from the discussion we had above: $v$'s input can be freely chosen by the unbounded adversary. 
The probability that $u$'s output equals the input of $v$ (the rightmost node in $u$'s row) is at most $\frac{2}{n}$. 

The same argument applies to all the rows. From a union bound, the success probability of $\mathcal{A}$ is at most $\frac{2h}{n} = O(\frac{1}{n^{1 - c}})$. When $n > 2h + 1$, this is strictly smaller than $1 - \frac{1}{n}$, thereby reaching a contradiction.
This proves that for any $o(n^c)$~round \textsf{LOCAL} algorithm, there are infinitely many values of $n$ for which the algorithm succeeds with probability strictly smaller than $1 - \frac{1}{n}$ (on carefully chosen $n$~vertex graphs).
\end{proof}

\subsection{Upper Bound in the Case of Bounded Adversaries}
\label{subsec:ub}

\begin{lemma}
\label{lemma:ub_bounded_adversary}
Assume the existence of a public-coin $(T, \mu)$-hard-on-average problem in $\TFNP$. Let $\lambda = \omega(\log n)$ (but $\lambda = O(\poly(n))$) be a parameter such that $T(\lambda)$ is superpolynomial in $n$. 

Then, there is a negligible function $\mathsf{negl}(n)$ and an $O(\lambda)$ round \textsf{LOCAL} algorithm for solving the \textsf{LCL} problem $\Pi$ in the preset public coins model with probability $1 - \mu(\lambda) - \mathsf{negl}(n)$ for all large enough $n$, if the input is given by a $\poly(n)$-time adversary.
\end{lemma}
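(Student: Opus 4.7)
The plan is to give an explicit \LOCAL algorithm $\mathcal{A}$ running in $O(\lambda)$ rounds. \textbf{Step 1 (local certificates):} every node runs in parallel the $O(\log n)$-round algorithms of Lemmas~\ref{lemma:bad_ag_invalid} and~\ref{lemma:bad_satm_invalid}; any node whose $\Pi^{\textnormal{badAG}}$ or $\Pi^{\textnormal{badSATM}}$ output is non-$\bot$ simply adopts that output, which already fulfills constraints~\ref{item:con-bad_satm} and~\ref{item:con-bad_ag}. \textbf{Step 2 (cheap cases of Observation~\ref{obs:possible_input_labelings}):} each \textsf{agNode} in the AG's right column locally tests the criteria of~\ref{item:con-automatic_yes_output} and~\ref{item:con-input_ag_neq_input_tm} against its $\mathsf{connect}$-neighbor in the SATM; if any of them fires (placing us in possibility~\ref{labeling:invalid} or~\ref{labeling:easy_yes}), that node contributes $x = \textnormal{YES}$, which by~\ref{item:con-tree_propagation} propagates up the column tree in $O(\log n)$ rounds, and in each row we fix $b_{\textnormal{out}}$ to a common default to satisfy~\ref{item:con-consistent_rows} and~\ref{item:con-root}.

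The interesting case is possibility~\ref{labeling:input_equals_tm_outcome}. Each AG node walks up its column tree to learn its row index $i \in [h]$ and the value of $h$ in $O(\log n)$ rounds (which is $o(\lambda)$ since $\lambda = \omega(\log n)$), and then branches on $h$. \textbf{(a) Small $h$, $h \le 2\lambda$:} since the AG is vertical, $w \le h = O(\lambda)$, so each node can learn $b_{\textnormal{in}}$ of the rightmost node in its row in $O(\lambda)$ rounds along the grid and output $b_{\textnormal{out}} = b_{\textnormal{in}}$; by~\ref{item:con-output_01} every middle row matches, every column-tree leaf propagates YES, and~\ref{item:con-root} is trivially satisfied, so $\mathcal{A}$ succeeds deterministically. \textbf{(b) Large $h$, $h > 2\lambda$:} a node in row $i$ outputs $b_{\textnormal{out}} = r_{\textnormal{pub}}[i]$ for $1 \le i \le \lambda$ and $b_{\textnormal{out}} = 0$ for $\lambda + 1 \le i \le h-2$. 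Tree propagation then yields a valid labeling iff some row $i$ satisfies $b_{\textnormal{in}}[i] = b_{\textnormal{out}}[i]$.

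Failure analysis is confined to branch~(b). By Lemma~\ref{lemma:property_pi}, the adversary wins precisely when the $(h-2)$-bit string $b_{\textnormal{in}}$ equals $\overline{r_{\textnormal{pub}}[1..\lambda]} \circ 1^{h-2-\lambda}$. Since $b_{\textnormal{in}}$ coincides with the output of $\mathcal{M}$ on the adversary's SATM input, Alg.~\ref{alg:hard_tm} forces it to be either $r \circ 1^{|z|}$ or $1^{|r|+|z|}$. The $1^{|r|+|z|}$ branch matches the target only when $r_{\textnormal{pub}}[1..\lambda] = 0^\lambda$, an event of probability $2^{-\lambda} = \negl(n)$. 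The $r \circ 1^{|z|}$ branch forces $|z| = h-2-\lambda$, $|r| = \lambda$, and $r = \overline{r_{\textnormal{pub}}[1..\lambda]}$; the adversary must then exhibit a witness $w$ with $V(D(r),w) = \textnormal{YES}$ for an $r$ uniform over $\{0,1\}^\lambda$ (as $r_{\textnormal{pub}}$ is uniform). Any $\poly(n)$-time adversary runs in time $o(T(\lambda))$ since $T(\lambda)$ is superpolynomial in $n$, so the assumed public-coin $(T, \mu)$-hardness of $(\mathcal{S}, \mathcal{D})$ bounds this event by $\mu(\lambda)$, yielding overall failure probability at most $\mu(\lambda) + \negl(n)$.

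The main delicate point I anticipate is the combinatorial bookkeeping at the interface between the branches: one must verify that the output labels produced by Steps~1 and~2 together with branches~(a) and~(b) jointly satisfy every clause of $\mathcal{C}^{\Pi}$---in particular the YES/NO propagation through the column trees, the row-consistency constraint~\ref{item:con-consistent_rows}, the boundary conditions~\ref{item:con-top_bottom}, and the interactions with the neighboring \textsf{satmNode} labels. The probabilistic and computational heart of the argument is the direct application of TFNP hardness sketched above.
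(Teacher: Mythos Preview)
Your overall structure matches the paper's proof closely, but there is one genuine gap in the failure analysis for branch~(b). You assert that the $r \circ 1^{|z|}$ branch of Alg.~\ref{alg:hard_tm} ``forces $|z| = h-2-\lambda$, $|r| = \lambda$, and $r = \overline{r_{\textnormal{pub}}[1..\lambda]}$''. This is not true: the adversary chooses the input $(r,w,z)$ to $\mathcal{M}$ and is free to pick $|r| < \lambda$. Concretely, if $r_{\textnormal{pub}}[1..\lambda]$ ends in $\ell$ zeros, say $r_{\textnormal{pub}}[1..\lambda] = r' \circ 0^{\ell}$, then the target string $\overline{r_{\textnormal{pub}}[1..\lambda]} \circ 1^{h-2-\lambda}$ equals $\overline{r'} \circ 1^{\ell + h - 2 - \lambda}$, so the adversary can set $|r| = \lambda - \ell$, $r = \overline{r'}$, and $|z| = \ell + h - 2 - \lambda$. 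It then only needs to solve the $\TFNP$ instance $D(\overline{r'})$ of length $\lambda - \ell$, and for small $\lambda - \ell$ the $(T,\mu)$-hardness assumption gives you nothing (you only arranged $T(\lambda)$, not $T(\lambda - \ell)$, to be superpolynomial in $n$).

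The paper's proof addresses exactly this issue by using $2\lambda$ bits of $r_{\textnormal{pub}}$ (rows $2,\ldots,2\lambda+1$) rather than $\lambda$. It then observes that the length $\ell$ of the trailing-zero suffix of a uniform $2\lambda$-bit string exceeds $\lambda$ with probability at most $2^{-\lambda} = \negl(n)$; conditioned on $\ell \le \lambda$, every admissible choice of $|r|$ is at least $\lambda$, and the hardness assumption applies. Your algorithm and analysis can be repaired in precisely this way, but as written the step ``forces $|r| = \lambda$'' is incorrect and the bound $\mu(\lambda)$ does not follow.
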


\begin{proof}
To solve $\Pi$ we follow the steps detailed in Alg.~\ref{alg:efficient_labeling_non_hard}. Lemma~\ref{lemma:ub_bounded_adversary} follows from Claims~\ref{claim:possibility_1}, \ref{claim:possibility_2} and \ref{claim:possibility_3}. The first two refer to scenarios in which $\Pi$ is solvable within $O(\log n)$ rounds. This suffices for our needs, as we take the parameter $\lambda$ to be superlogarithmic in $n$, which implies that we get the desired upper bound.
\end{proof}

\begin{algorithm}
\caption{A \textsf{LOCAL} Algorithm for $\Pi$} \label{alg:efficient_labeling_non_hard}
\begin{algorithmic}[1]
\Statex Executed at node $u$ in a graph $G$.

\State Run the algorithm guaranteed by Lemma~\ref{lemma:bad_ag_invalid} (resp. Lemma~\ref{lemma:bad_satm_invalid}) on the subgraph induced by the nodes labeled with \textsf{agNode} (resp. \textsf{satmNode}) to obtain some output labeling.

\State \textbf{If} $u$ is a \textsf{satmNode} \textbf{then} return that output. \Comment{From now on, assume the node is an \textsf{agNode}}

\State \textbf{If} $u$ is an \textsf{agNode} with a non-$\bot$ output \textbf{then} return that output.

\State Explore the $\log n$ neighborhood. Infer the height of the grid $h$, and the index $i$ of the row.

\If{$u$ is a grid node}

\State \textbf{If} $u$ is on the right column, and any of the criteria in \ref{item:con-automatic_yes_output} is met \textbf{then} return $(1, \text{YES})$\plabel{line:incorrect_connection}

\State \textbf{If} $i \in \{ 0, h-1 \}$ \textbf{then} return $(1, \text{NO})$\plabel{line:borders}

\If{$h < 2\lambda + 2$} learn the input bit $b$ of the rightmost node in the row, return~$(b, \text{YES})$\plabel{line:small_height}

\Else 

\State $b_{\text{out}} = \begin{cases}
    {(r_{\textnormal{pub}})}_{i-1} &\text{if } i \in \{ 2, ..., 2\lambda+1 \}  \\
    0 &\text{otherwise} 
\end{cases}$\plabel{line:out_bit}

\If{$u$ is on the right column}

\State Let $t \in \{ 0,1 \}$ be the bit written in the label of $f(u, \mathsf{connect})$

\If{$b_{\text{in}} \neq t$} return $(b_{\text{out}}, \text{YES})$\plabel{line:wrong_input}

\ElsIf{$b_{\text{in}} \neq b_{\text{out}}$} return $(b_{\text{out}}, \text{NO})$\plabel{line:input_is_output}

\EndIf

\EndIf

\State return $(b_{\text{out}}, \text{YES})$\plabel{line:yes_output}

\EndIf

\Else{ return $\text{YES}$ iff at least one of the children returns YES}\plabel{line:propagation}

\EndIf

\end{algorithmic}
\end{algorithm}

\begin{claim}
\label{claim:possibility_1}
Let $G$ be an $n$~vertex graph equipped with a $(\mathcal{V}^{\Pi}_{\textnormal{input}}, \mathcal{E}^{\Pi}_{\textnormal{input}})$ labeling that corresponds to possibility~\ref{labeling:invalid} in Observation~\ref{obs:possible_input_labelings}.
Then, Alg.~\ref{alg:efficient_labeling_non_hard} solves the \textsf{LCL} problem $\Pi$ in $O(\log n)$ rounds with probability 1 in the preset public coins model.
\end{claim}

\begin{proof}
The proof is reminiscent of the discussion that follows Observation~\ref{obs:possible_input_labelings}.
Lemma~\ref{lemma:bad_satm_invalid} says that in $O(\log n)$ rounds the network is able to detect whether it has a correctly labeled SATM component or not, and to generate a proof in the latter case. The output labeling is always valid (with respect to $\Pi$) according to constraint~\ref{item:con-bad_satm}.
Lemma~\ref{lemma:bad_ag_invalid} shows that with this many rounds the network is capable of generating an output labeling that satisfies the following property: in the subgraph induced by the \textsf{agNode} labeled nodes that return $\bot$, every connected component is a correctly labeled AG.

\textit{Case I.} 
If none of the \textsf{agNode} nodes return $\bot$ on $\Pi^{\textnormal{badAG}}$, no further action is required because the output labels of $\Pi^{\textnormal{badAG}}$ that are not $\bot$ are also valid output labels for $\Pi$ (see constraint~\ref{item:con-bad_ag}).

\textit{Case II.} 
The remaining case has to do with the existence of \textsf{agNode} labeled nodes that return $\bot$ on $\Pi^{\textnormal{badAG}}$. As mentioned earlier, they must constitute a valid AG structure. We argue that the outputs determined by Alg.~\ref{alg:efficient_labeling_non_hard} satisfy $\mathcal{C}^{\Pi}$. 

% C2
First, observe that the outputs of the grid and column trees is formatted correctly, implying that \ref{item:con-grid_nodes} and \ref{item:con-tree_nodes} cannot be violated. 
According to constraint~\ref{item:con-consistent_rows}, for any fixed row, all the nodes must output the same bit. From line~\ref{line:borders} it follows that rows $0$, $h-1$ satisfy this requirement. If the grid is of small height, the argument for the remaining rows follows trivially from \ref{line:small_height}. Otherwise, we return $b_{\text{out}}$ determined by the rule in line~\ref{line:out_bit}. Naturally, this means that per row, the nodes agree with one another.
Line~\ref{line:propagation} suggests that the behavior of the inner nodes of the column trees respects \ref{item:con-tree_propagation}.
From line~\ref{line:yes_output}, the grid nodes (except of rows $0$ and $h-1$ and the right column) output $\text{YES}$. 
Since possibility~\ref{labeling:invalid} is concerned with an AG being disconnected from the SATM, at least one of the criteria in constraint~\ref{item:con-automatic_yes_output} is met. Line~\ref{line:incorrect_connection} suggests that at least one node in the right column outputs $\text{YES}$ too. 
Having at least one $\text{YES}$ labeled node in each column and knowing that constraint~\ref{item:con-tree_propagation} holds, we infer that all the roots return $\text{YES}$. As a result, \ref{item:con-root} is satisfied.

% C3
Constraint~\ref{item:con-automatic_yes_output} is obviously satisfied due to line~\ref{line:yes_output}.
Observing that constraint~\ref{item:con-top_bottom} is satisfied is a straightforward implication of line~\ref{line:borders}. The same goes for constraint~\ref{item:con-input_ag_neq_input_tm} and line~\ref{line:wrong_input}. Lastly, line~\ref{line:input_is_output} (or \ref{line:small_height} if the height is small) suggests that constraint~\ref{item:con-output_01} is not violated.

To summarize, the output labeling generated by Alg.~\ref{alg:efficient_labeling_non_hard} satisfies $\mathcal{C}^{\Pi}$.
\end{proof}

\begin{claim}
\label{claim:possibility_2}
Let $G$ be an $n$~vertex graph equipped with a $(\mathcal{V}^{\Pi}_{\textnormal{input}}, \mathcal{E}^{\Pi}_{\textnormal{input}})$ labeling that corresponds to possibility~\ref{labeling:easy_yes} in Observation~\ref{obs:possible_input_labelings}.
Then, Alg.~\ref{alg:efficient_labeling_non_hard} solves the \textsf{LCL} problem $\Pi$ in $O(\log n)$ rounds with probability 1 in the preset public coins model.
\end{claim}

\begin{proof}
Possibility~\ref{labeling:easy_yes} is concerned with having a subgraph induced by the \textsf{agNode} labeled graph (that returns $\bot$ on $\Pi^{\text{badAG}}$), which is properly connected to a correctly labeled SATM structure. We rely on the proof of Claim~\ref{claim:possibility_1}, while making the necessary changes.

Unfortunately, since we assume that the two components are properly connected, constraint~\ref{item:con-automatic_yes_output} 
doesn't suffice for determining that one of the nodes in the AG's right column returns $\text{YES}$. 
Despite that, since the input labeling corresponds to possibility~\ref{labeling:easy_yes}, at least one of the nodes $(1,0), ..., (h-2, 0)$ in the right column disagrees with the value recorded in the label of its SATM neighbor. Consequently, constraint~\ref{item:con-input_ag_neq_input_tm} forces one of the nodes to answer $\text{YES}$. In Alg.~\ref{alg:efficient_labeling_non_hard}, this is implemented by line~\ref{line:wrong_input}. Knowing that at least of the nodes in the AG's right column outputs $\text{YES}$, the same reasoning as in the proof of Claim~\ref{claim:possibility_1} can be applied. We refrain from repeating the details.
\end{proof}

\begin{claim}
\label{claim:possibility_3}
Let $G$ be an $n$~vertex graph equipped with a $(\mathcal{V}^{\Pi}_{\textnormal{input}}, \mathcal{E}^{\Pi}_{\textnormal{input}})$ labeling that corresponds to possibility~\ref{labeling:input_equals_tm_outcome} in Observation~\ref{obs:possible_input_labelings}.
Then, there is a negligible function $\mathsf{negl}(n)$ such that Alg.~\ref{alg:efficient_labeling_non_hard} solves the \textsf{LCL} problem $\Pi$ in $O(\lambda)$ rounds with probability $1 - \mu(\lambda) - \mathsf{negl}(n)$ in the preset public coins model against a bounded adversary.
\end{claim}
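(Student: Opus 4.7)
The plan is to verify the structural constraints of $\mathcal{C}^\Pi$ exactly as in Claims~\ref{claim:possibility_1}--\ref{claim:possibility_2}, bound the round complexity by $O(\log n)+O(\lambda)=O(\lambda)$, and concentrate the probabilistic analysis on property~\ref{item:def-output_equals_input} of Lemma~\ref{lemma:property_pi}. The $\Pi^{\mathrm{badAG}}$/$\Pi^{\mathrm{badSATM}}$ subroutines take $O(\log n)$ rounds by Lemmas~\ref{lemma:bad_ag_invalid},~\ref{lemma:bad_satm_invalid}, while reading the rightmost $b_{\mathrm{in}}$ in the small-$h$ branch (line~\ref{line:small_height}) costs $O(\lambda)$ rounds since the AG grid is vertical, hence $w\le h<2\lambda+2$; because $\lambda=\omega(\log n)$ the overall round count is $O(\lambda)$. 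When $h<2\lambda+2$ the algorithm deterministically sets $b_{\mathrm{out}}$ equal to the rightmost $b_{\mathrm{in}}$, so property~\ref{item:def-output_equals_input} holds with probability $1$; the real work is the regime $h\ge 2\lambda+2$.

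\textbf{Reducing to the target string.} For $h\ge 2\lambda+2$, failure is equivalent to $b_{\mathrm{out}}\ne b_{\mathrm{in}}$ in every row $i\in\{1,\dots,h-2\}$. Under possibility~\ref{labeling:input_equals_tm_outcome}, $b_{\mathrm{in}}$ at row $i$ equals the $i$th bit of the \TM output, so failure is equivalent to the \TM output equalling the target $T\in\{0,1\}^{h-2}$ with $T_1=1$, $T_i=\overline{(r_{\mathrm{pub}})_{i-1}}$ for $i\in\{2,\dots,2\lambda+1\}$, and $T_i=1$ thereafter. By Alg.~\ref{alg:hard_tm} the only \TM outputs producible are $1^{h-2}$ (rejection, or $z\ne 0^{|z|}$) and $r\circ 1^{|z|}$ (acceptance). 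The rejection route matches $T$ only when $(r_{\mathrm{pub}})_{1..2\lambda}=0^{2\lambda}$, contributing $2^{-2\lambda}=\negl(n)$. In the acceptance route with $k:=|r|\le\lambda$, at least $\lambda+1$ of the ``random'' target positions fall in the $1^{|z|}$ suffix, each independently forcing the corresponding $r_{\mathrm{pub}}$ bit to be $0$; a union bound over $k\in\{0,\dots,\lambda\}$ keeps this subcase $\negl(n)$ without invoking hardness.

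\textbf{Reduction to $(\mathcal{S},\mathcal{D})$.} The remaining regime $k\ge\lambda+1$ is reduced to the public-coin $(T,\mu)$-hardness of $(\mathcal{S},\mathcal{D})$. For each admissible $k$ I build a non-uniform $\poly(n)$-time solver $\mathcal{S}ol_k$ as follows: on input $r^*\leftarrow U_k$, abort unless $r^*$ has the forced shape ($r^*_1=1$, and, when $k>2\lambda+1$, $r^*_{2\lambda+2..k}=1^{k-2\lambda-1}$); otherwise set $(r_{\mathrm{pub}})_{j-1}=\overline{r^*_j}$ for $j\in\{2,\dots,\min(k,2\lambda+1)\}$, sample the remaining bits of $r_{\mathrm{pub}}$ uniformly, simulate the bounded adversary on $r_{\mathrm{pub}}$, parse its output as a \TM input $(r,w,z)$, and return $w$ when $|r|=k$ and $r=r^*$. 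The public-coin property is essential here: the solver knows the sampling coins $r^*$ and can plant them inside $r_{\mathrm{pub}}$. Since $\mathcal{S}ol_k$ runs in $\poly(n)\ll T(\lambda)\le T(k)$ time, hardness at parameter $k$ forces its success to be at most $\mu(k)$, which (combined with the constant-factor loss from the abort and a union bound over the polynomially-many admissible $k$'s) bounds the acceptance route's contribution by $\poly(n)\cdot\mu(\lambda)$; this is $\negl(n)$ because $\mu(\lambda)$ is negligible in $n$ by the choice of $\lambda$, giving the claimed $\mu(\lambda)+\negl(n)$ overall bound.

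\textbf{Main obstacle.} The delicate step is the sub-regime $k>2\lambda+1$, where the adversary's $r$ has a forced $1^{k-2\lambda-1}$ suffix and is therefore distributed on only a $2^{-(k-2\lambda-1)}$ fraction of $\{0,1\}^k$; the solver above consequently aborts with probability $1-2^{-(k-2\lambda-1)}$, and the direct hardness bound loses a factor $2^{k-2\lambda-1}$ that is not absorbed by $\mu(k)$ when $k$ is substantially larger than $2\lambda$. My intended workaround is to argue that the graph family $\mathcal{G}$ and the input-format conventions of $\mathcal{M}$ cap the adversary's useful $|r|$ at $O(\lambda)$ (any ``excess'' length can be relocated from $r$ into the $1^{|z|}$ tail without changing the realised output), so only polynomially many values of $k\in\{\lambda+1,\dots,2\lambda+1\}$ need to be aggregated in the reduction; alternatively, non-uniform advice fixing the optimal $k$ per $n$ suffices, since the slack between $T(\lambda)$ and $\poly(n)$ is superpolynomial.
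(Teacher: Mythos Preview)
Your approach mirrors the paper's and is, if anything, more explicit: you decompose by $k=|r|$ and give per-$k$ reductions, whereas the paper simply asserts (after bounding the trailing-zero run of $r_{\mathrm{pub}}$) that ``the adversary is required to solve $\mathcal{S}$ on an instance that a uniformly random string of length at least $\lambda$ was used for sampling.'' Your handling of $k\le\lambda$ and of $\lambda+1\le k\le 2\lambda+1$ is correct.

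You are also right that $k>2\lambda+1$ is the genuine obstacle, but neither of your proposed workarounds closes it. The first (``relocate excess length from $r$ into the $1^{|z|}$ tail'') is wrong: stripping trailing $1$s from $r$ changes the sampler input, and $D(r)\ne D(r')$ in general, so a witness $w$ for $D(r)$ need not be a witness for $D(r')$. Nothing in $\mathcal{M}$ or in $\mathcal{G}$ forces the adversary to prefer the shortest admissible $r$; indeed, a $(T,\mu)$-hard problem may be easy on seeds with long trailing $1$-runs (such seeds have density $2^{-(k-2\lambda-1)}$ under the uniform distribution on $\{0,1\}^k$, which is compatible with, say, $\mu(k)=2^{-k/3}$), and a bounded adversary exploiting this would win with probability $1$ on any graph whose AG has $h\ge 3\lambda+O(1)$. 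The second workaround (non-uniformly fix the optimal $k$) fails for the same reason: even for a single fixed $k>2\lambda+1$, your reduction only yields $q_k\le 2^{k-2\lambda}\mu(k)$, which is vacuous once $k-2\lambda\gg\log(1/\mu(k))$. The difficulty is the distributional mismatch between the adversary's $r$ and the uniform seed required by the hardness assumption, not the multiplicity of admissible $k$'s.

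For context, the paper's own proof does not confront this case either: it only considers $r=\overline{r_{\mathrm{pub}}'}$ for \emph{prefixes} $r_{\mathrm{pub}}'$ of $r_{\mathrm{pub}}$ (hence $|r|\le 2\lambda$), and never discusses the adversary padding $r$ with trailing $1$s beyond length $2\lambda$. So the gap you flagged is real and shared with the published argument; your write-up is at least as rigorous as the paper's, but the final paragraph does not actually resolve the obstacle.
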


\begin{proof}
As in Claim~\ref{claim:possibility_2}, we have an AG structure attached to a SATM, where both components are correctly labeled. Now, the labeling of the AG's right column is consistent with the SATM's last column. In other words, the input of the AG reflects the outcome of the Turing machine $\mathcal{M}$ simulated by the SATM. 
We prove that a computationally bounded adversary is unlikely to successfully find an input for $\mathcal{M}$ such that the outcome is the complement of the right column's output. 
Consequently, in at least one of the rows the output equals the input. 
This paves the way for justifying why Alg.~\ref{alg:efficient_labeling_non_hard} produces a valid output labeling with high probability.

We begin by noting that from the same explanation used in Claim~\ref{claim:possibility_1}, all the $\mathcal{C}^{\Pi}$ constraints continue to be satisfied except for \ref{item:con-root}, which requires all the roots to return $\text{YES}$. The reason is that we previously relied on constraints \ref{item:con-automatic_yes_output} (in Claim~\ref{claim:possibility_1}) and \ref{item:con-input_ag_neq_input_tm} (in Claim~\ref{claim:possibility_2}) to justify why at least one of the nodes in the right column answers with $\text{YES}$. Whenever the input labeling fits the third possibility~\ref{labeling:input_equals_tm_outcome}, this is no longer the case. 

The easier scenario has to do with a ``small'' AG. That is, one with height at most $2\lambda + 2$. Since the grid is vertical, if the height is at most $2\lambda + 2$, so is the width. Therefore, the grid nodes can explore the topology of the whole graph within $O(\lambda)$ rounds, and solve the problem easily after learning the input of the rightmost node in their row. Then, the nodes in the right column also output YES, resulting in the root returning YES as well, thereby satisfying \ref{item:con-root}.

In grids with greater height, suppose that for every row $i \in \{ 1,...,h - 2 \}$, the adversary manages to set the input to be the complement of the output $b_{\text{out}}$ chosen for the rightmost node in the row. 
Our Alg.~\ref{alg:efficient_labeling_non_hard} sets these output bits to $r_{\textnormal{pub}} \circ \Vec{0}$, where $|r_{\textnormal{pub}}| = 2\lambda$.
When the labeling is associated with possibility~\ref{labeling:input_equals_tm_outcome}, the input of this column is determined by the output of the \textsf{TM} $\mathcal{M}$ used in the SATM component. 
So, the adversary has to find an input on which $\mathcal{M}$ returns $\overline{r_{\textnormal{pub}}} \circ \Vec{1}$.

Allegedly, the adversary must find a string $w \in \{0,1 \}^{p(\lambda)}$ that satisfies $(D(\overline{r_{\textnormal{pub}}}), w) \in \mathcal{S}$.
This is because $\mathcal{M}$ returns $\overline{r_{\textnormal{pub}}} \circ 1^{h - 2 - 2\lambda}$ on input $(\overline{r_{\textnormal{pub}}}, w, 0^{h - 2 - 2\lambda})$.
However, this is not entirely true, as this output can also be achieved by using any prefix of $r_{\textnormal{pub}}$ obtained from removing all the zeros that appear in the end of the string. More concretely, suppose that $r_{\textnormal{pub}} = r_{\textnormal{pub}}' \circ 0^l$ for some integer $l$. 
Then, the adversary may try to find $w$ such that $(D(\overline{r_{\textnormal{pub}}'}) ,w) \in \mathcal{S}$.
On input $(r_{\textnormal{pub}}', w, 0^{h - 2 - 2\lambda +l})$, the output of $\mathcal{M}$ is still $\overline{r_{\textnormal{pub}}'} \circ 1^{h - 2 - 2\lambda + l} = \overline{r_{\textnormal{pub}}} \circ 1^{h - 2 - 2\lambda}$. 

We show that with high probability over the choice of $r_{\textnormal{pub}}$, the length of the zeros suffix $l$ cannot be too long. Specifically, we show that the probability to sample $r_{\textnormal{pub}} = r_{\textnormal{pub}}' \circ 0^l$ where $l > \lambda$ is at most negligible.
This stems from $r_{\textnormal{pub}}$ being sampled uniformly at random.

\begin{equation}
\label{eq:bad_preset_randomness}
\Pr_{r_\textnormal{pub}}\left[ r_{\textnormal{pub}} = r_{\textnormal{pub}}' \circ 0^l \land l > \lambda \right] \leq 
\sum_{l = \lambda + 1}^{2\lambda} \frac{1}{2^{l}} \leq
\frac{1}{2^\lambda} = 
\frac{1}{2^{\omega(\log n)}} = 
\frac{1}{n^{\omega(1)}}
\end{equation}
As a consequence, with probability all but negligible, $r_{\textnormal{pub}} = r_{\textnormal{pub}}' \circ 1 \circ  0^*$ where $|r_{\textnormal{pub}}'\circ 1| \geq \lambda$.
Whenever this holds, the adversary is required to solve the search problem $\mathcal{S}$ on an instance $x$ that a uniformly random string of length at least $\lambda$ was used for sampling it.

To be noted, we ignored the case of $\overline{r_{\textnormal{pub}}} = \Vec{1}$. Alg.~\ref{alg:hard_tm} specifies $\Vec{1}$ as the default output whenever the solution given as part of the input is wrong, meaning that this output can be achieved easily. Nevertheless, if $r_{\textnormal{pub}}$ is sampled uniformly at random, this event occurs only with negligible probability.
Let $\mathsf{negl}(n)$ be the probability that $r_{\textnormal{pub}} = \Vec{1}$ plus the bound specified in Eq.~(\ref{eq:bad_preset_randomness}). 

The distributional problem $(\mathcal{S}, \mathcal{D})$ is assumed to be public-coin $(T, \mu)$-hard-on-average (against non-uniform solvers). We set the parameter $\lambda$ so that $T(\lambda)$ is superpolynomially in $n$. 
The probability that a non-uniform \textsf{PPT} algorithm successfully finds a solution for a random instance (when the randomness used for sampling the instance is publicly known) is $\mu(\lambda)$. 
The \textsf{TM} assimilated in the graph returns the desired output only if the adversary manages to solve $\mathcal{S}$. Hence, with all but $\mu(\lambda) + \mathsf{negl}(n)$ probability the \textsf{PPT} adversary fails, suggesting that there exists a row $i$ that satisfies $b_{\text{in}} = b_{\text{out}}$. 
According to the algorithm, the corresponding grid node outputs YES (which is consistent with constraint~\ref{item:con-output_01}). As before, the answer propagates to the root, ensuring that constraint~\ref{item:con-root} is satisfied also in the right column tree. 
Earlier, we claimed that the other constraints are satisfied as well. 
Thus, the output labeling generated by Alg.~\ref{alg:efficient_labeling_non_hard} satisfies $C^{\Pi}$. Evidently, the round complexity of the algorithm is $O(\lambda)$.
\end{proof}

\paragraph{Gaps Between The Models.}

In \cite{bgk25}, the gap in the round complexity (between the private coins model and the oblivious model) is exponential. To be specific, the algorithm that uses shared randomness takes $O(\log n)$ rounds. 
Their technique implies a gap between the oblivious model and the preset public coins model when the adversary is bounded (see Lemma~\ref{lemma:gap_oblivious}) and one between the private coins model and the preset public coins model when the adversary is unbounded (see Lemma~\ref{lemma:gap_private}).

The main difficulty in using this technique to establish a separation between the setting that considers efficient adversaries and the one that allows them to be unbounded lies in the inability to locally determine whether the height of an AG is sufficiently large.
\cite{bgk25} address this issue by solving the problem trivially (by exploring the whole graph) whenever the height is smaller than $c\log n$ for some $c > 0$. If the grid's height is greater than that, the solution that relies on the shared randomness is valid with probability at least 
$1 - \frac{1}{n^c}$. 

This is no longer true in our setup, as the height of the grid must be at least $\lambda$.
As discussed in Corollaries~\ref{cor:poly_hardness} and~\ref{cor:subexp_hardness}, the upper bound on the round complexity against $\poly(n)$-time adversaries turns out to be a small polynomial (resp.\ polylogarithmic) if we assume (public-coin) polynomial (resp.\ subexponential) hardness-on-average of the distributional problem used for construction. That is, the gap in the preset public coins model between bounded and unbounded adversaries is subexponential (under strong assumptions) or polynomial (under weaker assumptions).

\section{Average-Case Hardness in \texorpdfstring{\TFNP/\poly}{TFNP/poly} from Gaps Between Bounded and Unbounded Adversaries}
\label{sec:probability_gap}

In this section, we assume the existence of an \textsf{LCL} problem $\Psi$ where unbounded adversaries are more harmful than efficient ones. 
We show that this implies a known computational assumption. 

\probabilitygap* 

We actually prove the following theorem, which implies Theorem~\ref{thm:probability_gap}:

\begin{theorem}
\label{thm:probabilitygap_enhanced}
Let $\Psi$ be an \textsf{LCL} problem.
Assume that there exist a negligible function $\epsilon: \mathbb{N} \rightarrow [0,1]$, a function $r: \mathbb{N} \rightarrow \mathbb{N}$ and an $r(n)$ round \textsf{LOCAL} algorithm $\mathcal{A}$ that for all large enough $n \in \mathbb{N}$ solves $\Psi$ on any $n$~vertex graph $G$ and input labeling chosen by a $\poly(n)$-time adversary with probability at least $1 - \epsilon(n)$ (over the choice of the preset randomness $r_{\text{pub}}$, the private coins $r_{\text{sec}}$ and the coin tosses of the adversary).

Moreover, assume that there exist a graph family $\mathcal{G}:=\{ G_n \}_{n \in \mathbb{N}}$, a sequence of integers $\mathcal{H}$ and a function $\varphi: \mathbb{N} \rightarrow [0,1]$ that is at least the inverse of a polynomial, such that for every $n \in \mathcal{H}$, $\mathcal{A}$ fails on $G_{n}$ and input labelings generated by unbounded adversaries with probability at least $\varphi(n)$ (over the choice of $r_{\text{pub}}$ and $r_{\text{sec}}$).

Then, there is a public-coin infinitely-often polynomially hard-on-average problem in $\TFNP/\poly$.
\end{theorem}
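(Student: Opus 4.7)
The plan is to extract a hard distributional search problem from $\Psi$ by letting a solver play the role of an adversary that drives $\mathcal{A}$ to failure. The key technical device is a \emph{characterizing multiset} of private-coin strings, hard-coded as non-uniform advice, which converts the randomized correctness condition of $\mathcal{A}$ into a deterministically verifiable predicate. Concretely, for each $n$ I would sample $l(n) = \poly(n)$ independent uniform private-coin strings $r_{\text{sec}}^{(1)}, \ldots, r_{\text{sec}}^{(l(n))}$, collect them into a multiset $R_n$, and apply a Chernoff bound combined with a union bound over the $2^{O(n)}$ possible input labelings $x$ and over the $2^{\poly(n)}$ possible public strings $r_{\text{pub}}$ to fix an $R_n$ with the following property: for every $(x, r_{\text{pub}})$, the empirical failure rate
\[
\widehat{\mathrm{fail}}(x, r_{\text{pub}}) := \tfrac{1}{l(n)}\bigl|\{\,i : \mathcal{A}(G_n, x, r_{\text{pub}}, r_{\text{sec}}^{(i)}) \text{ is invalid}\,\}\bigr|
\]
is within $\varphi(n)/8$ of the true failure probability over $r_{\text{sec}}$. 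Since $\varphi$ is at least inverse-polynomial, $l(n) = \Theta(\poly(n)/\varphi(n)^2) = \poly(n)$ suffices.

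With this advice fixed, I would define the distributional search problem $(\mathcal{S}, \mathcal{D})$ in $\FNP/\poly$: the advice for length parameter $n$ is $(G_n, R_n, \mathbbm{1}\{n \in \mathcal{H}\})$, instances are sampled as $(1^n, r_{\text{pub}})$ with $r_{\text{pub}}$ uniform, and the relation accepts $((1^n, r_{\text{pub}}), x)$ either if $n \in \mathcal{H}$, $x$ is a syntactically valid input labeling for $G_n$, and $\widehat{\mathrm{fail}}(x, r_{\text{pub}}) \geq \varphi(n)/4$, or if $n \notin \mathcal{H}$ and $x = 0^n$. Verification runs in polynomial time because $\mathcal{A}$ does and $l(n)$ is polynomial; the sampling is uniform in $r_{\text{pub}}$, so the problem is public-coin. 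For non-triviality at $n \in \mathcal{H}$, the hypothesis supplies an unbounded adversary making $\mathcal{A}$ fail with probability at least $\varphi(n)$ over $(r_{\text{pub}}, r_{\text{sec}})$; an averaging argument yields that at least a $\varphi(n)/2$-fraction of $r_{\text{pub}}$ admit an $x$ with true failure probability at least $\varphi(n)/2$, and the characterization property then forces $\widehat{\mathrm{fail}}(x, r_{\text{pub}}) \geq \varphi(n)/4$, so this $x$ witnesses a solution.

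For hardness, suppose toward contradiction that a non-uniform \PPT solver $\mathcal{B}$ finds valid solutions with inverse-polynomial probability $\nu(n)$ for all sufficiently large $n$. Then $\mathcal{B}_{\mathcal{A}, G_n}(1^n, r_{\text{pub}}) := \mathcal{B}(1^n, r_{\text{pub}})$ (with an arbitrary fallback on failure) is a bounded adversary for $\Psi$ on $G_n$. Whenever $\mathcal{B}$ succeeds, the characterization property gives true failure probability over $r_{\text{sec}}$ at least $\varphi(n)/4 - \varphi(n)/8 = \varphi(n)/8$; hence the overall failure probability of $\mathcal{A}$ against this bounded adversary is at least $\nu(n)\cdot\varphi(n)/8$, which is inverse polynomial and contradicts $\epsilon(n) = \negl(n)$ at every $n \in \mathcal{H}$. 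This establishes non-trivial, public-coin, infinitely-often potentially-vacuously polynomially hardness-on-average in $\FNP/\poly$, and Lemma~\ref{lemma:transformation_total} upgrades it to a public-coin infinitely-often polynomially hard-on-average problem in $\TFNP/\poly$, completing the proof.

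The main obstacle will be calibrating the characterizing multiset so that the union bound over exponentially many input labelings and over all $r_{\text{pub}}$ is absorbed by the Chernoff tail, while $R_n$ still fits into polynomial-length advice; this forces a careful choice $l(n) = \Theta(\poly(n)/\varphi(n)^2)$ and the threshold $\varphi(n)/4$ in the definition of $\mathcal{S}$. A secondary subtlety is that the derived bounded adversary is non-uniform in $(G_n, R_n, \mathcal{H})$, but since Definition~\ref{def:bounded_adv} allows the adversary to depend on $(n, G)$ and receive $\poly(n)$-length advice, this non-uniformity is benign.
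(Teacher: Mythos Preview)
Your proposal is correct and follows essentially the same approach as the paper: build a characterizing multiset $R_n$ via Chernoff plus union bound over all $(x,r_{\text{pub}})$, define the search problem in $\FNP/\poly$ with that advice, prove potentially-vacuous hardness by turning a solver into a bounded adversary, prove non-triviality from the unbounded-adversary hypothesis, and then invoke Lemma~\ref{lemma:transformation_total}. Your constants differ (threshold $\varphi(n)/4$ and accuracy $\varphi(n)/8$ versus the paper's $\theta(n)=(\epsilon(n)+\varphi(n))/2$ and $\gamma(n)=(\varphi(n)-\epsilon(n))/4$), and your non-triviality step uses a direct Markov/averaging argument where the paper instead bounds $\Pr[\mathsf{Good}]$ through a slightly longer chain of inequalities, but these are cosmetic differences rather than a genuinely different route.
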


It's not hard to see that Theorem~\ref{thm:probabilitygap_enhanced} implies Theorem~\ref{thm:probability_gap}. Denote $\mathsf{ROUND}^{\textsf{B}}_{\epsilon(n)}(\Psi)$ by $r(n)$, and let $\mathcal{A}$ be one of the \textsf{LOCAL} algorithms that attains this round complexity. That is, for all large enough $n \in \mathbb{N}$, $\mathcal{A}$ succeeds with probability $1- \epsilon(n)$ on any $n$~vertex graph and input labeling chosen by a bounded adversary.

The assumption in Theorem~\ref{thm:probability_gap} suggests that against an unbounded adversary, there cannot be an $r(n)$~round \textsf{LOCAL} algorithm (in particular, algorithm $\mathcal{A}$) that succeeds with probability $1 - \frac{1}{p(n)}$ for all large enough $n$. 
This is the same as saying that for infinitely many values of $n$, there exists a ``hard'' $n$~vertex graph $G_{n}$ on which the algorithm fails with probability at least $\frac{1}{p(n)}$.
Let us denote this sequence of graph sizes by $\mathcal{H}$, and the lower bound on the failure probability by $\varphi(n)$. 
For convenience, we restrict $\mathcal{H}$ to integers larger than $n_0$, where $n_0$ is chosen such that for all $n \geq n_0$ the inequality $\epsilon(n) \leq \frac{1}{2} \varphi(n)$ holds. Such an $n_0$ must exist because $\varphi(n)$ is an inverse polynomial and $\epsilon(n)$ is negligible.
\[
\mathcal{H} := \left\{ 
\mathbb{N} \ni n \geq n_0 \Bigg| \substack{\text{\normalsize$\exists n$~vertex graph $G_n$ s.t $\mathcal{A}$ fails on $G_n$}\\\text{\normalsize w.p $\varphi(n)$ when the input labeling is chosen}\\\text{\normalsize by an unbounded adversary}}
\right\}
\]
The graph family $\mathcal{G} := \{ G_n \}_{n \in \mathbb{N}}$ is constructed as follows. For $n \in \mathcal{H}$, we use the $G_n$ guaranteed by the definition of $\mathcal{H}$. For any $n \in \mathbb{N} \backslash \mathcal{H}$, let $G_n$ be an arbitrary $n$~vertex connected graph.

%It follows that the assumption in Theorem~\ref{thm:probability_gap} yields the assumption in Theorem~\ref{thm:probabilitygap_enhanced}.

\subsection{Problem Definition}

\paragraph{Search Problem $\mathcal{S}$ (Informal).}

The full definition of the search problem $\mathcal{S}$ (Def.~\ref{def:search_problem}) is going to depend on the algorithm $\mathcal{A}$, the graph family $\mathcal{G}$, the sequence of graph sizes $\mathcal{H}$ and on $\mathcal{R} = \{ R_n\}_{n \in \mathbb{N}}$, a sequence of multisets of private coins $R_n = \{ r_{\textnormal{sec}}^{(i)} \}_{i=1}^{|R_n|}$. 

$\mathcal{A}$, $\mathcal{G}$ and $\mathcal{H}$ are as above. As for $\mathcal{R}$, we will go into the details later, but for the time being, one can think about $R_n \in \mathcal{R}$ as a multiset of samples of $r_{\textnormal{sec}}$, prepared in advance (so no actual coin tossing takes place while verifying a solution), where the goal is to use these samples in order to characterize the behavior of algorithm $\mathcal{A}$.

For $n \in \mathcal{H}$, given an instance $(1^n, r_{\textnormal{pub}})$, where the second item is a tape of preset randomness of length $b^{\mathcal{A}}_{\textnormal{pub}}(n)$, the task is to find an input labeling $x$. For $n \in \mathbb{N} \backslash \mathcal{H}$, the solution is trivial and it is simply $0^n$.

\begin{equation*}
\label{eq:searchproblem}
\mathcal{S} = 
\mathcal{S}_{\mathcal{A}, \mathcal{G}, \mathcal{H}, \mathcal{R}} := 
\left\{ 
\bigg((1^n, r_{\textnormal{pub}}), x \bigg) \Bigg| 
\substack{ \text{\normalsize The fraction of $r_{\textnormal{sec}}^{(i)}$'s in $R_n$ for which}\\\text{\normalsize $\mathcal{A}$'s output on $G_n$, $x$, $r_{\textnormal{pub}}$, $r_{\textnormal{sec}}^{(i)}$ is}\\\text{\normalsize invalid is at least $(\epsilon(n) + \varphi(n))/2$} }
\right\}_{n \in \mathcal{H}} \bigcup
\left\{ 
\bigg((1^n, r_{\textnormal{pub}}), 0^n \bigg)
\right\}_{n \in \mathbb{N} \backslash \mathcal{H}} 
\end{equation*}

In essence, the problem should capture the hardness of finding a hard input labeling $x$ on which the \textsf{LOCAL} algorithm fails with probability more than inverse polynomial. Think about the entity trying to solve $\mathcal{S}$ as playing the role of the adversary in the original \textsf{LCL} problem. 
Assuming that unbounded adversaries outperform efficient ones with respect to the success probability that they attain, it is reasonable to think that unbounded solvers for our search problem have an advantage over efficient solvers, laying the groundwork for establishing a computational assumption related to search problems.

\begin{remark}[Specifying the Graph Size]
\label{remark:graph_size}
The size $n$ is mentioned as part of the instance because it typically cannot be inferred from the length of $r_{\textnormal{pub}}$. Not having the graph size explicitly specified strengthens the solver, allowing it to choose $n$. This is not our intention. Another, more technical reason for mentioning the graph size, is that we measure the complexity as a function of $n$. E.g., the verification procedure would need to emulate the execution of the \textsf{LOCAL} algorithm across all $n$ nodes. Given that the runtime of each node is assumed to be $\poly(n)$, the overall verification time is polynomial in $n$ as well. To ensure that $\mathcal{S}$ is efficiently verifiable, the size of the instance must be $\Omega(n)$. Thus, it would be convenient to use $1^n$ in order to force the length to be at least $n$ ($b_{\text{pub}}^{\mathcal{A}}(n)$ is polynomially bounded, but we don't impose a lower bound over it). 
\end{remark}

\begin{remark}
[Infinitely-Often Hardness]
\label{remark:io}
Theorem~\ref{thm:probabilitygap_enhanced} establishes a distributional search problem that is only \textit{infinitely-often} hard-on-average. In other words, for any (non-uniform) \textsf{PPT} solver, the success probability is at most negligible for infinitely many $n$'s (rather than for all sufficiently large $n$, which is the more conventional notion of hardness that we use in Def.~\ref{def:hard_on_average}).
We emphasize that this stems from the fact that our distributional problem is guaranteed to be hard only for instances associated with values of $n$ that belong to $\mathcal{H}$.

The problem's hardness can be strengthened to hold for all large enough $n$, assuming that the failure probability of $\mathcal{A}$ against the unbounded adversary is at least non-negligible for all sufficiently large $n$ (instead of for infinitely many $n$'s).
\end{remark}

\paragraph{Characterizing Multisets.}
To characterize the behavior of the nodes when running the \textsf{LOCAL} algorithm $\mathcal{A}$ in an $n$~vertex graph $G_n \in \mathcal{G}$ when the preset public coins are $r_{\textnormal{pub}}$, we first sample $|R_n|$ tapes of private coins $r_{\textnormal{sec}}^{(i)}$ of length $n \cdot b^{\mathcal{A}}_{\textnormal{sec}(n)}$ (the size of $R_n$ will soon be determined). The \textit{characterizing multiset} contains the output labelings returned by $G_n$'s nodes for the different $r_{\textnormal{sec}}^{(i)}$'s in $R_n$. Roughly, the fraction of invalid output labelings within the \textit{characterizing multiset} approximates the actual probability that algorithm $\mathcal{A}$ yields an invalid output labeling (over the choice of the private coins $r_{\textnormal{sec}}$, when sampled uniformly at random by the nodes). Below, we formalize this intuition.

\begin{definition}[Expected Value for a Multiset]
\label{def:expected_value}
Let $\Psi$ be an \textsf{LCL} problem.
Let $n \in \mathbb{N}$, $G_n$ be an $n$~vertex graph and $x \in \{ 0, 1\}^{O(n)}$ be an input labeling.

$T$ is a multiset $\{ y^{(i)} \}_{i = 1}^{|T|}$ of output labelings. The \textit{expected value} for $T$ is:
\[
Q_{G_n, x}(T) =
\underset{i \stackrel{\$}{\leftarrow} [|T|]}{\mathbb{E}} \left[ \mathbbm{1}\{ y^{(i)}\text{ is valid} \} \right]
= 
\frac{1}{|T|} \sum_{i \in [|T|]} \mathbbm{1}\{ y^{(i)}\text{ is valid} \}
\]
\end{definition}

Where being valid means that the verification procedure associated with the \textsf{LCL} problem $\Psi$ accepts on the graph $G_n$ with input labeling $x$ and output labeling $y^{(i)}$.

\begin{definition}[$\gamma(n)$-Characterizing Multiset]
\label{def:characterizing_multiset}
Let $\Psi$ be an \textsf{LCL} problem, $\mathcal{A}$ be a \textsf{LOCAL} algorithm and $\gamma: \mathbb{N} \rightarrow [0,1]$.
Let $n \in \mathbb{N}$. Let $G_n$ be an $n$~vertex graph, $r_{\textnormal{pub}} \in \{ 0,1 \}^{b^{\mathcal{A}}_{\textnormal{pub}}(n)}$ be a preset public string and $x \in \{ 0, 1\}^{O(n)}$ be an input labeling.

$T$ is a $\gamma(n)$-\textit{characterizing multiset} for $\mathcal{A}$ on $G_n$, $r_{\textnormal{pub}}$, and $x$ if the following holds.
\[
\left| Q_{G_n, x}(T) - \Pr_{r_{\textnormal{sec}}} \left[ \mathcal{A}(G_n, x, r_{\textnormal{pub}}, r_{\textnormal{sec}}) \text{ is valid}\right] \right| \leq \gamma(n)
\]
Where $Q_{G_n, x}(T)$ is the \textit{expected value} of multiset $T$, and $\mathcal{A}(G_n, x, r_{\textnormal{pub}}, r_{\textnormal{sec}})$ is the output labeling generated by the algorithm $\mathcal{A}$. 
\end{definition}

\noindent \textit{The role of the parameter $\gamma(n)$}. It should be interpreted as a measure of the quality of the characterizing multiset. The smaller it is, the better the characterization. We will choose $\gamma(n)$ to be the inverse of a polynomial.

\begin{lemma}
\label{lemma:sample_characterizing_multiset}
Let $\Psi$ be an \textsf{LCL} problem, $\mathcal{A}$ be a \textsf{LOCAL} algorithm, $\gamma: \mathbb{N} \rightarrow [0,1]$ and $\delta: \mathbb{N} \rightarrow [0, 1]$.
Let $n \in \mathbb{N}$. Let $G_n$ be an $n$~vertex graph, $r_{\textnormal{pub}} \in \{ 0,1 \}^{b^{\mathcal{A}}_{\textnormal{pub}}(n)}$ be a preset public string and $x \in \{ 0, 1\}^{O(n)}$ be an input labeling.

\noindent $T$ is a multiset of output labelings generated by following the steps below.
\begin{itemize}[nosep]

\item Sample a multiset $R_n = \{ r_{\textnormal{sec}}^{(i)} \}_{i = 1}^{l}$ of $l = l(n) := 2 \cdot \gamma^{-2}(n) \cdot \ln(2\delta^{-1}(n))$ independent uniformly random binary strings of length $n \cdot b^{\mathcal{A}}_{\textnormal{sec}}(n)$. 

\item For $i \in [l]$, the output labeling $y^{(i)}$ is $\mathcal{A}(G_n, x, r_{\textnormal{pub}}, r_{\textnormal{sec}}^{(i)})$. Note, $|y^{(i)}| \leq O(n)$, because the output labels per node are of constant length.

\item $T = T^{R_n}_{x, r_\text{{pub}}} := \{ y^{(i)} \}_{i=1}^l$
\end{itemize}
Then, $T$ is a $\gamma(n)$-characterizing multiset for $\mathcal{A}$ on $G_n$, $r_{\textnormal{pub}}$, and $x$ with probability $1 - \delta(n)$.
\end{lemma}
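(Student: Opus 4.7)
The plan is a direct application of a Chernoff/Hoeffding bound to the independent samples $r_{\textnormal{sec}}^{(1)}, \ldots, r_{\textnormal{sec}}^{(l)}$ that define $T$. For each $i \in [l]$, let
\[
X_i := \mathbbm{1}\{ y^{(i)} \textnormal{ is valid} \} = \mathbbm{1}\{ \mathcal{A}(G_n, x, r_{\textnormal{pub}}, r_{\textnormal{sec}}^{(i)}) \textnormal{ is valid} \}.
\]
Since $G_n$, $x$, and $r_{\textnormal{pub}}$ are fixed, and the $r_{\textnormal{sec}}^{(i)}$'s are independent and uniformly distributed, the $X_i$'s are i.i.d.\ Bernoulli random variables with common mean
\[
p := \mathbb{E}[X_i] = \Pr_{r_{\textnormal{sec}}}\!\left[ \mathcal{A}(G_n, x, r_{\textnormal{pub}}, r_{\textnormal{sec}}) \textnormal{ is valid} \right].
\]

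Next I would observe that, by Definition~\ref{def:expected_value}, the expected value for the sampled multiset is exactly the empirical mean of these indicators:
\[
Q_{G_n, x}(T) = \frac{1}{l} \sum_{i=1}^{l} X_i.
\]
Therefore, the quantity we want to control,
\[
\left| Q_{G_n, x}(T) - \Pr_{r_{\textnormal{sec}}}\!\left[ \mathcal{A}(G_n, x, r_{\textnormal{pub}}, r_{\textnormal{sec}}) \textnormal{ is valid} \right] \right|,
\]
is precisely $\left| \tfrac{1}{l} \sum_{i} X_i - p \right|$, the deviation of the empirical mean of $l$ i.i.d.\ $\{0,1\}$-valued samples from their expectation.

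Finally, I would invoke Hoeffding's inequality for bounded $[0,1]$ variables to obtain
\[
\Pr\!\left[ \left| \tfrac{1}{l} \sum_{i=1}^l X_i - p \right| > \gamma(n) \right] \leq 2\exp\!\left( -2 l \, \gamma^2(n) \right),
\]
where the probability is over the sampling of $R_n$. Substituting $l = l(n) = 2\gamma^{-2}(n)\ln(2\delta^{-1}(n))$ gives a bound of
\[
2\exp\!\left( -4 \ln(2\delta^{-1}(n)) \right) = 2 \cdot \bigl(\delta(n)/2\bigr)^{4} \leq \delta(n),
\]
valid for all $\delta(n) \leq 1$ (which we may assume without loss of generality, since the claim is trivial otherwise). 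Equivalently, with probability at least $1 - \delta(n)$ the sampled multiset $T$ satisfies the inequality in Definition~\ref{def:characterizing_multiset}, i.e.\ $T$ is a $\gamma(n)$-characterizing multiset, which is exactly the statement of the lemma.

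There is essentially no serious obstacle here: the lemma is a concentration statement and the chosen sample size $l(n)$ is comfortably above the threshold demanded by Hoeffding's inequality. The only minor care needed is to confirm that the indicators are genuinely independent (which follows from the independence of the $r_{\textnormal{sec}}^{(i)}$'s, as the algorithm $\mathcal{A}$ and all other quantities are fixed) and that validity of an output labeling is a deterministic function of $(G_n, x, r_{\textnormal{pub}}, r_{\textnormal{sec}}^{(i)})$, so $X_i$ is a well-defined random variable.
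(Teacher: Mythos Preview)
Your proposal is correct and follows essentially the same approach as the paper: both set up i.i.d.\ indicators for validity, identify $Q_{G_n,x}(T)$ with their empirical mean, and apply a Chernoff/Hoeffding bound. The only cosmetic difference is that the paper uses the weaker bound $2\exp(-\gamma^2 l/2)$ (for centered $[-1,1]$ variables), which makes the chosen $l$ tight, whereas your sharper Hoeffding bound $2\exp(-2l\gamma^2)$ gives extra slack; either way the conclusion follows.
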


\begin{proof}
Fix $i \in [l]$. $\chi_i$ is a random variable supported on $[-1,1]$, defined as follows.
\[
\chi_i := 
\mathbbm{1}\{ y^{(i)} \text{ is valid} \} - \Pr_{r_{\textnormal{sec}}}\left[ \mathcal{A}(G_n, x, r_{\textnormal{pub}}, r_{\textnormal{sec}}) \text{ is valid}  \right] 
\]
Be aware that the variables in $\{\chi_i \}_{i=1}^l$ are i.i.d, because the $y^{(i)}$'s are. Moreover, since the $y^{(i)}$'s are generated by sampling $r_{\textnormal{sec}}^{(i)}$ uniformly at random, we have $\mathbb{E}[\chi_i] = 0$. 

\allowdisplaybreaks
\begin{align*}
& \Pr_{R_n} \left[ \left| Q_{G_n, x}(T) - \Pr_{r_{\textnormal{sec}}}\left[  \mathcal{A}(G_n, x, r_{\textnormal{pub}}, r_{\textnormal{sec}}) \text{ is valid}\right] \right| > \gamma(n) \right]  & \text{(Def.~\ref{def:expected_value})}\\  &=
\Pr_{R_n} \left[ \left| \frac{1}{l} \sum_{i=1}^l \mathbbm{1}\{ y^{(i)} \text{ is valid} \} - \Pr_{r_{\textnormal{sec}}}\left[  \mathcal{A}(G_n, x, r_{\textnormal{pub}}, r_{\textnormal{sec}}) \text{ is valid}  \right] \right| > \gamma(n) \right]  & \text{(Symmetry)}\\ &\leq
2 \cdot \Pr_{R_n} \left[ \sum_{i=1}^l  \chi_i  > l \cdot \gamma(n) \right]  & \left( \begin{subarray}{l} \text{\normalsize Chernoff bound} \\ \text{\normalsize \cite[Thm.~A.1.16]{as00}} \end{subarray} \right)\\ &\leq 
2\cdot \text{exp}\left( - \frac{\gamma^2(n) \cdot l}{2}\right) \\ &=
2\cdot \text{exp}\left( - \frac{\gamma^2(n) \cdot 2 \cdot\gamma^{-2}(n) \cdot \ln(2\delta^{-1}(n))}{2}\right) \\ &=
\delta(n)
\end{align*}
\end{proof}

\begin{lemma}
\label{lemma:same_coins_for_all}
Let $\Psi$ be an \textsf{LCL} problem, $\mathcal{A}$ be a \textsf{LOCAL} algorithm, and $\gamma: \mathbb{N} \rightarrow [0,1]$. Let $n \in \mathbb{N}$ and let $G_n$ be an $n$~vertex graph.
Denote the upper bound on the length of the inputs labels in $\Psi$ by $d = O(1)$. Set $\delta(n)$ to be $2^{- (b^{\mathcal{A}}_{\textnormal{pub}}(n) + d \cdot n + n)}$.

$R_n = \{ r_{\textnormal{sec}}^{(i)} \}_{i=1}^{l}$ is a multiset of private coins, generated according to the instructions in Lemma~\ref{lemma:sample_characterizing_multiset}, such that it yields a $\gamma(n)$-characterizing multiset for $\mathcal{A}$ on $G_n$, $r_{\textnormal{pub}}' \in \{ 0,1 \}^{b^{\mathcal{A}}_{\textnormal{pub}}(n)}$ (preset public randomness) and $x' \in \{ 0, 1 \}^{d \cdot n}$ (an input labeling) for some $r_{\textnormal{pub}}'$ and $x'$, with probability $1 - \delta(n)$.

Then, with all but negligible probability, it gives rise to a $\gamma(n)$-characterizing multiset for $\mathcal{A}$ on $G_n$, $r_{\textnormal{pub}}$, and $x$, for \underline{every} $r_{\textnormal{pub}} \in \{ 0,1 \}^{b^{\mathcal{A}}_{\textnormal{pub}}(n)}$ and $x \in \{ 0, 1 \}^{d \cdot n}$.
\end{lemma}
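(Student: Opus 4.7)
The plan is to apply a union bound over all possible pairs $(r_{\textnormal{pub}}, x)$. Lemma~\ref{lemma:sample_characterizing_multiset} tells us that for any \emph{single} fixed pair $(r_{\textnormal{pub}}', x')$, the sampled multiset $R_n$ fails to yield a $\gamma(n)$-characterizing multiset for $\mathcal{A}$ on $G_n, r_{\textnormal{pub}}', x'$ with probability at most $\delta(n) = 2^{-(b^{\mathcal{A}}_{\textnormal{pub}}(n) + d\cdot n + n)}$. The key observation is that the \emph{same} $R_n$ is being tested against all pairs simultaneously — we do not resample.

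First, I would count the number of pairs to union-bound over. Since $r_{\textnormal{pub}} \in \{0,1\}^{b^{\mathcal{A}}_{\textnormal{pub}}(n)}$ and $x \in \{0,1\}^{d \cdot n}$, the total number of pairs is exactly $2^{b^{\mathcal{A}}_{\textnormal{pub}}(n) + d \cdot n}$. Then, by a union bound, the probability that $R_n$ fails to be a $\gamma(n)$-characterizing multiset for at least one pair $(r_{\textnormal{pub}}, x)$ is at most
\[
2^{b^{\mathcal{A}}_{\textnormal{pub}}(n) + d \cdot n} \cdot \delta(n) = 2^{b^{\mathcal{A}}_{\textnormal{pub}}(n) + d \cdot n} \cdot 2^{-(b^{\mathcal{A}}_{\textnormal{pub}}(n) + d\cdot n + n)} = 2^{-n},
\]
which is negligible in $n$. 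Equivalently, with probability at least $1 - 2^{-n}$ the multiset $R_n$ simultaneously characterizes $\mathcal{A}$'s behavior on $G_n$ for every preset randomness $r_{\textnormal{pub}}$ and every input labeling $x$, which is exactly the statement of the lemma.

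There is no real obstacle here — this is essentially the standard trick of ``paying'' extra bits of error in the single-pair concentration bound in order to afford a union bound over all inputs, and the choice of $\delta(n)$ in the hypothesis was tuned precisely so that the product $2^{b^{\mathcal{A}}_{\textnormal{pub}}(n) + d\cdot n} \cdot \delta(n)$ collapses to a negligible quantity. The only point worth flagging in the write-up is that $d = O(1)$ (since the input label alphabet of an \LCL is finite, so each node's input label has constant bit-length), which ensures $d \cdot n$ really is a legitimate polynomial bound on the length of $x$; together with $b^{\mathcal{A}}_{\textnormal{pub}}(n) \leq \poly(n)$ (see footnote~\ref{footnote:public_coins_length}), the chosen $\delta(n)$ is $2^{-\poly(n)}$ and Lemma~\ref{lemma:sample_characterizing_multiset} still yields a polynomial-sized multiset $R_n$, so the construction remains efficient.
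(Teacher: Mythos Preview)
Your proposal is correct and essentially identical to the paper's own proof: both observe that the sampling of $R_n$ does not depend on the particular pair $(r_{\textnormal{pub}}, x)$, apply a union bound over the $2^{b^{\mathcal{A}}_{\textnormal{pub}}(n) + d\cdot n}$ pairs, and obtain the failure probability bound $2^{-n}$. Your added remark that this keeps $|R_n|$ polynomial is a useful sanity check that the paper leaves implicit until Definition~\ref{def:search_problem}.
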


\begin{proof}
Pay attention to the fact that the construction of $R_n$ in Lemma~\ref{lemma:sample_characterizing_multiset} doesn't take into consideration what preset public randomness and input labeling are being used. 
For a fixed choice of $x'$ and $ r_{\textnormal{pub}}'$, we know that $R_n$ can be used to construct a $\gamma(n)$-characterizing multiset with probability $1 - \delta(n)$. From a union bound, we can make it hold for every choice of $x, r_{\textnormal{pub}}$ simultaneously. Denote by $T_{{x', r_{\textnormal{pub}}'}}^{R_n}$ the characterizing multiset obtained by following the steps in Lemma~\ref{lemma:sample_characterizing_multiset}.
\begin{align*}
\Pr_{R_n} \left[ 
\substack{\text{\normalsize$\exists x, r_{\textnormal{pub}}$ s.t $T_{x', r_{\textnormal{pub}}'}^{{R_n}}$ is \textbf{not} a $\gamma(n)$-}\\\text{\normalsize characterizing multiset for $\mathcal{A}$}\\\text{\normalsize on $G_n$, $x$, $r_{\textnormal{pub}}$}} 
\right] 
\leq
\sum_{x, r_{\textnormal{pub}}} \delta(n) \leq
2^{b^{\mathcal{A}}_{\textnormal{pub}}(n) + d \cdot n} \cdot \delta(n) 
\leq 
\frac{1}{2^n}
\end{align*}
\end{proof}

The takeaway from the last two lemmas is that given an \textsf{LCL} problem problem $\Psi$, a \textsf{LOCAL} algorithm $\mathcal{A}$, an $n \in \mathbb{N}$ and an $n$~vertex graph $G_n$, there is a method for generating a multiset $R_n = \{ r_{\textnormal{sec}}^{(i)} \}_{i =1}^{l}$ that with high probability can be used for constructing $\gamma(n)$-characterizing multisets of output labelings for $\mathcal{A}$ on $G_n$, $r_{\text{pub}}$ and $x$ for every pair of a preset public random string $r_{\textnormal{pub}}$ and an input labeling $x$.

\paragraph{Search Problem $\mathcal{S}$ (Formal).} 

\begin{definition}[Full Definition of Problem $\mathcal{S}_{\mathcal{A}, \mathcal{G}, \mathcal{H}, \mathcal{R}}$]
\label{def:search_problem}
Let the \textsf{LCL} problem $\Psi$, the functions $r(n)$, $\epsilon(n)$, $\varphi(n)$, the \textsf{LOCAL} algorithm $\mathcal{A}$, the graph family $\mathcal{G}$ and the sequence of graph sizes $\mathcal{H}$ be as promised in Theorem~\ref{thm:probabilitygap_enhanced}. 
Additionally, we define the following:

\begin{itemize}[nosep]

\item $\gamma(n):= \frac{\varphi(n) - \epsilon(n)}{4}$ (note that this is at least the inverse of a polynomial).

\item $\theta(n) := \frac{\epsilon(n) + \varphi(n)}{2}$

\item $l(n) : = 2 \cdot \gamma^{-2}(n) \cdot (b^{\mathcal{A}}_{\textnormal{pub}}(n) + d \cdot n + n)$ (a polynomial).\footnote{This is what we get from plugging in the $\delta(n)$ chosen in Lemma~\ref{lemma:same_coins_for_all} into the expression for $l(n)$ mentioned in Lemma~\ref{lemma:sample_characterizing_multiset}.}

\item $R_n = \{ r_{\textnormal{sec}}^{(i)} \}_{i=1}^{l(n)}$ is a multiset of strings in $\{ 0,1 \}^{n \cdot b^{\mathcal{A}}_{\textnormal{sec}}(n)}$. $\mathcal{R} : = \{ R_n \}_{n \in \mathbb{N}}$.

\end{itemize}
Then, the search problem $\mathcal{S}:=\mathcal{S}_{\mathcal{A}, \mathcal{G}, \mathcal{H}, \mathcal{R}}$ is defined as follows. On instance $(1^n, r_{\textnormal{pub}})$ (where $r_{\textnormal{pub}} \in \{ 0,1 \}^{b^{\mathcal{A}}_{\textnormal{pub}}(n)}$), find $x$ such that:
\begin{itemize}[nosep]
\item If $n \in \mathcal{H}$, running algorithm $\mathcal{A}$ on graph $G_n\in \mathcal{G}$, input labeling $x \in \{ 0,1 \}^{O(n)}$ and preset public coins $r_{\textnormal{pub}}$ yields an invalid output labeling on at least $\theta(n)$ fraction of $R_n$.

\item If $n \in \mathbb{N} \backslash \mathcal{H}$, $x = 0^n$.
\end{itemize}
\end{definition}

It is worth noting that the formal definition of $\mathcal{S}_{\mathcal{A}, \mathcal{G}, \mathcal{H}, \mathcal{R}}$ doesn't require $\mathcal{R} = \{ R_n \}_{n \in \mathbb{N}}$ to be generated in a specific way. $R_n$ is only required to contain $l(n)$ multisets of binary strings of the appropriate length. When proving hardness, we restrict ourselves to certain choices of $\mathcal{R}$. Recall, our objective is to prove the \textit{existence} of a hard problem. Thus, it suffices to show that there are $\mathcal{R}$'s for which $\mathcal{S}_{\mathcal{A}, \mathcal{G}, \mathcal{H}, \mathcal{R}}$ is public-coin (polynomially) hard-on-average.

\begin{lemma}[Efficient Verification]
\label{lemma:search_problem}
For every algorithm $\mathcal{A}$, graph family $\mathcal{G} = \{ G_n \}_{n \in \mathbb{N}}$, sequence $\mathcal{H} \subseteq\mathbb{N}$ and sequence of multisets $\mathcal{R} = \{ R_n \}_{n \in \mathbb{N}}$ as in Def.~\ref{def:search_problem}, 
the search problem $\mathcal{S}_{\mathcal{A}, \mathcal{G}, \mathcal{H}, \mathcal{R}}$ is in $\FNP/\poly$.
\end{lemma}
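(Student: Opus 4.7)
The plan is to verify the two defining properties of $\FNP/\poly$ from Def.~\ref{def:fnp} applied non-uniformly: that solutions are polynomially bounded, and that membership in $\mathcal{S}_{\mathcal{A}, \mathcal{G}, \mathcal{H}, \mathcal{R}}$ can be decided in deterministic polynomial time given a polynomial-length advice string (equivalently, by a family of $\poly$-size circuits). Since the instances have length at least $n$ (due to the $1^n$ padding as in Remark~\ref{remark:graph_size}) and candidate solutions $x$ are input labelings of length $O(n)$ (the per-node/half-edge input alphabet has constant size), the polynomial boundedness condition is immediate.

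For the verification procedure, the advice on instances of the form $(1^n, r_{\textnormal{pub}})$ will encode three things: (i) a bit indicating whether $n \in \mathcal{H}$, (ii) a description of the graph $G_n \in \mathcal{G}$, and (iii) the multiset $R_n = \{r^{(i)}_{\textnormal{sec}}\}_{i=1}^{l(n)}$. The graph $G_n$ has $n$ vertices of bounded degree, so its description has length $O(n)$; each $r^{(i)}_{\textnormal{sec}}$ has length $n \cdot b^{\mathcal{A}}_{\textnormal{sec}}(n) = \poly(n)$, and $l(n)$ is a polynomial in $n$ by construction, so $|R_n| = \poly(n)$. The total advice length is thus polynomial in $n$. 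Note that the advice does not depend on $r_{\textnormal{pub}}$, only on $n$, which matches the $\FNP/\poly$ setup (advice depends only on the input length).

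Given the advice, the verifier proceeds as follows. If $n \notin \mathcal{H}$, accept iff $x = 0^n$. Otherwise, simulate $\mathcal{A}$ on $(G_n, x, r_{\textnormal{pub}}, r_{\textnormal{sec}}^{(i)})$ for each $i \in [l(n)]$, producing output labelings $y^{(i)}$; this takes $\poly(n)$ time per run because each node of $\mathcal{A}$ runs in $\poly(n)$ time (as per our \LOCAL model in Section~\ref{subsec:distributed}) and $r(n) \le \poly(n)$. For each $y^{(i)}$, check validity with respect to $\Psi$ by inspecting the $t$-neighborhood of every vertex and testing the constraint set $\mathcal{C}$ from Def.~\ref{def:lcl}; this is also $\poly(n)$. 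Count the number of invalid $y^{(i)}$'s and accept iff the fraction is at least $\theta(n)$. The overall running time is $\poly(n) \cdot l(n) = \poly(n)$.

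There is no substantial obstacle here; the lemma is essentially a bookkeeping statement ensuring that the sequences $\mathcal{G}$ and $\mathcal{R}$ can be absorbed into non-uniform advice and that $\mathcal{A}$ together with the \LCL verification for $\Psi$ can be carried out efficiently. The only subtlety worth flagging is that uniform verification is not achievable here, since $\mathcal{G}$ and $\mathcal{H}$ (and hence $\mathcal{R}$) are not assumed to be uniformly describable, which is precisely why the conclusion lies in $\FNP/\poly$ rather than $\FNP$, and consequently why the eventual hardness statement (via Lemma~\ref{lemma:transformation_total}) lands in $\TFNP/\poly$.
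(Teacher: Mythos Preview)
Your proposal is correct and follows essentially the same approach as the paper: both supply $G_n$, $\mathbbm{1}\{n\in\mathcal{H}\}$, and $R_n$ as non-uniform advice, then simulate $\mathcal{A}$ on each $r_{\textnormal{sec}}^{(i)}\in R_n$ and count invalid outputs against the threshold $\theta(n)$. Your write-up is slightly more explicit than the paper's in checking the polynomial bound on solution length and the advice length, but the argument is the same.
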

\begin{proof}
We demonstrate how checking if $x$ is a solution to $r_{\textnormal{pub}}$ can be implemented in polynomial time. The algorithm receives as a non-uniform advice $G_n \in \mathcal{G}$, $\mathbbm{1}\{n \in \mathcal{H}\}$ and $R_n \in \mathcal{R}$.

\begin{algorithm}[H]
\caption{$\mathcal{S}_{\mathcal{A}, \mathcal{G}, \mathcal{H}, \mathcal{R}}$ Verification Procedure}\label{alg:verification}
\begin{algorithmic}[1]
\Statex \textbf{Input:} $((1^n, r_{\textnormal{pub}}), x)$
\Statex \textbf{Output:} Yes or NO
\If{$n \notin \mathcal{H}$}
    \State \textbf{return} YES \textbf{if} $x = 0^n$ \textbf{else} NO
\EndIf
\For{$i \in [l(n)]$}
\State Run $\mathcal{A}$ on graph $G_n \in \mathcal{G}$, input $x$, preset randomness $r_{\textnormal{pub}}$ and private coins $r_{\textnormal{sec}}^{(i)} \in R_n$
\State Let $y^{(i)}$ be the output labeling
\EndFor
\State Compute $q = \frac{1}{l} \sum_{i=1}^l \mathbbm{1} \{ y^{(i)} \text{ is invalid} \}$ \Comment{w.r.t the constraints $\mathcal{C}$ of $\Psi$}
\State \textbf{return} YES \textbf{if} $q > \theta(n) := \frac{\epsilon(n) + \varphi(n)}{2}$ \textbf{else} NO
\end{algorithmic}
\end{algorithm}

We take $b^{\mathcal{A}}_{\text{pub}}(n)$, the length of $r_{\textnormal{pub}}$, to be at most $\poly(n)$ (otherwise, nodes running in polynomial-time would not be able to fully read it). $x = O(n)$ since the input labels are of constant length. It follows that the input $((1^n, r_{\textnormal{pub}}), x)$ is of length $\poly(n)$.

Our modified variant of \textsf{LOCAL} expects the nodes to be efficient. Thus, simulating the execution of $\mathcal{A}$ on $n$~vertex graph $G_n$ can be done in $\poly(n)$ time per round, with the number of rounds being at most $n$.
Alg.~\ref{alg:verification} invokes $\mathcal{A}$ $l(n)$ times, where $l(n)$ is a polynomial. Overall, the total runtime is polynomial in $n$, which, as noted above, also makes it polynomial in the length of the input.
\end{proof}

\subsection{Hardness}

Apart from having an efficient procedure for verifying solutions, $\mathcal{S}_{\mathcal{A}, \mathcal{G}, \mathcal{H}, \mathcal{R}}$ possesses several other useful properties. 
In particular, we show that for certain choices of $\mathcal{R}$, when sampling $r_{\textnormal{pub}}$ uniformly at random, efficient algorithms manage to solve the problem only with negligible probability. 
In other words, for these $\mathcal{R}$'s the problem is \textit{infinitely-often potentially-vacuously polynomially hard-on-average} (see Def.~\ref{def:vacuous_hardness}). 

\begin{definition}[Usefulness]
\label{def:usefulness}
Let algorithm $\mathcal{A}$, graph family $\mathcal{G}$ and functions $\gamma(n)$ and $l(n)$ be as in Def.~\ref{def:search_problem}. 
For $n\in \mathbb{N}$, a multiset $R_n = \{ r_{\textnormal{sec}}^{(i)} \}_{i=1}^{l(n)}$ of strings in $\{ 0,1 \}^{n \cdot b^{\mathcal{A}}_{\textnormal{sec}}(n)}$ is \textsf{Useful} if it satisfies: for every $r_{\textnormal{pub}} \in \{ 0,1 \}^{b^{\mathcal{A}}_{\textnormal{pub}}(n)}$ (preset public randomness) and $x \in \{ 0,1 \}^{O(n)}$ (an input labeling), the multiset $T^{R_n}_{x, r_{\textnormal{pub}}} = \{ y^{(i)} \}_{i=1}^{l(n)}$ of output labelings (built as in Lemma~\ref{lemma:sample_characterizing_multiset}) is a $\gamma(n)$-characterizing multiset for $\mathcal{A}$ on $G_n$ (from $\mathcal{G}$), $r_{\textnormal{pub}}$ and $x$.

\noindent A sequence $\mathcal{R} = \{ R_n \}_{n \in \mathbb{N}}$ is \textsf{Useful} if $R_n \in \mathcal{R}$ is \textsf{Useful} for all $n \in \mathbb{N}$.
\end{definition}

Be aware that the definition of \textsf{Usefulness} depends on $n$, $G_n$ (from $\mathcal{G}$), $\mathcal{A}$, $\gamma(n)$ and $l(n)$. Whenever we use this term, we take for granted that these are determined as in  Def.~\ref{def:search_problem}.
Observe that a uniformly random $R_n$ (as in Lemma~\ref{lemma:sample_characterizing_multiset}) is \textsf{Useful} with high probability, according to Lemma~\ref{lemma:same_coins_for_all}. This implies the following.

\begin{corollary}
\label{cor:useful_multisets_exist}
For any $n\in \mathbb{N}$, a \textsf{Useful} multiset $R_n = \{ r_{\textnormal{sec}}^{(i)} \}_{i=1}^{l(n)}$ exists. It also follows that \textsf{Useful} sequences $\mathcal{R} = \{ R_n \}_{n \in \mathbb{N}}$ exist.
\end{corollary}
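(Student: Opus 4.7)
The plan is to argue by the probabilistic method, directly combining Lemmas~\ref{lemma:sample_characterizing_multiset} and~\ref{lemma:same_coins_for_all}. Fix $n \in \mathbb{N}$ and consider the random experiment in which we sample $R_n = \{r_{\textnormal{sec}}^{(i)}\}_{i=1}^{l(n)}$ by drawing each $r_{\textnormal{sec}}^{(i)}$ independently and uniformly from $\{0,1\}^{n \cdot b^{\mathcal{A}}_{\textnormal{sec}}(n)}$. Note that the chosen value of $l(n)$ in Def.~\ref{def:search_problem} coincides precisely with the sample size prescribed by Lemma~\ref{lemma:sample_characterizing_multiset} when $\delta(n) = 2^{-(b^{\mathcal{A}}_{\textnormal{pub}}(n) + d\cdot n + n)}$, which is exactly the choice of $\delta(n)$ used in Lemma~\ref{lemma:same_coins_for_all}.

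Next, I invoke Lemma~\ref{lemma:same_coins_for_all}: with this choice of parameters, a single sampled $R_n$ simultaneously yields a $\gamma(n)$-characterizing multiset $T^{R_n}_{x, r_{\textnormal{pub}}}$ for $\mathcal{A}$ on $G_n$, $r_{\textnormal{pub}}$ and $x$, for \emph{every} pair $(r_{\textnormal{pub}}, x) \in \{0,1\}^{b^{\mathcal{A}}_{\textnormal{pub}}(n)} \times \{0,1\}^{d \cdot n}$, except with probability at most $2^{-n}$. By the definition of \Useful (Def.~\ref{def:usefulness}), this event is precisely the event that $R_n$ is \Useful. Hence
\[
\Pr_{R_n}[R_n \text{ is }\Useful] \geq 1 - 2^{-n} > 0,
\]
and so by the probabilistic method a \Useful multiset $R_n$ exists.

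Finally, applying this argument independently for every $n \in \mathbb{N}$ yields, for each $n$, a \Useful $R_n$. Assembling these into a sequence $\mathcal{R} = \{R_n\}_{n \in \mathbb{N}}$ gives a \Useful sequence, as required. I expect no real obstacle here: the bookkeeping in the definitions of $l(n)$, $\delta(n)$, $\gamma(n)$ was set up precisely so that Lemma~\ref{lemma:same_coins_for_all} applies verbatim and the union bound over $(r_{\textnormal{pub}}, x)$ leaves the failure probability below $1$.
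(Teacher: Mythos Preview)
Your proposal is correct and follows exactly the paper's approach: the paper simply observes (in the sentence preceding the corollary) that a uniformly random $R_n$ is \Useful with high probability by Lemma~\ref{lemma:same_coins_for_all}, and the corollary is stated as an immediate consequence via the probabilistic method. Your write-up just spells out this one-line argument in more detail.
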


\begin{lemma}[Hardness]
\label{lemma:usefulness_implies_hardness}
Let algorithm $\mathcal{A}$, graph family $\mathcal{G}$, sequence $\mathcal{H}$ and function $l(n)$ be as in Def.~\ref{def:search_problem}.
Let $\mathcal{R} := \{ R_n \}_{n \in \mathbb{N}}$ be a \textsf{Useful} sequence of multisets $R_n = \{ r_{\text{sec}}^{(i)} \}_{i=1}^{l(n)} \subset \{ 0,1 \}^{n \cdot b_{\text{sec}}^{\mathcal{A}}(n)}$.
Set $\mathcal{S} := \mathcal{S}_{\mathcal{A}, \mathcal{G}, \mathcal{H}, \mathcal{R}}$.

Ensemble $\mathcal{U} := \{ \mathcal{U}_n \}_{n \in \mathbb{N}}$ is a sequence of probability distributions. For every $r_{\text{pub}} \in \{ 0,1 \}^{b_{\text{pub}}^\mathcal{A}(n)}$, $\underset{\mathcal{U}_n}{\Pr}[(1^n, r_{\text{pub}})] = 2^{-b_{\text{pub}}^\mathcal{A}(n)}$. Namely, this is the uniform distribution over $b_{\text{pub}}^\mathcal{A}(n)$-bits long strings, with $1^n$ augmented to them.

Then, $(\mathcal{S}, \mathcal{U})$ (a distributional search problem) is \textit{public-coin infinitely-often potentially-vacuously polynomially hard-on-average}.

That is, there is a negligible function $\mathsf{negl}(n)$ such that there is no (non-uniform) \textsf{PPT} $\Sol$ that satisfies the following for all large enough $n$:
\begin{equation}
\label{eq:success_of_s_solver}
\Pr_{\substack{\text{$\Sol$'s coins}\\r_{\textnormal{pub}} \stackrel{\$}{\leftarrow} \{ 0,1 \}^{b^{\mathcal{A}}_{\textnormal{pub}}(n)}}}
[ (r_{\textnormal{pub}}, \Sol(1^n, r_{\textnormal{pub}})) \in \mathcal{S} ] 
\geq \mathsf{negl}(n)
\end{equation}

\end{lemma}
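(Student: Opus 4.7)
The plan is a contrapositive reduction: assume there is a (non-uniform) \PPT solver $\Sol$ that, for all sufficiently large $n$, succeeds on $(\mathcal{S}, \mathcal{U})$ with probability at least some fixed negligible function (to be pinned down below), and turn $\Sol$ into a $\poly(n)$-time adversary $\mathcal{B}$ for $\Psi$ that violates the hypothesized upper bound $\epsilon(n)$ on the failure of $\mathcal{A}$ against bounded adversaries. Concretely, for each $n \in \mathcal{H}$, the adversary $\mathcal{B}_{\mathcal{A}, G_n}$ on input $r_{\textnormal{pub}}$ invokes $\Sol(1^n, r_{\textnormal{pub}})$ and outputs the returned labeling $x$. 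Because $\Sol$ is \PPT with polynomial non-uniform advice (into which we can fold $G_n$, the indicator of $n \in \mathcal{H}$, and $R_n$), the reduction preserves efficiency and yields a legal bounded adversary in the sense of Def.~\ref{def:bounded_adv}.

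The quantitative core is the spacing between $\theta(n)$ and $\gamma(n)$: by the choices fixed in Def.~\ref{def:search_problem}, $\theta(n) - \gamma(n) = (3\epsilon(n) + \varphi(n))/4 > \varphi(n)/4$. Whenever $\Sol$ returns a valid solution $x$ for $r_{\textnormal{pub}}$, the empirical failure rate of $\mathcal{A}$ on $R_n$ is at least $\theta(n)$ by the definition of $\mathcal{S}$, and the \textsf{Usefulness} of $R_n$ (Def.~\ref{def:usefulness}) immediately translates this into a lower bound $\Pr_{r_{\textnormal{sec}}}[\mathcal{A}(G_n, x, r_{\textnormal{pub}}, r_{\textnormal{sec}}) \text{ invalid}] \geq \theta(n) - \gamma(n)$ on the true failure probability. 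Averaging over $r_{\textnormal{pub}}$ and $\Sol$'s coins and conditioning on whether $\Sol$ succeeds yields
\[
\Pr\bigl[\mathcal{A} \text{ invalid on } \mathcal{B}(r_{\textnormal{pub}})\bigr] \;\geq\; \Pr[\Sol \text{ succeeds}] \cdot (\theta(n) - \gamma(n)).
\]
Now set $\negl(n) := 4\epsilon(n)/\varphi(n)$, which is negligible since $\epsilon$ is negligible and $\varphi \geq 1/\poly$. Supposing that $\Sol$'s success probability is $\geq \negl(n)$ for all large $n$, the right-hand side is at least $\epsilon(n) \cdot (3\epsilon(n) + \varphi(n))/\varphi(n) > \epsilon(n)$. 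Since $\mathcal{H}$ is infinite, this gives arbitrarily large $n$ on which $\mathcal{B}_{\mathcal{A}, G_n}$ makes $\mathcal{A}$ err with probability strictly exceeding $\epsilon(n)$, contradicting the hypothesis of Theorem~\ref{thm:probabilitygap_enhanced}.

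The main obstacle I foresee is cleanly handling the mismatch in how randomness appears on the two sides of the reduction: $\mathcal{A}$'s hypothesized guarantee is over $r_{\textnormal{sec}}$ sampled fresh each execution, while $\mathcal{S}$ is defined with respect to the fixed, hard-wired multiset $R_n$. The bridge is exactly the \textsf{Usefulness} of $R_n$, but crucially it has to hold \emph{uniformly} over all $(x, r_{\textnormal{pub}})$ pairs — not merely in expectation — because $\Sol$ adversarially chooses $x$ as a function of $r_{\textnormal{pub}}$. This uniform guarantee is provided by Lemma~\ref{lemma:same_coins_for_all} combined with Cor.~\ref{cor:useful_multisets_exist}, which ensures the \textsf{Useful} sequence $\mathcal{R}$ fixed in the definition of $\mathcal{S}$ exists. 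A secondary point worth noting is that the argument only extracts hardness at indices $n \in \mathcal{H}$, while for $n \notin \mathcal{H}$ the unique solution is $0^n$ and every \PPT solver trivially succeeds there; this is precisely why the conclusion is \emph{infinitely-often}, as already flagged in Remark~\ref{remark:io}.
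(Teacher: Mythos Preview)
Your proposal is correct and follows essentially the same approach as the paper: construct the bounded adversary $\mathcal{B}_{\mathcal{A},G_n}$ by wrapping $\Sol$, use \textsf{Usefulness} to convert the empirical threshold $\theta(n)$ into the true-probability lower bound $\theta(n)-\gamma(n)=(3\epsilon(n)+\varphi(n))/4$, and then condition on $\Sol$'s success to contradict the $\epsilon(n)$ bound for some $n\in\mathcal{H}$. The only cosmetic difference is the particular negligible function chosen (you use $4\epsilon/\varphi$, the paper uses $4\epsilon/(3\varphi)$), which is immaterial.
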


\begin{proof}
We prove the lemma for $\mathsf{negl}(n) := \frac{4 \epsilon(n)}{3 \varphi(n)}$ (where $\epsilon(n)$ and $\varphi(n)$ are as defined in Theorem~\ref{thm:probabilitygap_enhanced}). Since $\epsilon(n)$ is negligible and $\varphi(n)$ is an inverse polynomial, the expression is, as required, a negligible function.

Let $\Sol$ be a non-uniform \textsf{PPT} algorithm that on a polynomial-length advice and an input $(1^n, r_{\textnormal{pub}})$ (an instance of $\mathcal{S}$) returns $x$, a candidate solution. 
In particular, the solver $\Sol$ receives as part of its non-uniform advice the description of the graph $G_n \in \mathcal{G}$, the multiset $R_n \in \mathcal{R}$ and a bit indicating whether $n \in \mathcal{H}$.

Towards a contradiction, assume that there exists $n_1$ such that for all $n \geq n_1$ Eq.~(\ref{eq:success_of_s_solver}) holds. 
Moreover, in Theorem~\ref{thm:probabilitygap_enhanced} we assume that there exists $n_2$ such that $\mathcal{A}$ solves $\Psi$ with probability $1- \epsilon(n)$ for all $n \geq n_2$.
Fix some $n \geq \max \{ n_1, n_2\}$ that belongs to $\mathcal{H}$ (recall that $\mathcal{H}$ is infinite, so such an $n \in \mathcal{H}$ is guaranteed to exist).
Note that since $\mathcal{R}$ is \textsf{Useful}, it follows that for the chosen $n$, $R_n$ is also \textsf{Useful}.

We construct an adversary $\mathcal{B}_{\mathcal{A}, G_n}$ for the \textsf{LCL} problem $\Psi$. Recall that the adversaries to \textsf{LCL} problems are allowed to depend on the graph size $n$, on the graph itself $G_n \in \mathcal{G}$ and on the \textsf{LOCAL} algorithm $\mathcal{A}$. 
Thus, there is no issue with having the description of the algorithm $\Sol$ hardcoded into $\mathcal{B}_{\mathcal{A}, G_n}$, including $\mathcal{A}$, $G_n$, $R_n$ and any other polynomial-length advice provided to $\Sol$.
As an input, $\mathcal{B}_{\mathcal{A}, G_n}$ receives $(1^n, r_{\textnormal{pub}})$ for $r_{\text{pub}} \in \{ 0,1 \}^{b^{\mathcal{A}}_{\textnormal{pub}}(n)}$. 
$\mathcal{B}_{\mathcal{A}, G_n}$ invokes $\Sol$ on input $(1^n, r_{\textnormal{pub}})$ and returns the input labeling $x = \Sol(1^n, r_{\textnormal{pub}})$ as is. Naturally, since $\Sol$ is polynomial-time, so is $\mathcal{B}_{\mathcal{A}, G_n}$.

For $n \in \mathcal{H}$, the definition of $\mathcal{S}$ suggests that given an instance $(1^n, r_{\textnormal{pub}})$, for a string $x$ to qualify as a solution the fraction of the strings $r_{\text{sec}}^{(i)}$ in $R_n$ for which $\mathcal{A}$ fails must be at least $\theta(n)$. 
When $R_n$ is \textsf{Useful}, this fraction is at most $\gamma(n)$ away from the actual probability (taken over $r_{\text{sec}}$). This holds regardless of the choice of $x$, $r_{\textnormal{pub}}$. 
Combining these two facts, assuming that $( (1^n, r_{\textnormal{pub}}), x) \in \mathcal{S}$ (that is, $x$ indeed qualifies as a solution) we have:
\begin{align*}
\Pr_{r_{\textnormal{sec}}}\left[\mathcal{A}(G_n, x, r_{\textnormal{pub}}, r_{\textnormal{sec}})\text{ is invalid} \right] 
\geq 
\theta(n) -  \gamma(n) 
=
\left( \frac{\epsilon(n) + \varphi(n)}{2} \right) - \left( \frac{\varphi(n) - \epsilon(n)}{4} \right) 
=
\frac{3\epsilon(n) + \varphi(n)}{4}
\end{align*}
The probability is over $r_{\text{sec}}$, where $x$, $r_{\textnormal{pub}}$ are \textit{any} fixed pair satisfying $((1^n, r_{\textnormal{pub}}), x) \in \mathcal{S}$. 

Let $X_n$ be a random variable distributed according to the output of $\Sol(1^n, r_{\textnormal{pub}})$ (it depends on the random variable $r_{\textnormal{pub}}$ and coin tosses of $\Sol$). Choosing $\Sol$'s coins and $r_{\textnormal{pub}}$ uniformly at random, conditioned on $((1^n, r_{\textnormal{pub}}), X_n) \in \mathcal{S}$, the lower bound continues to hold.
\begin{equation*}
\Pr_{\substack{\text{$\Sol$'s coins}\\r_{\textnormal{pub}}, r_{\textnormal{sec}}}}
\left[\mathcal{A}(G_n, X_n, r_{\textnormal{pub}}, r_{\textnormal{sec}})\text{ is invalid} \big| 
((1^n, r_{\textnormal{pub}}), X_n) \in \mathcal{S} \right]
\geq \frac{3\epsilon(n) + \varphi(n)}{4}
\end{equation*}
Recall that $\mathcal{B}_{\mathcal{A}, G_n}$ simply invokes $\Sol$. It follows that:
\allowdisplaybreaks
\begin{align*}
& \Pr_{\substack{\text{$\mathcal{B}_{\mathcal{A}, G_n}$'s coins}\\r_{\textnormal{pub}}, r_{\textnormal{sec}}}}
\left[\mathcal{A}(G_n, \mathcal{B}_{\mathcal{A}, G_n}(1^n, r_{\textnormal{pub}}), r_{\textnormal{pub}}, r_{\textnormal{sec}})\text{ is invalid} \right]
\\
&=\Pr_{\substack{\text{$\Sol$'s coins}\\r_{\textnormal{pub}}, r_{\textnormal{sec}}}}[\mathcal{A}(G_n, X_n, r_{\textnormal{pub}}, r_{\textnormal{sec}})\text{ is invalid}] 
\\ 
&\geq\Pr_{\substack{\text{$\Sol$'s coins}\\r_{\textnormal{pub}}, r_{\textnormal{sec}}}}
\left[\mathcal{A}(G_n, X_n, r_{\textnormal{pub}}, r_{\textnormal{sec}})\text{ is invalid} \big| 
((1^n, r_{\textnormal{pub}}), X_n) \in \mathcal{S} \right] \cdot
\Pr_{\substack{\text{$\Sol$'s coins}\\r_{\textnormal{pub}}}}
[((1^n, r_{\textnormal{pub}}), X_n) \in \mathcal{S} ] 
\\ 
&\geq \bigg( \frac{3\epsilon(n) + \varphi(n)}{4} \bigg) \cdot \Pr_{\substack{\text{$\Sol$'s coins}\\r_{\textnormal{pub}}}}[(r_{\textnormal{pub}}, \Sol(1^n, r_{\textnormal{pub}})) \in \mathcal{S} ]
\end{align*}

For any $n \geq n_2$, on any graph with $n$ vertices, algorithm $\mathcal{A}$ fails with probability at most $\epsilon(n)$ over the choice of $r_{\textnormal{pub}}$ and $r_{\textnormal{sec}}$, provided that $x$ is selected by an efficient algorithm. 
This is of course true for our algorithm $\mathcal{B}_{\mathcal{A}, G_n}$ and the $n$~vertex graph $G_n \in \mathcal{G}$.
\[
\Pr_{\substack{\text{$\Sol$'s coins}\\r_{\textnormal{pub}}}}
[ (r_{\textnormal{pub}}, \Sol(1^n, r_{\textnormal{pub}})) \in \mathcal{S} ] 
\leq
\frac{4 \epsilon(n)}{3\epsilon(n) + \varphi(n)} <
\frac{4 \epsilon(n)}{3\varphi(n)} 
\]
This is in contrast to our assumption that Eq.~(\ref{eq:success_of_s_solver}) holds for all sufficiently large $n$, thereby concluding the proof. 
Note that the distributional search problem is \textit{public-coin} infinitely-often potentially-vacuously (polynomially) hard-on-average because the coins used when sampling an instance $(1^n, r_{\text{pub}})$ are simply $r_{\text{pub}}$. That is, the coins were assumed to be publicly known the entire time.
\end{proof}

\subsection{Non-Triviality}

We prove that for all $n$, with non-negligible probability the sampled instance $(1^n, r_{\textnormal{pub}})$ has a solution with respect to the search problem $\mathcal{S}$, i.e., there is an $x$ such that $((1^n, r_{\textnormal{pub}}), x) \in \mathcal{S}$. 
Unfortunately, this still doesn't mean that $\mathcal{S}$ is in $\TFNP / \poly$. Nevertheless, it provides the foundations for constructing such a problem.

\begin{lemma}[Non-Triviality]
\label{lemma:solution_exists}
Let algorithm $\mathcal{A}$ graph family $\mathcal{G}$ and sequence $\mathcal{H}$ be as in Def.~\ref{def:search_problem}.
Suppose that $\mathcal{R}$ is \textsf{Useful}, and let $\mathcal{S} = \mathcal{S}_{\mathcal{A}, \mathcal{G}, \mathcal{H}, \mathcal{R}}$.

Then, $\forall n \in \mathbb{N}$, when $r_{\textnormal{pub}}$ is sampled uniformly at random from $\{ 0,1 \}^{b^{\mathcal{A}}_{\textnormal{pub}}(n)}$, the probability that $(1^n, r_{\textnormal{pub}})$ has a solution with respect to the search problem $\mathcal{S}$ is non-negligible. That is, 
\[
\Pr_{r_{\textnormal{pub}} \stackrel{\$}{\leftarrow} \{ 0,1 \}^{b^{\mathcal{A}}_{\textnormal{pub}}(n)}}
[\exists x\textnormal{ s.t }((1^n, r_{\textnormal{pub}}), x) \in \mathcal{S}] 
\geq 
\frac{1}{\poly(n)}
\]
\end{lemma}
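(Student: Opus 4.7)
}

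The case $n \notin \mathcal{H}$ is immediate: by the definition of $\mathcal{S}$, the pair $((1^n, r_{\textnormal{pub}}), 0^n)$ is in $\mathcal{S}$ for every $r_{\textnormal{pub}}$, so a solution exists with probability~$1$. The interesting case is $n \in \mathcal{H}$. For this case, the plan is to transfer the guaranteed failure probability of $\mathcal{A}$ against an unbounded adversary (which is measured with respect to fresh private coins $r_{\textnormal{sec}}$) into a statement about the empirical fraction of invalid outputs on the fixed multiset $R_n$, using the \Useful property.

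First I would unpack the assumption from Theorem~\ref{thm:probabilitygap_enhanced}: for $n \in \mathcal{H}$ there exists an (unbounded, deterministic) adversary $\mathcal{B}$ such that $\mathcal{A}$ fails on $G_n$ with probability at least $\varphi(n)$ over $(r_{\textnormal{pub}}, r_{\textnormal{sec}})$. For each $r_{\textnormal{pub}}$ define
\[
p^*(r_{\textnormal{pub}}) \;:=\; \max_{x} \Pr_{r_{\textnormal{sec}}}\!\left[\mathcal{A}(G_n, x, r_{\textnormal{pub}}, r_{\textnormal{sec}}) \text{ is invalid}\right].
\]
Since for every $r_{\textnormal{pub}}$ the value $p^*(r_{\textnormal{pub}})$ dominates what $\mathcal{B}$ achieves, $\mathbb{E}_{r_{\textnormal{pub}}}[p^*(r_{\textnormal{pub}})] \geq \varphi(n)$. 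Setting $c := (3\varphi(n) + \epsilon(n))/4$ and applying a standard averaging argument (using $p^* \leq 1$), I get
\[
\Pr_{r_{\textnormal{pub}}}\!\left[p^*(r_{\textnormal{pub}}) \geq c\right] \;\geq\; \frac{\varphi(n) - c}{1 - c} \;=\; \frac{\varphi(n) - \epsilon(n)}{4 - 3\varphi(n) - \epsilon(n)},
\]
which is at least the inverse of a polynomial for all large enough $n$ (since $\varphi(n) \geq 1/\poly(n)$ and $\epsilon(n)$ is negligible, so $\varphi(n) - \epsilon(n) \geq \varphi(n)/2$).

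Now comes the step where the \Useful hypothesis is used. For any $r_{\textnormal{pub}}$ with $p^*(r_{\textnormal{pub}}) \geq c$, fix any $x$ attaining (or $\delta$-approaching) the max, so its true failure probability is at least $(3\varphi(n) + \epsilon(n))/4$. Since $\mathcal{R}$ is \Useful, the multiset $T^{R_n}_{x, r_{\textnormal{pub}}}$ is a $\gamma(n)$-characterizing multiset for $\mathcal{A}$ on $G_n, r_{\textnormal{pub}}, x$, where $\gamma(n) = (\varphi(n) - \epsilon(n))/4$. Therefore the empirical fraction of $r_{\textnormal{sec}}^{(i)} \in R_n$ on which $\mathcal{A}$ outputs an invalid labeling is at least
\[
\frac{3\varphi(n) + \epsilon(n)}{4} \;-\; \frac{\varphi(n) - \epsilon(n)}{4} \;=\; \frac{\varphi(n) + \epsilon(n)}{2} \;=\; \theta(n),
\]
which is exactly the threshold in Def.~\ref{def:search_problem}. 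Hence such an $x$ is a valid solution for $(1^n, r_{\textnormal{pub}})$, and combining the two displayed inequalities gives that for $n \in \mathcal{H}$ a solution exists with probability at least $1/\poly(n)$ over the uniform choice of $r_{\textnormal{pub}}$. For small $n$ or $n \notin \mathcal{H}$, the bound is trivial, so a uniform inverse-polynomial lower bound holds for all $n \in \mathbb{N}$.

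\paragraph{Main obstacle.} The only mildly delicate point is choosing the cutoff $c$ correctly so that the \Useful slack $\gamma(n)$ is absorbed while still leaving the Markov-style lower bound an inverse polynomial. The specific choice $c = (3\varphi + \epsilon)/4$ balances these two losses, using the gap $\varphi - \epsilon$ (which is at least $\varphi/2$ for large $n$) to guarantee both that $c < \varphi$ and that $c - \gamma = \theta$. Everything else is bookkeeping around \Useful and the definition of $\mathcal{S}$.
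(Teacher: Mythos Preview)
Your proposal is correct and follows essentially the same route as the paper: both arguments use the cutoff $c = (3\varphi(n)+\epsilon(n))/4$, obtain the identical lower bound $(\varphi(n)-\epsilon(n))/(4-3\varphi(n)-\epsilon(n))$ on the probability that the true per-$r_{\textnormal{pub}}$ failure probability exceeds $c$, and then absorb the \Useful slack $\gamma(n)$ to land exactly on the threshold $\theta(n)$. The only cosmetic difference is direction: the paper defines $\mathsf{Good}$ via the empirical threshold and argues that $\neg\mathsf{Good}$ caps the true failure probability at $\theta(n)+\gamma(n)=c$, whereas you first lower-bound the true failure probability and then pass to the empirical side; the arithmetic and the final bound coincide.
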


\begin{proof}
For $n \notin \mathcal{H}$, this is trivial, as the definition of $\mathcal{S}$ allows $x = 0^n$ to serve as a solution in this case. From now on, we focus on $n \in \mathcal{H}$.

Denote by $\mathsf{Good}$ the event of sampling $r_{\textnormal{pub}}$ such that on \textit{some} input labeling $x$ (which may not be efficiently computable) the algorithm $\mathcal{A}$ returns an invalid output labeling for at least $\theta(n) := \frac{\epsilon(n) + \varphi(n)}{2}$ fraction of the strings $r_{\text{sec}}^{(i)}$ in $R_n$. 
Since $n \in \mathcal{H}$, this corresponds to having a solution. Our objective is to lower bound $\Pr_{r_{\textnormal{pub}}}[\mathsf{Good}]$.

$\mathcal{R}$ is given to be \textsf{Useful}, implying that $R_n \in \mathcal{R}$ can be used for constructing $\gamma$-characterizing multisets for the output labelings on every input $x$ and preset public randomness $r_{\textnormal{pub}}$. 
The \textit{expected value} (i.e.\ the fraction of output labelings in the characterizing multiset that are considered valid with respect to $\Psi$) cannot be more than $\gamma(n)$-far from the actual success probability. This implies that for $r_{\textnormal{pub}}$ for which $\mathsf{Good}$ doesn't hold, and for every $x$, the actual failure probability (over the private coins) is at most:
\begin{align*}
& \Pr_{r_{\textnormal{sec}}}[\mathcal{A}(G_n, x, r_{\textnormal{pub}}, r_{\textnormal{sec}}) \text{ is invalid}] 
\\
&\leq \Pr_{i \stackrel{\$}{\leftarrow} [l]}[\mathcal{A}(G_n, x, r_{\textnormal{pub}}, r_{\textnormal{sec}}^{(i)}) \text{ is invalid}] + \gamma(n) \\
&\leq \theta(n) + \gamma(n) = 
\frac{\epsilon(n) + \varphi(n)}{2}  +  \frac{\varphi(n) - \epsilon(n)}{4} 
=
\frac{3\varphi(n) + \epsilon(n)}{4} 
\end{align*}
This upper bound holds even when the probability is taken over both $r_{\textnormal{sec}}$ and $r_{\textnormal{pub}}$, when the input $x$ may depend on $r_{\textnormal{pub}}$, if it is conditioned on $\neg \textsf{Good}$. 
Later, we will choose $\mathcal{B}_{\mathcal{A}, G_n}$ some unbounded adversary for the \textsf{LCL} problem $\Psi$ (which may depend on the \textsf{LOCAL} algorithm $\mathcal{A}$ and the graph $G_n$). Since it is unbounded, it can be assumed to be deterministic. For this adversary, let $X_n$ be a random variable distributed according to $\mathcal{B}_{\mathcal{A}, G_n}$'s output on $(1^n, r_{\text{pub}})$ (the randomness is over the choice of $r_{\text{pub}}$).

\begin{equation*}
\Pr_{r_{\textnormal{pub}}, r_{\textnormal{sec}}}\left[ \mathcal{A}(G_n, X_n, r_{\textnormal{pub}}, r_{\textnormal{sec}}) \text{ is invalid} \big| \neg \textsf{Good} \right] \leq \frac{3\varphi(n) + \epsilon(n)}{4} 
\end{equation*}
Without assuming $\neg \textsf{Good}$, the probability of failure turns out to be upper bounded by:
\begin{align*}
& \Pr_{r_{\textnormal{pub}}, r_{\textnormal{sec}}}\left[ \mathcal{A}(G_n, X_n, r_{\textnormal{pub}}, r_{\textnormal{sec}}) \text{ is invalid} \right] \\
&\leq (1-\Pr_{r_{\textnormal{pub}}}[\textsf{Good}]) \cdot \Pr_{r_{\textnormal{pub}}, r_{\textnormal{sec}}}\left[ \mathcal{A}(G_n, X_n, r_{\textnormal{pub}}, r_{\textnormal{sec}}) \text{ is invalid} \big| \neg \textsf{Good} \right] + \Pr_{r_{\textnormal{pub}}}[\textsf{Good}] \cdot 1 \\
&\leq (1-\Pr_{r_{\textnormal{pub}}}[\textsf{Good}]) \cdot \bigg( \frac{3\varphi(n) + \epsilon(n)}{4}  \bigg) + \Pr_{r_{\textnormal{pub}}}[\textsf{Good}]
\end{align*}

From the way the graph family $\mathcal{G}$ and the algorithm $\mathcal{A}$ were chosen, we know that against unbounded adversaries, $\mathcal{A}$ fails on $G_n \in \mathcal{G}$ with probability at least $\varphi(n)$ (when restricted to $n \in \mathcal{H}$). As such, there is an unbounded adversary that finds an input labeling $x$ such that:
\[
\Pr_{r_{\textnormal{pub}}, r_{\textnormal{sec}}}\left[ \mathcal{A}(G_n, x, r_{\textnormal{pub}}, r_{\textnormal{sec}}) \text{ is invalid} \right] \geq \varphi(n)
\]

Let $\mathcal{B}_{\mathcal{A}, G_n}$ be this adversary.
Together, we can lower bound $\Pr_{r_{\textnormal{pub}}}[\textsf{Good}]$, the probability of having a ``good'' preset public randomness. 
\begin{equation*}
\varphi(n) \leq (1-\Pr_{r_{\textnormal{pub}}}[\textsf{Good}]) \cdot \bigg( \frac{3\varphi(n) + \epsilon(n)}{4}  \bigg) + \Pr_{r_{\textnormal{pub}}}[\textsf{Good}]
\end{equation*}
$\epsilon(n)$ is negligible and $\varphi(n)$ is the inverse of some polynomial. Recall that for simplicity we defined $\mathcal{H}$ to include only integers for which it holds that $\epsilon(n) \leq \frac{1}{2}\varphi(n)$.
\begin{equation*}
\Pr_{r_{\textnormal{pub}}}[\textsf{Good}] \geq 
\frac{\varphi(n) - \epsilon(n)}{4 -  3 \varphi(n) - \epsilon(n)} \geq 
\frac{\varphi(n) - \frac{1}{2}\varphi(n)}{4} = 
\frac{1}{8}\varphi(n)
\end{equation*}
In words, the probability to sample a ``good'' preset public randomness is non-negligible.
\end{proof}

\begin{corollary}
\label{cor:not_total_search_problem}
Let algorithm $\mathcal{A}$, graph family $\mathcal{G}$ and sequence $\mathcal{H} \subseteq \mathbb{N}$ be as in Def.~\ref{def:search_problem}.
Set $\mathcal{S}$ to be $\mathcal{S}_{\mathcal{A}, \mathcal{G}, \mathcal{H}, \mathcal{R}}$ as defined in Def.~\ref{def:search_problem}, where $\mathcal{R}$ is as promised in Corollary~\ref{cor:useful_multisets_exist}.
Ensemble $\mathcal{U}$ is as in Lemma~\ref{lemma:usefulness_implies_hardness}.

Then, $(\mathcal{S}, \mathcal{U})$ is a \textit{non-trivial}, \textit{public-coin infinitely-often potentially-vacuously polynomially hard-on-average} distributional problem in $\FNP / \poly$.
\end{corollary}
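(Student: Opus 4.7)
The plan is to assemble the corollary directly from the three lemmas that precede it, by fixing a single \Useful sequence $\mathcal{R}$ and checking that each of the four properties claimed in the statement has already been established (or follows immediately) under this choice.

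First, I would invoke Corollary~\ref{cor:useful_multisets_exist} to fix a \Useful sequence $\mathcal{R} = \{ R_n \}_{n \in \mathbb{N}}$ of private-coin multisets, each of the appropriate cardinality $l(n)$, and then instantiate the search problem as $\mathcal{S} := \mathcal{S}_{\mathcal{A}, \mathcal{G}, \mathcal{H}, \mathcal{R}}$ according to Def.~\ref{def:search_problem}. With this choice in hand, membership in $\FNP/\poly$ is precisely what Lemma~\ref{lemma:search_problem} gives: Alg.~\ref{alg:verification} verifies candidate solutions in polynomial time, using $G_n$, $\mathbbm{1}\{ n \in \mathcal{H}\}$ and $R_n$ as a polynomial-length non-uniform advice; the length bounds on $r_{\textnormal{pub}}$ (by $b^{\mathcal{A}}_{\textnormal{pub}}(n) = \poly(n)$), on $x$ (by $O(n)$) and on $l(n)$ (polynomial in $n$) ensure the input length is $\poly(n)$ and the runtime is $\poly(n)$ as well, matching Def.~\ref{def:fnp} after allowing advice.

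Next I would read off the hardness statement from Lemma~\ref{lemma:usefulness_implies_hardness}. Because $\mathcal{R}$ is \Useful by construction, that lemma applies verbatim and yields a negligible $\negl(n)$ such that no non-uniform \PPT solver meets Eq.~(\ref{eq:success_of_s_solver}) for all large $n$; in other words $(\mathcal{S}, \mathcal{U})$ is \emph{infinitely-often potentially-vacuously polynomially hard-on-average} in the sense of Def.~\ref{def:vacuous_hardness}. The \emph{public-coin} flavor is automatic here, since the sampler for $\mathcal{U}_n$ is just the identity on $r_{\textnormal{pub}} \stackrel{\$}{\leftarrow}\{0,1\}^{b^{\mathcal{A}}_{\textnormal{pub}}(n)}$: the coins used to sample the instance are literally the instance, so conditioning the hardness on knowing those coins changes nothing.

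Finally, for \emph{non-triviality} in the sense of Def.~\ref{def:non_trivial}, I would apply Lemma~\ref{lemma:solution_exists} to our \Useful $\mathcal{R}$. That lemma shows that for every $n \in \mathbb{N}$ the probability over a uniform $r_{\textnormal{pub}}$ that the instance $(1^n, r_{\textnormal{pub}})$ admits some solution is at least $\frac{1}{8}\varphi(n)$ when $n \in \mathcal{H}$ (and is $1$ when $n \notin \mathcal{H}$, since $0^n$ is always a solution there); either way this is bounded below by an inverse polynomial, exactly the non-triviality requirement. Putting the three together completes the corollary; no real obstacle arises, as the content is genuinely assembly and the only small point worth flagging in the write-up is the public-coin remark above, which is why the uniform sampling of $r_{\textnormal{pub}}$ in the definition of $\mathcal{U}$ is chosen precisely to make hardness robust to publication of the sampler's randomness.
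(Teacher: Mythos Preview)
Your proposal is correct and matches the paper's own proof essentially line for line: the paper likewise fixes a \Useful $\mathcal{R}$ via Corollary~\ref{cor:useful_multisets_exist} and then cites Lemma~\ref{lemma:search_problem} for $\FNP/\poly$ membership, Lemma~\ref{lemma:usefulness_implies_hardness} for public-coin infinitely-often potentially-vacuous hardness, and Lemma~\ref{lemma:solution_exists} for non-triviality. The additional remarks you make (e.g., the public-coin observation and the $\tfrac{1}{8}\varphi(n)$ bound) are already contained in those lemmas' proofs, so nothing is missing.
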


\begin{proof}
This is a direct result of Lemmas~\ref{lemma:search_problem} (efficient verification), \ref{lemma:usefulness_implies_hardness} (hardness) and \ref{lemma:solution_exists} (non-triviality). Note that for Lemma~\ref{lemma:usefulness_implies_hardness} to apply, we need $\mathcal{R}$ to be \textsf{Useful}. The existence of such sequences $\mathcal{R}$ is guaranteed by Corollary~\ref{cor:useful_multisets_exist}.
\end{proof}

\begin{proof}[Proof of Theorem~\ref{thm:probabilitygap_enhanced}]
From Corollary~\ref{cor:not_total_search_problem} it holds that $(\mathcal{S}, \mathcal{U})$ is a non-trivial, public-coin infinitely-often potentially-vacuously polynomially hard-on-average distributional search problem in $\FNP/\poly$.
Use the transformation of Lemma~\ref{lemma:transformation_total}. The transformation maintains the hardness of the problem (the success probability of a \textsf{PPT} solver grows by at most by a polynomial multiplicative factor, so it remains negligible) and the ability to verify solutions efficiently, together with making it \textit{total}. 
Namely, it yields a public-coin infinitely-often polynomially hard-on-average distributional problem in $\TFNP/\poly$.
\end{proof}

\blockanon{

\section*{Acknowledgments}

We thank Merav Parter for insightful suggestions that helped improve the presentation of our results, and the RANDOM 2025 anonymous referees for their useful comments.
}

\printbibliography[heading=bibintoc, title={References}]

@InProceedings{bgk25,
author =	{Balliu, Alkida and Ghaffari, Mohsen and Kuhn, Fabian and Modanese, Augusto and Olivetti, Dennis and Rabie, Mika\"{e}l and Suomela, Jukka and Uitto, Jara},
title =	{{Shared Randomness Helps with Local Distributed Problems}},
booktitle =	{52nd International Colloquium on Automata, Languages, and Programming (ICALP 2025)},
pages =	{16:1--16:18},
series =	{Leibniz International Proceedings in Informatics (LIPIcs)},
ISBN =	{978-3-95977-372-0},
ISSN =	{1868-8969},
year =	{2025},
volume =	{334},
editor =	{Censor-Hillel, Keren and Grandoni, Fabrizio and Ouaknine, Jo\"{e}l and Puppis, Gabriele},
publisher =	{Schloss Dagstuhl -- Leibniz-Zentrum f{\"u}r Informatik},
address =	{Dagstuhl, Germany},
URL =		{https://drops.dagstuhl.de/entities/document/10.4230/LIPIcs.ICALP.2025.16},
URN =		{urn:nbn:de:0030-drops-233931},
doi =		{10.4230/LIPIcs.ICALP.2025.16},
annote =	{Keywords: Distributed computing, locally checkable labelings, shared randomness}
}

@InProceedings{bcm21,
author =	{Balliu, Alkida and Censor-Hillel, Keren and Maus, Yannic and Olivetti, Dennis and Suomela, Jukka},
title =	{{Locally Checkable Labelings with Small Messages}},
booktitle =	{35th International Symposium on Distributed Computing (DISC 2021)},
pages =	{8:1--8:18},
series =	{LIPIcs},
ISBN =	{978-3-95977-210-5},
ISSN =	{1868-8969},
year =	{2021},
volume =	{209},
editor =	{Gilbert, Seth},
publisher =	{Schloss Dagstuhl -- Leibniz-Zentrum f{\"u}r Informatik},
address =	{Dagstuhl, Germany},
URL =		{https://drops.dagstuhl.de/entities/document/10.4230/LIPIcs.DISC.2021.8},
URN =		{urn:nbn:de:0030-drops-148109},
doi =		{10.4230/LIPIcs.DISC.2021.8},
annote =	{Keywords: distributed graph algorithms, CONGEST, locally checkable labelings}
}

@InProceedings{bbos20,
author =	{Balliu, Alkida and Brandt, Sebastian and Olivetti, Dennis and Suomela, Jukka},
title =	{{Almost Global Problems in the LOCAL Model}},
booktitle =	{32nd International Symposium on Distributed Computing (DISC 2018)},
pages =	{9:1--9:16},
series =	{Leibniz International Proceedings in Informatics (LIPIcs)},
ISBN =	{978-3-95977-092-7},
ISSN =	{1868-8969},
year =	{2018},
volume =	{121},
editor =	{Schmid, Ulrich and Widder, Josef},
publisher =	{Schloss Dagstuhl -- Leibniz-Zentrum f{\"u}r Informatik},
address =	{Dagstuhl, Germany},
URN =		{urn:nbn:de:0030-drops-97982},
doi =		{10.4230/LIPIcs.DISC.2018.9}
}

@inproceedings{bbco19,
author = {Balliu, Alkida and Brandt, Sebastian and Chang, Yi-Jun and Olivetti, Dennis and Rabie, Mika\"{e}l and Suomela, Jukka},
title = {The Distributed Complexity of Locally Checkable Problems on Paths is Decidable},
year = {2019},
isbn = {9781450362177},
publisher = {Association for Computing Machinery},
address = {New York, NY, USA},
url = {https://doi.org/10.1145/3293611.3331606},
doi = {10.1145/3293611.3331606},
pages = {262–271},
numpages = {10},
keywords = {distributed computing, decidability, complexity},
location = {Toronto ON, Canada},
series = {PODC '19}
}

@InProceedings{cha23,
  author =	{Chang, Yi-Jun},
  title =	{{The Distributed Complexity of Locally Checkable Labeling Problems Beyond Paths and Trees}},
  booktitle =	{15th Innovations in Theoretical Computer Science Conference (ITCS 2024)},
  pages =	{26:1--26:25},
  series =	{LIPIcs},
  ISBN =	{978-3-95977-309-6},
  ISSN =	{1868-8969},
  year =	{2024},
  volume =	{287},
  editor =	{Guruswami, Venkatesan},
  publisher =	{Schloss Dagstuhl -- Leibniz-Zentrum f{\"u}r Informatik},
  address =	{Dagstuhl, Germany},
  URL =		{https://drops.dagstuhl.de/entities/document/10.4230/LIPIcs.ITCS.2024.26},
  URN =		{urn:nbn:de:0030-drops-195546},
  doi =		{10.4230/LIPIcs.ITCS.2024.26},
  annote =	{Keywords: Distributed graph algorithms, locality}
}

@INPROCEEDINGS{ckp16,
author={Chang, Yi-Jun and Kopelowitz, Tsvi and Pettie, Seth},
booktitle={2016 IEEE 57th Annual Symposium on Foundations of Computer Science (FOCS)}, 
title={An Exponential Separation between Randomized and Deterministic Complexity in the LOCAL Model}, 
year={2016},
volume={},
number={},
pages={615-624},
keywords={Complexity theory;Computational modeling;Algorithm design and analysis;Labeling;Color;Computer science;Distributed algorithms;coloring;distributed algorithm;local model;symmetry breaking},
doi={10.1109/FOCS.2016.72}
}

@inproceedings{gs17,
author = {Mohsen Ghaffari and Hsin-Hao Su},
title = {Distributed Degree Splitting, Edge Coloring, and Orientations},
publisher = {Society for Industrial and Applied Mathematics},
booktitle = {Proceedings of the 2017 Annual ACM-SIAM Symposium on Discrete Algorithms (SODA)},
chapter = {},
pages = {2505-2523},
doi = {10.1137/1.9781611974782.166},
year={2017}
}

@inproceedings{gk19,
author = {Ghaffari, Mohsen and Kuhn, Fabian},
title = {On the Use of Randomness in Local Distributed Graph Algorithms},
year = {2019},
isbn = {9781450362177},
publisher = {Association for Computing Machinery},
address = {New York, NY, USA},
url = {https://doi.org/10.1145/3293611.3331610},
doi = {10.1145/3293611.3331610},
booktitle = {Proceedings of the 2019 ACM Symposium on Principles of Distributed Computing},
pages = {290–299},
numpages = {10},
keywords = {congest model, distributed symmetry breaking, local model, randomness, shared randomness},
location = {Toronto ON, Canada},
series = {PODC '19}
}

@InProceedings{Gha20,
author="Ghaffari, Mohsen",
editor="Richa, Andrea Werneck
and Scheideler, Christian",
title="Network Decomposition and Distributed Derandomization (Invited Paper)",
booktitle="Structural Information and Communication Complexity",
year="2020",
publisher="Springer International Publishing",
address="Cham",
pages="3--18",
doi = {10.1007/978-3-030-54921-3_1},
isbn="978-3-030-54921-3"
}

@InProceedings{ao24,
  author =	{Aldema Tshuva, Eden and Oshman, Rotem},
  title =	{{On Polynomial Time Local Decision}},
  booktitle =	{27th International Conference on Principles of Distributed Systems (OPODIS 2023)},
  pages =	{27:1--27:17},
  series =	{LIPIcs},
  ISBN =	{978-3-95977-308-9},
  ISSN =	{1868-8969},
  year =	{2024},
  volume =	{286},
  editor =	{Bessani, Alysson and D\'{e}fago, Xavier and Nakamura, Junya and Wada, Koichi and Yamauchi, Yukiko},
  publisher =	{Schloss Dagstuhl -- Leibniz-Zentrum f{\"u}r Informatik},
  address =	{Dagstuhl, Germany},
  URL =		{https://drops.dagstuhl.de/entities/document/10.4230/LIPIcs.OPODIS.2023.27},
  URN =		{urn:nbn:de:0030-drops-195179},
  doi =		{10.4230/LIPIcs.OPODIS.2023.27},
  annote =	{Keywords: Local Decision, Polynomial-Time, LD, NLD}
}

@article{ns95,
author = {Naor, Moni and Stockmeyer, Larry},
title = {What Can be Computed Locally?},
journal = {SIAM Journal on Computing},
volume = {24},
number = {6},
pages = {1259-1277},
year = {1995},
doi = {10.1137/S0097539793254571},
URL = {https://doi.org/10.1137/S0097539793254571}
}

@InProceedings{cn22,
author="Cohen, Shahar P.
and Naor, Moni",
editor="Dodis, Yevgeniy
and Shrimpton, Thomas",
title="Low Communication Complexity Protocols, Collision Resistant Hash Functions and Secret Key-Agreement Protocols",
booktitle="Advances in Cryptology -- CRYPTO 2022",
year="2022",
publisher="Springer Nature Switzerland",
address="Cham",
pages="252--281",
doi = {10.1007/978-3-031-15982-4_9},
isbn="978-3-031-15982-4"
}

@inproceedings{mns08,
author = {Mironov, Ilya and Naor, Moni and Segev, Gil},
title = {Sketching in adversarial environments},
year = {2008},
isbn = {9781605580470},
publisher = {Association for Computing Machinery},
address = {New York, NY, USA},
url = {https://doi.org/10.1145/1374376.1374471},
doi = {10.1145/1374376.1374471},
booktitle = {Proceedings of the Fortieth Annual ACM Symposium on Theory of Computing},
pages = {651–660},
numpages = {10},
keywords = {data stream model., massive data sets, sketch model},
location = {Victoria, British Columbia, Canada},
series = {STOC '08}
}

@InProceedings{hny17,
  author =	{Hub\'{a}cek, Pavel and Naor, Moni and Yogev, Eylon},
  title =	{{The Journey from NP to TFNP Hardness}},
  booktitle =	{8th Innovations in Theoretical Computer Science Conference (ITCS 2017)},
  pages =	{60:1--60:21},
  series =	{LIPIcs},
  ISBN =	{978-3-95977-029-3},
  ISSN =	{1868-8969},
  year =	{2017},
  volume =	{67},
  editor =	{Papadimitriou, Christos H.},
  publisher =	{Schloss Dagstuhl -- Leibniz-Zentrum f{\"u}r Informatik},
  address =	{Dagstuhl, Germany},
  URL =		{https://drops.dagstuhl.de/entities/document/10.4230/LIPIcs.ITCS.2017.60},
  URN =		{urn:nbn:de:0030-drops-81627},
  doi =		{10.4230/LIPIcs.ITCS.2017.60},
  annote =	{Keywords: TFNP, derandomization, one-way functions, average-case hardness}
}

@article{lin92,
author = {Linial, Nathan},
title = {Locality in Distributed Graph Algorithms},
journal = {SIAM Journal on Computing},
volume = {21},
number = {1},
pages = {193-201},
year = {1992},
doi = {10.1137/0221015},
URL = {https://doi.org/10.1137/0221015}
}

@book{pel00,
author = {Peleg, David},
title = {Distributed Computing: A Locality-Sensitive Approach},
publisher = {Society for Industrial and Applied Mathematics},
year = {2000},
doi = {10.1137/1.9780898719772},
address = {},
edition   = {},
URL = {https://epubs.siam.org/doi/abs/10.1137/1.9780898719772}
}

@inproceedings{bk97,
author = {Babai, L. and Kimmel, P. G.},
title = {Randomized Simultaneous Messages: Solution Of A Problem Of Yao In Communication Complexity},
year = {1997},
isbn = {0818679077},
publisher = {IEEE Computer Society},
address = {USA},
booktitle = {Proceedings of the 12th Annual IEEE Conference on Computational Complexity},
doi = {10.1109/CCC.1997.612319},
pages = {239–246},
series = {CCC '97}
}

@article{mp89,
title = {On total functions, existence theorems and computational complexity},
journal = {Theoretical Computer Science},
volume = {81},
number = {2},
pages = {317-324},
year = {1991},
issn = {0304-3975},
doi = {https://doi.org/10.1016/0304-3975(91)90200-L},
url = {https://www.sciencedirect.com/science/article/pii/030439759190200L},
author = {Nimrod Megiddo and Christos H. Papadimitriou}
}

@INPROCEEDINGS{imp95,
  author={Impagliazzo, R.},
  booktitle={Proceedings of Structure in Complexity Theory. Tenth Annual IEEE Conference}, 
  title={A personal view of average-case complexity}, 
  year={1995},
  volume={},
  number={},
  pages={134-147},
  keywords={Computer science;Drives;Distributed computing;Cryptography;Complexity theory;Bibliographies},
  doi={10.1109/SCT.1995.514853}
}

@article{lev86,
author = {Levin, Leonid A.},
title = {Average Case Complete Problems},
journal = {SIAM Journal on Computing},
volume = {15},
number = {1},
pages = {285-286},
year = {1986},
doi = {10.1137/0215020},
URL = {https://doi.org/10.1137/0215020}
}

@InProceedings{gur89,
author="Gurevich,  Yuri",
title="The challenger-solver game: variations on the theme of P!=NP",
booktitle="Logic in Computer Science Column, The Bulletin of EATCS",
year="1989"
}

@article{pap94,
title = {On the complexity of the parity argument and other inefficient proofs of existence},
journal = {Journal of Computer and System Sciences},
volume = {48},
number = {3},
pages = {498-532},
year = {1994},
issn = {0022-0000},
doi = {https://doi.org/10.1016/S0022-0000(05)80063-7},
url = {https://www.sciencedirect.com/science/article/pii/S0022000005800637},
author = {Christos H. Papadimitriou}
}

@INPROCEEDINGS{pv20,
  author={Pass, Rafael and Venkitasubramaniam, Muthuramakrishnan},
  booktitle={2020 IEEE 61st Annual Symposium on Foundations of Computer Science (FOCS)}, 
  title={Is it Easier to Prove Theorems that are Guaranteed to be True?}, 
  year={2020},
  volume={},
  number={},
  pages={1255-1267},
  keywords={Search problems;Games;Probabilistic logic;Encryption;Task analysis;Sampling methods;Protocols;TFNP;hard on the average;one way functions},
  doi={10.1109/FOCS46700.2020.00119}
}

@inproceedings{py19a,
author       = {Merav Parter and
              Eylon Yogev},
editor       = {Timothy M. Chan},
title        = {Distributed Algorithms Made Secure: {A} Graph Theoretic Approach},
booktitle    = {Proceedings of the Thirtieth Annual {ACM-SIAM} Symposium on Discrete
              Algorithms, {SODA} 2019, San Diego, California, USA, January 6-9,
              2019},
pages        = {1693--1710},
publisher    = {{SIAM}},
year         = {2019},
url          = {https://doi.org/10.1137/1.9781611975482.102},
doi          = {10.1137/1.9781611975482.102},
timestamp    = {Thu, 15 Jul 2021 13:49:01 +0200},
biburl       = {https://dblp.org/rec/conf/soda/ParterY19a.bib},
bibsource    = {dblp computer science bibliography, https://dblp.org}
}

@inproceedings{rg20,
author = {Rozho\v{n}, V\'{a}clav and Ghaffari, Mohsen},
title = {Polylogarithmic-time deterministic network decomposition and distributed derandomization},
year = {2020},
isbn = {9781450369794},
publisher = {Association for Computing Machinery},
address = {New York, NY, USA},
url = {https://doi.org/10.1145/3357713.3384298},
doi = {10.1145/3357713.3384298},
booktitle = {Proceedings of the 52nd Annual ACM SIGACT Symposium on Theory of Computing},
pages = {350–363},
numpages = {14},
keywords = {network decomposition, distributed algorithms, derandomization},
location = {Chicago, IL, USA},
series = {STOC 2020}
}

@book{as00,
  title={The Probabilistic Method},
  author={Alon, N. and Spencer, J.H.},
  isbn={9780471370468},
  lccn={00033011},
  series={Wiley Series in Discrete Mathematics and Optimization},
  year={2000},
  publisher={Wiley}
}

@article{dgp09,
author = {Daskalakis, Constantinos and Goldberg, Paul W. and Papadimitriou, Christos H.},
title = {The complexity of computing a Nash equilibrium},
year = {2009},
issue_date = {February 2009},
publisher = {Association for Computing Machinery},
address = {New York, NY, USA},
volume = {52},
number = {2},
issn = {0001-0782},
url = {https://doi.org/10.1145/1461928.1461951},
doi = {10.1145/1461928.1461951},
journal = {Commun. ACM},
month = feb,
pages = {89–97},
numpages = {9}
}

\appendix

\section{Appendices}

\subsection{Proof of Lemma~\ref{lemma:gap_oblivious}}
\label{subsec:proof_gap_oblivious}

\oblivious*

\begin{proof}
Let $\Pi_1$ be the \textsf{LCL} problem defined in \cite[Sec. 7]{bgk25}. The upper bound in the oblivious model is stated in Theorem~8.2. As for the lower bound, the proof of Theorem~8.1 will have to undergo a minor modification.
We assume familiarity with the graph family defined in \cite[Sec. 6]{bgk25}, which we refer to as \textit{augmented grid structures} (AG). See Section~\ref{subsubsec:augmented_grid} for details. They observe (in Lemma~7.1) that whenever a graph belongs to that family, there is an input labeling such that the only valid output labeling (with respect the constraints of $\Pi_1$) satisfies: (1) the outputs in each row are consistent, (2) in at least one of the rows, the output equals to the input of the rightmost node. Henceforth, we focus on this input labeling.

The original proof relies on the outputs of nodes located in opposite ends of the AG being independent, if the algorithm terminates quickly enough. 
The difference lies in the fact that in the preset public coins model we can no longer claim that the outputs of nodes, even if they are very far apart, are independent. This is due to the presence of public randomness. This can be easily circumvented by observing that conditioned on the public random string, the outputs are indeed independent. 

Formally, let $G$ be an $n$~vertex AG with a grid of size $h \times w$ for $h = w$ (implying that $w = \Omega(\sqrt{n})$). Let $\mathcal{A}$ be a \textsf{LOCAL} algorithm that solves $\Pi_1$ within $o(\sqrt{n})$ rounds with probability $1 - \frac{1}{n}$. We restrict ourselves to large enough values of $n$ such that $\mathcal{A}$'s runtime is at most $\frac{\sqrt{n}}{3}$. For an arbitrary row, let $u$ (resp. $v$) be the leftmost (resp. rightmost) node. Since $v$ is not contained in $u$'s view after $\frac{\sqrt{n}}{3}$ rounds (and vice versa), their outputs (conditioned on the preset public coins $r_{\textnormal{pub}}$) are independent. Since the algorithm succeeds with probability at least $1 - \frac{1}{n}$, it follows that:
\[
\begin{cases}
\Pr_{r_{\text{sec}}}[\text{$u$ outputs 1} | r_{\textnormal{pub}}] \cdot (1 - \Pr_{r_{\text{sec}}}[\text{$v$ outputs 1}| r_{\textnormal{pub}}]) < \frac{1}{n} \\
\Pr_{r_{\text{sec}}}[\text{$v$ outputs 1}| r_{\textnormal{pub}}] \cdot (1 - \Pr_{r_{\text{sec}}}[\text{$u$ outputs 1}| r_{\textnormal{pub}}]) < \frac{1}{n}
\end{cases}
\]
Where the probabilities are taken over the choice of the private coins. It can be inferred that either $\Pr[\text{$u$ outputs 1}| r_{\textnormal{pub}}]$ and $\Pr[\text{$v$ outputs 1}| r_{\textnormal{pub}}]$ are both upper bounded by $\frac{2}{n}$, or that they are lower bounded by $1 - \frac{2}{n}$. 
The bounded adversary selects an input labeling that satisfies $\Pr[\text{$u$ outputs 1}| r_{\textnormal{pub}}] < \frac{2}{n}$ iff $v$'s input is $1$. The probability that $u$'s output equals to $v$'s input is at most $\frac{2}{n}$. 
The same argument applies to each of the $h = O(\sqrt{n})$ rows. From a union bound, the success probability of $\mathcal{A}$ is at most $O(\sqrt{n}) \cdot \frac{2}{n} = O(\frac{1}{\sqrt{n}})$. $n$ can be chosen such that this is strictly smaller than $1 - \frac{1}{n}$, thereby reaching a contradiction. 
This proves that for any $o(\sqrt{n})$~round \textsf{LOCAL} algorithm, there are infinitely many values of $n$ for which the algorithm succeeds with probability strictly smaller than $1 - \frac{1}{n}$ (on carefully chosen $n$~vertex graphs).
\end{proof}

\subsection{Proof of Lemma~\ref{lemma:gap_private}}
\label{subsec:proof_gap_private_coins}

\privatecoins*

\begin{proof}[Proof (sketch)]
We outline the primary changes between the proof of Lemma~\ref{lemma:gap_private} and the proofs of \cite[Thm. 8.1, 8.2]{bgk25}. We use a similar \textsf{LCL} problem to demonstrate the separation between the models. In high level, the hard instances are a variant of the \textit{augmented grid structures} (AG) from Section~\ref{subsubsec:augmented_grid}, but in the input labeling for the \textsf{LCL} problem that we define we don't assume an additional input bit $b_{\text{in}} \in \{ 0,1 \}$ associated with each grid node. Instead, we encode these bits using subgraphs that we attach to the nodes. Their exact structure doesn't matter, as long as the diameter is $O(1)$. For example, consider the two graph structures proposed in Fig.~\ref{fig:01_encoding}. 

\begin{figure}[t]%
\centering
\subfloat[\centering 0-Gadget]{
\begin{tikzpicture}[every node/.style={circle, fill=black, inner sep=2pt}]
\node[label={30:$v$}] (v) at (0,0) {};
\node (a) at (3,0) {};
\node (b) at (4,1) {};
\node (c) at (4,-1) {};
\node (d) at (5,0) {};
\draw[color=black] (v) -- (a);
\draw[color=black] (a) -- (b);
\draw[color=black] (a) -- (c);
\draw[color=black] (b) -- (d);
\draw[color=black] (c) -- (d);
\end{tikzpicture}
}%
\qquad\qquad
\subfloat[\centering 1-Gadget]{
\begin{tikzpicture}[every node/.style={circle, fill=black, inner sep=2pt, label distance=0.2pt}]
\node[label={30:$v$}] (v) at (0,0) {};
\node (a) at (3,0) {};
\node (b) at (4.5,1) {};
\node (c) at (4.5,-1) {};
\draw[color=black] (v) -- (a);
\draw[color=black] (a) -- (b);
\draw[color=black] (b) -- (c);
\draw[color=black] (a) -- (c);
\end{tikzpicture}
}%
\caption{Graph structures to be attached to grid node $v$. Should be interpreted as if the node $v$ is associated with the input bit $0$ or $1$, respectively.}%
\label{fig:01_encoding}%
\end{figure}
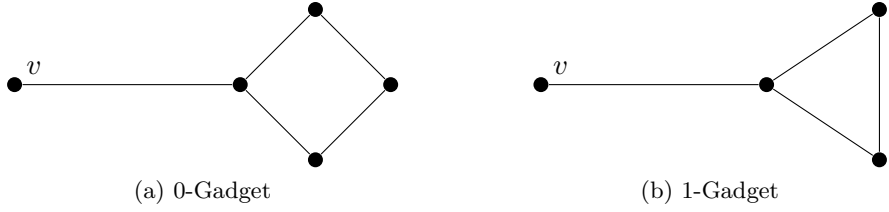

An AG in which all the grid nodes are connected to such gadgets is called \textit{augmented grid structure with gadgets}, or AGG. We don't provide a formal definition. The \textsf{LCL} problem $\Pi^{\text{badAGG}}$ defined below is a modification of $\Pi^{\text{badAG}} = (\mathcal{V}^{\textnormal{badAG}}_{\textnormal{input}}, \mathcal{E}^{\textnormal{badAG}}_{\textnormal{input}} , \mathcal{V}^{\textnormal{badAG}}_{\textnormal{output}}, \mathcal{C}^{\textnormal{badAG}})$ (see Def.~\ref{def:bad_ag}).

\begin{definition}
\label{def:bad_agg}
$\Pi^{\textnormal{badAGG}}$ is an \textsf{LCL} problem $(\mathcal{V}^{\textnormal{badAGG}}_{\textnormal{input}}, \mathcal{E}^{\textnormal{badAGG}}_{\textnormal{input}} , \mathcal{V}^{\textnormal{badAGG}}_{\textnormal{output}}, \mathcal{C}^{\textnormal{badAGG}})$.
\begin{itemize}

\item $\mathcal{V}^{\textnormal{badAGG}}_{\textnormal{input}} =  \{ \mathsf{gadgetNode}\} \cup \mathcal{V}^{\textnormal{badAG}}_{\textnormal{input}} $

\item $\mathcal{E}^{\textnormal{badAGG}}_{\textnormal{input}} = \{ \mathsf{gadgetEdge}\} \cup \mathcal{E}^{\textnormal{badAG}}_{\textnormal{input}}$

\item $\mathcal{V}^{\textnormal{badAGG}}_{\textnormal{output}} = \{ \mathsf{GadgetErr} \} \cup \mathcal{V}^{\textnormal{badAG}}_{\textnormal{output}}$

\item Constraints $\mathcal{C}^{\textnormal{badAGG}}$:

\begin{itemize}[nosep]

\item On the subgraph induced by the nodes with input labels \textsf{treeNode} and $(\mathsf{treeNode}, \mathsf{gridNode}, l)$ (for $l \in \mathcal{V}^{\text{vGrid}}$), the constraints $\mathcal{C}^{\text{badAG}}$ are satisfied.

\item \label{item:con-tree_node_and_gadget} Let $v \in V$ be a \textsf{treeNode} labeled node. Then, if it has \textsf{gadgetEdge} labeled half-edges, $v$'s output is \textsf{Err}.

\item \label{item:con-grid_node_and_gadget} Let $v \in V$ be a $(\mathsf{treeNode}, \mathsf{gridNode}, l)$ labeled node. Then, if the number of \textsf{gadgetEdge} labeled half-edges is not exactly one, $v$'s output is \textsf{Err}. 
Additionally, let $u = f(v, \mathsf{gadgetEdge})$. If $u$'s input label is not \textsf{gadgetNode} or if $u$'s output label is not $\bot$, then $v$'s output is \textsf{Err}.

\item \label{item:con-valid_gadget} Consider a connected subgraph of \textsf{gadgetNode} labeled nodes. If the subgraph is from one of the structures depicted in Fig.~\ref{fig:01_encoding} (this can be verified by inspecting the $O(1)$ neighborhood), all the \textsf{gadgetNode} labeled nodes in the subgraph return $\bot$. Otherwise, they return \textsf{GadgetErr}.

\end{itemize}

\end{itemize}

\end{definition}

Analogues of Lemmas~\ref{lemma:bad_ag_invalid},~\ref{lemma:bad_ag_valid} suggest that there exists an input labeling for $\Pi^{\text{badAGG}}$ such that the subgraph induced by the nodes that return $\bot$ as their outputs is an AGG. Moreover, the output labels that are not $\bot$ constitute pointer chains directed towards violations of the constraints $\mathcal{C}^{\textnormal{badAGG}}$. The output labeling can be computed by an $O(\log n)$ round \textsf{LOCAL} algorithm.

Next, we present the definition of the \textsf{LCL} problem $\Pi_2$ that demonstrates the gap stated in the lemma. In words, this is exactly the same problem as in \cite[Sec. 7]{bgk25}, with the following change: grid nodes in the right column output $\text{YES}$ if and only if their output bit is $b$ and they are attached to a $b$-Gadget. That is, the gadgets serve as a replacement for the input bits used in the original problem.

\begin{definition}
\label{def:pi_for_agg}
$\Pi_2$ is an \textsf{LCL} problem $(\mathcal{V}^{\Pi_2}_{\textnormal{input}}, \mathcal{E}^{\Pi_2}_{\textnormal{input}} , \mathcal{V}^{\Pi_2}_{\textnormal{output}}, \mathcal{C}^{\Pi_2})$.

\begin{itemize}
\item $\mathcal{V}^{\Pi_2}_{\textnormal{input}} := \mathcal{V}^{\text{badAGG}}_{\text{input}}$, 
$\mathcal{E}^{\Pi_2}_{\textnormal{input}} := \mathcal{E}^{\text{badAGG}}_{\textnormal{input}}$

\item $\mathcal{V}^{\Pi_2}_{\textnormal{output}} := \mathcal{V}^{\textnormal{badAGG}}_{\textnormal{output}} \cup \{  \text{YES}, \text{NO}\} \cup (\{ 0,1 \}\times\{ \text{YES}, \text{NO} \} )$

Output labels associated with $\Pi^{\textnormal{badAGG}}$ are valid with respect to $\Pi_2$. Furthermore, the grid nodes may have output labels in $\{0,1 \} \times \{  \text{YES}, \text{NO} \}$, and tree nodes may have output labels in $\{  \text{YES}, \text{NO} \}$.

\item Constraints $\mathcal{C}^{\Pi_2}$:\footnote{The main difference with respect to \cite[Sec. 7]{bgk25} is in constraint~\ref{item:con-gadget_yes}.}

\begin{enumerate}[nosep]
\item If we map all the labels not in $\mathcal{V}^{\textnormal{badAGG}}_{\textnormal{output}}$ to $\bot$, $\mathcal{C}^{\textnormal{badAGG}}$ is satisfied.

\item If a node is labeled $(\mathsf{treeNode}, \mathsf{gridNode}, l)$, and its output is not in $\mathcal{V}^{\textnormal{badAGG}}_{\textnormal{output}}$, then it must be in $\{ 0,1 \} \times \{ \text{YES}, \text{NO} \}$.

\item If the output of a node $u$ is $(b, x) \in \{ 0,1 \} \times \{ \text{YES}, \text{NO} \}$, then if $f(u, (\mathsf{gridEdge}, \mathsf{R}))$ exists and its output is not in $\mathcal{V}^{\textnormal{badAGG}}_{\textnormal{output}}$, then its output is $(b, x')$ (for some $x'$).

\item If a node is labeled with $\mathsf{treeNode}$ and the output is not in $\mathcal{V}^{\textnormal{badAGG}}_{\textnormal{output}}$, then the output must be in $\{ \text{YES}, \text{NO} \}$.

\item \label{item:con-gadget_yes} Let $u$ be a $(\mathsf{treeNode}, \mathsf{gridNode}, l)$ labeled node, with $f(u, (\mathsf{gridEdge}, \mathsf{R})) = \bot$. $u$'s output is $(b, \text{YES})$ for $b \in \{ 0,1 \}$ iff $u$ is attached to a $b$-Gadget (see Fig.~\ref{fig:01_encoding}).

\item If a node $u$ has an output from $\{ \text{YES}, \text{NO} \}$, let $v = f(u, (\mathsf{treeEdge}, \mathsf{Ch}_{\mathsf{L}}))$ and $z = f(u, (\mathsf{treeEdge}, \mathsf{Ch}_R))$. 
The outputs of $v$ and $z$ are in $\{ 0,1 \} \times \{ \text{YES}, \text{NO} \}$ or in $\{\text{YES}, \text{NO} \}$. $u$'s output is $\text{YES}$ iff 
$v$'s (or $z$'s) output contains $\text{YES}$.

\item If a node has a $\{ \text{YES}, \text{NO} \}$ output but no $(\mathsf{treeEdge}, \P)$ edge, then it must be a $\text{YES}$.

\end{enumerate}

\end{itemize}
\end{definition}

Relying on the observations made above, it's not hard to see that for every AG graph $G$ there is an input labeling such that in the only valid output labeling the rows must be consistent (i.e., the output bits of all the nodes in a row must be identical), and in at least one of the rows the output is $b$ when the rightmost node is connected to a $b$-Gadget.

The lower bound in the private coins model follows from the same argument used in \cite[Thm.~8.1]{bgk25} (the fact that we now use $b$-Gadgets instead of providing the input bits of the grid nodes explicitly doesn't matter for the sake of the proof). 

For the upper bound in the case of the preset public coins model, remember that the graph $G$ is fixed before the preset randomness is revealed. This is in contrast to the input labeling, which can be determined after $r_{\text{pub}}$ is already known. This implies that $r_{\text{pub}}$ is independent of the $b$-Gadgets attached to the grid nodes. 

The \textsf{LOCAL} algorithm that the nodes run executes steps similar to those in \cite[Thm.~8.2]{bgk25}. First, they run the $O(\log n)$ round algorithm promised earlier for solving $\Pi^{\text{badAGG}}$. In every connected subgraph made from nodes that return $\bot$ in $\Pi^{\text{badAGG}}$, the nodes first explore their $O(\log n)$ neighborhood in order to detect whether the height of the grid is at most $\lceil \log n \rceil$. 
If this is the case, the nodes can learn the topology of the entire network within $O(\log n)$ rounds (recall that the grid is \textit{vertical}, as described in Sec.~\ref{subsubsec:vgrid}). This allows each grid node to learn the structure of the $b$-Gadget connected to the rightmost node in the row, and to answer accordingly. 
Otherwise, the nodes learn the index $i$ of the row they are in. This can be done in $O(\log n)$ rounds by inspecting their position with respect to the column tree to which they belong. 
Afterwards, they output the $i$-th bit of the preset random string $(r_{\text{pub}})_i$. As claimed in the beginning, $r_{\text{pub}}$ is independent of $G$. In particular, $(r_{\text{pub}})_i$ is independent of the $b$-Gadget connected to the rightmost node in the $i$-th row. 
With probability $1/2$ the $b$-Gadget of that node corresponds to $(r_{\text{pub}})_i$. The probability of failing in every row is at most $(1/2)^{\log n} = 1/n$. 
Clearly, this \textsf{LOCAL} algorithm takes $O(\log n)$ rounds.
\end{proof}

\subsection{Proof of Lemma~\ref{lemma:transformation_total}}
\label{subsec:proof_transformation_total}

\transformation*

\begin{proof}
The proof closely follows \cite[Thm.~4.1]{hny17}. $(\mathcal{S}, \mathcal{D})$ is a \textit{non-trivial}, \textit{public-coin infinitely-often potentially-vacuously polynomially hard-on-average} distributional problem in $\FNP / \poly$. Let $D$ be the efficient sampler associated with the probability ensemble $\mathcal{D}$. 
We show how to turn $\mathcal{S}$ into a total problem. This gives rise to a public-coin infinitely-often polynomially hard-on-average distributional problem $(\mathcal{S}', \mathcal{D}')$ such that $\mathcal{S}' \in \TFNP/ \poly$.

Let $\lambda \in \mathbb{N}$.
W.l.o.g, $D$ is assumed to be \textit{length-preserving}. We use $\lambda$ to denote both the instance size and the size of the random strings used for sampling these instances. Because $(\mathcal{S}, \mathcal{D})$ is a \textit{non-trivial}, for some polynomial $p(\lambda)$ the following must hold.
\begin{equation*}
    \Pr_{r \stackrel{\$}{\leftarrow} \{ 0,1 \}^{\lambda} } \big[\exists y \text{ s.t }(x,y)\in\mathcal{S} | D(r) = x\big] \geq \frac{1}{p(\lambda)}
\end{equation*}

Set $k:= \lambda \cdot p(\lambda)$. Let $s_1, ..., s_{k}$ be a multiset of $k$ Boolean strings of length $\lambda$. The search problem ${\mathcal{S}'}_{s_1, ..., s_k}$ is defined as:
\begin{equation*}
{\mathcal{S}'}_{s_1, ..., s_k} : = 
\left\{
(r, y) \bigg| \lor_{i \in [k]} \big( (x_i, y) \in \mathcal{S} \big), \forall i: D(r \oplus s_i) = x_i
\right\}
\end{equation*}
Having $s_1, ...,s_k$ as a non-uniform advice, solutions for this problem can be verified efficiently. Instances of the search problem are sampled according to $\mathcal{D}' = \{ \mathcal{D}_{\lambda}' \}_{\lambda \in \mathbb{N}}$, the ensemble of uniform distributions over $\lambda$ bits long strings. 

\paragraph{Totality.}
Fix $\lambda \in \mathbb{N}$.
Given a fixed $r$, when $s_i \sim \text{Uniform}(\{ 0,1 \}^{\lambda})$ for every $i \in [k]$, the $\lambda$ bits long string $r \oplus s_i$ is also uniformly random.

\begin{equation*}
\Pr_{s_i}[\exists y \text{ s.t }(x_i,y)\in\mathcal{S} | D(r\oplus s_i) = x_i] \geq \frac{1}{p(\lambda)}
\end{equation*}
Moreover, the $s_i$'s are independent. From the product rule, the probability that each of them yields a problem without a solution is:
\begin{align*}
\Pr_{s_1, ..., s_k}[\forall i \in [k]: \not\exists  y \text{ s.t }(x_i,y)\in\mathcal{S} | D(r\oplus s_i) = x_i] &=\\
\left( \Pr_{s_i}[\not\exists y \text{ s.t }(x_i,y)\in\mathcal{S} | D(r\oplus s_i) = x_i] \right)^k
&\leq\\
\left( 
1 - \frac{1}{p(\lambda)}
\right)^{\lambda \cdot p(\lambda)}
&\leq\\
e^{-\lambda}
\end{align*}

For it to hold for every $r \in \{ 0,1 \}^{\lambda}$, we apply the union bound.
\begin{equation*}
\Pr_{s_1, ..., s_k} \big[ \exists r \text{ } \forall i \in [k]: \not\exists  y \text{ s.t }(x_i,y)\in\mathcal{S} | D(r\oplus s_i) = x_i \big] 
\leq 
2^{\lambda} \cdot e^{-\lambda}
< 1
\end{equation*}
The last inequality implies that there exists a choice of $s_1, ..., s_k$ for which there is a solution for every $r$. Pick $\{s_i\}_{i=1}^k$ that satisfies this, and let the problem $\mathcal{S}'$, for this specific $\lambda$, be defined as $\mathcal{S}'_{s_1, ..., s_k}$. 
By following these steps for every $\lambda \in \mathbb{N}$, we get a problem $\mathcal{S}'$ that is \textit{total}.
We consider both verifiers and solvers that are \textit{non-uniform}. In particular, the strings $\{s_i\}_{i=1}^k$ can be assumed to be part of their auxiliary inputs.

\paragraph{Hardness.}
The proof of hardness is by a reduction to the \textit{infinitely-often potentially-vacuous polynomial hardness-on-average} of the original search problem (against non-uniform solvers).
Let $\mathcal{A}'$ be a non-uniform \textsf{PPT} that solves $(\mathcal{S}', \mathcal{D}')$ with non-negligible probability (say, $1 / q(\lambda)$ for some polynomial $q(\lambda)$) for infinitely many $\lambda \in \mathbb{N}$:
\begin{equation*}
\Pr_{\substack{r \leftarrow \mathcal{D}_{\lambda}'\\\text{$\mathcal{A}'$'s coins}}}\left[ (r, \mathcal{A}'(r)) \in \mathcal{S}'\right] 
\geq
\frac{1}{q(\lambda)}
\end{equation*}
We claim that $\mathcal{A}$ described below is a $\poly(\lambda)$-time algorithm for solving $(\mathcal{S}, \mathcal{D})$ for infinitely many values of $\lambda \in \mathbb{N}$.

\begin{algorithm}[H]
\caption{$(\mathcal{S}, \mathcal{D})$ Solver $\mathcal{A}$}\label{alg:solve_dist_problem}
\begin{algorithmic}[1]
\Statex \textbf{Input:} $r \in \{ 0,1 \}^{\lambda}$
\Statex \textbf{Output:} $y$ such that $(x,y) \in \mathcal{S}$, where $x = D(r)$
\Statex Let $\{ s_i \}_{i=1}^{k(\lambda)} \subset \{ 0,1 \}^{\lambda}$ be the set of strings associated with $\mathcal{S}'$ for this specific $\lambda$.
\State $j \stackrel{\$}{\leftarrow} [k(\lambda))]$
\State $r': = r \oplus s_j$
\State \textbf{return} $y \leftarrow \mathcal{A}'(r')$
\end{algorithmic}
\end{algorithm}

\noindent The output of the given algorithm $\mathcal{A}'$ on a random $r \leftarrow \mathcal{D}_{\lambda}'$ satisfies the following:

\begin{align*}
& \Pr_{\substack{r\\\text{$\mathcal{A}'$'s coins}}}\left[ (r, \mathcal{A}'(r)) \in \mathcal{S}' \right]  & \text{(Def. $\mathcal{S}$ and $\mathcal{S}'_{s_1, ..., s_k}$)} \\
&=\Pr_{\substack{r\\\text{$\mathcal{A}'$'s coins}}}\left[ \lor_{i \in [k]} (x_i, \mathcal{A}'(r)) \in \mathcal{S} \big| D(r\oplus s_i) = x_i \right]  & \text{(Union bound)} \\
&\leq \sum_{i \in [k]} \Pr_{\substack{r\\\text{$\mathcal{A}'$'s coins}}} \left[ (x_i, \mathcal{A}'(r)) \in \mathcal{S} \big| D(r\oplus s_i) = x_i \right]  & \text{(Identically distributed)}\\
&= \sum_{i \in [k]} \Pr_{\substack{r\\\text{$\mathcal{A}'$'s coins}}} \left[ (x, \mathcal{A}'(r \oplus s_i)) \in \mathcal{S} \big| D(r) = x \right]  & \text{($j \sim \text{Uni}([k])$)}\\
&= k \cdot \Pr_{\substack{r, j\\\text{$\mathcal{A}'$'s coins}}} \left[ (x, \mathcal{A}'(r \oplus s_j)) \in \mathcal{S} \big| D(r) = x \right]  & \text{(Alg.~\ref{alg:solve_dist_problem})}\\
&= k \cdot \Pr_{\substack{r\\\text{$\mathcal{A}$'s coins}}}\left[ (x, \mathcal{A}(r)) \in \mathcal{S} \big| D(r) = x \right]
\end{align*}
Knowing that $\mathcal{A}'$ succeeds with probability at least $\frac{1}{q(\lambda)}$, we have:

\begin{equation*}
\Pr_{\substack{r\\\text{$\mathcal{A}$'s coins}}}\left[ (x, \mathcal{A}(r)) \in \mathcal{S} \big| D(r) = x \right] \geq 
\frac{1}{k \cdot q(\lambda)}
\end{equation*}
Recall that $k$ is a polynomial. That is, our \textsf{PPT} $\mathcal{A}$ solves $(\mathcal{S}, \mathcal{D})$ with non-negligible probability for the same $\lambda$ values as $\mathcal{A}'$. This comes as a contradiction to the
\textit{infinitely-often potentially-vacuous (polynomial) hardness-on-average} of
$(\mathcal{S}, \mathcal{D})$.

Note that $(\mathcal{S}', \mathcal{D}')$ is in fact a \textit{public-coin} infinitely-often polynomially hard-on-average distributional problem. This is evident from the way the instances are sampled.
\end{proof}
\end{document}